\def \be {\begin{equation}}
\def \ee {\end{equation}}
\newcommand{\Tr}{\mathrm{Tr}}
\def \argmin{\mathop{\rm argmin}}
\def \cH{{\cal H}}
\def \sofc2{{\cal S}({\mathbb C}^2)}
\def\>{\rangle}
\def\<{\langle}
\newcommand{\frS}{\mathfrak{S}}
\def\blambda{\bm{\lambda}}
\def\SU{\mathop{\rm SU}}
\def \mix{\mathop{\rm mix}\nolimits}
\def \FF{{\mathbb F}}
\newtheorem{definition}{Definition}
\newtheorem{theorem}{Theorem}
\newtheorem{lemma}[theorem]{Lemma}
\newtheorem{proposition}[theorem]{Proposition}
\newtheorem{corollary}[theorem]{Corollary}
\newtheorem{remark}[theorem]{Remark}
\def\Label#1{\label{#1}\ [\ \text{#1}\ ]\ }
\def\Label{\label}
\newcommand{\mh}[1]{#1}
\newenvironment{proofof}[1]{\vspace*{5mm} \par \noindent{\it Proof of #1:\hspace{2mm}}}{\endproof
\hfill$\Box$ \vspace*{3mm}}
\begin{document}	
\title{General detectability measure}
\author{Masahito Hayashi}\email{hmasahito@cuhk.edu.cn}
\affiliation{School of Data Science, The Chinese University of Hong Kong, Shenzhen, Longgang District, Shenzhen, 518172, China}
\affiliation{International Quantum Academy, Futian District, Shenzhen 518048, China}
\affiliation{Graduate School of Mathematics, Nagoya University, Nagoya, 464-8602, Japan}

\begin{abstract}
Distinguishing resource states from resource-free states is a fundamental task in quantum information. 
We have approached the state detection problem through a hypothesis testing framework, with the alternative hypothesis set comprising resource-free states in a general context. 
Consequently, we derived the optimal exponential decay rate of the failure probability for detecting a given 
$n$-tensor product state when the resource-free states are separable states, positive partial transpose (PPT) states, or the convex hull of the set of stabilizer states. 
This optimal exponential decay rate is determined by the minimum of the reverse relative entropy, indicating that this minimum value serves as the general detectability measure.
The key technique of this paper is 
a quantum version of empirical distribution.
\end{abstract}

\keywords{hypothesis testing; Sanov theorem; empirical distribution; Schur duality; relative entropy}

\maketitle

\section{Introduction}\Label{S1}
In quantum information theory, various resources play crucial roles, such as entangled states (non-separable states), non-positive partial transpose (non-PPT) states, and magic states (non-stabilizer states) \cite{PhysRevLett77Peres,HORODECKI19961,Veitch_2014}. These resource states are characterized by their exclusion from specific convex cones, for example, the cone of separable states, the cone of PPT states, and the cone of stabilizer states. Consequently, verifying whether a given state lies outside a particular convex cone is of fundamental importance. Recently, the paper \cite{H-24} proposed that such a detection problem can be framed as a hypothesis testing problem where the alternative hypothesis $H_1$ is composite, consisting of states in a given convex cone, and the null hypothesis $H_0$ corresponds to the specific state under detection. When $H_1$ is defined as the set of $n$-tensor product states, this hypothesis testing problem has been studied under the framework of the quantum Sanov theorem \cite{Bjelakovic,Notzel}
although the paper
\cite{H-02}
showed that the asymptotically optimal measurement can be chosen depending only on the null hypothesis $H_0$\footnote{Since the measurement used by \cite{H-02}
does not depend on the state in $H_1$,
it covers the case when the elements of $H_0$
is finite.}. The paper \cite{H-24} explored the relationship between the reverse relative entropy of entanglement and the exponential rate of decrease in the non-detection probability, proving the equality for maximally correlated states. Subsequently, the works \cite{LBR,FFF} extended this equality to general entangled states, addressing the hypothesis testing problem within general frameworks. 
The paper \cite{LBR} covers the case of separable states.
However, these papers did not investigate non-PPT states or magic (non-stabilizer) states. In particular, although \cite{FFF} developed a general theory for this hypothesis testing problem, it demonstrated that PPT states do not satisfy their general condition \cite[Discussion after Eq. (334)]{FFF}.
In addition, hypothesis testing has a close relation with various problems in quantum information, e.g.,
classical-quantum channel coding \cite{1207373,Wang_Renner2012},
entanglement distillation \cite{PhysRevA.57.1619,1624631}, \cite[Theorem 8.7]{H-text},
quantum data compression \cite{PhysRevA.66.032321},\cite[Lemma 10.3]{H-text},\cite[Remark 16]{4069150},
quantum resource theory \cite{BP2010,HY}, and
classical-quantum arbitrarily varying channel \cite{Bjelakovic2013,DWH},
etc.
In particular, 
the references \cite{1207373,1624631,H-text,PhysRevA.66.032321,4069150} linked the respective topics 
to quantum hypothesis testing via the method of quantum information spectrum.

In this paper, we introduce a comprehensive method to handle this hypothesis testing problem within a general framework. Assuming broad conditions, we derive the exponential decay rate of the non-detection probability, which equals the minimum of the reverse relative entropy over various non-resource states. This minimum value serves as a general detectability measure. 
Our conditions involve the existence of a compatible measurement, 
which was defined in \cite{PhysRevLett.103.160504}
and was used in \cite{BHLP}, 
alongside partial conditions from \cite{BP}. 
The compatible measurement must also be tomographically complete
 \cite{DARIANO200025}. While the paper \cite{LBR} imposes a condition related to compatible measurements, our condition is considerably simpler and easier to verify. Specifically, the sets of separable states, PPT states, and the convex hull of stabilizer states satisfy our conditions.

A pivotal component of our method is the quantum version of empirical distribution, introduced in \cite{another} to formulate an alternative quantum Sanov theorem. Hence, we begin by distinguishing between the two types of quantum Sanov theorems from \cite{Bjelakovic,Notzel} and \cite{another}. Empirical distribution plays a central role in information theory and statistics. The original classical Sanov theorem \cite{Sanov,DZ} characterizes its asymptotic behavior under the i.i.d. condition, providing the exponential decay rate for events where the empirical distribution belongs to a subset excluding the true distribution.

In the quantum context, analogous i.i.d. conditions are considered through $n$-fold tensor products. However, defining a quantum analog of empirical distribution is nontrivial. The papers \cite{Bjelakovic,Notzel} introduced a version of the quantum Sanov theorem without quantum empirical distributions. They framed a hypothesis testing problem where one hypothesis is a composite set of i.i.d. states, and the other is a single i.i.d. state
because the performance of this hypothesis testing
can be easily derived from the original Sanov theorem under
the classical setting.
The exponential decay rate for error probability was derived under a constant constraint on the composite hypothesis error probability. This result was termed the quantum Sanov theorem. The paper \cite{H-24} refined this analysis by applying an exponential constraint instead.

The quantum Sanov theorem by \cite{Bjelakovic,Notzel} 
has the following problem. 
When we extend 
proof-techniques-based
on empirical distributions 
in various other settings in information theory
to a quantum setting, this version
of quantum Sanov theorem does not work well. 

That is, to extend such a proof-technique to a quantum case, we need
to establish a quantum version of empirical distributions,
and their asymptotic behavior. 
For instance, \cite{watanabe2014strong} used empirical distributions to analyze channel resolvability's second-order asymptotics. 
Extending such results requires a quantum empirical distribution framework. 
The paper \cite{another} provided an alternative quantum Sanov theorem, establishing quantum empirical distributions' asymptotic behavior. 
Using this result, the paper \cite{HCG} proved strong converse results for classical-quantum channel resolvability. 
Therefore, we refer to the papers \cite{Bjelakovic,Notzel}'s result as the quantum Sanov theorem for hypothesis testing and the paper \cite{another}'s as the quantum Sanov theorem for quantum empirical distributions.

In this paper, we leverage the latter for general hypothesis testing with composite hypotheses. First, we address classical composite hypotheses under permutation symmetry using empirical distributions. This approach directly applies the classical Sanov theorem, where permutation symmetry reduces sufficient statistics to empirical distributions. Although our conditions initially involve empirical distribution properties, we simplify them by using compatibility conditions from \cite{BHLP}, making verification straightforward.
Next, we extend this framework to quantum hypothesis testing by using quantum empirical distributions. Our method reproduces the same exponential decay rate as that by \cite{LBR,FFF}, i.e., 
the general detectability measure, while requiring simpler conditions. 
Although verifying the separable state condition is nontrivial in 
the paper \cite{LBR}, 
our condition is immediately satisfied. This demonstrates the practicality of our approach and the relevance of quantum empirical distributions in hypothesis testing.

Finally, we point out the technical difference from the paper \cite{HY}.
The paper \cite{HY} consider the exponent of the error probability when the true state belong to the set of free states.
For simplicity, we will simplify the error probability when the true state is $\rho$, to the error probability of $\rho$.
The exponent of the error probability (error exponent)
of 
the convex combination of two states $\rho_1$ and $\rho_2$
equals the smaller one of 
the exponents of the error probabilities of 
two states $\rho_1$ and $\rho_2$.
That is, to consider a smaller error exponent,
it is sufficient to consider 
the convex combination of two states $\rho_1$ and $\rho_2$.
The paper \cite{HY} essentially employs this property for the converse art.

In contrast, this paper considers the constant error probabilities of free states.
The error probability of the convex combination of two states $\rho_1$ and $\rho_2$ equals the convex combination of the error probabilities of $\rho_1$ and $\rho_2$.
When the magnitude relation between the error probabilities of $\rho_1$ and $\rho_2$ depends on the choice of our test, the magnitude relation among the error probabilities of $\rho_1$, $\rho_2$, and their convex combination also depends on the choice of our test.
Hence, this problem requires more complicated analysis.

The structure of this paper is as follows. Section \ref{S2} introduces the formulation and key notations. Section \ref{S2B} outlines existing results and our main findings, showing that separable states, PPT states, and stabilizer state convex hulls meet our assumptions. 
Sections \ref{S4} and \ref{S5} detail classical generalizations by using two methods, both serving as precursors to our final results. 
Section \ref{S6} reviews the quantum Sanov theorem for quantum empirical distributions and presents newly obtained and related formulas. 
Finally, Section \ref{S7} presents our main quantum generalization of the quantum Sanov theorem for hypothesis testing.

\section{Preparation}\Label{S2}
\subsection{Formulation}
\begin{figure}[tbhp]
\begin{center}
\includegraphics[scale=0.6]{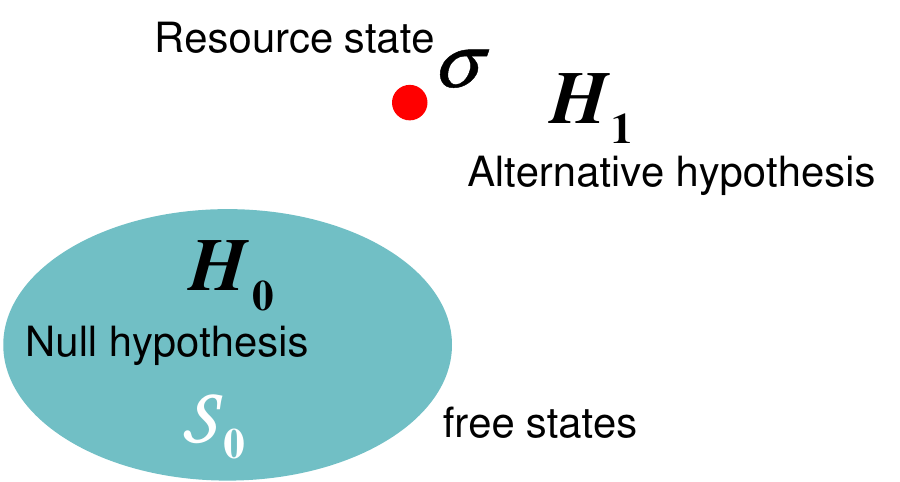}
\end{center}
\caption{Our testing problem}
\Label{resource}
\end{figure}
In this paper, we study the problem of quantum hypothesis testing 
to distinguish a given state $\sigma$ from a set ${\cal S}_0$
of resource-free states.
In the standard framework of quantum hypothesis testing, two quantum states, 
$\rho$ and $\sigma$, on a given Hilbert space ${\cal H}$ represent the two simple hypotheses. 
As illustrated in Fig. \ref{resource}, 
one is the alternative hypothesis $H_1$, which is composed of 
a state $\sigma$,
the other is the null hypothesis $H_0$, which is composed of 
a state $\rho$.
The aim of the hypothesis testing is 
to accept the alternative hypothesis $H_1$ by rejecting 
the null hypothesis $H_0$ with a fixed precision level.
To keep this fixed precision level for the rejection of $H_0$, 
we impose the probability for incorrectly rejecting $H_0$ to be upper bounded by a threshold $\epsilon>0$.
This probability $\epsilon$ is called the significance level of our test.

That is, we impose the following condition for our test operator $\Pi$
as
\begin{align}  
{\cal D}_{\epsilon,\rho} \coloneqq \Big\{ \Pi \Big| 0 \leq \Pi \leq I, \Tr[(I-\Pi)\rho] \le \epsilon \Big\},  \Label{SUE}
\end{align}  
where $I$ denotes the identity operator.  
Then, we optimize our test under this condition as follows.
\begin{align}  
\beta_\epsilon(\rho\| \sigma) \coloneqq \min_{\Pi\in{\cal D}_{\epsilon,\rho}} \Tr[\Pi \sigma].
\end{align}  

Since our aim is 
to distinguish a given state $\sigma$ from a compact set 
${\cal S}_0$ of resource-free states,
we impose the following condition instead of \eqref{SUE}.
\begin{align}  
{\cal D}_{\epsilon,{\cal S}_0} \coloneqq 
\Big\{ \Pi \Big| 0 \leq \Pi \leq I, \max_{\rho\in{\cal S}_0} \Tr[(I-\Pi)\rho] \le \epsilon \Big\}.  
\end{align}  
A hypothesis characterized by a set of states is known as a ``composite hypothesis''. 
Our optimization problem is formulated as
\begin{align}  
\beta_\epsilon\left({\cal S}_0 \middle\| \sigma\right) 
\coloneqq \min_{\Pi\in{\cal D}_{\epsilon,{\cal S}_0}}  \Tr[\Pi \sigma].
\end{align}  

In statistics, the probability $1-\beta_\epsilon\left({\cal S}_0 \middle\| \sigma\right)$
is called the detection probability of $\sigma$
under the null hypothesis ${\cal S}_0$
at a significance level $\epsilon$ \cite{Lehmann}.
When we make the detection by rejecting the null hypothesis
${\cal S}_0$ in statistics,
we need to fix a significance level $\epsilon$.
Hence, we can consider that the quantity 
$\beta_\epsilon\left({\cal S}_0 \middle\| \sigma\right)$
is linked to the detectability of the state $\sigma$
under the null hypothesis ${\cal S}_0$
rather than 
$\beta_\epsilon\left(\sigma \middle\|{\cal S}_0 \right)$
because 
$1-\beta_\epsilon\left(\sigma \middle\|{\cal S}_0 \right)$
shows the probability to correctly accept the null hypothesis ${\cal S}_0$.
Thus, when  
a sequence of subsets ${\cal F}_n \subset {\cal S}({\cal H}^{\otimes n})$
is given as our null hypothesis,
the exponent
$\frac{-1}{n}\log \beta_\epsilon\left({\cal F}_n \middle\| \sigma^{\otimes n}\right)$ shows 
the strength of the detection power of the state $\sigma$
under the null hypothesis ${\cal F}_n$
at a significance level $\epsilon$.
Therefore, when the limit 
$\lim_{n\to \infty}\frac{-1}{n}\log \beta_\epsilon\left({\cal F}_n \middle\| \sigma^{\otimes n}\right)$ exists, 
it can be considered as
the detectability measure.
The aim of this paper is to characterize 
the detectability measure under a general setting.

This quantity can be generalized by using another compact set of states $\mathcal{S}_1$ as
\begin{align}  
\beta_\epsilon\left({\cal S}_0 \middle\| \mathcal{S}_1\right) \coloneqq \min_{\Pi\in{\cal D}_{\epsilon,{\cal S}_0}} \max_{\sigma \in \mathcal{S}_1} \Tr[\Pi \sigma].
\end{align}  
For convenience, we introduce the following auxiliary quantities:  
\begin{align}  
\beta_\epsilon(\rho\|{\cal S}_2) \coloneqq 
\min_{0 \le \Pi \le I} \Big\{ \max_{\sigma \in {\cal S}_2} \Tr[\sigma \Pi] \Big| \Tr[(I-\Pi)\rho] \le \epsilon \Big\}.  
\end{align}  

We also employ the relative entropy and the sandwiched R\'enyi relative entropy \cite{MDS+13, WWY14}, defined as  
\begin{align}  
D(\rho\|\sigma) &\coloneqq \Tr \rho (\log \rho - \log \sigma), \\  
D_{\alpha}(\rho\|\sigma) &\coloneqq \frac{1}{\alpha - 1} \log \Tr \left( \sigma^{\frac{1-\alpha}{2\alpha}} \rho \sigma^{\frac{1-\alpha}{2\alpha}} \right)^{\alpha}.  
\end{align}  
Specifically, $D_1(\rho\|\sigma)$ is equivalent to the relative entropy $D(\rho\|\sigma)$.  

In preparation for the following discussion, we define the notations:  
\begin{align}  
D_{\alpha}({\cal S}_0\|\sigma) &\coloneqq \min_{\rho \in {\cal S}_0} D_{\alpha}(\rho\|\sigma), \\  
D_{\alpha}(\rho\|\mathcal{S}_1) &\coloneqq \min_{\sigma \in \mathcal{S}_1} D_{\alpha}(\rho\|\sigma), \\  
D_{\alpha}({\cal S}_0\|\mathcal{S}_1) &\coloneqq \min_{\rho \in {\cal S}_0, \sigma \in \mathcal{S}_1} D_{\alpha}(\rho\|\sigma).  
\end{align}  
In addition, in this paper, $\{X \ge Y\}$
expresses the projection 
$\sum_{j: z_j \ge 0}E_j$, where
$X-Y=\sum_{j} z_j E_j$.

\subsection{Basic Lemmas}  
Before proceeding with detailed analysis, we present basic lemmas. Using the max-min inequality~\cite[Section~5.4.1]{boyd_vandenberghe_2004}, we obtain  
\begin{align}  
\max_{\sigma \in \mathcal{S}_1} \beta_{\epsilon}\left({\cal S}_0 \middle\| \sigma\right) \leq \beta_{\epsilon}\left({\cal S}_0 \middle\| \mathcal{S}_1\right).  
\label{BH4Y}  
\end{align}  
When $\mathcal{S}_1$ is convex, the minimax theorem~\cite{v1928theorie, sion1958general, 10.2996/kmj/1138038812} yields the following results:  

\begin{lemma}\label{L1}  
For any $\epsilon \geq 0$, any compact set ${\cal S}_0$, any convex compact set $\mathcal{S}_1$, and any state $\rho$, the following hold:  
\begin{align}  
\max_{\sigma \in \mathcal{S}_1} \beta_{\epsilon}\left(\rho \middle\| \sigma\right) &= \beta_{\epsilon}\left(\rho \middle\| \mathcal{S}_1\right), \label{BH4-state} \\
\max_{\sigma \in \mathcal{S}_1} \beta_{\epsilon}\left({\cal S}_0 \middle\| \sigma\right) &= \beta_{\epsilon}\left({\cal S}_0 \middle\| \mathcal{S}_1\right). \label{BH4}  
\end{align}  
Additionally, for any $\epsilon \geq 0$, any convex compact sets ${\cal S}_0$ and $\mathcal{S}_1$, and any state $\sigma$, the following hold:  
\begin{align}  
\max_{\rho \in {\cal S}_0} \beta_{\epsilon}\left(\rho \middle\| \sigma\right) &= \beta_{\epsilon}\left({\cal S}_0 \middle\| \sigma\right), \label{BTD2} \\
\max_{\rho \in {\cal S}_0} \beta_{\epsilon}\left(\rho \middle\| \mathcal{S}_1\right) &= \beta_{\epsilon}\left({\cal S}_0 \middle\| \mathcal{S}_1\right). \label{BTD3}  
\end{align}  
\end{lemma}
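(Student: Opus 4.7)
The four identities split into two groups. Equations \eqref{BH4-state} and \eqref{BH4} are direct applications of Sion's minimax theorem; equation \eqref{BTD2} requires a Lagrangian-duality step in addition; and \eqref{BTD3} is an immediate consequence of the preceding three. The main obstacle is \eqref{BTD2}, so I would dispatch the others first in order to isolate it.

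\emph{Identities \eqref{BH4-state} and \eqref{BH4} via Sion.} I would first check that the test-operator sets $\mathcal{D}_{\epsilon,\rho}$ and $\mathcal{D}_{\epsilon,\mathcal{S}_0}$ are convex and compact: the operator box $\{0\le \Pi\le I\}$ is convex compact in the Hermitian operators on $\mathcal{H}$, and each condition $\Tr[(I-\Pi)\rho]\le\epsilon$ cuts it by a closed half-space in $\Pi$; for $\mathcal{D}_{\epsilon,\mathcal{S}_0}$ one intersects over $\rho\in\mathcal{S}_0$, preserving both properties. The function $(\Pi,\sigma)\mapsto\Tr[\Pi\sigma]$ is bilinear, hence simultaneously convex in $\Pi$ and concave in $\sigma$, and continuous. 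With $\mathcal{S}_1$ convex and compact by hypothesis, Sion's theorem exchanges $\min_\Pi$ and $\max_\sigma$ and delivers \eqref{BH4-state} and \eqref{BH4}.

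\emph{Identity \eqref{BTD2}.} The direction ``$\le$'' is immediate from $\mathcal{D}_{\epsilon,\mathcal{S}_0}\subset\mathcal{D}_{\epsilon,\rho}$ for each $\rho\in\mathcal{S}_0$. For the reverse, I would use Lagrangian duality. Setting $c(\Pi):=\max_{\rho\in\mathcal{S}_0}\Tr[(I-\Pi)\rho]$, the program defining $\beta_\epsilon(\mathcal{S}_0\|\sigma)$ is convex with a single convex inequality $c(\Pi)\le\epsilon$ on a compact box; Slater's condition is satisfied for $\epsilon>0$ by $\Pi=(1-\tfrac{\epsilon}{2})I$, so strong duality gives
$$\beta_\epsilon(\mathcal{S}_0\|\sigma)=\sup_{\lambda\ge 0}\min_{0\le\Pi\le I}\max_{\rho\in\mathcal{S}_0}\bigl\{\Tr[\Pi\sigma]+\lambda(\Tr[(I-\Pi)\rho]-\epsilon)\bigr\}.$$
For each fixed $\lambda$ the inner expression is bilinear in $(\Pi,\rho)$ on two convex compact sets, so Sion's theorem exchanges $\min_\Pi$ and $\max_\rho$. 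Commuting $\sup_\lambda$ past the resulting $\max_\rho$ (both are monotone) and recognizing $\sup_\lambda\min_\Pi$ with $\rho$ frozen as the Lagrangian dual of $\beta_\epsilon(\rho\|\sigma)$, which by the single-$\rho$ Slater argument equals $\beta_\epsilon(\rho\|\sigma)$, yields $\max_\rho\beta_\epsilon(\rho\|\sigma)$. The edge case $\epsilon=0$ is recovered by passing $\epsilon\downarrow 0$ together with monotonicity and lower semicontinuity of $\beta_\epsilon(\cdot\|\sigma)$ in $\epsilon$.

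\emph{Identity \eqref{BTD3} and main obstacle.} Given the first three, \eqref{BTD3} is immediate: $\beta_\epsilon(\mathcal{S}_0\|\mathcal{S}_1)\stackrel{\eqref{BH4}}{=}\max_\sigma\beta_\epsilon(\mathcal{S}_0\|\sigma)\stackrel{\eqref{BTD2}}{=}\max_\sigma\max_\rho\beta_\epsilon(\rho\|\sigma)\stackrel{\eqref{BH4-state}}{=}\max_\rho\beta_\epsilon(\rho\|\mathcal{S}_1)$. The genuine difficulty sits in \eqref{BTD2}: one cannot apply Sion directly to the pair $(\Pi,\rho)$ because the feasible set $\mathcal{D}_{\epsilon,\rho}$ depends on $\rho$ while the objective $\Tr[\Pi\sigma]$ does not, so the problem is not in saddle form in $(\Pi,\rho)$. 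Introducing the Lagrange multiplier $\lambda$ is precisely the device that transfers the $\rho$-dependence out of the constraint and into the objective, where Sion then applies. This is the technical form of the ``error probability of a convex combination'' subtlety that the paper's introduction contrasts with the setting of \cite{HY}.
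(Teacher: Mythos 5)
Your proposal is correct, and your treatment of \eqref{BH4-state}, \eqref{BH4}, and \eqref{BTD3} coincides with the paper's (minimax/Sion for the first two, chaining for the last). The interesting divergence is in \eqref{BTD2}, where you correctly identify the obstruction --- the feasible set ${\cal D}_{\epsilon,\rho}$ depends on $\rho$, so Sion cannot be applied to $(\Pi,\rho)$ directly --- but resolve it by Lagrangian duality: Slater's condition (via $\Pi=(1-\epsilon/2)I$) gives strong duality, the multiplier $\lambda$ moves the $\rho$-dependence from the constraint into a bilinear objective where Sion applies, and re-dualizing pointwise in $\rho$ recovers $\max_\rho\beta_\epsilon(\rho\|\sigma)$. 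The paper instead uses a hypothesis-swap trick: setting $\alpha:=\max_{\rho}\beta_\epsilon(\rho\|\sigma)$, it observes that $\beta_\epsilon(\rho\|\sigma)\le\alpha$ is equivalent to $\beta_\alpha(\sigma\|\rho)\le\epsilon$ (replace the optimal test $\Pi$ by $I-\Pi$), applies the already-proved identity \eqref{BH4-state} to the swapped problem to get $\beta_\alpha(\sigma\|{\cal S}_0)\le\epsilon$, and reads off from the resulting single optimal test $\Pi^\ast$ an element $I-\Pi^\ast$ of ${\cal D}_{\epsilon,{\cal S}_0}$ achieving value $\le\alpha$. The paper's route is more elementary --- no Slater condition, no duality beyond the minimax theorem already invoked, and it covers $\epsilon=0$ with no extra work --- whereas your route is more mechanical and makes the source of the difficulty explicit, at the cost of restricting to $\epsilon>0$ and requiring the (correct, but only sketched) right-continuity argument at $\epsilon=0$; you should spell out that $\beta_\epsilon\uparrow\beta_0$ as $\epsilon\downarrow0$ follows from the decreasing compact feasible sets, and that $\sup_{\epsilon>0}$ commutes with $\max_\rho$.
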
  

\begin{proof}  
The convexity and compactness of ${\cal D}_{\epsilon,\rho}$ allow us to derive \eqref{BH4-state} from the minimax theorem~\cite{v1928theorie, sion1958general, 10.2996/kmj/1138038812}, as shown in Lemma~3 of Ref.~\cite{HY}. Similarly, \eqref{BH4} follows from the convexity and compactness of ${\cal D}_{\epsilon,{\cal S}_0}$, as demonstrated in Ref.~\cite{HY}.  

We now establish \eqref{BTD2}. Let  
\begin{align}  
\alpha \coloneqq \max_{\rho \in {\cal S}_0} \beta_{\epsilon}\left(\rho \middle\| \sigma\right).  
\end{align}  
Since ${\cal D}_{\epsilon,{\cal S}_0} \subseteq {\cal D}_{\epsilon,\rho}$ for all $\rho \in {\cal S}_0$, we have  
\begin{align}  
\beta_\epsilon\left(\rho \middle\| \sigma\right) \leq \beta_\epsilon\left({\cal S}_0 \middle\| \sigma\right),  
\end{align}  
implying  
\begin{align}  
\alpha \leq \beta_\epsilon\left({\cal S}_0 \middle\| \sigma\right). \label{eq:beta_upper}  
\end{align}  
For any $\rho \in {\cal S}_0$, the inequality  
$\beta_{\epsilon}\left(\rho \middle\| \sigma\right) \leq \alpha$  
leads to  
\begin{align}  
\beta_{\alpha}\left(\sigma \middle\| \rho\right) \leq \epsilon. \label{BN1}  
\end{align}  
Since ${\cal S}_0$ is convex, \eqref{BH4-state} implies  
\begin{align}  
\beta_{\alpha}\left(\sigma \middle\| {\cal S}_0\right) = \max_{\rho \in {\cal S}_0} \beta_{\alpha}\left(\sigma \middle\| \rho\right) \leq \epsilon. \label{BN2}  
\end{align}  
Thus, there exists a POVM $\{\Pi^\ast, I - \Pi^\ast\}$ such that  
\begin{align}  
\max_{\rho \in {\cal S}_0} \Tr[\Pi^\ast \rho] \leq \epsilon, \quad 
\Tr[(I - \Pi^\ast) \sigma] \leq \alpha.  
\end{align}  
This implies  
\begin{align}  
\beta_{\epsilon}\left({\cal S}_0 \middle\| \sigma\right) \leq 
\Tr[(I - \Pi^\ast) \sigma] \leq \alpha. \label{eq:beta_lower}  
\end{align}  
Combining \eqref{eq:beta_upper} and \eqref{eq:beta_lower}, we obtain \eqref{BTD2}.  

Finally, \eqref{BTD3} follows from \eqref{BH4-state}, \eqref{BH4}, and \eqref{BTD2}, since  
\begin{align}  
&\max_{\rho \in {\cal S}_0} \beta_{\epsilon}\left(\rho \middle\| \mathcal{S}_1\right) = \max_{\rho \in {\cal S}_0} \max_{\sigma \in \mathcal{S}_1} \beta_{\epsilon}\left(\rho \middle\| \sigma\right) \notag \\
=& \max_{\sigma \in \mathcal{S}_1} \max_{\rho \in {\cal S}_0} \beta_{\epsilon}\left(\rho \middle\| \sigma\right) 
= \max_{\sigma \in \mathcal{S}_1} \beta_{\epsilon}\left({\cal S}_0 \middle\| \sigma\right) 
= \beta_{\epsilon}\left({\cal S}_0 \middle\| \mathcal{S}_1\right).  
\end{align}  
\end{proof}

When 
sets  ${\cal S}_0$ and $ \mathcal{S}_1$ are 
convex and closed for a unitary representation $U$ of a group, 
the joint convexity of relative entropy guarantees 
\begin{align}
D\left({\cal S}_0\middle\| \mathcal{S}_1\right)
=D\left(\mathcal{S}_{0,inv}\middle\| \mathcal{S}_{1,inv}\right),
\Label{D-inv}
\end{align}
where we denote the set of invariant elements for $U$ in ${\cal S}_j$ by ${\cal S}_{j,inv}$ for $j=0,1$.
The minimum error probability also satisfies the same property as follows.
\begin{lemma}\Label{LG7}
Given a unitary representation $U$ of a group $G$ on ${\cal H}$,
 we assume that
the sets  ${\cal S}_0$ and $ \mathcal{S}_1$ are 
convex and closed for $U$, 
i.e.,
$\mathcal{S}_i=\{ U(g)  \rho U(g)^\dagger\}_{\rho \in {\cal S}_0}
$ for any element $g \in G$ and $i=0,1$.
Then, we have
\begin{align}
\beta_{\epsilon}\left({\cal S}_0\middle\| \mathcal{S}_1\right)
=\beta_{\epsilon}\left(\mathcal{S}_{0,inv}\middle\| \mathcal{S}_{1,inv}\right).
\end{align}
Further, since any elements in 
$\mathcal{S}_{0,inv}$ and $ \mathcal{S}_{1,inv}$
are invariant for $U$,
our test can be limited to invariant tests.
\end{lemma}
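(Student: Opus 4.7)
The plan is to deploy a twirling (group-averaging) argument. Assuming that $G$ comes equipped with a normalized invariant measure $d\mu$---which holds for the compact or finite groups relevant to the applications later in the paper---I define the twirl $\mathcal{T}(A):=\int_G U(g) A U(g)^\dagger\,d\mu(g)$. The resulting map is linear, produces $G$-invariant outputs, preserves the operator ordering $0\le A\le I$, and, by the convexity and $U$-closure of $\mathcal{S}_i$, sends $\mathcal{S}_i$ into $\mathcal{S}_{i,inv}$ for $i=0,1$. Two identities I will use repeatedly are $\Tr[\mathcal{T}(A) B]=\Tr[A\,\mathcal{T}(B)]$, and $\Tr[AB]=\Tr[A\,\mathcal{T}(B)]$ whenever $A$ is $G$-invariant.

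The inequality ``$\ge$'' is immediate from the definitions: since $\mathcal{S}_{0,inv}\subseteq \mathcal{S}_0$, the constraint set satisfies $\mathcal{D}_{\epsilon,\mathcal{S}_0}\subseteq \mathcal{D}_{\epsilon,\mathcal{S}_{0,inv}}$, which makes the outer minimum over the smaller feasible set larger; meanwhile the inner maximum over $\mathcal{S}_1$ dominates that over $\mathcal{S}_{1,inv}$. For the nontrivial ``$\le$'' direction I will take a minimizer $\Pi^\ast$ of $\beta_\epsilon(\mathcal{S}_{0,inv}\|\mathcal{S}_{1,inv})$ and replace it by its twirl $\bar\Pi:=\mathcal{T}(\Pi^\ast)$, which is $G$-invariant and still obeys $0\le \bar\Pi\le I$. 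For any $\rho\in\mathcal{S}_0$, the invariance of $\bar\Pi$ gives $\Tr[(I-\bar\Pi)\rho]=\Tr[(I-\bar\Pi)\mathcal{T}(\rho)]$; since $\mathcal{T}(\rho)\in\mathcal{S}_{0,inv}$, a short computation using the feasibility of $\Pi^\ast$ together with the dual identity for the twirl yields $\Tr[(I-\bar\Pi)\rho]\le\epsilon$, so $\bar\Pi\in\mathcal{D}_{\epsilon,\mathcal{S}_0}$. For the objective I will note $\Tr[\bar\Pi\,\sigma]=\Tr[\Pi^\ast\,\mathcal{T}(\sigma)]$, and $\mathcal{T}(\sigma)\in\mathcal{S}_{1,inv}$ then yields $\max_{\sigma\in\mathcal{S}_1}\Tr[\bar\Pi\sigma]\le \max_{\tau\in\mathcal{S}_{1,inv}}\Tr[\Pi^\ast\tau]=\beta_\epsilon(\mathcal{S}_{0,inv}\|\mathcal{S}_{1,inv})$.

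The concluding claim that one may restrict to invariant tests is essentially a by-product of this construction: for any feasible $\Pi\in\mathcal{D}_{\epsilon,\mathcal{S}_0}$, repeating the same calculations with $\mathcal{T}(\Pi)$ in place of $\bar\Pi$ shows that $\mathcal{T}(\Pi)$ remains feasible and has objective value at most that of $\Pi$, so the minimization may a priori be restricted to $G$-invariant tests.

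The only potential obstacle I anticipate is the existence of the invariant integral $\int_G \cdot\,d\mu$; once $\mathcal{T}$ is in hand, the algebraic identities do all the work. For the compact groups actually used in the rest of the paper (symmetric groups, unitary groups acting through Schur--Weyl duality, Clifford groups on $n$-qubit systems), this is automatic; in a non-compact setting one would instead invoke an invariant mean on bounded continuous functions, but this refinement is not needed here.
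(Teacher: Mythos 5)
Your proof is correct, and it takes a different route from the paper's. The paper works at the level of the quantities $\beta_\epsilon(\cdot\|\cdot)$ themselves: it first notes that ${\cal D}_{\epsilon,{\cal S}_0}$ is $U$-closed, then averages the \emph{states} over the group, invoking concavity of $\sigma\mapsto\beta_\epsilon({\cal S}_0\|\sigma)$ (and the analogous statement in the first argument) to argue that the maxima in Lemma \ref{L1}, Eqs.~\eqref{BH4} and \eqref{BTD2}, are attained on ${\cal S}_{1,inv}$ and ${\cal S}_{0,inv}$; it therefore leans on the minimax identities of Lemma \ref{L1}. You instead twirl the optimal \emph{test} $\Pi^\ast$ and verify feasibility and the objective value directly via the self-adjointness identity $\Tr[\mathcal{T}(A)B]=\Tr[A\,\mathcal{T}(B)]$, which requires neither Lemma \ref{L1} nor any concavity claim. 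This is a genuine advantage: the inequality the paper writes for the first argument, $\beta_\epsilon\bigl(\sum_g\tfrac{1}{|G|}U(g)\rho U(g)^\dagger\bigm\|\sigma\bigr)\ge\sum_g\tfrac{1}{|G|}\beta_\epsilon(U(g)\rho U(g)^\dagger\|\sigma)$, is \emph{not} an instance of a valid general concavity of $\rho\mapsto\beta_\epsilon(\rho\|\sigma)$ (the constraint set depends on $\rho$, and one can construct classical counterexamples to concavity for $\epsilon>0$); the specific inequality the paper needs does hold, but the clean justification is precisely your test-twirling computation. The only cost of your approach is the explicit appeal to an invariant probability measure on $G$, which the paper also implicitly assumes by writing $\sum_{g\in G}\tfrac{1}{|G|}$, so you lose no generality. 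Your closing remark that twirling any feasible test preserves feasibility and does not increase the objective correctly delivers the lemma's final assertion about restricting to invariant tests.
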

\begin{proof}
Since ${\cal S}_0$ is a closed set for $U$,
the set ${\cal D}_{\epsilon,{\cal S}_0}$ is also closed for $U$.
Hence, we have
\begin{align}
\beta_{\epsilon}\left({\cal S}_0\middle\| \sigma\right)
=\beta_{\epsilon}\left({\cal S}_0\middle\| U(g)\sigma U(g)^\dagger\right)
\end{align}
for $g \in G$.
Thus, we have
\begin{align}
&\beta_{\epsilon}\left({\cal S}_0\middle\| 
\sum_{g\in G}\frac{1}{|G|}
U(g)\sigma U(g)^\dagger\right)\notag \\
\ge &
\sum_{g\in G}\frac{1}{|G|}
\beta_{\epsilon}\left({\cal S}_0\middle\| U(g)\sigma U(g)^\dagger\right) 
=
\beta_{\epsilon}\left({\cal S}_0\middle\| \sigma\right).
\end{align}
Since $\sum_{g\in G}\frac{1}{|G|}
U(g)\sigma U(g)^\dagger$ is an invariant element in $\mathcal{S}_1$,
we have
\begin{align}
\beta_{\epsilon}\left({\cal S}_0\middle\| \mathcal{S}_1
\right)
=
\max_{\sigma \in \mathcal{S}_1}
\beta_{\epsilon}\left({\cal S}_0\middle\| \sigma\right) 
= 
\max_{\sigma \in \mathcal{S}_{1,inv}}
\beta_{\epsilon}\left({\cal S}_0\middle\| \sigma\right)
\Label{NMI-Y}.
\end{align}

When $\sigma$ is an invariant state for $U$, we have
\begin{align}
\beta_{\epsilon}\left(U(g)\rho U(g)^\dagger\middle\| \sigma\right) 
= 
\beta_{\epsilon}\left(\rho\middle\| \sigma\right).
\end{align}
Hence, 
\begin{align}
&\beta_{\epsilon}\left(
\sum_{g\in G}\frac{1}{|G|}
U(g)\rho U(g)^\dagger
\middle\|\sigma
\right)\notag\\
\ge &
\sum_{g\in G}\frac{1}{|G|}
\beta_{\epsilon}\left(U(g)\rho U(g)^\dagger\middle\| \sigma\right) 
=
\beta_{\epsilon}\left(\rho\middle\| \sigma\right).
\end{align}
Since $\sum_{g\in G}\frac{1}{|G|}
U(g)\rho U(g)^\dagger$ is an invariant element in ${\cal S}_0$,
we have
\begin{align}
& \beta_{\epsilon}\left({\cal S}_0\middle\| \sigma
\right)
=
\max_{\rho \in {\cal S}_0}
\beta_{\epsilon}\left(\rho\middle\| \sigma\right) \notag\\
= &
\max_{\rho \in \mathcal{S}_{0,inv}}
\beta_{\epsilon}\left(\rho\middle\| \sigma\right) 
=\beta_{\epsilon}\left(\mathcal{S}_{0,inv}\middle\| \rho
\right).\Label{NMI2}
\end{align}
The combination of \eqref{NMI-Y} and \eqref{NMI2} yields 
\begin{align}
& \beta_{\epsilon}\left({\cal S}_0\middle\| \mathcal{S}_1
\right)
=
\max_{\sigma \in \mathcal{S}_{1,inv}}
\beta_{\epsilon}\left({\cal S}_0\middle\| \sigma\right)\notag\\
=&
\max_{\sigma \in \mathcal{S}_{1,inv}}
\beta_{\epsilon}\left(\mathcal{S}_{0,inv}\middle\| \sigma\right)
=
\beta_{\epsilon}\left(\mathcal{S}_{0,inv}\middle\| 
\mathcal{S}_{1,inv}
\right).
\end{align}
\end{proof}

\section{Our results and existing results}\Label{S2B}
\subsection{Existing results}
Before presenting our results, we review existing results.
We consider a $d$-dimensional system ${\cal H}$
spanned by the computational basis
$\{|j\rangle\}_{j=1}^d$
and the set of density matrices on ${\cal H}$, which is denoted by
${\cal S}({\cal H})$.
We denote the set of diagonal states 
with respect to the computation basis
by
${\cal S}_c({\cal H})$.
We focus on 
a sequence of subsets ${\cal F}_n \subset {\cal S}({\cal H}^{\otimes n})$.
In this subsection, we present several conditions for ${\cal F}_n$
that were introduced by preceding studies.
According to the paper \cite{BP}, the paper \cite{LBR} considers the following conditions
\begin{description}
\item[(A1)]
Each ${\cal F}_n$ is a convex and closed subset of ${\cal S}({\cal H}^{\otimes n})$, 
and hence also compact.
\item[(A2)]
${\cal F}_1$ contains some full-rank state $\sigma_0 (\ge e^{-r_0}>0)\in {\cal F}_1$.
\item[(A3)]
The family ${\cal F}_n$ is closed under partial traces, 
i.e. if $\sigma\in {\cal F}_{n}$ then 
$\Tr_{n} \sigma \in {\cal F}_{n-1}$
and $\Tr_{1,\ldots,n-1} \sigma \in {\cal F}_1$, 
where 
$\Tr_{n}$ denotes the partial trace over the $n$-th subsystem
and
$\Tr_{1,\ldots,n-1}$ denotes the partial trace over 
the $1$st, $2$nd,$\ldots$, $n-1$-th subsystems.
\item[(A4)] The family ${\cal F}_n$ is closed under tensor products, 
i.e. if $\sigma \in {\cal F}_n$ and 
$\sigma' \in {\cal F}_m$ then $\sigma \otimes \sigma' \in {\cal F}_{n+m}$.
\item[(A5)] Each ${\cal F}_n$ is closed under permutations, 
i.e. if $\sigma \in {\cal F}_n$ and $\pi\in {\cal S}_n$ 
denotes an arbitrary permutation of a set of $n$ elements, 
then also $U_\pi \sigma U_\pi^\dagger \in {\cal F}_n$, where $U_\pi$ 
is the unitary implementing $\pi$ over ${\cal H}^{\otimes n}$.
Under this condition, we denote the set of permutation-invariant states in ${\cal F}_n$ by ${\cal F}_{n,inv}$.
\end{description}
Recently, the paper \cite{LBR} introduced the following condition.
\begin{description}
\item[(A6)]
The regularised relative entropy of resource is faithful, i.e.
for any element $\rho \notin {\cal F}_1$, we have
\begin{align}
D^{\infty}(\rho \|{\cal F}):=
\lim_{n\to \infty}\frac{1}{n}D(\rho^{\otimes n}\| {\cal F}_n)> 0.
\end{align}
\end{description}

The paper \cite{LBR} showed the following statement
as a classical extension of quantum Sanov theorem for hypothesis testing.
\begin{proposition}\Label{Pr1}\cite[Theorem 8]{LBR}
When a sequence of subsets ${\cal F}_n \subset {\cal S}_c({\cal H}^{\otimes n})$
satisfies the conditions (A1)--(A6), 
any state $\sigma\in {\cal S}_c({\cal H})$ 
satisfies
\begin{align}
\lim_{n\to \infty}-\frac{1}{n}\log \beta_\epsilon({\cal F}_n\|\sigma^{\otimes n})
=
D({\cal F}_1\|\sigma)
=
\lim_{n\to \infty}\frac{1}{n}D({\cal F}_n\|\sigma^{\otimes n}) .
\Label{BNAT1}
\end{align}
\end{proposition}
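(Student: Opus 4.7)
The plan splits naturally into three ingredients: the equality $D({\cal F}_1\|\sigma)=\lim_{n\to\infty}\frac{1}{n}D({\cal F}_n\|\sigma^{\otimes n})$, the converse bound on the exponent, and the matching direct (achievability) bound. For the equality I would establish superadditivity of $n\mapsto D({\cal F}_n\|\sigma^{\otimes n})$. The upper bound $D({\cal F}_n\|\sigma^{\otimes n})\le n\,D({\cal F}_1\|\sigma)$ follows from (A4), which places $\rho_0^{\otimes n}\in{\cal F}_n$ for every $\rho_0\in{\cal F}_1$. For a matching lower bound, pick any $\rho\in{\cal F}_{n+m}$: by (A3) the partial traces $\rho_A:=\Tr_{n+1,\ldots,n+m}\rho\in{\cal F}_n$ and $\rho_B:=\Tr_{1,\ldots,n}\rho\in{\cal F}_m$, while the classical chain-rule inequality $D(\rho\|\sigma^{\otimes(n+m)})\ge D(\rho_A\|\sigma^{\otimes n})+D(\rho_B\|\sigma^{\otimes m})$ (non-negativity of the mutual-information remainder) yields $D({\cal F}_{n+m}\|\sigma^{\otimes(n+m)})\ge D({\cal F}_n\|\sigma^{\otimes n})+D({\cal F}_m\|\sigma^{\otimes m})$. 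Fekete's lemma then gives $\lim_{n\to\infty}\frac{1}{n}D({\cal F}_n\|\sigma^{\otimes n})=\sup_n\frac{1}{n}D({\cal F}_n\|\sigma^{\otimes n})=D({\cal F}_1\|\sigma)$.

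The converse direction is short: for any $\rho_0\in{\cal F}_1$, (A4) gives $\rho_0^{\otimes n}\in{\cal F}_n$, hence $\beta_\epsilon({\cal F}_n\|\sigma^{\otimes n})\ge\beta_\epsilon(\rho_0^{\otimes n}\|\sigma^{\otimes n})$ because the composite-hypothesis constraint is strictly stronger. The classical Stein lemma gives $\lim_{n\to\infty}-\frac{1}{n}\log\beta_\epsilon(\rho_0^{\otimes n}\|\sigma^{\otimes n})=D(\rho_0\|\sigma)$, and minimising over $\rho_0\in{\cal F}_1$ yields $\limsup_{n\to\infty}-\frac{1}{n}\log\beta_\epsilon({\cal F}_n\|\sigma^{\otimes n})\le D({\cal F}_1\|\sigma)$.

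For the direct (achievability) bound I first invoke Lemma \ref{LG7} together with (A5) to replace ${\cal F}_n$ by its permutation-invariant subset and to restrict to permutation-invariant tests; in the classical case such a test depends only on the empirical type $\hat{t}$ of the outcome. The natural test $\Pi_n$ is the indicator that $\hat{t}$ lies outside a $\delta$-neighbourhood $U_\delta$ of ${\cal F}_1$. Classical Sanov immediately gives $\Tr[\Pi_n\sigma^{\otimes n}]\le e^{-n\inf_{t\notin U_\delta}D(t\|\sigma)+o(n)}$, and the infimum tends to $D({\cal F}_1\|\sigma)$ as $\delta\downarrow 0$. The main obstacle, and the real work, is showing that $\Pi_n\in{\cal D}_{\epsilon,{\cal F}_n}$: the empirical type under every $\rho\in{\cal F}_n$ must concentrate on $U_\delta$ uniformly in $\rho$. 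This is precisely what (A6) is designed to enforce: if some permutation-invariant $\rho\in{\cal F}_n$ placed non-vanishing weight on empirical types near some $\tau\notin{\cal F}_1$, propagating that weight through the partial-trace and tensor consistency conditions (A3)--(A4) would force ${\cal F}_n$ to approximate $\tau^{\otimes n}$ in relative entropy and so contradict $D^\infty(\tau\|{\cal F})>0$. Converting this heuristic into the uniform large-deviation bound needed to feed Sanov is the delicate step carried out in \cite[Theorem 8]{LBR}, which I would follow.
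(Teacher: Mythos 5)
Your skeleton matches the paper's: the additivity $\frac1n D({\cal F}_n\|\sigma^{\otimes n})=D({\cal F}_1\|\sigma)$ from (A3)/(A4) is Proposition \ref{PPH}, the converse by restricting the composite null to a single $\rho_0^{\otimes n}$ plus Stein's lemma is sound, and the direct part via a type-based test fed into Sanov is exactly the strategy of Theorem \ref{THY}. One slip before the main point: with your $\Pi_n$ the indicator that the type lies \emph{outside} $U_\delta$, Sanov bounds $\Tr[\Pi_n\sigma^{\otimes n}]$ by $e^{-n\inf_{t\notin U_\delta}D(t\|\sigma)+o(n)}$, and that infimum is $0$ (take $t$ near $\sigma$), not $D({\cal F}_1\|\sigma)$; the operator entering $\beta_\epsilon({\cal F}_n\|\sigma^{\otimes n})$ must be the indicator of types \emph{inside} $U_\delta$, with exponent $\inf_{t\in U_\delta}D(t\|\sigma)\to D({\cal F}_1\|\sigma)$. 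This is fixable notation, consistent with your own description of the concentration requirement.

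The genuine gap is the step you explicitly defer: showing the test is feasible, i.e.\ that the empirical type concentrates on $U_\delta$ uniformly over $\rho\in{\cal F}_n$. Since the statement being proved \emph{is} \cite[Theorem 8]{LBR}, writing ``the delicate step carried out in \cite[Theorem 8]{LBR}, which I would follow'' leaves the only place where (A6) enters, and hence the heart of the proof, unestablished. Concretely, the paper does two things here. First, Step 1 of Theorem \ref{THY} reduces the uniform claim to a per-type claim by contradiction: if some $\rho_n\in{\cal F}_n$ retained mass $c>0$ on types in a compact ${\cal G}\subset{\cal F}_1^c$, pigeonholing over the at most $(n+1)^{d-1}$ types yields $p_n\to p'\notin{\cal F}_1$ with $\beta_\epsilon(\rho_n(p_n)\|{\cal F}_n)$ decaying subexponentially, contradicting condition (D1), i.e.\ \eqref{XAI}. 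Second, Appendix \ref{S8} (Theorem \ref{Pr5}) proves that (A6) implies (D1) under (A1), (A2), (A4), (A5); the substance is Lemma \ref{LJ2}, which converts ``${\cal F}_n$ is subexponentially close to the type class of $p_n\to p$'' into ``$D^\infty(p\|{\cal F})=0$'' by a blocking construction that mixes in the full-rank state $\sigma_0$ of (A2), averages over permutations, and decomposes type classes, and which additionally relies on Proposition \ref{Pr3} for the existence of the regularised limit. Your one-sentence heuristic about propagating weight through (A3)--(A4) points in the right direction but carries out none of this quantitative work; a telltale sign is that your argument never invokes (A2), which the paper needs to keep the comparison states full rank and the divergences finite.
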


\begin{definition}\cite[Definition 4]{BHLP}
A POVM $M=\{M_j\}_j$ on ${\cal H}$ is called compatible with
$({\cal F}_n)_n$ when 
$ \Tr_n \big((I^{\otimes n-1}\otimes M_j )\rho_n\big)/
\Tr \big((I^{\otimes n-1}\otimes M_j )\rho_n\big)
$ belongs to ${\cal F}_{n-1}$ for $\rho_n \in {\cal F}_n$.
\end{definition}

Also, the paper \cite{LBR} introduced the following condition.
\begin{description}
\item[(A7)]
There exists a sequence $r_n \in (0, 1]$ to satisfy the following condition.
We define the set ${\cal M}_n$ of POVMs as
\begin{align}
{\cal M}_n:=
\Big\{
\Big\{\frac{1}{2}(I^{\otimes n}+X_n), \frac{1}{2}(I^{\otimes n}-X_n)\Big\}
\Big|\| X_n\|\le r_n \Big\}. \Label{GAH}
\end{align}
Any POVM in ${\cal M}_n$ is compatible with
$({\cal F}_n)_n$. 
\end{description}

The paper \cite{LBR} showed the following statement
as a quantum extension of quantum Sanov theorem for hypothesis testing.
\begin{proposition}\Label{Pr1B}\cite[Theorem 14]{LBR}
When a sequence of subsets ${\cal F}_n \subset {\cal S}({\cal H}^{\otimes n})$
satisfies the conditions (A1)-(A5) and (A7), 
any state $\sigma$ 
satisfies 
\begin{align}
\lim_{n\to \infty}-\frac{1}{n}\log \beta_\epsilon({\cal F}_n\|\sigma^{\otimes n})
=
D({\cal F}_1\|\sigma)
=
\lim_{n\to \infty}\frac{1}{n}D({\cal F}_n\|\sigma^{\otimes n}) .
\Label{BNAT2}
\end{align}
\end{proposition}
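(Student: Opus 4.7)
The plan separates the three claimed equalities. I would handle the relative-entropy identity $\frac{1}{n}D({\cal F}_n\|\sigma^{\otimes n})=D({\cal F}_1\|\sigma)$ by a direct subadditivity argument using (A3)--(A5); I would prove the converse $\limsup -\frac{1}{n}\log\beta_\epsilon\le D({\cal F}_1\|\sigma)$ from (A4) together with the ordinary quantum Stein lemma; and I would establish the matching achievability by combining the permutation reduction of Lemma~\ref{LG7}, the quantum empirical-distribution machinery of Section~\ref{S6}, the compatibility condition (A7), and the classical Sanov theorem of Proposition~\ref{Pr1}.

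For the first step, the $\le$ direction of $\frac{1}{n}D({\cal F}_n\|\sigma^{\otimes n})=D({\cal F}_1\|\sigma)$ is immediate: for any $\rho\in{\cal F}_1$, $\rho^{\otimes n}\in{\cal F}_n$ by (A4), so $D({\cal F}_n\|\sigma^{\otimes n})\le n D(\rho\|\sigma)$. For the $\ge$ direction, expand
\begin{align}
D(\rho_n\|\sigma^{\otimes n})
= -S(\rho_n) + \sum_{i=1}^n\bigl[S(\Tr_{\ne i}\rho_n)+D(\Tr_{\ne i}\rho_n\|\sigma)\bigr]
\end{align}
using $\log\sigma^{\otimes n}=\sum_i(\log\sigma)_i$, and apply the subadditivity $S(\rho_n)\le \sum_i S(\Tr_{\ne i}\rho_n)$ to obtain $D(\rho_n\|\sigma^{\otimes n})\ge \sum_i D(\Tr_{\ne i}\rho_n\|\sigma)$. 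Conditions (A3) and (A5) together place every single-site marginal inside ${\cal F}_1$, so each summand is at least $D({\cal F}_1\|\sigma)$.

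For the converse of the hypothesis-testing limit, (A4) gives ${\cal D}_{\epsilon,{\cal F}_n}\subseteq{\cal D}_{\epsilon,\rho^{\otimes n}}$ for every $\rho\in{\cal F}_1$, hence $\beta_\epsilon({\cal F}_n\|\sigma^{\otimes n})\ge \beta_\epsilon(\rho^{\otimes n}\|\sigma^{\otimes n})$; the ordinary quantum Stein lemma then yields $\limsup_n-\frac{1}{n}\log\beta_\epsilon({\cal F}_n\|\sigma^{\otimes n})\le D(\rho\|\sigma)$, and minimizing over $\rho\in{\cal F}_1$ produces the bound $D({\cal F}_1\|\sigma)$.

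The hard direction, and the main obstacle, is $\liminf_n -\frac{1}{n}\log\beta_\epsilon({\cal F}_n\|\sigma^{\otimes n})\ge D({\cal F}_1\|\sigma)$. Using Lemma~\ref{LG7} and (A5), I would first restrict to permutation-invariant tests on ${\cal F}_{n,inv}$, then construct the test from the quantum empirical distribution of Section~\ref{S6}: reject whenever the empirical state extracted from the Schur--Weyl decomposition is bounded away from ${\cal F}_1$. Under $\sigma^{\otimes n}$, Schur--Weyl concentration controlled by the sandwiched R\'enyi divergence drives the rejection probability down at rate $D({\cal F}_1\|\sigma)$. Under an arbitrary $\rho_n\in{\cal F}_n$, (A7) is deployed by iterating the allowed binary POVMs $\{\tfrac12(I\pm X),\tfrac12(I\mp X)\}$; each step keeps the conditional reduced state inside the nested sequence ${\cal F}_{n-1},{\cal F}_{n-2},\ldots,{\cal F}_1$, and the tomographic completeness built into (A7) forces the resulting classical empirical type into a neighborhood of ${\cal F}_1$ with probability $\ge 1-\epsilon$. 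Feeding both estimates into Proposition~\ref{Pr1} yields the matching exponent. The delicate point is precisely this uniform acceptance bound on ${\cal F}_n$: showing that the quantum empirical distribution of an \emph{arbitrary} (not product) $\rho_n\in{\cal F}_n$ truly concentrates inside ${\cal F}_1$ requires combining the Schur--Weyl error estimates of Section~\ref{S6} with the iterated closure supplied by compatibility, and this is where the full strength of (A7) is consumed.
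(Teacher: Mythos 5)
This proposition is quoted from \cite[Theorem 14]{LBR}; the paper does not reprove it but obtains it as a corollary of Theorem~\ref{THC4-C} through the implication chain (A7) $\Rightarrow$ (B3) $\Rightarrow$ (B2) (under (A5)), with the real work done in Theorem~\ref{THYQ} and Lemma~\ref{BF28}. Your first two steps are sound and essentially coincide with the paper's: the identity $\frac{1}{n}D({\cal F}_n\|\sigma^{\otimes n})=D({\cal F}_1\|\sigma)$ via superadditivity of the relative entropy over single-site marginals is exactly Proposition~\ref{PPH} (the $\ge$ direction being \eqref{BHY42}), and the converse via ${\cal D}_{\epsilon,{\cal F}_n}\subseteq{\cal D}_{\epsilon,\rho^{\otimes n}}$ plus the strong converse of quantum Stein is a legitimate alternative to the paper's data-processing bound \eqref{NHUQ4}.

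The gap is in the achievability direction, and it sits exactly where you flag it. Your test is the Schur--Weyl empirical-distribution test, so the quantity you must control under an arbitrary $\rho_n\in{\cal F}_n$ is the distribution of the \emph{quantum} empirical data $(p,\rho)\in{\cal R}_n[{\cal B}]$, whereas your argument via (A7) only controls the \emph{classical} type statistics of the outcomes of the compatible POVM $M^{\otimes n}$. These are different measurements, and nothing in your sketch connects them. The paper's bridge is the two-sided operator comparison: the pinching bound $\rho_n\le\overline{d}_n\,{\cal E}_{n,{\cal B}}(\rho_n)$ (Lemma~\ref{LB1}) and the lower bound $\int_{{\cal S}_{{\cal B}}(p,\rho,\epsilon)}\sigma^{\otimes n}\mu(d\sigma)\ge c^n(p,\rho,\epsilon)\rho_{n,{\cal B}}(p,\rho)$ (Lemma~\ref{LB2}), which together convert a R\'enyi-divergence bound for the classical family $M^{\otimes n}({\cal F}_n)$ (obtained from the chain rule of Lemma~\ref{Pr6}, i.e.\ the classical compatibility (B1)) into a strictly positive exponent for $\beta_\epsilon(\rho_{n,{\cal B}}(\eta_n)\|{\cal E}_{n,{\cal B}}({\cal F}_n))$, which is condition (D2). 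Without these lemmas the ``uniform acceptance bound on ${\cal F}_n$'' is asserted, not proved. A secondary problem: invoking Proposition~\ref{Pr1} is not the right tool here --- its hypotheses include (A6), which you have not verified for $M^{\otimes n}({\cal F}_n)$, and in any case a purely classical analysis through a fixed POVM $M$ can only certify the measured exponent $D(M({\cal F}_1)\|M(\sigma))$, which is generically strictly below $D({\cal F}_1\|\sigma)$ by data processing. The paper escapes this because the classical measurement is used only to certify \emph{positivity} of the exponent in (D2), while the tight rate $D({\cal F}_1\|\sigma)$ comes from the quantum Sanov theorem for empirical distributions (Proposition~\ref{LL1}) applied to $\sigma^{\otimes n}$; your write-up does not make this division of labor, and as stated the classical reduction cannot deliver the claimed rate.
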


Later, the paper \cite{FFF} considered another condition.
For a set ${\cal F}$, they define the subset
\begin{align}
({\cal F})_+^\circ :=
\{X \ge 0 | \Tr XY \le 1, \forall Y\in {\cal F}\}.
\end{align}
Then, the paper \cite{FFF} introduced the condition;
\begin{description}
\item[(A8)]
The relation
\begin{align}
({\cal F}_n)_+^\circ \otimes ({\cal F}_m)_+^\circ \subset 
({\cal F}_{n+m})_+^\circ
\end{align}
holds for any two positive integers $n,m$.
\end{description}

The paper \cite{FFF} showed the following statement
as another quantum extension of 
quantum Sanov theorem for hypothesis testing.
\begin{proposition}\Label{Pr2}\cite[Theorem 26]{FFF}
When a sequence of subsets ${\cal F}_n \subset {\cal S}({\cal H}^{\otimes n})$
satisfies the conditions 
(A1), (A4), (A5), and (A8),
any state $\sigma$ satisfies 
\begin{align}
\lim_{n\to \infty}-\frac{1}{n}\log \beta_\epsilon({\cal F}_n\|\sigma^{\otimes n})
=
\lim_{n\to \infty}\frac{1}{n}D({\cal F}_n\|\sigma^{\otimes n}) .
\Label{BNAT3}
\end{align}
\end{proposition}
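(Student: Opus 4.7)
The strategy is to establish the two inequalities separately. By (A4) together with the additivity of $D$ under tensor products, $D({\cal F}_n\|\sigma^{\otimes n})$ is subadditive in $n$, so Fekete's lemma yields the existence of $C^\ast := \lim_n \frac{1}{n}D({\cal F}_n\|\sigma^{\otimes n})$. Lemma~\ref{L1} (equation~\eqref{BTD2}) together with Lemma~\ref{LG7} applied to the permutation group (valid by (A1) and (A5), and noting that $\sigma^{\otimes n}$ is already permutation invariant) reduces the problem to $\beta_\epsilon({\cal F}_n\|\sigma^{\otimes n}) = \max_{\rho \in {\cal F}_{n,inv}}\beta_\epsilon(\rho\|\sigma^{\otimes n})$.

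For the converse bound $\limsup_n -\frac{1}{n}\log\beta_\epsilon({\cal F}_n\|\sigma^{\otimes n}) \le C^\ast$, I would pick a permutation-invariant minimizer $\rho_n^\ast \in {\cal F}_{n,inv}$ of $D(\rho\|\sigma^{\otimes n})$; such a minimizer exists by averaging any minimizer over the permutation group, using (A5) and the joint convexity of $D$. Then $\beta_\epsilon({\cal F}_n\|\sigma^{\otimes n}) \ge \beta_\epsilon(\rho_n^\ast\|\sigma^{\otimes n})$ reduces the task to a simple-hypothesis test against $\sigma^{\otimes n}$. I would then apply the one-shot strong-converse bound $-\log\beta_\epsilon(\rho\|\sigma) \le D_\alpha(\rho\|\sigma) + \frac{\alpha}{\alpha-1}\log\frac{1}{1-\epsilon}$ for sandwiched R\'enyi of order $\alpha > 1$, and use the permutation symmetry of $\rho_n^\ast$ (via a Schur--Weyl decomposition together with an asymptotic equipartition estimate for permutation-invariant states) to conclude that $\limsup_n \frac{1}{n}D_\alpha(\rho_n^\ast\|\sigma^{\otimes n}) \to C^\ast$ as $\alpha\to 1^+$, closing this direction.

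For the achievability bound $\liminf_n -\frac{1}{n}\log\beta_\epsilon({\cal F}_n\|\sigma^{\otimes n}) \ge C^\ast$, condition (A8) is essential. The plan is a blocking argument: for each block size $k$, use an SDP-duality formula for $D({\cal F}_k\|\sigma^{\otimes k})$ to build a positive witness $X_k \in ({\cal F}_k)_+^\circ$ whose $\sigma^{\otimes k}$-expectation reflects the single-block rate $\frac{1}{k}D({\cal F}_k\|\sigma^{\otimes k})$. The polar-tensor stability (A8) then yields $X_k^{\otimes m}\in({\cal F}_{km})_+^\circ$, giving the uniform bound $\Tr[X_k^{\otimes m}\rho_{km}]\le 1$ for every $\rho_{km}\in {\cal F}_{km}$. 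From this I would construct an $\epsilon$-admissible POVM by spectral thresholding $\Pi_{km}:=\{X_k^{\otimes m}\le 1/\epsilon\}$: operator Markov immediately ensures $\Tr[(I-\Pi_{km})\rho_{km}]\le \epsilon\,\Tr[X_k^{\otimes m}\rho_{km}]\le \epsilon$, while a classical-type LDP on the eigenvalue statistics of $X_k^{\otimes m}$ against the i.i.d.\ state $\sigma^{\otimes km}$ yields exponential decay of $\Tr[\Pi_{km}\sigma^{\otimes km}]$ at a rate approaching $\frac{1}{k}D({\cal F}_k\|\sigma^{\otimes k})$; sending $m\to\infty$ then $k\to\infty$ recovers $C^\ast$.

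The main obstacle is the achievability step -- specifically, (i) identifying the correct variational duality that relates the polar cone $({\cal F}_k)_+^\circ$ to the ordinary relative entropy $D({\cal F}_k\|\sigma^{\otimes k})$ (the immediate SDP dual involves the max-relative entropy, so bridging to $D$ requires additional AEP-type work), and (ii) amplifying the mere constant bound $\Tr[X_k^{\otimes m}\rho_{km}]\le 1$ supplied by (A8) into an exponential bound on the type-II error through spectral thresholding and spectral concentration of $X_k^{\otimes m}$ against $\sigma^{\otimes km}$. Condition (A8) is precisely the structural feature that enables the tensorization of the witness; its failure in the PPT case is what forces that example outside the scope of the theorem.
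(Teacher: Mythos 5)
First, a point of reference: the paper does not prove Proposition~\ref{Pr2} at all --- it is imported verbatim as \cite[Theorem 26]{FFF}, so there is no in-paper proof to compare your argument against. Judged on its own terms, your proposal correctly identifies the structural role of condition (A8) (tensorization of dual feasible points in the polar cone) and your converse direction is workable in outline, but it contains one questionable step and one genuine unfilled gap.

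On the converse: your claim that $\limsup_n \frac{1}{n}D_\alpha(\rho_n^\ast\|\sigma^{\otimes n})\to C^\ast$ as $\alpha\to 1^+$ for a general sequence of permutation-invariant minimizers is exactly the kind of limit exchange that is not automatic; the regularized sandwiched R\'enyi divergence of a general permutation-invariant sequence need not converge to the regularized relative entropy as $\alpha\to 1^+$ uniformly in $n$. The fix is cheap and uses (A4) more aggressively: for each block length $k$ take the minimizer $\rho_k^\ast\in{\cal F}_k$ and test against $(\rho_k^\ast)^{\otimes m}\in{\cal F}_{km}$; the strong-converse bound then gives $\limsup_m -\frac{1}{km}\log\beta_\epsilon\le \frac{1}{k}D_\alpha(\rho_k^\ast\|\sigma^{\otimes k})$, where now $\alpha\to 1^+$ can be taken at fixed $k$ before $k\to\infty$. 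You should make this explicit, since with only (A1), (A4), (A5), (A8) one does not have Proposition~\ref{PPH} (which needs (A3)), so $C^\ast$ may be strictly smaller than $D({\cal F}_1\|\sigma)$ and the single-letter i.i.d.\ choice does not suffice.

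The genuine gap is the achievability step, and you have named it yourself without closing it. The polar-cone membership $X_k\in({\cal F}_k)_+^\circ$ only gives the normalization $\Tr[X_k\rho]\le 1$, which by duality is tied to $D_{\max}$, not to $D$. Your spectral-thresholding test $\Pi_{km}=\{X_k^{\otimes m}\le 1/\epsilon\}$ does control the type-I error by operator Markov, and the type-II error does reduce to a classical i.i.d.\ large-deviation event for $\sum_i\log\lambda_{J_i}$ under the distribution $\Pr[J=j]=\Tr[P_j\sigma^{\otimes k}]$ --- but the exponent you obtain is $\sum_j\Tr[P_j\sigma^{\otimes k}]\log\lambda_j$, and nothing in your construction guarantees that a witness exists for which this quantity approaches $D({\cal F}_k\|\sigma^{\otimes k})$ rather than something smaller. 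Producing such a witness (or an equivalent bridge from the max-relative-entropy dual to the relative-entropy rate, via pinching/AEP and a minimax argument) is precisely the technical heart of \cite[Theorem 26]{FFF}, and it is absent from the proposal. As written, the achievability direction establishes only a positive, possibly suboptimal, exponent.
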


\begin{proposition}\Label{Pr3}\cite[Theorem 1]{HY}
Assume that a sequence of subsets ${\cal F}_n 
\subset {\cal S}({\cal H}^{\otimes n})$
satisfies the conditions 
(A1), (A2), and (A4).
For any state $\sigma$,
the limits 
$\lim_{n\to \infty}-\frac{1}{n}\log \beta_\epsilon(
\sigma^{\otimes n}\|{\cal F}_n)$
and $\lim_{n\to \infty}\frac{1}{n}D(
\sigma^{\otimes n}\|{\cal F}_n)$ exist, and the relation
\begin{align}
\lim_{n\to \infty}-\frac{1}{n}\log \beta_\epsilon(
\sigma^{\otimes n}\|{\cal F}_n)
=
\lim_{n\to \infty}\frac{1}{n}D(
\sigma^{\otimes n}\|{\cal F}_n)
\Label{BHY}
\end{align}
holds.
\end{proposition}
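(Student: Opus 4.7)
The plan is to establish the two limits separately and show they coincide, exploiting the tensor-product closure (A4) throughout. To produce $R_\infty:=\lim_n R_n$, observe that (A4) yields $\rho_n\otimes\rho_m\in{\cal F}_{n+m}$ whenever $\rho_n\in{\cal F}_n$ and $\rho_m\in{\cal F}_m$; combined with additivity of the relative entropy on tensor products this gives $D(\sigma^{\otimes(n+m)}\|{\cal F}_{n+m})\le D(\sigma^{\otimes n}\|{\cal F}_n)+D(\sigma^{\otimes m}\|{\cal F}_m)$, and Fekete's lemma delivers $R_n\to\inf_n R_n=:R_\infty$.

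For $\limsup_n B_n\le R_\infty$, I would fix $\delta>0$, choose $n_0$ and $\rho_{n_0}\in{\cal F}_{n_0}$ with $\tfrac{1}{n_0}D(\sigma^{\otimes n_0}\|\rho_{n_0})\le R_\infty+\delta$, and use (A4) to obtain $\rho_{n_0}^{\otimes k}\in{\cal F}_{kn_0}$. Viewing $\sigma^{\otimes n_0}$ and $\rho_{n_0}$ as single letters on ${\cal H}^{\otimes n_0}$, the strong-converse part of the i.i.d.\ quantum Stein's lemma gives $\beta_\epsilon(\sigma^{\otimes kn_0}\|\rho_{n_0}^{\otimes k})\ge e^{-kD(\sigma^{\otimes n_0}\|\rho_{n_0})-o(k)}$, whence $B_{kn_0}\le R_\infty+\delta+o(1)$. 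For $n=kn_0+r$ with $0\le r<n_0$, I would set $\rho_n:=\rho_{n_0}^{\otimes k}\otimes\sigma_0^{\otimes r}\in{\cal F}_n$ (using the full-rank $\sigma_0$ from (A2) together with (A4)) and apply the Stein strong converse to the block-i.i.d.\ pair $(\sigma^{\otimes n},\rho_n)$: additivity of $D$ on products extends the bound to all $n$, and sending $\delta\to 0$ completes this direction.

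The converse inequality $\liminf_n B_n\ge R_\infty$ is the main obstacle, since without (A5) or (A3) the standard Brand\~ao--Plenio-type de~Finetti reduction is unavailable. I would instead apply the one-shot Nagaoka--Hayashi lower bound on the hypothesis-testing divergence in terms of the Petz R\'enyi divergence, of the form $-\log\beta_\epsilon(\rho\|\sigma)\ge \overline{D}_\alpha(\rho\|\sigma)-c(\alpha,\epsilon)$ for $\alpha\in(0,1)$, where $c(\alpha,\epsilon)$ is independent of the states and of $n$. Applied to each $\rho_n\in{\cal F}_n$ and minimised over $\rho_n$, this yields $B_n\ge \overline{D}_\alpha(\sigma^{\otimes n}\|{\cal F}_n)/n-c(\alpha,\epsilon)/n$. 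Since $\overline{D}_\alpha$ is additive on tensor products, (A4) makes $\overline{D}_\alpha(\sigma^{\otimes n}\|{\cal F}_n)$ subadditive in $n$, so Fekete's lemma produces its regularisation $\overline{D}_\alpha^\infty(\sigma\|{\cal F})$, and $\liminf_n B_n\ge \overline{D}_\alpha^\infty(\sigma\|{\cal F})$ for every $\alpha\in(0,1)$. Finally, sending $\alpha\to 1^-$ and using (A2) (which, via convex mixing with $\sigma_0^{\otimes n}$ permitted by (A1) and (A4), keeps supports matched and the Petz R\'enyi divergences uniformly finite) yields $\overline{D}_\alpha^\infty(\sigma\|{\cal F})\to R_\infty$, completing the proof.
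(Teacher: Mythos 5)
This proposition is not proved in the paper at all: it is imported verbatim from \cite[Theorem 1]{HY} (the Hayashi--Yamasaki resolution of the generalized quantum Stein's lemma), so the relevant comparison is with that external proof, which is long and structurally very different from what you propose. Your first two steps are fine: Fekete's lemma from (A4)-subadditivity gives the existence of $R_\infty=\lim_n\frac1n D(\sigma^{\otimes n}\|{\cal F}_n)$, and the achievability bound $\limsup_n B_n\le R_\infty$ via the i.i.d.\ strong converse applied to $\rho_{n_0}^{\otimes k}\otimes\sigma_0^{\otimes r}$ is standard and correct (note only that $\beta_\epsilon(\sigma^{\otimes n}\|{\cal F}_n)\ge\beta_\epsilon(\sigma^{\otimes n}\|\rho_n)$ needs no minimax theorem, while your later reduction of the composite test to the worst single $\rho_n$ does need Lemma \ref{L1} and hence the convexity in (A1) --- worth stating explicitly).

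The genuine gap is the last sentence of your converse argument. What you actually obtain is
$\liminf_n B_n\ge\sup_{\alpha\in(0,1)}\inf_n\frac1n \overline{D}_\alpha(\sigma^{\otimes n}\|{\cal F}_n)$,
whereas $R_\infty=\inf_n\sup_{\alpha\in(0,1)}\frac1n \overline{D}_\alpha(\sigma^{\otimes n}\|{\cal F}_n)$ (using $\sup_\alpha \overline{D}_\alpha=D$ for each fixed $n$). Asserting that these coincide is an exchange of $\sup_\alpha$ and $\inf_n$, and uniform finiteness of the R\'enyi divergences (which mixing with $\sigma_0^{\otimes n}$ does give you) is nowhere near sufficient for it: one would need something like equicontinuity of $\alpha\mapsto\frac1n\overline{D}_\alpha(\sigma^{\otimes n}\|{\cal F}_n)$ uniformly in $n$, which is not established and is false to assume in general. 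This limit exchange is precisely the step at which the original Brand\~ao--Plenio proof of the generalized quantum Stein's lemma broke down, and circumventing it is the entire technical content of \cite{HY} (and of Lami's independent proof), which proceed by quite different means rather than by regularized Petz--R\'enyi interpolation. As written, your argument reproduces the known-to-be-gappy strategy and therefore does not prove the converse direction.
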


Proposition \ref{Pr1B} states the equality between 
$D({\cal F}_1\|\sigma)$ and 
$\lim_{n\to \infty}\frac{1}{n}D({\cal F}_n\|\sigma^{\otimes n})$.
However, this equality can be derived only 
satisfies the conditions (A1), (A3) and (A4).
The condition (A4) implies 
\begin{align}
\frac{1}{n}D({\cal F}_n\|\sigma^{\otimes n})
\le D({\cal F}_1\|\sigma).
\Label{BHY41}
\end{align}
In contrast, the existing study \cite[Eq. (14)]{LBR}
shows that the conditions (A1) and (A3) imply 
\begin{align}
\frac{1}{n}D({\cal F}_n\|\sigma^{\otimes n})
\ge D({\cal F}_1\|\sigma).
\Label{BHY42}
\end{align}
The combination of \eqref{BHY41} and \eqref{BHY42}
yields the following proposition.

\begin{proposition}\Label{PPH}\cite[Section IV-B]{LBR}
Assume that a sequence of subsets ${\cal F}_n 
\subset {\cal S}({\cal H}^{\otimes n})$
satisfies the conditions 
(A1), (A3), and (A4).
Any state $\sigma$ satisfies
the additivity property without the asymptotic limit:
\begin{align}
\frac{1}{n}D({\cal F}_n\|\sigma^{\otimes n})
=
D({\cal F}_1\|\sigma).
\Label{BHY4}
\end{align}
\end{proposition}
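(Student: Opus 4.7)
The plan is to establish the two inequalities \eqref{BHY41} and \eqref{BHY42} separately; taken together they yield the additivity \eqref{BHY4}. The upper bound \eqref{BHY41} is immediate from (A1) and (A4): by the compactness in (A1), the minimum $D({\cal F}_1\|\sigma)$ is attained at some $\rho^\ast\in{\cal F}_1$, and (A4) then guarantees $(\rho^\ast)^{\otimes n}\in{\cal F}_n$, so additivity of the relative entropy on tensor products yields
\begin{align}
D({\cal F}_n\|\sigma^{\otimes n})
\le D\bigl((\rho^\ast)^{\otimes n}\,\big\|\,\sigma^{\otimes n}\bigr)
= n\,D(\rho^\ast\|\sigma)
= n\,D({\cal F}_1\|\sigma).
\end{align}

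For the lower bound \eqref{BHY42}, I would pick a minimizer $\rho_n\in{\cal F}_n$ (which exists by the compactness in (A1)) and let $\rho_n^{(i)}$ denote its reduced state on the $i$-th subsystem. The first step is to show that each marginal satisfies $\rho_n^{(i)}\in{\cal F}_1$: iterating the $\Tr_n$ half of (A3) produces $\Tr_{i+1,\ldots,n}\rho_n\in{\cal F}_i$, and a subsequent application of the $\Tr_{1,\ldots,n-1}$ half of (A3) yields $\rho_n^{(i)}\in{\cal F}_1$. Using the decomposition $\log\sigma^{\otimes n}=\sum_{i=1}^{n} I^{\otimes(i-1)}\otimes\log\sigma\otimes I^{\otimes(n-i)}$ together with subadditivity of the von Neumann entropy, $-\Tr[\rho_n\log\rho_n]\le\sum_{i=1}^{n}\bigl(-\Tr[\rho_n^{(i)}\log\rho_n^{(i)}]\bigr)$, one obtains
\begin{align}
D(\rho_n\|\sigma^{\otimes n})
\ge \sum_{i=1}^{n} D(\rho_n^{(i)}\|\sigma)
\ge n\,D({\cal F}_1\|\sigma),
\end{align}
where the last step uses $\rho_n^{(i)}\in{\cal F}_1$. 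Combined with the upper bound, this gives \eqref{BHY4}.

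The only subtle point I anticipate is the marginalization step: condition (A3) as stated gives direct access only to the partial trace over the last subsystem or over all-but-one, not to an arbitrary single-site marginal. Crucially, permutation invariance (A5) is \emph{not} assumed here, so one cannot simply permute indices and apply (A3); instead the two halves of (A3) must be combined in cascade, as indicated above, to reduce an interior marginal to an element of ${\cal F}_1$. Once that reduction is justified, the remainder is a standard application of entropy subadditivity and tensor-product additivity of the relative entropy.
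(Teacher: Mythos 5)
Your proposal is correct and follows the same two-inequality structure as the paper: the upper bound \eqref{BHY41} from (A1) and (A4) via tensor powers of a minimizer, and the lower bound \eqref{BHY42} from (A1) and (A3). The only difference is that the paper simply cites \cite[Eq.\ (14)]{LBR} for \eqref{BHY42}, whereas you supply a self-contained derivation — cascading the two halves of (A3) to place every single-site marginal $\rho_n^{(i)}$ in ${\cal F}_1$ without invoking permutation invariance, and then using subadditivity of the von Neumann entropy together with $\log\sigma^{\otimes n}=\sum_i I^{\otimes(i-1)}\otimes\log\sigma\otimes I^{\otimes(n-i)}$ to get $D(\rho_n\|\sigma^{\otimes n})\ge\sum_i D(\rho_n^{(i)}\|\sigma)\ge nD({\cal F}_1\|\sigma)$ — which is valid and matches the standard argument behind the cited inequality.
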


\subsection{Our results}
To state our obtained result for the classical setting,
we introduce the following condition for a sequence of subsets ${\cal F}_n \subset {\cal S}_c({\cal H}^{\otimes n})$
by strengthening the condition (A3).
\begin{description}
\item[(B1)]
The measurement based on the computation basis 
is compatible with the sequence of subsets ${\cal F}_n \subset {\cal S}_c({\cal H}^{\otimes n})$.
That is, this condition for the sequence of subsets ${\cal F}_n$
is composed of the following two conditions.
All elements of ${\cal F}_n$ are diagonal states.
For any element $\rho_n \in {\cal F}_n$,
the state $\Tr_{1,\ldots,n-1} \rho_n$ belongs to ${\cal F}_1$ and
the state $\Tr_n \rho_n (I^{\otimes n-1} \otimes |j\rangle \langle j|)/
\Tr \rho_n (I^{\otimes n-1} \otimes |j\rangle \langle j|)$
belongs to ${\cal F}_{n-1}$.
In other words, any conditional distribution of $\rho_n$ with condition on the $n$-th system belongs to ${\cal F}_{n-1}$.
\end{description}

\begin{theorem}\Label{THC3-C}
When a sequence of subsets ${\cal F}_n \subset {\cal S}_c({\cal H}^{\otimes n})$
satisfies the conditions (A1), (A5), and (B1),
any state $\sigma$ satisfies
satisfies \eqref{BNAT1}.
\end{theorem}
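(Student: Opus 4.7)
The plan is to prove the chain of equalities in \eqref{BNAT1} by combining a direct (achievability) lower bound on $-\tfrac1n\log\beta_\epsilon({\cal F}_n\|\sigma^{\otimes n})$, a matching converse, and the relative-entropy identity $\tfrac1nD({\cal F}_n\|\sigma^{\otimes n})\to D({\cal F}_1\|\sigma)$. I first apply Lemma~\ref{LG7} with the symmetric group $S_n$ to restrict, without loss of generality, to the permutation-invariant subset ${\cal F}_{n,\mathrm{inv}}\subset{\cal F}_n$, so that the free states become exchangeable classical distributions.

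For the direct part I use the empirical-distribution (type) test
\begin{equation*}
\Pi_n=\sum_{x^n:\,d(\hat p_{x^n},{\cal F}_1)\le\delta}|x^n\rangle\!\langle x^n|,
\end{equation*}
where $\hat p_{x^n}$ is the type of $x^n$. The Type-II probability $\Tr[\Pi_n\sigma^{\otimes n}]=\sigma^{\otimes n}(\hat p\in{\cal F}_1^\delta)$ decays with exponent $\inf_{p\in{\cal F}_1^\delta}D(p\|\sigma)$ by the classical Sanov theorem, and by compactness from (A1) this tends to $D({\cal F}_1\|\sigma)$ as $\delta\downarrow 0$. To bound the Type-I probability uniformly over $\rho_n\in{\cal F}_{n,\mathrm{inv}}$, I iterate (B1) together with (A5) to show that every single-coordinate conditional distribution arising from $\rho_n$ lies in ${\cal F}_1$; a finite de Finetti (Diaconis--Freedman) estimate applied to the exchangeable law $\rho_n$ then forces $\hat p_{x^n}$ to concentrate on ${\cal F}_1$ with probability tending to~$1$, so that $\Pi_n\in{\cal D}_{\epsilon,{\cal F}_n}$ for all sufficiently large~$n$.

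For the converse and the relative-entropy identity, for $\rho_n\in{\cal F}_{n,\mathrm{inv}}$ the chain rule gives
\begin{equation*}
D(\rho_n\|\sigma^{\otimes n})=\sum_{i=1}^{n}\mathbb{E}_{x_{<i}\sim\rho_n}\!\bigl[D(P_{X_i|X_{<i}=x_{<i}}\|\sigma)\bigr],
\end{equation*}
and by the iterated (B1)+(A5) argument each inner conditional lies in ${\cal F}_1$, so each summand is $\ge D({\cal F}_1\|\sigma)$, yielding $\tfrac1nD({\cal F}_n\|\sigma^{\otimes n})\ge D({\cal F}_1\|\sigma)$ for every~$n$. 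For the matching $\le$ and the $\beta$-converse I take a minimiser $\rho_n^*\in{\cal F}_n$, guaranteed by the compactness in (A1), and use $\beta_\epsilon({\cal F}_n\|\sigma^{\otimes n})\ge\beta_\epsilon(\rho_n^*\|\sigma^{\otimes n})$; a Stein/information-spectrum bound along the permutation-invariant sequence $\rho_n^*$ yields $-\tfrac1n\log\beta_\epsilon(\rho_n^*\|\sigma^{\otimes n})\to\lim_n\tfrac1nD(\rho_n^*\|\sigma^{\otimes n})$, and the resulting sandwich with the direct-part lower bound collapses all three quantities in \eqref{BNAT1} to $D({\cal F}_1\|\sigma)$.

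The step I expect to be the main obstacle is the uniform Type-I concentration: extracting, from permutation invariance and the conditional closure (B1) alone---without any tensor-product closure such as (A4)---a quantitative bound on $\rho_n(\hat p\notin{\cal F}_1^\delta)$ uniform over ${\cal F}_{n,\mathrm{inv}}$. This is precisely the step where the classical shadow of the quantum-empirical-distribution viewpoint of the paper enters, and it likely requires a Diaconis--Freedman-style finite de Finetti estimate or a direct second-moment/union-bound-over-types argument rather than a soft compactness argument.
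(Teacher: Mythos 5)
Your overall architecture matches the paper's: reduction to permutation-invariant free states via Lemma \ref{LG7}, a type-class test whose type-II exponent is given by the classical Sanov theorem, a uniform type-I bound extracted from (B1), the chain-rule inequality $\frac1n D({\cal F}_n\|\sigma^{\otimes n})\ge D({\cal F}_1\|\sigma)$, and an information-spectrum sandwich to obtain the strong converse (these are, respectively, Theorem \ref{THY}, Lemmas \ref{Pr6} and \ref{LB89T}, and Theorem \ref{BFG8} in the paper). Two steps, however, are genuinely incomplete as written.

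First, the uniform type-I concentration, which you rightly single out as the main obstacle, is left unproved, and the first tool you name will not do the job. A Diaconis--Freedman finite de Finetti estimate approximates the restriction of an exchangeable law to few coordinates by a mixture over its \emph{own} empirical measure; it gives no information about where that empirical measure is supported, which is exactly what must be shown. Your second suggestion is the one that works, once made precise: iterating (B1) gives $P_{X_k\mid X_{k+1},\dots,X_n}\in{\cal F}_1$ for every $k$ and every conditioning value, and since ${\cal F}_1$ is convex, $\frac1n\sum_k \mathbb{E}[f(X_k)\mid X_{>k}]$ lies in the convex set $\{\Tr(fq):q\in{\cal F}_1\}$, while $\frac1n\sum_k\bigl(f(X_k)-\mathbb{E}[f(X_k)\mid X_{>k}]\bigr)$ has variance $O(1/n)$ by orthogonality of the reverse-martingale differences; a union bound over the $d$ coordinate indicators then concentrates $\hat p_{x^n}$ near ${\cal F}_1$. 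The paper reaches the same conclusion algebraically rather than probabilistically: the R\'enyi superadditivity $D_\alpha({\cal F}_n\|p^{\otimes n})\ge nD_\alpha({\cal F}_1\|p)$ of Lemma \ref{Pr6} (itself a chain-rule argument over the last coordinate using (B1)), combined with Lemmas \ref{NK8} and \ref{LB89T}, yields an exponential bound on $\Tr(\rho_n 1^n_p)$ for each type $p$ bounded away from ${\cal F}_1$, which is then fed into the contradiction argument of Theorem \ref{THY}. Either route suffices for the constant-$\epsilon$ constraint, but one of them has to be carried out.

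Second, your converse step asserts something false in the stated generality: for a permutation-invariant sequence $\rho_n^*$ it is \emph{not} true that $-\frac1n\log\beta_\epsilon(\rho_n^*\|\sigma^{\otimes n})\to\lim_n\frac1nD(\rho_n^*\|\sigma^{\otimes n})$. Take $\rho_n^*=\frac12 p^{\otimes n}+\frac12 q^{\otimes n}$: the left side is governed by $\min\{D(p\|\sigma),D(q\|\sigma)\}$ while the right side is the average, and the counterexample of Appendix \ref{S3} of the paper is built on precisely this phenomenon. The convergence does hold for the minimizing sequence in ${\cal F}_n$ under the hypotheses of the theorem, but only because the direct part supplies $\liminf_n-\frac1n\log\beta_\epsilon\ge D({\cal F}_1\|\sigma)\ge\limsup_n\frac1nD({\cal F}_n\|\sigma^{\otimes n})$ for \emph{every} $\epsilon>0$, after which the information-spectrum argument of Theorem \ref{BFG8} and Lemma \ref{BCV} (using this two-sided input together with commutativity) forces the spectral divergence rates to collapse onto the entropy rate. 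Your ``sandwich'' remark gestures at this, but the argument must be run in that order; the weak-converse bound $-\log\beta_\epsilon\le\bigl(D(\rho_n^*\|\sigma^{\otimes n})+h(\epsilon)\bigr)/(1-\epsilon)$ alone does not close the gap for fixed $\epsilon$, and the Stein-type convergence you invoke is not available as a black box for non-i.i.d.\ sequences.
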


Theorem \ref{THC3-C} will be shown from Theorem \ref{THC3} 
in Section \ref{SSS3D}.
Theorem \ref{THC3} will be shown in Sections \ref{S4} and \ref{S5}.
The key point of Theorem \ref{THC3-C}
is to use the compatibility condition, which was also used in \cite[Theorem 16]{BHLP}.

To state our obtained result for the quantum setting,
we introduce the following definition.
\begin{definition}
A POVM $M=\{M_j\}_j$ on ${\cal H}$ is called 
tomographically complete
when the set $\{M_j\}$ forms a basis on the set of Hermitian matrices
on ${\cal H}$ \cite{DARIANO200025}.
Given a state $\rho$ on ${\cal H}$,
the classical state 
$\sum_j (\Tr M_j \rho) |j \rangle \langle j|$ is denoted by $M(\rho)$.
\end{definition}
Then, we introduce the following condition for a sequence of subsets ${\cal F}_n \subset {\cal S}({\cal H}^{\otimes n})$.
\begin{description}
\item[(B2)]
There exists a tomographically complete POVM $M=\{M_j\}_j$ 
on ${\cal H}$ with finite measurement outcomes such that 
$(M^{\otimes n}({\cal F}_n))_n$
satisfies the condition (B1).
\end{description}

The following condition implies the condition (B2) when the condition (A5) holds.
\begin{description}
\item[(B3)]
There exists 
a tomographically complete
 POVM $M=\{M_j\}_j$ 
on ${\cal H}$ that is compatible with $({\cal F}_n)_n$.
\end{description}

\begin{lemma}
When the condition (A5) holds,
the condition (B3) implies the condition (B2). 
\end{lemma}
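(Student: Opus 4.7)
The plan is to show that the very POVM $M$ supplied by (B3) itself witnesses (B2): that the classical sequence $G_n := M^{\otimes n}({\cal F}_n)$ satisfies (B1). Three things must be checked for every $\mu_n = M^{\otimes n}(\rho_n) \in G_n$ with $\rho_n\in {\cal F}_n$: (i) $\mu_n$ is diagonal in the outcome basis; (ii) the conditional state of $\mu_n$ given outcome $j$ on the $n$-th system lies in $G_{n-1}$; and (iii) the marginal $\Tr_{1,\ldots,n-1}\mu_n$ lies in $G_1$.

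Item (i) is immediate from the definition $M^{\otimes n}(\rho)=\sum_{j_1,\ldots,j_n} \Tr[(M_{j_1}\otimes\cdots\otimes M_{j_n})\rho]\,|j_1,\ldots,j_n\rangle\langle j_1,\ldots,j_n|$. Item (ii) reduces directly to the compatibility hypothesis: the normalised conditional classical state with outcome $j$ on the last site equals $M^{\otimes(n-1)}(\rho_{n-1}^{(j)})$, where $\rho_{n-1}^{(j)}=\Tr_n[(I^{\otimes n-1}\otimes M_j)\rho_n]/\Tr[(I^{\otimes n-1}\otimes M_j)\rho_n]$. By (B3) this quantum state lies in ${\cal F}_{n-1}$, so its image under $M^{\otimes(n-1)}$ lies in $G_{n-1}$.

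Item (iii) is the substantive step and the only place where (A5) is used. Since $\{M_j\}_j$ is a basis of the Hermitian operators on ${\cal H}$, the classical map $\sigma\mapsto M(\sigma)$ is injective, so it suffices to show $\Tr_{1,\ldots,n-1}\rho_n\in {\cal F}_1$. I plan to obtain this by iterating (B3) along all tensor factors but one: applying compatibility successively at sites $n,n-1,\ldots,2$ with respective outcomes $k_n,k_{n-1},\ldots,k_2$ produces, for every outcome tuple $(k_2,\ldots,k_n)$, a conditional state $\rho_1^{(k_2,\ldots,k_n)}\in{\cal F}_1$. Averaging these states against the joint outcome probability gives $\Tr_{2,\ldots,n}\rho_n$, which lies in ${\cal F}_1$ under the standing convexity assumption (A1) in force throughout this section. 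Hypothesis (A5) then converts \emph{marginal on site $1$} into \emph{marginal on site $n$}: the transposition swapping sites $1$ and $n$ sends $\rho_n$ to another element of ${\cal F}_n$ whose reduction on site $1$ is exactly $\Tr_{1,\ldots,n-1}\rho_n$, and rerunning the same argument on the permuted state yields the required containment.

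The main obstacle is item (iii). Compatibility is phrased in conditional form, so unpacking it to recover an unconditional marginal requires both the iteration (to collect conditional states across all $n-1$ non-distinguished sites) and convexity (to pass from that ensemble back to its average). Hypothesis (A5) is used only at the final cosmetic step, but it is essential there because (B1) singles out the marginal on the $n$-th site specifically, whereas the iterative argument naturally produces the marginal on site $1$.
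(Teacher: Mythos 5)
Your proof is correct and follows essentially the same route as the paper's: use compatibility to handle the conditional states, iterate the conditioning across sites $n,\ldots,2$ to reach the marginal on site $1$, and invoke (A5) to transfer that marginal to site $n$. The only real difference is that you explicitly flag the convexity needed to pass from the family of conditional states back to the unconditional marginal (citing (A1), which is not among the lemma's stated hypotheses), whereas the paper performs the same averaging step implicitly, asserting that $\Tr_n M^{\otimes n}(\rho_n)$ lies in $M^{\otimes(n-1)}({\cal F}_{n-1})$ without comment.
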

\begin{proof}
Assume that
a tomographically complete POVM $M=\{M_j\}_j$ 
on ${\cal H}$ is compatible with $({\cal F}_n)_n$.
Then,
$ \Tr_n (I^{\otimes n-1}\otimes M_j )\rho_n/
\Tr (I^{\otimes n-1}\otimes M_j )\rho_n
$ belongs to ${\cal F}_{n-1}$ for $\rho_n \in {\cal F}_n$.
Hence,
$\Tr_n M^{\otimes n}(\rho_n) (I^{\otimes n-1} \otimes |j\rangle \langle j|)/
\Tr \rho_n (I^{\otimes n-1} \otimes |j\rangle \langle j|)
=M^{\otimes (n-1)}(
 \Tr_n (I^{\otimes n-1}\otimes M_j )\rho_n/
\Tr (I^{\otimes n-1}\otimes M_j )\rho_n)
$ belongs to $M^{\otimes (n-1)}({\cal F}_{n-1})$.
Hence, 
$\Tr_n M^{\otimes n}(\rho_n)$ also
belongs to $M^{\otimes (n-1)}({\cal F}_{n-1})$.
Repeating the same procedure, we find that
$\Tr_{2,\ldots, n} M^{\otimes n}(\rho_n)$ also
belongs to $M({\cal F}_{1})$.
Since the condition (A5) guarantees the permutation-invariance,
$\Tr_{1,\ldots, n-1} M^{\otimes n}(\rho_n)$ also
belongs to $M({\cal F}_{1})$.
Hence, $(M^{\otimes n}({\cal F}_n))_n$ satisfies the condition (B1), which implies the condition (B2).
\end{proof}

Although (B3) is a stronger condition than (B2),
(B3) can be more easily checked than (B2).
Hence, we often check (B3) instead of (B2). 
Our main result is given as follows.

\begin{theorem}\Label{THC4-C}
When a sequence of subsets ${\cal F}_n \subset {\cal S}({\cal H}^{\otimes n})$
satisfies the conditions (A1), (A5), and (B2),
any state $\sigma$ satisfies 
\begin{align}
\lim_{n\to \infty}-\frac{1}{n}\log \beta_\epsilon({\cal F}_n\|\sigma^{\otimes n})
=
D({\cal F}_1\|\sigma)
=
\lim_{n\to \infty}\frac{1}{n}D({\cal F}_n\|\sigma^{\otimes n}) .
\Label{BNAT}
\end{align}
\end{theorem}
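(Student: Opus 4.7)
The plan is to lift the classical result, Theorem \ref{THC3-C}, to the quantum setting through the quantum empirical distribution machinery developed in Section \ref{S6}. The tomographically complete POVM $M$ provided by condition (B2) plays the role of a bridge: its image $(M^{\otimes n}({\cal F}_n))_n$ is a classical family satisfying (B1), while its tomographic completeness ensures that no per-copy quantum information is irretrievably discarded in the asymptotic limit.

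First, I would invoke Lemma \ref{LG7} together with condition (A5) to restrict the analysis to permutation-invariant states $\rho_n \in {\cal F}_n$ and permutation-invariant tests. This is more than a convenience: the quantum empirical distribution is naturally defined on permutation-symmetric spaces via Schur-Weyl duality, so this reduction is what makes the machinery of Section \ref{S6} applicable to the present problem.

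For the achievability bound $\liminf_{n\to\infty} -\frac{1}{n}\log \beta_\epsilon({\cal F}_n\|\sigma^{\otimes n}) \ge D({\cal F}_1\|\sigma)$, I would construct a two-stage test. The first stage performs a covariant projection associated with the quantum empirical distribution, which under $\sigma^{\otimes n}$ concentrates on a neighborhood of $\sigma$. Conditional on the observed quantum type, the second stage applies $M^{\otimes n}$ followed by the classical Neyman–Pearson-type test from Theorem \ref{THC3-C} applied to $(M^{\otimes n}({\cal F}_n))_n$. The adaptive choice of classical test, driven by the quantum type estimate, is what allows the combined rate to reach the full quantum relative entropy $D({\cal F}_1\|\sigma)$ rather than the generally smaller post-measurement rate $D(M({\cal F}_1)\|M(\sigma))$ that a naive one-stage $M^{\otimes n}$-measurement would produce.

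For the converse and the additivity equality $D({\cal F}_1\|\sigma)=\lim_{n\to\infty}\tfrac{1}{n}D({\cal F}_n\|\sigma^{\otimes n})$, the main difficulty is that condition (A4) is not assumed, so one cannot simply substitute $\rho^{\otimes n}\in{\cal F}_n$ into quantum Stein's lemma as in Proposition \ref{PPH}. Instead, I would apply Schur-Weyl duality to any permutation-invariant $\rho_n\in{\cal F}_n$ to extract an effective single-copy density matrix whose image under $M$, by virtue of (B1) holding on $(M^{\otimes n}({\cal F}_n))_n$, lies in $M({\cal F}_1)$. The Stein-type converse can then be run inside the classical family where Theorem \ref{THC3-C} applies, and lifted back via the tomographic completeness of $M$. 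I expect this lifting to be the main obstacle: one must argue that the classical converse obtained for $(M^{\otimes n}({\cal F}_n))$ does not leak to the strictly smaller rate $D(M({\cal F}_1)\|M(\sigma))$, and this is precisely where the quantum empirical distribution, together with the asymptotic adequacy of a tomographically complete $M$, is needed to recover the full $D({\cal F}_1\|\sigma)$.
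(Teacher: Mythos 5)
Your high-level ingredients are the right ones (reduction to permutation-invariant states via Lemma \ref{LG7}, the quantum empirical distribution of Section \ref{S6}, and the POVM $M$ of (B2) as the bridge to the classical case), but the architecture of your test does not match what is needed, and the actual crux of the paper's proof is missing. In the paper, the test is the quantum-empirical-distribution measurement \emph{alone}: one accepts ${\cal F}_n$ iff the observed pair $(p,\rho)\in{\cal R}_n[{\cal B}]$ lies outside a compact set ${\cal G}\subset\overline{\cal E}_{\cal B}({\cal F}_1)^c$ (Theorem \ref{THYQ}). The POVM $M$ is never applied as part of the test; since $D_{\cal B}(\overline{\cal E}_{\cal B}(\rho)\|\sigma)=D(\rho\|\sigma)$ exactly, the type-II exponent $D({\cal F}_1\|\sigma)$ already follows from Proposition \ref{LL1} with no measurement-induced loss, so the ``leak to $D(M({\cal F}_1)\|M(\sigma))$'' you worry about never arises on that side. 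Your second stage, which feeds $M^{\otimes n}$ outcomes into the decision, is therefore either redundant (if the rejection decision is already made from the quantum type, in which case you are back to the paper's test) or harmful (if the decision genuinely depends on the $M^{\otimes n}$ outcomes, the type-II exponent against $\sigma^{\otimes n}$ is then capped by post-measurement classical quantities); ``adaptivity driven by the quantum type estimate'' does not restore information destroyed by $M^{\otimes n}$.

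The step your proposal does not address is the one where all the technical work lives: showing that the type-I error of the empirical-distribution test vanishes \emph{uniformly over arbitrary permutation-invariant $\rho_n\in{\cal F}_n$}, i.e.\ condition (D2) / hypothesis \eqref{BNA}. This is where $M$ actually enters, purely in the analysis: one bounds $\beta_\epsilon(\rho_{n,{\cal B}}(\eta_n)\|{\cal E}_{n,{\cal B}}({\cal F}_n))$ by $e^{-(1-\alpha)D_\alpha({\cal E}_{n,{\cal B}}({\cal F}_n)\|\rho_{n,{\cal B}}(\eta_n))}$ (Lemma \ref{LB89}), removes the pinching at polynomial cost (Lemma \ref{LB1}), dominates $\rho_{n,{\cal B}}(\eta_n)$ by a continuous mixture $\int\sigma^{\otimes n}\mu(d\sigma)$ of i.i.d.\ states up to polynomial factors (Lemma \ref{LB2} with Lemma \ref{NK8}), and only then applies the data-processing inequality through $M^{\otimes n}$ together with the classical single-letterization of $D_\alpha$ under (B1) (Lemma \ref{Pr6-5}) to get a strictly positive exponent. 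Here a crude positive lower bound suffices, which is why the gap between $D(M({\cal F}_1)\|M(\sigma))$ and $D({\cal F}_1\|\sigma)$ is irrelevant. Your converse sketch likewise diverges from the paper, whose upper bound is a short data-processing argument combined with Theorem \ref{BFG8} and Lemma \ref{LB1}, but that part could be repaired; the missing type-I analysis is the genuine gap.
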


Further, we can show that the condition (A7) implies (B3) as follows.
There exists a finite set of Hermitian matrices
${\cal W}_n$ such that
the set ${\cal W}_n$ linearly spans the set of Hermitian matrices
on ${\cal H}^{\otimes n}$
and any element $X_n\in {\cal W}_n$ satisfies $\|X_n\|\le r_n$.
The condition (A7) guarantees that the POVM
$\Big\{\frac{1}{2}(I^{\otimes n}+X_n), \frac{1}{2}(I^{\otimes n}-X_n)\Big\}$ is compatible with $({\cal F}_n)_n$
for $X_n\in {\cal W}_n$.
Then, the POVM $
\Big\{\frac{1}{2l}(I^{\otimes n}+X_n),
\frac{1}{2l}(I^{\otimes n}-X_n)\Big\}_{X_n \in {\cal W}_n}$
with $l=|{\cal W}_n|$ is 
compatible with $({\cal F}_n)_n$ and tomographically complete,
which implies the condition (B3).
Therefore, the condition of Theorem \ref{THC4-C} is a weaker condition than Proposition \ref{Pr1B}.

\subsection{Application to several examples}
We consider a bipartite system ${\cal H}_A\otimes {\cal H}_B$.
We define 
${\cal F}_{sep,n}$ and 
${\cal F}_{ppt,n}$
to be the sets of separable states and
positive partial transpose states
on $({\cal H}_A\otimes {\cal H}_B)^{\otimes n}$, respectively.
That is, these sets are given as
\begin{align}
{\cal F}_{sep,n}:=&
\Big\{\sum_l p_l \rho_{A,l}\otimes \rho_{B,l}\Big| 
\rho_{A,l}\in {\cal S}({\cal H}_A^{\otimes n}),
\rho_{B,l}\in {\cal S}({\cal H}_B^{\otimes n})
\Big\}\\
{\cal F}_{ppt,n}:=&
\Big\{
\rho\in {\cal S}(({\cal H}_A\otimes{\cal H}_B)^{\otimes n})\Big|
\Gamma (\rho) \in {\cal S}(({\cal H}_A\otimes{\cal H}_B)^{\otimes n})
\Big\},
\end{align}
where $\Gamma$ is the operator for partial transpose on $A$.

It is easy to find that the sequences of the sets
${\cal F}_{sep,n}$, ${\cal F}_{ppt,n}$, 
and ${\cal F}_{st,n}$
satisfy the conditions (A1), (A2), (A3), (A4), and (A5).

We choose tomographically complete POVMs 
$M_A=\{M_{A,j}\}_j$ on ${\cal H}_A$ and $M_B=\{M_{B,k}\}_k$ on ${\cal H}_B$. 
Then, the POVM $M_A\otimes M_B$ is also a 
tomographically complete
 POVM on ${\cal H}_A\otimes {\cal H}_B$.
In addition, 
the sequence of the sets
$(M_A\otimes M_B)^{\otimes n}({\cal F}_{sep,n})$ 
satisfies the conditions (B1).
Since a positive partial transpose state $\rho_{n}$ on  
$({\cal H}_A\otimes {\cal H}_B)^{\otimes n}$
satisfies
\begin{align}
&\Gamma (\Tr_{n} (\rho_{n} 
((I_A \otimes I_B)^{\otimes (n-1)} \otimes
M_{A,j}\otimes M_{B,k})) \notag\\
=&
\Tr_{n} ( \Gamma (\rho_{n} )
\Gamma ((I_A \otimes I_B)^{\otimes (n-1)} \otimes
M_{A,j}\otimes M_{B,k}))
 \notag\\
=&
\Tr_{n} (\Gamma (\rho_{n} )
((I_A \otimes I_B)^{\otimes (n-1)} \otimes
M_{A,j}^T\otimes M_{B,k}))
\ge 0,
\end{align}
the sequence of the sets
$(M_A\otimes M_B)^{\otimes n}({\cal F}_{ppt,n})$ 
also satisfies the conditions (B1).
Hence, the sequences of the sets
${\cal F}_{sep,n}$ and 
${\cal F}_{ppt,n}$
satisfy the condition (B3).

Also, we define 
${\cal F}_{st,n}$ to be the convex hull of the set of stabilizer states on 
${\cal H}_p^{\otimes n}$, where ${\cal H}_p$ is a $p$-dimensional space.
The definition of a stabilizer state is related to 
the $2n$-dimensional space $\FF_p^{2n}$
over the finite field $\FF_p:= \mathbb{Z}/p \mathbb{Z}$.
The symplectic inner product $g(\cdot,\cdot)$ is defined as
\begin{align}
g(b_1,b_2):= \sum_{i=1}^n (s_{i,1}t_{i,2}-s_{i,2}t_{i,1})
\end{align}
for $b_l=(s_{1,l}, t_{1,l}, \ldots,s_{n,l}, t_{n,l}) \in \FF_p^{2n}$.
We define 
$X:= \sum_{j=0}^{p-1} |j+1\rangle \langle j| $
and $Z:=\sum_{j=0}^{p-1} e^{2\pi j/d}|j\rangle \langle j|$.
Using these operators, we define
discrete Weyl representation $W_n$, which is a 
projective unitary representation of 
$\FF_p^{2n}$ on ${\cal H}_p^{\otimes n}$ as follows \cite[Chapter 8]{H-q-text}.
For an element $b=(s_1, t_1, \ldots,s_n, t_n) \in \FF_p^{2n}$, 
we define
\begin{align}
W_n(b):= X^{s_1}Z^{t_1}\otimes \cdots \otimes X^{s_n}Z^{t_n}.
\end{align}
Then, a pure state $|\psi\rangle$ on ${\cal H}_p^{\otimes n}$
is called a stabilizer state 
when there exists an $n$-dimensional subgroup 
$N \subset \FF_p^{2n}$ satisfying the following conditions.
(i) $N$ is self-orthogonal, i.e., any two elements
$b_1,b_2\in N$ satisfies $g(b_1,b_2)=0$.
(ii) $|\psi\rangle$ is an eigenvector of 
$W_n(b)$ for any element $b \in N$.

To construct a useful POVM,
we denote the projection valued measurement given by spectral decomposition of $XZ^{k}$ by $E_k=\{E_{k,j}\}_{j=0}^{p-1}$ for $k=0,\ldots, p-1$.
We denote the projection valued measurement given by spectral decomposition of $Z$ by $E_p=\{E_{p,j}\}_{j=0}^{p-1}$.
We define the POVM $M:=\{ \frac{1}{p+1} E_{k,j}\}_{(k,j)\in [0,p]\times [0,p-1]}$.
Then, the POVM $M$ is also a 
tomographically complete POVM on ${\cal H}_p$,
and we have the following lemma.
\begin{lemma}\Label{LL10}
The sequence of the sets
$((M)^{\otimes n}({\cal F}_{st,n}))_n$
satisfies the conditions (B1).
\end{lemma}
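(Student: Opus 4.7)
The plan is to verify both parts of condition (B1) for the sequence $(M^{\otimes n}({\cal F}_{st,n}))_n$. A direct expansion of $M^{\otimes n}(\rho_n)=\sum_{\vec o}\Tr[\rho_n\bigotimes_i M_{o_i}]|\vec o\rangle\langle\vec o|$ shows that its marginal on the last classical factor equals $M(\Tr_{1,\ldots,n-1}\rho_n)$, and that the classical conditional state on the first $n-1$ factors given outcome $(k_n,j_n)$ on the last factor equals $M^{\otimes(n-1)}(\tilde\rho_{n-1})$, where
\begin{equation}
\tilde\rho_{n-1}\coloneqq \frac{\Tr_n[\rho_n(I^{\otimes(n-1)}\otimes E_{k_n,j_n})]}{\Tr[\rho_n(I^{\otimes(n-1)}\otimes E_{k_n,j_n})]}.
\end{equation}
Consequently it suffices to show that for every $\rho_n\in{\cal F}_{st,n}$, $\Tr_{1,\ldots,n-1}\rho_n\in{\cal F}_{st,1}$ and $\tilde\rho_{n-1}\in{\cal F}_{st,n-1}$. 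By convexity of ${\cal F}_{st,n}$ and linearity of the maps involved, I reduce to $\rho_n=|\psi\rangle\langle\psi|$ with $|\psi\rangle$ a stabilizer pure state.

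The structural input is that every $E_{k,j}$ is a rank-one projector onto a stabilizer pure state of ${\cal H}_p$. For $k\in\{0,\ldots,p-1\}$, the operator $XZ^k$ equals (up to a phase) the discrete Weyl operator $W_1(1,k)$, and $Z=W_1(0,1)$; since any non-zero element of $\FF_p^2$ generates a one-dimensional self-orthogonal subgroup, any eigenvector of such a Weyl operator is a stabilizer state by definition. I then invoke the standard stabilizer-formalism fact that projecting a pure stabilizer state of ${\cal H}_p^{\otimes n}$ onto a rank-one eigenprojector of a single-qudit Weyl observable returns a pure stabilizer state on the full $n$-qudit system. Because $E_{k_n,j_n}$ has rank one, this post-measurement vector factorizes as $|\phi\rangle\otimes|o_n\rangle$ with $|\phi\rangle$ a stabilizer pure state on ${\cal H}_p^{\otimes(n-1)}$, and normalizing the partial trace then gives $\tilde\rho_{n-1}=|\phi\rangle\langle\phi|\in{\cal F}_{st,n-1}$. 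This settles the conditional claim.

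For the marginal, I apply sequential $Z$-basis measurements on the first $n-1$ qudits: expanding $I^{\otimes(n-1)}=\sum_i|i\rangle\langle i|$ in the computational basis and using the post-measurement fact $n-1$ times yields $\Tr_{1,\ldots,n-1}|\psi\rangle\langle\psi|=\sum_i p_i|\tilde\psi_i\rangle\langle\tilde\psi_i|$, where each $|\tilde\psi_i\rangle$ is a stabilizer pure state on ${\cal H}_p$ and the $p_i\ge 0$ sum to one, so the marginal lies in ${\cal F}_{st,1}$. Extending both claims to arbitrary $\rho_n\in{\cal F}_{st,n}$ by convexity completes the verification of (B1) for $(M^{\otimes n}({\cal F}_{st,n}))_n$. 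The only non-trivial input is the closure of pure stabilizer states under single-qudit Weyl-basis projection; although standard, this is the main step one would have to write out carefully, since it requires tracking how the stabilizer subgroup $N\subset\FF_p^{2n}$ of $|\psi\rangle$ and the associated eigenvalue phases transform after the projection, and I plan to quote it from the stabilizer formalism rather than re-derive it here.
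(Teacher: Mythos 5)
Your proposal is correct and follows essentially the same route as the paper: reduce by convexity to a pure stabilizer state, observe that each $E_{k,j}$ is a rank-one eigenprojector of a single-qudit Weyl operator, and conclude that the post-measurement state on the remaining $n-1$ qudits is again a stabilizer state (the marginal condition then following from permutation invariance, which you instead handle directly by iterated $Z$-measurements --- a harmless variant). The only substantive difference is that you quote the closure of pure stabilizer states under single-qudit Weyl-eigenbasis projection as a standard fact, whereas this verification is the \emph{entire} content of the paper's proof: it shows that the last symplectic components $\{(s_{j,n},t_{j,n})\}_{j=1}^n$ of the stabilizer generators must span $\FF_p^2$ (else one would have $n$ independent mutually isotropic vectors in $\FF_p^{2(n-1)}$), then applies a $\mathrm{GL}(n,\FF_p)$ recombination so that $n-1$ generators have last component equal to $(s,t)$ and hence commute with the measured operator, and finally tracks the eigenvalue phases to exhibit an explicit stabilizer group for the conditional state. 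Since that fact is indeed standard in the qudit stabilizer formalism, your argument is not wrong, but as written it defers rather than supplies the one nontrivial step.
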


Hence, the sequence of the sets
${\cal F}_{st,n}$ satisfies the condition (B3).
Therefore, due to Theorem \ref{THC4-C} and Proposition \ref{PPH}, 
${\cal F}_{sep,n}$, ${\cal F}_{ppt,n}$, and 
${\cal F}_{st,n}$
satisfy \eqref{BNAT} and \eqref{BHY4}.
In fact, the references \cite[Proposition 2]{Eisert_2003}
and \cite[Theorem 2]{Rubboli2024mixedstate}
showed that 
${\cal F}_{sep,n}$ and
${\cal F}_{st,n}$
satisfy \eqref{BHY4}, respectively.
But, the relation \eqref{BNAT} was shown only for 
${\cal F}_{sep,n}$ by the papers \cite{LBR}.

\begin{proofof}{Lemma \ref{LL10}}
In order to show this lemma,
we fix an element $X^sZ^t \in \{XZ^{k}\}_{k=0}^{p-1}\cup \{Z\} $.
A stabilizer state $|x\rangle \in {\cal H}_p^{\otimes n}$
is characterized as a simultaneous eigenvector 
of $\{W_n(b_j)\}_{j=1}^n$, where
$\{b_j=(s_{j,1}, t_{j,1}, \ldots,s_{j,n}, t_{j,n})\}_{j=1}^n\subset \FF_p^{2n}$
are linearly independent and 
orthogonal to each other in the sense of the symplective inner product.
We denote the resultant state $|x(\omega)\rangle$ with the initial state $|x\rangle$
when we measure the $n$-th system by $X^s Z^t$
and obtains the outcome $\omega$.
It is sufficient to show that the state $|x(\omega)\rangle$ is also a stabilizer state.

We can show that the set $\{(s_{j,n}, t_{j,n})\}_{j=1}^n$
spans the vector space $\FF_p^2$ as follows.
To show this statement, we define 
$b_j^{n-1}:= (s_{j,1}, t_{j,1}, \ldots,s_{j,n-1}, t_{j,n-1})\in \FF_p^{2(n-1)}$.
If the set $\{(s_{j,n}, t_{j,n})\}_{j=1}^n$ does not span the vector space $\FF_p^2$,
$\{b_j^{n-1}\}_{j=1}^n$ are linear independent and commutative with each other in $\FF_p^{2(n-1)}$, 
However, the number of such vectors is limited to up to $n-1$, which yields a contradiction.

Thus, there exists a matrix $(a_{j,l})\in {\rm GL}(n, \FF_p)$ such that
$(s_{1,n}', t_{1,n}')=\cdots =(s_{n-1,n}', t_{n-1,n}')=(s,t)$, where
$(s_{j,n}', t_{j,n}'):= \sum_{l=1}^n a_{j,l} (s_{j,n}, t_{j,n})$ for $j=1, \ldots, n$.
We define the vectors 
$b'_j=(s_{j,1}', t_{j,1}', \ldots,s_{j,n}', t_{j,n}'):= \sum_{l=1}^n a_{j,l} b_j\in \FF_p^{2 n}$
and ${b'_j}^{n-1}:=(s_{j,1}', t_{j,1}', \ldots,s_{j,n-1}', t_{j,n-1}')\in \FF_p^{2(n-1)}$.
When $\omega_j$ is 
the eigenvalue of $W_n(b'_j)$ under the state
$|x\rangle \in {\cal H}_p^{\otimes n}$,
$\omega_j \omega^{-1}$
is the eigenvalue of $W_{n-1}({b'_j}^{n-1})$ under the state
$|x(\omega)\rangle\in {\cal H}_p^{\otimes (n-1)}$.
Therefore, 
$|x(\omega)\rangle\in {\cal H}_p^{\otimes (n-1)}$
is a stabilizer state with stabilizers
$\{ W_{n-1}({b'_j}^{n-1})\}_{j=1}^{n-1}$ in ${\cal H}_p^{\otimes n}$.
\end{proofof}

\subsection{Alternative condition sets}\Label{SSS3D}
The references \cite[Appendix IV]{HY}, \cite{HMH}
clarified that the convexity condition is needed for the second argument in the relative entropy in order to establish the relation
\eqref{BHY}.
However, the convexity condition is not so essential
for establishing the relation \eqref{BNAT} as follows.
The convexity condition is used for restricting on the set of permutation-invariant elements as in Lemma \ref{LG7}.
Also, the convexity is natural from the viewpoint of resource theory \cite{BP}.
Once we focus on the set of permutation-invariant elements, 
the convexity condition is not needed.
To clarify this point, we introduce alternative condition sets 
instead of (A1) and (A5)
as follows.
\begin{description}
\item[(C1)]
Each ${\cal F}_n$ is a compact subset of ${\cal S}({\cal H}^{\otimes n})$.
\item[(C2)] Any element of ${\cal F}_n$ is invariant under permutations.
\end{description}

Then, instead of Theorems \ref{THC3-C} and \ref{THC4-C}, we have the following theorems.

\begin{theorem}\Label{THC3}
When a sequence of subsets ${\cal F}_n \subset {\cal S}_c({\cal H}^{\otimes n})$
satisfies the conditions (C1), (C2), and (B1),
any state $\sigma$ satisfies
satisfies \eqref{BNAT1}.
\end{theorem}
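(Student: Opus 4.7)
The plan is to treat Theorem~\ref{THC3} as a classical composite hypothesis-testing problem and combine the classical Sanov theorem with the structural consequences of permutation invariance~(C2) and the compatibility condition~(B1). Since ${\cal F}_n \subset {\cal S}_c({\cal H}^{\otimes n})$, every $\rho_n\in{\cal F}_n$ may be identified with a permutation-invariant probability distribution on $[d]^n$ (with $d=\dim{\cal H}$), and the alternative $\sigma^{\otimes n}$ with the associated i.i.d.\ measure. Because both hypotheses are permutation-symmetric, I would invoke (C2) to symmetrize any candidate test $\Pi_n$ so that without loss of generality it depends only on the empirical distribution (type) $T_n(x^n)$; the testing then reduces to comparing the laws of $T_n$ under the two hypotheses.

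For the direct (achievability) half, I would fix $\delta>0$ small and use the universal type-based test
\[
\Pi_n \;:=\; \sum_{x^n:\; T_n(x^n)\in \overline{B}_\delta({\cal F}_1)} |x^n\rangle\langle x^n|,
\]
where $\overline{B}_\delta({\cal F}_1)$ denotes the closed $\delta$-neighbourhood of ${\cal F}_1$ in the probability simplex. The classical Sanov theorem (method of types) applied to $\sigma^{\otimes n}$ then gives the type-II bound
\[
\Tr[\Pi_n\,\sigma^{\otimes n}]\;\le\;(n+1)^d\,\exp\!\Bigl(-n\min_{p\in \overline{B}_\delta({\cal F}_1)}D(p\|\sigma)\Bigr),
\]
whose exponent converges to $D({\cal F}_1\|\sigma)$ as $\delta\downarrow 0$ by lower semicontinuity of $D(\cdot\|\sigma)$ and compactness of ${\cal F}_1$. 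The matching type-I requirement $\rho_n(T_n\notin \overline{B}_\delta({\cal F}_1))\to 0$ uniformly in $\rho_n\in{\cal F}_n$ is exactly where (B1) and (C2) enter: iterating (B1) together with permutation invariance shows that for every coordinate $i$ and every realisation of $(X_j)_{j\neq i}$, the conditional distribution of $X_i$ lies in ${\cal F}_1$, so a martingale / finite-de~Finetti style concentration argument forces $T_n$ to cluster in $\overline{B}_\delta({\cal F}_1)$ with probability tending to one.

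For the converse half, I would use the chain rule
\[
D(\rho_n\|\sigma^{\otimes n})=\sum_{i=1}^n E\bigl[D(\rho(X_i\mid X_{i+1},\ldots,X_n)\|\sigma)\bigr]
\]
together with iterated~(B1) (each inner conditional is in ${\cal F}_1$) to obtain $\tfrac{1}{n}D({\cal F}_n\|\sigma^{\otimes n})\ge D({\cal F}_1\|\sigma)$, and complement it by exhibiting a permutation-symmetric witness $\rho_n^\ast\in{\cal F}_n$ whose single-site marginals approach the minimizer of $D({\cal F}_1\|\sigma)$; the standard data-processing/Fano inequality $-\log\beta_\epsilon(\rho\|\sigma)\le (D(\rho\|\sigma)+h_2(\epsilon))/(1-\epsilon)$ then controls $-\tfrac{1}{n}\log\beta_\epsilon({\cal F}_n\|\sigma^{\otimes n})$ from above. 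Combining the two halves establishes both equalities in~\eqref{BNAT1}.

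The hardest step is the uniform type-I concentration in the achievability: without any convexity assumption on ${\cal F}_n$, one has to rely entirely on iterated compatibility~(B1) together with permutation invariance~(C2) to argue that the empirical distribution under any element of ${\cal F}_n$ lies in a shrinking neighbourhood of ${\cal F}_1$ with uniformly vanishing exceptional probability. This is precisely where the classical empirical-distribution viewpoint developed in Sections~\ref{S4}--\ref{S5} (and ultimately lifted to the quantum setting through the quantum empirical distribution of~\cite{another}) becomes essential.
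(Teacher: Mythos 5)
There is a genuine gap in the achievability half, precisely at the step you yourself flag as the hardest: the uniform type-I concentration. From (B1) and (C2) you correctly deduce that every one-coordinate-given-the-rest conditional of $\rho_n\in{\cal F}_n$ lies in ${\cal F}_1$, but a martingale argument applied to the empirical distribution $T_n$ only shows that $T_n$ tracks the Ces\`aro average $\frac{1}{n}\sum_i P_{X_i\mid X_{i+1},\ldots,X_n}$ of the (random) one-step predictive distributions, and that average lies in the \emph{convex hull} of ${\cal F}_1$, not in ${\cal F}_1$ itself. Exchangeable states do not concentrate their empirical measures around the single-site marginal: for $\rho_n=\tfrac12 p^{\otimes n}+\tfrac12 q^{\otimes n}$ the full conditionals all lie on the segment $[p,q]$ while $T_n$ lands on $\{p,q\}$, so the Azuma-around-the-mean mechanism fails, and the ``prediction implies estimation'' version only delivers clustering in $\mathrm{conv}({\cal F}_1)$. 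Since Theorem~\ref{THC3} deliberately drops all convexity assumptions (that is the whole point of replacing (A1) by (C1)), $\mathrm{conv}({\cal F}_1)$ can be strictly larger than ${\cal F}_1$ and $D(\mathrm{conv}({\cal F}_1)\|\sigma)$ strictly smaller than $D({\cal F}_1\|\sigma)$, so your argument as stated only yields the weaker exponent $D(\mathrm{conv}({\cal F}_1)\|\sigma)$.

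The paper closes exactly this gap by a different mechanism. It first establishes condition (D1) --- that $\beta_\epsilon(\rho_n(p_n)\|{\cal F}_n)$ decays exponentially whenever the types $p_n$ converge to some $p\notin{\cal F}_1$ --- via the R\'enyi superadditivity $D_\alpha({\cal F}_n\|p^{\otimes n})\ge n\,D_\alpha({\cal F}_1\|p)$ (Lemma~\ref{Pr6}, which is where (B1) actually enters) combined with Lemmas~\ref{NK8} and~\ref{LB89T}; it then runs a compactness-and-contradiction argument (Theorem~\ref{THY}) to conclude that $\max_{\sigma'\in{\cal F}_n}\Tr(\Pi_n\sigma')\to0$ for the type-based test on any compact subset of ${\cal F}_1^c$. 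Your type-II (Sanov) analysis and the data-processing upper bound on the exponent match the paper, and your chain-rule superadditivity is Lemma~\ref{Pr6} at $\alpha=1$; but the remaining converse ingredient, $\limsup_n\frac{1}{n}D({\cal F}_n\|\sigma^{\otimes n})\le D({\cal F}_1\|\sigma)$, also cannot be obtained by ``exhibiting a witness'' under (C1), (C2), (B1) alone, since nothing in these conditions lets you build elements of ${\cal F}_n$ out of elements of ${\cal F}_1$ (there is no tensor-product closure (A4)); the paper instead obtains the existence and identification of both limits from Theorem~\ref{BFG8}. To repair your proof you would need to replace the martingale concentration by the (D1)-based contradiction argument, or otherwise prove directly that $T_n$ cannot sit in $\mathrm{conv}({\cal F}_1)\setminus{\cal F}_1$ with non-vanishing probability.
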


\begin{theorem}\Label{THC4}
When a sequence of subsets ${\cal F}_n \subset {\cal S}({\cal H}^{\otimes n})$
satisfies the conditions (C1), (C2), and (B2),
any state $\sigma$ satisfies \eqref{BNAT}.
\end{theorem}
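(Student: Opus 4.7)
The plan is to derive Theorem~\ref{THC4} by combining Theorem~\ref{THC3} (the classical analogue) with the quantum Sanov theorem for quantum empirical distributions presented in Section~\ref{S6}. The tomographically complete POVM from condition (B2) serves as the bridge from the quantum to the classical setting, while the Schur--Weyl-based quantum empirical distribution supplies the spectral information that a plain classical reduction would lose.

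First I would handle the converse direction, showing $\liminf_{n}\frac{1}{n}D(\mathcal{F}_n\|\sigma^{\otimes n})\ge D(\mathcal{F}_1\|\sigma)$ together with the matching inequality for $-\frac{1}{n}\log\beta_\epsilon$. Under condition (C2) each $\rho_n\in\mathcal{F}_n$ is permutation-invariant; by (B2) applied to $M^{\otimes n}(\rho_n)$ combined with the tomographic completeness of $M$, the single-site reduced state $\Tr_{2,\ldots,n}\rho_n$ lies in $\mathcal{F}_1$. Thus any test that distinguishes $\sigma^{\otimes n}$ from states with single-site marginal in $\mathcal{F}_1$ is feasible against $\mathcal{F}_n$, which, by data processing applied to the single-copy version, bounds $\beta_\epsilon(\mathcal{F}_n\|\sigma^{\otimes n})$ from below using $D(\mathcal{F}_1\|\sigma)$.

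Second, for the achievability direction I would construct an explicit asymptotically optimal test $\Pi_n$. The naive approach of applying $M^{\otimes n}$ and then invoking Theorem~\ref{THC3} on $(M^{\otimes n}(\mathcal{F}_n))_n$ only yields the classical exponent $D(M(\mathcal{F}_1)\|M(\sigma))$, which is in general strictly smaller than $D(\mathcal{F}_1\|\sigma)$. To recover the full quantum exponent, I would combine $M^{\otimes n}$ with the Schur--Weyl-based spectrum measurement from Section~\ref{S6}: on the permutation-invariant subspace this produces a joint classical outcome encoding both the Young-diagram type (hence spectral data) and the $M$-outcomes, and a classical Sanov-type test on this joint distribution achieves the rate $D(\mathcal{F}_1\|\sigma)$. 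The quantum empirical distribution concentration estimates from Section~\ref{S6} ensure $\Tr[(I-\Pi_n)\rho_n]\to 0$ uniformly over $\rho_n\in\mathcal{F}_n$, while $\Tr[\Pi_n\sigma^{\otimes n}]$ decays with the target exponent.

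The main obstacle is precisely bridging the gap between the classical exponent $D(M(\mathcal{F}_1)\|M(\sigma))$ obtained from a direct reduction via $M^{\otimes n}$ and the quantum exponent $D(\mathcal{F}_1\|\sigma)$. This is where the quantum empirical distribution is essential: Schur--Weyl duality captures the spectral information that any fixed POVM $M$ inevitably discards, and its joint analysis with the (B1)-compatible classical measurement is what allows us to promote the classical Theorem~\ref{THC3} into the quantum Theorem~\ref{THC4}. A secondary subtlety is that condition (A4) is absent, so the inequality $\frac{1}{n}D(\mathcal{F}_n\|\sigma^{\otimes n})\le D(\mathcal{F}_1\|\sigma)$ cannot be obtained from a product construction and must instead be extracted from the same quantum empirical distribution framework applied to a witness state in $\mathcal{F}_n$ produced by (B2) and (C2).
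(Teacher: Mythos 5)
Your high-level architecture matches the paper's intent: the proof does run through the quantum empirical distribution of Section~\ref{S6}, with the POVM $M$ of condition (B2) acting as the bridge to the classical Theorem~\ref{THC3}, and you correctly identify the central obstacle (the gap between $D(M({\cal F}_1)\|M(\sigma))$ and $D({\cal F}_1\|\sigma)$). However, the way you assemble the pieces contains two genuine gaps.

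First, the test. You propose a joint classical Sanov test on the pair (Young-diagram type, $M$-outcomes). The paper's test is different: it is $\Pi_n=\sum_{\eta\in{\cal G}\cap\mathcal{R}_n[\mathcal{B}]}1_{\eta,\mathcal{B}}^n$, built from the projections $1_{p,\rho,\mathcal{B}}^n$ that record the Young diagram $p$ \emph{together with the type $\rho$ in the eigenbasis ${\cal B}$ of $\sigma$}; the POVM $M$ does not appear in the test at all. The gap you worry about is closed not by adjoining spectral data to the $M$-outcomes but by the identity $D_{\mathcal{B}}(\overline{\cal E}_{\cal B}(\rho)\|\sigma)=D(\rho\|\sigma)$, which follows from \eqref{DV6} when $\sigma$ is diagonal in ${\cal B}$: the Sanov exponent of Proposition~\ref{LL1} for the $(p,\rho)$-measurement already equals the full quantum relative entropy. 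A joint measurement of the Schur--Weyl projectors with $M^{\otimes n}$ is not even well defined in general (they need not commute), and there is no argument that the Sanov exponent of such a hybrid outcome distribution against $\sigma^{\otimes n}$ equals $D({\cal F}_1\|\sigma)$.

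Second, and more seriously, the completeness side. You assert that ``the quantum empirical distribution concentration estimates from Section~\ref{S6} ensure $\Tr[(I-\Pi_n)\rho_n]\to 0$ uniformly over $\rho_n\in{\cal F}_n$.'' Proposition~\ref{LL1} gives concentration only for i.i.d.\ states $\rho^{\otimes n}$; the elements of ${\cal F}_n$ are merely permutation-invariant and need not be i.i.d. Establishing this uniform acceptance is precisely the content of condition (D2) and Lemma~\ref{BF28}, and it is the only place where (B2) actually does work: one must lower-bound $D_\alpha({\cal E}_{n,{\cal B}}({\cal F}_{n})\| \rho_{n,{\cal B}}(\eta_n))$ linearly in $n$ via the chain consisting of the pinching bound of Lemma~\ref{LB1}, data processing under $M^{\otimes n}$, the de Finetti-type operator bound of Lemma~\ref{LB2} combined with Lemma~\ref{NK8}, and the single-letterization Lemma~\ref{Pr6-5}, which is where the compatibility condition (B1) for $M^{\otimes n}({\cal F}_n)$ enters; Lemma~\ref{LB89} then converts this into \eqref{NM9}. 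Without this chain your proposal has no mechanism to exclude a sequence $\rho_n\in{\cal F}_n$ whose quantum empirical distribution concentrates near $\overline{\cal E}_{\cal B}(\sigma)$, so the claimed uniform convergence is unsupported. (Your converse sketch via the single-site marginal is roughly consistent with the paper, which routes this direction through the information-processing inequality and Theorem~\ref{BFG8}, but it too is left at the level of an assertion.)
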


Lemma \ref{LG7} guarantees that Theorems \ref{THC3} and \ref{THC4} imply Theorems \ref{THC3-C} and \ref{THC4-C}, respectively. 
That is, the relation among conditions 
(A1), (A5), (C1), (C2), (B1), (B2) is summarized as Fig. \ref{D3}.
Therefore, the latter sections show
Theorems \ref{THC3} and \ref{THC4} instead of Theorems \ref{THC3-C} and \ref{THC4-C}. 

\begin{figure}[tbhp]
\begin{center}
\includegraphics[scale=0.35]{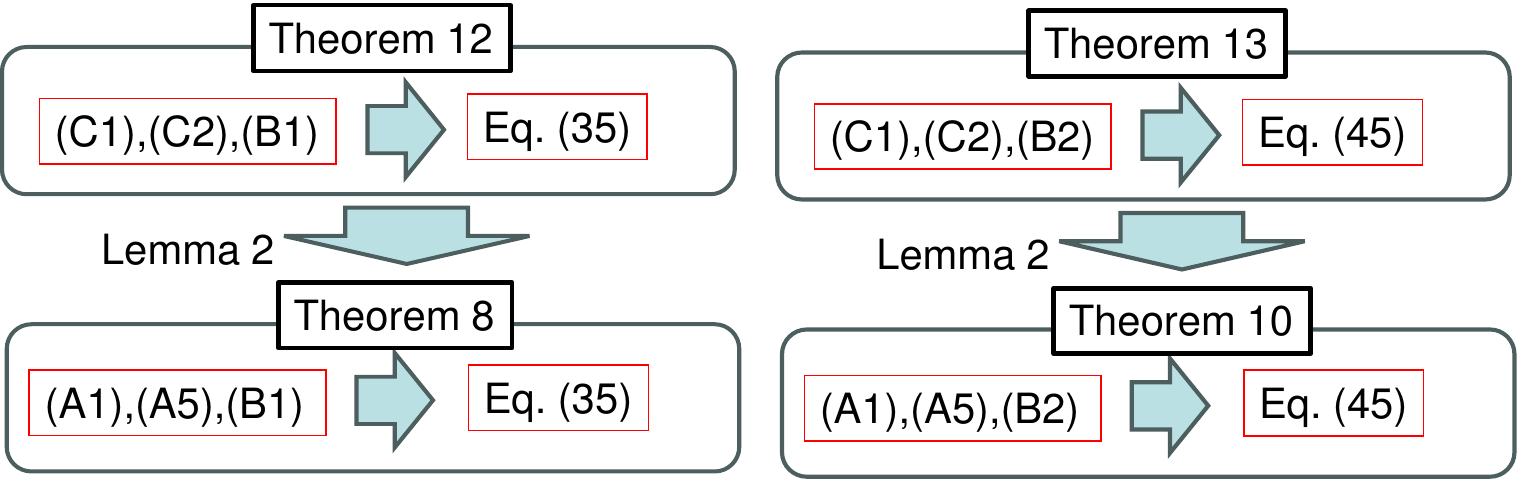}
\end{center}
\caption{Relation among conditions (A1), (A5), (C1), (C2), (B1), (B2)}
\Label{D3}
\end{figure}

In the original quantum Sanov theorem for hypothesis testing, 
the set ${\cal F}_n$ is given as the set $\{\rho^{\otimes n}\}_{\rho \in {\cal S}}$ for an arbitrary set ${\cal S}$ of states so that 
it does not satisfy the convexity condition, in general, in \cite{Bjelakovic,Notzel}. 
Hence, it cannot be derived from Propositions \ref{Pr1},\ref{Pr1B},\ref{Pr2} and Theorems \ref{THC3-C},\ref{THC4-C}. 
But, it can be recovered by Theorem \ref{THC4}.

\subsection{Non-convex examples}
Next, we consider non-convex examples as free states.
Given a compact subset ${\cal S}_0\subset {\cal S}({\cal H})$,
we focus on ${\cal S}^{\otimes n}$ as an example of ${\cal F}_n$.

Since ${\cal S}^{\otimes n}$ is not convex,
it does not satisfy the condition (A1).
But, ${\cal S}^{\otimes n}$ satisfies (C1),(C2).
For any tomographycally complete POVM $M$, 
$(M^{\otimes n}({\cal S}^{\otimes n})_n$
satisfies the condition (B1).
Hence, ${\cal S}^{\otimes n}$ satisfies (B2).
Thus, Theorem \ref{THC4} recovers
quantum Sanov theorem for hypothesis testing
while Propositions \ref{Pr1B} and \ref{Pr2}
do not.

For the next example, we consider a bipartite system ${\cal H}_A\otimes {\cal H}_B$.
We define the set ${\cal F}_{ind,n}$ 
to be the sets of the tensor product 
$\rho_A\otimes \rho_B$ of 
permutation invariant states 
$\rho_A$ and $\rho_B$
on ${\cal H}_A^{\otimes n}$ and ${\cal H}_B^{\otimes n}$, respectively.
This example was studied in \cite[Proposition 3.4]{Composite} while
the classical case of this example was studied in \cite{8231191}.
The set ${\cal F}_{ind,n}$ satisfies the conditions (C1) and (C2).

We choose tomographically complete POVMs 
$M_A=\{M_{A,j}\}_j$ on ${\cal H}_A$ and $M_B=\{M_{B,k}\}_k$ on ${\cal H}_B$. 
Then, the POVM $M_A\otimes M_B$ is also a tomographically complete
 POVM on ${\cal H}_A\otimes {\cal H}_B$.
In addition, the sequence of the sets
$(M_A\otimes M_B)^{\otimes n}({\cal F}_{ind,n})$ 
satisfies the conditions (B1) as follows.
Any state $\rho_A\otimes \rho_B \in {\cal F}_{ind,n}$ satisfies
\begin{align}
& 
\Tr_{n} (\rho_A\otimes \rho_B 
((I_A \otimes I_B)^{\otimes (n-1)} \otimes
M_{A,j}\otimes M_{B,k})) \notag\\
=&
\Tr_{n} (\rho_A (I_A^{\otimes (n-1)} \otimes M_{A,j}))
\otimes
\Tr_{n} (\rho_B (I_B^{\otimes (n-1)} \otimes M_{B,k})).
\Label{NMN}
\end{align}
Since $ \rho_A$ is permutation-invariant for the first $n-1$ 
systems, 
$\Tr_{n} (\rho_A (I_A^{\otimes (n-1)} \otimes M_{A,j}))$
is also permutation-invariant.
Similarly, 
$\Tr_{n} (\rho_B (I_B^{\otimes (n-1)} \otimes M_{B,k}))$
is also permutation-invariant.
Hence, the normalization of \eqref{NMN} belongs to 
${\cal F}_{ind,n-1}$.
Hence,
the set ${\cal F}_{ind,n}$ satisfies the condition (B2).
That is, it satisfies the conditions of Theorem \ref{THC4}.

The set ${\cal F}_{ind,n}$ does not satisfy (A1), 
nor does it satisfy (A7).
Assume that ${\cal H}_A$ and ${\cal H}_B$ are spanned by
$|0\rangle,|1\rangle$.
We consider the case with $n=2$.
As a typical element, 
we choose an entangle state $|\Phi\rangle :=\frac{1}{\sqrt{2}}
(|00\rangle +|11\rangle)$.
We choose a positive real number $r_2>0$.
We choose 
a Hermitian matrix $X$ with $\|X\|=r_2$ on ${\cal H}_A\otimes {\cal H}_B$
such that $\Tr X=0$
and $\frac{1}{4}(I+X)$ is not a product state
between ${\cal H}_A$ and ${\cal H}_B$.
Then,
\begin{align}
\Tr_2 (\frac{1}{2}(I+X) \otimes I_{AB}) 
|\Phi\rangle\langle \Phi|_A \otimes |\Phi\rangle\langle \Phi|_B
=\frac{1}{8}(I+X).
\end{align}
Since $\frac{1}{8}(I+X)$ does not have a product form,
the set ${\cal F}_{ind,n}$ does not satisfy (A7).

To consider another example,
we consider a tripartite system 
${\cal H}_A\otimes {\cal H}_B\otimes {\cal H}_C$.
We assume that the third system ${\cal H}_C$ is a classical system
spanned by $\{|j \rangle\}_{j=1}^d$.
A state $\rho$ on 
${\cal H}_A\otimes {\cal H}_B\otimes {\cal H}_C$
is called $A-C-B$ Markovian
when 
$\rho$ has the form 
$\sum_{j=1}^d p_j \rho_{A,j}\otimes \rho_{B,j}\otimes |j\rangle\langle j|$.
We define the set ${\cal F}_{Mar,n}$ on 
the system $({\cal H}_A\otimes {\cal H}_B\otimes {\cal H}_C)^{\otimes n}$ as the set of 
permutation-invariant $A-C-B$ Markovian states.
The classical case of this example was studied in \cite{8231191}.
The set ${\cal F}_{Mar,n}$ satisfies the conditions (C1) and (C2).

We choose tomographically complete POVMs $M_A$ and $M_B$
as the same way as the above.
We set the POVM $M_C$ as the measurement based on the computation basis $\{|j \rangle\}_{j=1}^d$.
The sequence of the sets
$(M_A\otimes M_B\otimes M_C)^{\otimes n}({\cal F}_{Mar,n})$ 
satisfies the conditions (B1) as follows.
A state $\rho_{ABC}=
\sum_{j_1,\ldots, j_n} p_{j_1,\ldots, j_n} \rho_{A,j_1,\ldots, j_n}
\otimes \rho_{B,j_1,\ldots, j_n}\otimes 
|j_1,\ldots, j_n\rangle\langle j_1,\ldots, j_n|
\in {\cal F}_{Mar,n}$
satisfies
\begin{align} 
& \Tr_{n} (\rho_{ABC}
((I_A \otimes I_B\otimes I_C)^{\otimes (n-1)} \otimes M_{A,k_1}\otimes M_{B,_2} \otimes |j\rangle \langle j| ))\notag \\
=& \Tr_{n} \Big(\sum_{j_1,\ldots, j_n} p_{j_1,\ldots, j_n} \rho_{A,j_1,\ldots, j_n}
\otimes \rho_{B,j_1,\ldots, j_n}\otimes 
|j_1,\ldots, j_n\rangle\langle j_1,\ldots, j_n| \notag\\
&((I_A \otimes I_B\otimes I_C)^{\otimes (n-1)} \otimes M_{A,k_1}\otimes M_{B,_2} \otimes |j\rangle \langle j| )\Big)\notag \\
=& \sum_{j_1,\ldots, j_n} p_{j_1,\ldots, _{n-1},j} 
\Tr_{n}
(\rho_{A,j_1,\ldots, j_n} (I_A^{\otimes (n-1)} \otimes M_{A,k_1}))\notag\\
&\otimes 
\Tr_{n}
(\rho_{B,j_1,\ldots, j_{n-1}}
(I_A^{\otimes (n-1)} \otimes M_{B,k_2}))
\otimes 
|j_1,\ldots, j_n\rangle\langle j_1,\ldots, j_{n-1}| \notag\\
=&(\Tr_{n} (\rho_A\otimes \rho_B 
((I_A \otimes I_B)^{\otimes (n-1)} \otimes
M_{A,j}\otimes M_{B,k})) \notag\\
=&
(\Tr_{n} (\rho_A (I_A^{\otimes (n-1)} \otimes M_{A,j}))
\otimes
(\Tr_{n} (\rho_B (I_B^{\otimes (n-1)} \otimes M_{B,j})),
\Label{NMN}
\end{align}
which satisfies the $A-C-B$ Markovian condition.

Since $ \rho_{ABC}$ is permutation-invariant for the first $n-1$ 
systems, 
$ \Tr_{n} (\rho_{ABC}
((I_A \otimes I_B\otimes I_C)^{\otimes (n-1)} \otimes M_{A,k_1}\otimes M_{B,_2} \otimes |j\rangle \langle j| ))$
is also permutation-invariant.
Hence,
the set ${\cal F}_{Mar,n}$ satisfies the condition (B2).
That is, it satisfies the conditions of Theorem \ref{THC4}.
Since the set ${\cal F}_{Mar,n}$ is not convex,
it does not satisfy the condition (A1).

\section{First derivation for classical generalization: 
Proof of Theorem \ref{THC3}}\Label{S4}
The aim of this section is to prove
Theorem \ref{THC3} and to prepare the derivation of Theorem \ref{THC4}.
Although we have two derivations for Theorem \ref{THC3},
this section presents a more complicated derivation because 
this derivation works as a preparation of our proof of 
Theorem \ref{THC4}.
The technical lemmas used in both derivations
are needed for our proof of the quantum case.
In this section, we use the following condition for 
a sequence of subsets ${\cal F}_n \subset {\cal S}_c({\cal H}^{\otimes n})$ instead of the condition (B1) in the beginning.

\begin{description}
\item[(D1)]
For any element $p \notin {\cal F}_1$ and $\epsilon>0$, 
we choose any sequence $p_n \in {\cal T}_n$ such that
$ p_n \to p$.
Then, we have
\begin{align}
\liminf_{n\to \infty}
-\frac{1}{n}\log \beta_\epsilon
(\rho_n(p_n)\| {\cal F}_n)> 0.\Label{XAI}
\end{align}
\end{description}

The structure of this section is illustrated by Fig. \ref{D1}.
Although the aim of this section is the derivation of 
\eqref{BNAT1} from the conditions (C1), (C2), and (B1)
to show Theorem \ref{THC3},
we derive 
\eqref{BNAT1} from the conditions (C1), (C2), and (D1)
at the fisrt step.
Later, we derive the condition (D1) from 
the conditions (C1), (C2), and (B1).
In fact, as shown in Appendix \ref{S8},
the conditions (A6) and (D1) are equivalent
under the conditions (A1), (A2), (A4), and (A5).

\begin{figure}[tbhp]
\begin{center}
\includegraphics[scale=0.45]{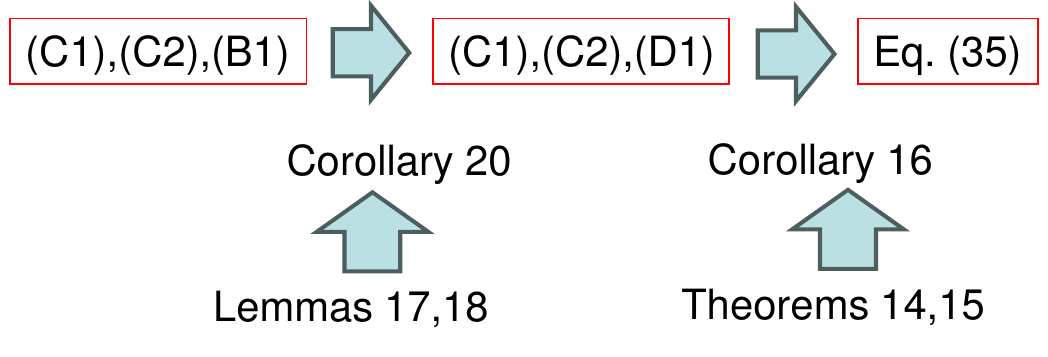}
\end{center}
\caption{Organization of Section \ref{S4}}
\Label{D1}
\end{figure}

We denote the test function over a subset $\Omega$
by $1_\Omega$.
That is, $1_\Omega$ is defined as
\begin{align}
1_\Omega:= \sum_{j \in \Omega} |j\rangle \langle j|.
\Label{FG1}
\end{align}
We denote the set of types with length $n$ by ${\cal T}_n$, 
i.e., ${\cal T}_n$ is the set of possible empirical distribution with $n$ trials
\cite{720546}.
Given $p \in {\cal T}_n$, 
we denote the set of elements 
whose empirical distribution is $p$ by ${\cal T}_{n,p}$.
We denote the uniform distribution over ${\cal T}_{n,p}$
by $\rho_n(p)$.
We also define the operator $1^n_{p}:=1_{{\cal T}_{n,p}}$.
In this paper, we identify the distribution $p$ and the 
diagonal state $\sum_{j} p_j|j\rangle \langle j|$ 
with respect to the computational basis.
In this section, because all of the matrices are diagonal,
they are commutative.
Then, our notation is summarized in Table \ref{notation-c}.

\begin{table}[t]
\caption{Notations of classical system}
\label{notation-c}
\begin{center}
\begin{tabular}{|c|l|c|}
\hline
Symbol &Description &Eq. number    \\
\hline
$1_\Omega$ &Test function on $\Omega$& Eq. \eqref{FG1}     \\
\hline
${\cal T}_n$ & Set of types with length $n$ &   \\
\hline
\multirow{2}{*}{${\cal T}_{n,p}$} & Set of elements whose & \\
&empirical distribution is $p$ & \\
\hline
$\rho_n(p)$ & Uniform distribution over ${\cal T}_{n,p}$ & \\
\hline
$1^n_{p}$ & Test function on ${\cal T}_{n,p}$, i.e.,
$1_{{\cal T}_{n,p}}$ & \\
\hline
\end{tabular}
\end{center}
\end{table}

In the following theorem, we employ a subset ${\cal L}$.
This theorem will be used to show 
\eqref{BNAT1} by substituting ${\cal F}_1$ into ${\cal L}$.

\begin{theorem}\Label{THY}
Assume that a sequence of subsets ${\cal F}_n \subset {\cal S}_c({\cal H}^{\otimes n})$
satisfies the conditions (C1) and (C2).
Also, given a subset ${\cal L} \subset {\cal S}_c({\cal H})$,
we assume that
any sequence $p_n \in {\cal T}_n$ with
$ p_n \to p \notin {\cal L}$ satisfies
\begin{align}
\liminf_{n\to \infty}
-\frac{1}{n}\log \beta_\epsilon
(\rho_n(p_n)\| {\cal F}_n)> 0.\Label{BNU}
\end{align}
Then, any state $\sigma$ satisfies
\begin{align}
\liminf_{n\to \infty}-\frac{1}{n}\log \beta_\epsilon({\cal F}_n\|\sigma^{\otimes n})
\ge 
D({\cal L}\|\sigma).
\Label{NHA}
\end{align}
\end{theorem}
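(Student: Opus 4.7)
The plan is to construct an explicit test based on empirical types and to verify its properties using \eqref{BNU} combined with the classical method of types. Fix $\delta>0$, let $A^\delta$ denote the open $\delta$-neighborhood of ${\cal L}$ inside ${\cal S}_c({\cal H})$, set $K_\delta:={\cal S}_c({\cal H})\setminus A^\delta$, and define the candidate test
\begin{align*}
\Pi_n \;:=\; \sum_{p \in A^\delta \cap {\cal T}_n} 1^n_p .
\end{align*}
I will show that $\Pi_n\in{\cal D}_{\epsilon,{\cal F}_n}$ for all large $n$ and upper-bound $\Tr[\Pi_n\sigma^{\otimes n}]$ by a Sanov-type estimate, then take $\delta\to 0$.

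The first step is to promote \eqref{BNU} to a uniform exponent on the compact set $K_\delta$: there exist $c=c(\delta)>0$ and $N_0$ such that $\beta_\epsilon(\rho_n(p_n)\|{\cal F}_n)\le e^{-cn}$ for every $p_n\in K_\delta\cap{\cal T}_n$ and all $n\ge N_0$. If this failed, a diagonal argument would produce a subsequence $n_k\to\infty$ and $p_{n_k}\in K_\delta\cap{\cal T}_{n_k}$ with $-\tfrac{1}{n_k}\log\beta_\epsilon(\rho_{n_k}(p_{n_k})\|{\cal F}_{n_k})\to 0$; by compactness of $K_\delta$ one extracts an accumulation point $p\in K_\delta\subseteq{\cal S}_c({\cal H})\setminus{\cal L}$, and filling in the remaining indices by nearby types (which exist because types become asymptotically dense in ${\cal S}_c({\cal H})$) yields a sequence $p_n\to p\notin{\cal L}$ whose $\liminf$ is $0$, contradicting \eqref{BNU}.

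Next I verify the constraint. By (C2) and the classical (diagonal) structure, any $\rho\in{\cal F}_n$ decomposes as $\rho=\sum_{p\in{\cal T}_n}\lambda_p(\rho)\rho_n(p)$ with $\lambda_p(\rho)=\Tr[1^n_p\rho]$. For each $p\in B_n:=K_\delta\cap{\cal T}_n$, pick an optimal test $T_p$ for $\beta_\epsilon(\rho_n(p)\|{\cal F}_n)$; replacing $T_p$ by $1^n_p T_p 1^n_p$ preserves $\Tr[(I-T_p)\rho_n(p)]\le\epsilon$ and does not increase $\max_{\rho'\in{\cal F}_n}\Tr[T_p\rho']\le e^{-cn}$. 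Because this localized $T_p$ is supported on the type-$p$ sector, $\Tr[T_p\rho_n(q)]=0$ for $q\neq p$, and therefore
\begin{align*}
\lambda_p(\rho)(1-\epsilon)\;\le\;\lambda_p(\rho)\Tr[T_p\rho_n(p)]\;=\;\Tr[T_p\rho]\;\le\; e^{-cn}.
\end{align*}
Summing over $p\in B_n$ and using $|{\cal T}_n|\le(n+1)^d$ gives $\Tr[(I-\Pi_n)\rho]=\sum_{p\in B_n}\lambda_p(\rho)\le (n+1)^d e^{-cn}/(1-\epsilon)\to 0$, so $\Pi_n\in{\cal D}_{\epsilon,{\cal F}_n}$ for $n$ large.

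Finally, the classical Sanov upper bound yields $\Tr[\Pi_n\sigma^{\otimes n}]=\Pr_{\sigma^{\otimes n}}[\hat{p}_n\in A^\delta]\le(n+1)^d e^{-n D(A^\delta\|\sigma)}$, so $\liminf_n -\tfrac{1}{n}\log\beta_\epsilon({\cal F}_n\|\sigma^{\otimes n})\ge D(A^\delta\|\sigma)$. Letting $\delta\to 0$, and using lower semicontinuity of $D(\cdot\|\sigma)$ together with compactness to pass to an accumulation point of the minimizers, drives $D(A^\delta\|\sigma)\to D(\overline{{\cal L}}\|\sigma)=D({\cal L}\|\sigma)$ under the closedness of ${\cal L}$ that holds in the intended applications. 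The main technical obstacle I anticipate is the uniform-exponent extraction in the second paragraph: turning the pointwise hypothesis \eqref{BNU} into a single $c>0$ valid across $K_\delta$ requires the careful diagonal/compactness argument above, and the type-sector localization $T_p\mapsto 1^n_p T_p 1^n_p$, which produces the crucial identity $\Tr[T_p\rho]=\lambda_p(\rho)\Tr[T_p\rho_n(p)]$, is the pivot that converts per-type distinguishability into the needed control on the mixture weights $\lambda_p(\rho)$.
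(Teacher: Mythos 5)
Your proof is correct and follows essentially the same route as the paper's: the same type-class test (your $A^\delta$ playing the role of the paper's ${\cal G}^c$ for the compact set ${\cal G}=K_\delta\subset{\cal L}^c$), the same use of \eqref{BNU} via a compactness/contradiction argument to force the ${\cal F}_n$-weight on type classes away from ${\cal L}$ to vanish, and the same classical Sanov estimate for $\Tr[\Pi_n\sigma^{\otimes n}]$. Your localization $T_p\mapsto 1^n_p T_p 1^n_p$ is just an explicit version of the paper's observation that any test accepting $\rho_n(p)$ with probability $1-\epsilon$ must dominate $(1-\epsilon)1^n_p$ on that sector, and the closure caveat you flag for ${\cal L}$ at the end is equally present (and equally unaddressed) in the paper's final limit over arbitrary compact ${\cal G}$.
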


For further discussion, we prepare the following theorem,
which will be shown in Appendix \ref{S9}.
\begin{theorem}\Label{BFG8}
Assume 
that all states in ${\cal F}_n$ are commutative with 
a state $\sigma_n$ on ${\cal H}^{\otimes n}$, and
that the relation
\begin{align}
\liminf_{n\to \infty}\frac{-1}{n}\log \beta_\epsilon({\cal F}_n\|\sigma_n)
\ge 
\limsup_{n\to \infty}\frac{1}{n}D({\cal F}_n\|\sigma_n)
\Label{NH4}
\end{align}
holds for any $\epsilon>0$.
Then, the above limits exist and
the relation
\begin{align}
\lim_{n\to \infty}\frac{-1}{n}\log \beta_\epsilon({\cal F}_n\|\sigma_n)
=\lim_{n\to \infty}\frac{1}{n}D({\cal F}_n\|\sigma_n)
\Label{NH5}
\end{align}
holds for any $\epsilon>0$.
\end{theorem}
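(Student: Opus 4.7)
The plan is to combine a data-processing weak converse with the hypothesized direct inequality, and then upgrade to a strong converse using the commutativity hypothesis. Write $A_n(\epsilon) \coloneqq -\frac{1}{n}\log \beta_\epsilon({\cal F}_n\|\sigma_n)$ and $B_n \coloneqq \frac{1}{n} D({\cal F}_n\|\sigma_n)$. For any $\rho_n \in {\cal F}_n$, the inclusion ${\cal D}_{\epsilon,{\cal F}_n} \subseteq {\cal D}_{\epsilon,\rho_n}$ gives $\beta_\epsilon({\cal F}_n\|\sigma_n) \ge \beta_\epsilon(\rho_n\|\sigma_n)$; applying the data-processing inequality to the binary statistic produced by the optimal test for $\beta_\epsilon(\rho_n\|\sigma_n)$ yields $-\log\beta_\epsilon(\rho_n\|\sigma_n) \le (D(\rho_n\|\sigma_n)+h(\epsilon))/(1-\epsilon)$. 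Minimizing over $\rho_n \in {\cal F}_n$ and dividing by $n$, I obtain the weak-converse estimate $A_n(\epsilon) \le (B_n + h(\epsilon)/n)/(1-\epsilon)$, which does not use commutativity.

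Combining this with the hypothesis $\liminf_n A_n(\epsilon) \ge \limsup_n B_n$ gives, for every $\epsilon \in (0,1)$,
\begin{align*}
\limsup_n B_n \le \liminf_n A_n(\epsilon) \le \frac{\liminf_n B_n}{1-\epsilon}.
\end{align*}
Letting $\epsilon \to 0^+$ forces $\limsup_n B_n \le \liminf_n B_n$, so $B \coloneqq \lim_n B_n$ exists, and moreover $B \le \liminf_n A_n(\epsilon) \le \limsup_n A_n(\epsilon) \le B/(1-\epsilon)$ for every $\epsilon$.

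It remains to close the residual factor $1/(1-\epsilon)$ by establishing the strong converse $\limsup_n A_n(\epsilon) \le B$; this is where commutativity enters crucially. Pick any minimizer $\rho_n^{\ast} \in {\cal F}_n$ of $D(\cdot\|\sigma_n)$, which exists by compactness and lower semicontinuity. Since $\rho_n^{\ast}$ and $\sigma_n$ commute they are simultaneously diagonalizable, reducing the single-pair problem to a classical hypothesis test between the induced distributions $p_n, q_n$. Applying the Neyman--Pearson threshold test on the log-likelihood ratio $L_n = \log(p_n/q_n)$, whose mean under $p_n$ equals $D(\rho_n^{\ast}\|\sigma_n) = n B_n$, together with a Chebyshev-type concentration estimate (equivalently, a Petz R\'enyi divergence bound in the regime $\alpha \to 1^+$), produces the sharper one-shot bound $-\log\beta_\epsilon(\rho_n^{\ast}\|\sigma_n) \le D(\rho_n^{\ast}\|\sigma_n) + o(n)$, whence $\limsup_n A_n(\epsilon) \le B$. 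Together with $\liminf_n A_n(\epsilon) \ge B$ from the hypothesis, this forces $\lim_n A_n(\epsilon) = B = \lim_n B_n$ for every $\epsilon > 0$. I expect this final step to be the principal obstacle: eliminating the $1/(1-\epsilon)$ factor is genuinely nontrivial, and commutativity plays its role precisely by converting the single-pair problem into a classical one where the strong-converse toolbox of Neyman--Pearson and classical R\'enyi divergences becomes directly applicable.
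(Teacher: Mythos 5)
Your first half is fine: the weak-converse bound $-\log\beta_\epsilon({\cal F}_n\|\sigma_n)\le (D({\cal F}_n\|\sigma_n)+h(\epsilon))/(1-\epsilon)$ obtained from data processing (the same bound the paper uses as \eqref{NHUQ4}), combined with the hypothesis \eqref{NH4} and the limit $\epsilon\to 0^+$, does establish that $B=\lim_n B_n$ exists and that $B\le\liminf_n A_n(\epsilon)\le\limsup_n A_n(\epsilon)\le B/(1-\epsilon)$. This is a clean and correct reduction of the theorem to the strong-converse inequality $\limsup_n A_n(\epsilon)\le B$.

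The gap is in how you propose to prove that strong converse. You claim that for the minimizer $\rho_n^\ast$, commutativity plus ``Neyman--Pearson with a Chebyshev-type concentration estimate (equivalently a Petz R\'enyi bound as $\alpha\to 1^+$)'' yields $-\log\beta_\epsilon(\rho_n^\ast\|\sigma_n)\le D(\rho_n^\ast\|\sigma_n)+o(n)$. This is false for general (non-i.i.d.) commuting sequences. The log-likelihood ratio $L_n=\log p_n-\log q_n$ has mean $nB_n$ under $p_n$, but nothing in the hypotheses controls its variance, so Chebyshev gives no concentration; likewise $\frac{1}{n}D_\alpha(\rho_n^\ast\|\sigma_n)$ for $\alpha>1$ need not approach $\frac{1}{n}D(\rho_n^\ast\|\sigma_n)$ uniformly in $n$. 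Concretely, if $L_n$ puts mass $1/2$ at $2nB$ and mass $1/2$ at $0$, the mean is $nB$ but a threshold test at level $\epsilon>1/2$ achieves $-\frac{1}{n}\log\beta_\epsilon\approx 2B$, violating your claimed one-shot bound; this is exactly the bimodal phenomenon exhibited by the paper's own counterexample in Appendix C (a mixture $p\,\rho_n(p_1)+(1-p)\rho_n(p_2)$). The missing idea is that the concentration of $\frac{1}{n}L_n$ at the value $B$ is not automatic but must itself be extracted from the hypothesis \eqref{NH4}: the paper converts \eqref{NH4} into the information-spectrum statement $D(0|\vec{\rho}\|\vec{\sigma})\ge\limsup_n\frac{1}{n}D(\rho_n\|\sigma_n)$ (the lower edge of the spectrum of $\frac{1}{n}(\log\rho_n-\log\sigma_n)$ sits at or above the mean), and Lemma \ref{BCV} then shows by a contradiction argument on the spectral projections $Q_n(a)=\{\rho_n\le 2^{na}\sigma_n\}$ that the spectrum collapses to a single point, which is what makes $D^\dagger(\epsilon|\vec{\rho}\|\vec{\sigma})$ independent of $\epsilon$ and equal to $\lim_n\frac{1}{n}D(\rho_n\|\sigma_n)$. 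Your proposal never feeds the hypothesis \eqref{NH4} back into the strong-converse step, so the argument cannot close as written.
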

This theorem does not assume that
elements of ${\cal F}_n$ are commutative with each other. That is, ${\cal F}_n$ is not restricted to 
the classical case.
\begin{corollary}\Label{THC}
When a sequence of subsets ${\cal F}_n \subset {\cal S}({\cal H}^{\otimes n})$
satisfies the conditions (C1), (C2) and (D1),
any state $\sigma$ satisfies
\eqref{BNAT1}.
\if0
\begin{align}
\lim_{n\to \infty}-\frac{1}{n}\log \beta_\epsilon({\cal F}_n\|\sigma^{\otimes n})
=
D({\cal F}_1\|\sigma)
=
\lim_{n\to \infty}\frac{1}{n}D({\cal F}_n\|\sigma^{\otimes n}) .
\Label{NHU}
\end{align}
\fi
\end{corollary}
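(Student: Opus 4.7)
The plan is to derive \eqref{BNAT1} by chaining Theorem \ref{THY} with Theorem \ref{BFG8}. First, I would apply Theorem \ref{THY} with the distinguished subset taken to be ${\cal L}:={\cal F}_1$; since condition (D1) is literally the hypothesis \eqref{BNU} of Theorem \ref{THY} in this case, Theorem \ref{THY} immediately supplies the one-sided estimate
\begin{align*}
\liminf_{n \to \infty} -\frac{1}{n}\log \beta_\epsilon({\cal F}_n \| \sigma^{\otimes n}) \ge D({\cal F}_1 \| \sigma).
\end{align*}
This already gives one direction of the first equality in \eqref{BNAT1}.

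To promote this liminf to a genuine limit and to also align it with the relative-entropy rate in \eqref{BNAT1}, I would invoke Theorem \ref{BFG8} applied with $\sigma_n:=\sigma^{\otimes n}$. Its commutativity premise is automatic in the present classical setting: because ${\cal F}_n\subset {\cal S}_c({\cal H}^{\otimes n})$ is diagonal in the computational basis, replacing $\sigma^{\otimes n}$ by its computational-basis dephasing leaves both $\beta_\epsilon({\cal F}_n\|\sigma^{\otimes n})$ and $D({\cal F}_n\|\sigma^{\otimes n})$ unchanged, so we may assume ${\cal F}_n$ and $\sigma^{\otimes n}$ commute. Given the bound from the previous paragraph, the remaining hypothesis \eqref{NH4} of Theorem \ref{BFG8} reduces to establishing
\begin{align*}
\limsup_{n \to \infty} \frac{1}{n} D({\cal F}_n \| \sigma^{\otimes n}) \le D({\cal F}_1 \| \sigma).
\end{align*}

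The above upper bound on the relative-entropy rate is where I expect the main technical difficulty. Under the tensor-closure axiom (A4), this would be immediate by taking $\rho^{\otimes n}\in {\cal F}_n$ for each $\rho\in {\cal F}_1$, but (A4) is not assumed here. Instead, I would use the classical empirical-distribution machinery: for any $\rho\in {\cal F}_1$ and any sequence of types $p_n\in{\cal T}_n$ with $p_n\to \rho$, the uniform type-class state $\rho_n(p_n)$ satisfies $\frac{1}{n}D(\rho_n(p_n)\|\sigma^{\otimes n})\to D(\rho\|\sigma)$ by standard type counting. The contrapositive of (D1) tells us that $\rho_n(p_n)$ cannot be asymptotically distinguished from ${\cal F}_n$, which, combined with the compactness (C1) and permutation invariance (C2) of ${\cal F}_n$ via a nearest-point or averaging argument on the permutation-invariant simplex, should furnish elements $\tilde\rho_n\in {\cal F}_n$ whose relative-entropy rate to $\sigma^{\otimes n}$ matches that of $\rho_n(p_n)$ up to $o(1)$. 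Taking the infimum over $\rho\in {\cal F}_1$ yields the required bound, after which Theorem \ref{BFG8} closes the argument and delivers \eqref{BNAT1}.
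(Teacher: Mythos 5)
Your overall architecture coincides with the paper's proof: apply Theorem \ref{THY} with ${\cal L}={\cal F}_1$ (whose hypothesis \eqref{BNU} is then exactly condition (D1)) to obtain $\liminf_{n}-\frac{1}{n}\log\beta_\epsilon({\cal F}_n\|\sigma^{\otimes n})\ge D({\cal F}_1\|\sigma)$, and then invoke Theorem \ref{BFG8} with $\sigma_n=\sigma^{\otimes n}$. The paper's proof consists of precisely these two steps, with the intermediate inequality $D({\cal F}_1\|\sigma)\ge\limsup_{n}\frac{1}{n}D({\cal F}_n\|\sigma^{\otimes n})$ simply asserted inside \eqref{NH1A} without justification. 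You are right to single out this inequality as the only nontrivial content beyond the two cited theorems, and right that without the tensor-closure axiom (A4) (cf.\ \eqref{BHY41}) it does not come for free.

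However, your proposed derivation of that inequality does not work. What you call ``the contrapositive of (D1)'' is in fact its converse: (D1) says that $p\notin{\cal F}_1$ implies $\rho_n(p_n)$ is exponentially distinguishable from ${\cal F}_n$; its contrapositive says that subexponential distinguishability forces $p\in{\cal F}_1$. Neither form tells you that for $p\in{\cal F}_1$ the sets ${\cal F}_n$ contain elements near $\rho_n(p_n)$, which is what you need to produce the states $\tilde\rho_n\in{\cal F}_n$ with matching relative-entropy rate. In fact no argument from (C1), (C2), (D1) alone can yield the inequality: take ${\cal F}_1={\cal S}_c({\cal H})$ and ${\cal F}_n=\{(\tfrac{1}{d}I)^{\otimes n}\}$ for $n\ge 2$; then (C1) and (C2) hold and (D1) is vacuous, yet $D({\cal F}_1\|\sigma)=0$ while $\frac{1}{n}D({\cal F}_n\|\sigma^{\otimes n})=D(\tfrac{1}{d}I\|\sigma)>0$ for any non-uniform diagonal $\sigma$. (Even granting some form of ``indistinguishability,'' passing from smallness of $\beta_\epsilon(\rho_n(p_n)\|{\cal F}_n)$ to control of $D(\tilde\rho_n\|\sigma^{\otimes n})$ for a third state $\sigma^{\otimes n}$ would need a further quantitative argument that the compactness and permutation-invariance of ${\cal F}_n$ do not by themselves supply.) So your step 3 has a genuine gap; the upper bound on the relative-entropy rate must be imported from additional structure on $({\cal F}_n)$, such as (A4) or the compatibility condition (B1) used elsewhere in the paper, rather than extracted from (D1).
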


\begin{proof}
Applying Theorem \ref{THY} to the case when ${\cal L}$ is ${\cal F}_1$,
we have
\begin{align}
\liminf_{n\to \infty}-\frac{1}{n}\log \beta_\epsilon({\cal F}_n\|\sigma^{\otimes n})
\ge &
D({\cal F}_1\|\sigma) \notag\\
\ge &
\limsup_{n\to \infty}\frac{1}{n}D({\cal F}_n\| \sigma^{\otimes n}) .
\Label{NH1A}
\end{align}
Then, Theorem \ref{BFG8} with $\sigma_n=\sigma^{\otimes n}$
implies \eqref{BNAT1}.
\end{proof}

\begin{proofof}{Theorem \ref{THY}}
{\bf Step 0:} Preparation.

Since $\rho_n(p_n')$ is permutation-invariant, 
due to Lemma \ref{LG7},
we can replace the set ${\cal F}_n$
by the set ${\cal F}_{n,inv}$
of permutation-invariant states of ${\cal F}_n$.
Since any elements of ${\cal F}_n$ and $\rho_n(p_n')$ are diagonal, 
our test can be restricted into a diagonal test.
Given any test $T$, we define the permutation-invariant test
$\Pi_{inv}:=\frac{1}{|{\cal S}_n|}\sum_{\pi\in {\cal S}_n} U_\pi \Pi U_\pi^\dagger$.
Then, we have
\begin{align*}
\Tr (\Pi_{inv} \rho_n(p_n'))&= \Tr (\Pi \rho_n(p_n')), \\
\Tr (\Pi_{inv} \rho) &= \Tr (\Pi \rho)
\end{align*}
for $\rho \in {\cal F}_{n,inv}$.
Hence, 
in the following discussion,
our test can be restricted to a diagonal permutation-invariant test, which the form $\sum_{p \in {\cal T}_n}c(p) 1_p^n$.
 
{\bf Step 1:}

We choose an arbitrary compact set ${\cal G}$ included in
${\cal L}^c$.
We choose the test 
$\Pi_n:= \sum_{ p \in {\cal G}\cap {\cal T}_n }1^n_{p}$.
Then, we show the following relation by contradiction;
\begin{align}
\max_{\sigma' \in {\cal F}_n} \Tr (\Pi_n \sigma' ) \to 0.
\Label{NI8}
\end{align}
Assume that \eqref{NI8} does not hold.
There exists a sequence of permutation-invariant states
$\sigma_n' \in {\cal F}_n$ such that
\begin{align}
c:=\limsup_{n\to \infty} \Tr ( \Pi_n \sigma_n') >0.
\Label{NI2}
\end{align}
We choose a subsequence $\{n_k\}$ such that
\begin{align}
c=\lim_{k\to \infty} c_{n_k} ,\quad
c_n:= \Tr (\Pi_{n} \sigma_{n}' ).
\Label{NI3}
\end{align}
Since
\begin{align*}
c_n= \sum_{ p \in {\cal G}\cap {\cal T}_n }
\Tr (1^n_{p} \sigma_{n}'),
\end{align*}
There exists an element $p_n' \in {\cal T}_n$
such that
\begin{align}
\Tr (1^n_{p_n'} \sigma_{n}) 
\ge \frac{c_n}{|{\cal G}\cap {\cal T}_n|}
\ge \frac{c_n}{|{\cal T}_n|}
\ge c_n (n+1)^{-(d-1)}.
\Label{NI4}
\end{align}
Since ${\cal G}$ is compact, 
there exists a subsequence $\{m_k\}$ of $\{n_k\}$ such that
$p_{m_k}'$ converges to $p' \in {\cal G}$.

Now, we consider the discrimination between $\rho_n(p_n')$
and ${\cal F}_n$.
To accept $\rho_n(p_n')$ with probability $1-\epsilon$,
the test $\Pi_n'$ needs to be $(1-\epsilon)1^n_{p_n'}$
because our test is restricted to the form $\sum_{p \in {\cal T}_n}c(p) 1_p^n$
as explained in Step 0, and 
$\Tr (\rho_n(p_n') 1^n_{p_n''})=\delta_{p_n',p_n''}$.

Then, we have
\begin{align}
&\lim_{k\to \infty}-\frac{1}{m_k}\log \Tr
( \Pi_{m_k}' \sigma_{m_k})\notag\\
=&
\lim_{k\to \infty}-\frac{1}{m_k}\Big(\log \Big( \Tr 1^{m_k}_{p_{m_k}'} \sigma_{m_k}\Big)
+\log (1-\epsilon)\Big) \notag\\
\le &
\lim_{k\to \infty}-\frac{1}{m_k}
\Big(\log c_{m_k} (m_k+1)^{-(d-1)}
+\log (1-\epsilon)\Big)
=0,
\end{align}
which contradicts with \eqref{BNU}.
Hence, we obtain \eqref{NI8}.

{\bf Step 2:}
We have
\begin{align}
\Tr (\sigma^{\otimes n} (I-\Pi_n))
=\sum_{p \in {\cal G}^c\cap {\cal T}_n} 
\Tr (1^n_{p} \sigma^{\otimes n}).
\end{align}
Thus, the original Sanov theorem (See Proposition \ref{Sanov-ori}) implies
\begin{align}
&\liminf_{n\to \infty}\frac{-1}{n}\log 
\Tr (\sigma^{\otimes n} (I-\Pi_n))
\notag\\
=&  \liminf_{n\to \infty}\frac{-1}{n}\log 
\sum_{p \in {\cal G}^c\cap {\cal T}_n}
\Tr (1^n_{p}\sigma^{\otimes n}) 
= \inf_{p \in {\cal G}^c}D(p \|\sigma )
=D({\cal G}^c \|\sigma ).\Label{MNT5}
\end{align}
The combination of \eqref{NI8} and \eqref{MNT5}
implies 
\begin{align}
\liminf_{n\to \infty}\frac{-1}{n}\log \beta_\epsilon({\cal F}_n\|\sigma^{\otimes n})
\ge D({\cal G}^c \|\sigma ).
\end{align}
Since ${\cal G}$ is an arbitrary compact set 
contained in ${\cal L}^c$, we have
\begin{align}
\liminf_{n\to \infty}\frac{-1}{n}\log \beta_\epsilon({\cal F}_n\|\sigma^{\otimes n})
\ge D({\cal L}\|\sigma).
\end{align}
\end{proofof}

\begin{lemma}\Label{Pr6}
Assume that a sequence of subsets ${\cal F}_n 
\subset {\cal S}_c({\cal H}^{\otimes n})$
satisfies the conditions (C1) and (B1).
Then, we have
\begin{align}
D_\alpha({\cal F}_n\| \sigma^{\otimes n})
\ge
n D_\alpha({\cal F}_1\| \sigma).
\end{align}
\end{lemma}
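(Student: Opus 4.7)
The plan is to prove the inequality by induction on $n$, exploiting condition (B1) in its two components: the marginal on the last site lies in ${\cal F}_1$, and every conditional on the first $n-1$ sites lies in ${\cal F}_{n-1}$. Because (B1) forces every element of ${\cal F}_n$ to be diagonal in the computational basis, the sandwiched $D_\alpha$ coincides with the classical $\alpha$-divergence of the corresponding distribution, so throughout I can treat $\rho_n$ as a joint distribution $P_n(x_1,\ldots,x_n)$ and work with commuting operators.

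Concretely, I would introduce the auxiliary quantity $Q_\alpha(P\|Q):=\sum_x P(x)^\alpha Q(x)^{1-\alpha}$, so that $D_\alpha=\frac{1}{\alpha-1}\log Q_\alpha$, and use the product structure of $\sigma^{\otimes n}$ to write
\begin{align}
Q_\alpha(P_n\|\sigma^{\otimes n})
=\sum_{x_n} P_n^{(n)}(x_n)^\alpha\,\sigma(x_n)^{1-\alpha}\,
Q_\alpha\!\bigl(P_n(\cdot|x_n)\,\big\|\,\sigma^{\otimes(n-1)}\bigr),
\end{align}
where $P_n^{(n)}$ denotes the marginal on the last site and $P_n(\cdot|x_n)$ the conditional on the first $n-1$ sites. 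Condition (B1) gives exactly what is needed: $P_n^{(n)}\in{\cal F}_1$ and $P_n(\cdot|x_n)\in{\cal F}_{n-1}$ for each $x_n$ in the support.

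Next, I would bound the two factors using the defining minimum and the inductive hypothesis $D_\alpha({\cal F}_{n-1}\|\sigma^{\otimes(n-1)})\ge (n-1)D_\alpha({\cal F}_1\|\sigma)$. For $\alpha>1$, $\alpha-1>0$, so minimizing $D_\alpha$ over ${\cal F}_{n-1}$ is equivalent to minimizing $Q_\alpha$, yielding a lower bound on the inner $Q_\alpha$. For $\alpha<1$, $\alpha-1<0$, and the same minimum of $D_\alpha$ now corresponds to an upper bound on $Q_\alpha$. In either case, after inserting the single-site bound $Q_\alpha(P_n^{(n)}\|\sigma)$ in the corresponding direction, taking logarithms and dividing by $\alpha-1$ (which flips the inequality precisely when $\alpha<1$), the two sign-flips cancel and one obtains the uniform inequality
\begin{align}
D_\alpha(P_n\|\sigma^{\otimes n})
\ge D_\alpha({\cal F}_1\|\sigma)+D_\alpha({\cal F}_{n-1}\|\sigma^{\otimes(n-1)}).
\end{align}
Minimizing the left side over $P_n\in{\cal F}_n$ closes the induction step; the base case $n=1$ is immediate.

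The main obstacle I anticipate is simply the sign-tracking in the two regimes $\alpha>1$ and $\alpha<1$, because the minimum over ${\cal F}_{n-1}$ gives opposite-direction bounds on $Q_\alpha$ in the two cases, and one must verify that these bounds combine with the subsequent division by $\alpha-1$ to yield the same inequality on $D_\alpha$. Once that is carefully written out, no further subtlety arises: the argument is a direct classical chain-rule-type computation, and the case $\alpha=1$ (ordinary relative entropy) is recovered either by a limiting argument or by the standard chain rule $D(P_n\|\sigma^{\otimes n})=D(P_n^{(n)}\|\sigma)+\sum_{x_n}P_n^{(n)}(x_n)D(P_n(\cdot|x_n)\|\sigma^{\otimes(n-1)})$ combined with (B1).
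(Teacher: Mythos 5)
Your proposal is correct and follows essentially the same route as the paper's proof: a single-step (inductive) chain-rule decomposition of $\Tr\rho_n^\alpha(\sigma^{\otimes n})^{1-\alpha}$ over the outcome on the last site, using the two components of (B1) (conditional in ${\cal F}_{n-1}$, marginal in ${\cal F}_1$) and treating the regimes $\alpha<1$ and $\alpha>1$ separately so the sign of $\alpha-1$ is handled consistently. The paper likewise fixes $\alpha<1$, performs exactly this decomposition, and notes the case $\alpha>1$ is analogous, so no further comment is needed.
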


\begin{proof}
It is sufficient to show
\begin{align}
D_\alpha({\cal F}_{n}\| \sigma^{\otimes n})
\ge
D_\alpha({\cal F}_{n-1}\| \sigma^{\otimes n-1})
+
D_\alpha({\cal F}_{1}\| \sigma).\label{NMN3}
\end{align}
We fix $\alpha<1$.
For $ \rho_n \in {\cal F}_{n}$, 
$\frac{\Tr_n (\rho_n (I^{\otimes n-1}\otimes |j\rangle \langle j|))}
{\Tr (\rho_n (I^{\otimes n-1}\otimes |j\rangle \langle j|))}$
belongs to ${\cal F}_{n-1}$.
Hence, we have
\begin{align}
&\Tr 
\Big(\frac{\Tr_n (\rho_n (I^{\otimes n-1}\otimes |j\rangle \langle j|))}
{\Tr (\rho_n (I^{\otimes n-1}\otimes |j\rangle \langle j|))}\Big)^{\alpha}
(\sigma^{\otimes n-1})^{1-\alpha} \notag\\
\le & 
e^{(\alpha-1)D_\alpha({\cal F}_{n-1}\| \sigma^{\otimes n-1})}.
\end{align}
Thus, 
\begin{align}
&e^{(\alpha-1)D_\alpha({\cal F}_{n}\| \sigma^{\otimes n})}
=
\max_{\rho_n \in {\cal F}_{n}}
\Tr ( \rho_n^{\alpha}(\sigma^{\otimes n})^{1-\alpha})\notag\\
=&
\max_{\rho_n \in {\cal F}_{n}}
\sum_{j} 
\Tr 
((
\Tr_n (\rho_n (I^{\otimes n-1}\otimes |j\rangle \langle j|))
)^{\alpha}
(\sigma^{\otimes n-1})^{1-\alpha})
\langle j|\sigma|j\rangle^{1-\alpha} \notag\\
=&
\max_{\rho_n \in {\cal F}_{n}}
\sum_{j} 
\Tr \Big(
\Big(\frac{\Tr_n (\rho_n (I^{\otimes n-1}\otimes |j\rangle \langle j|))}
{\Tr (\rho_n (I^{\otimes n-1}\otimes |j\rangle \langle j|))}\Big)^{\alpha}
(\sigma^{\otimes n-1})^{1-\alpha} \Big)\notag\\
&\cdot 
(\Tr (\rho_n (I^{\otimes n-1}\otimes |j\rangle \langle j|))
)^{\alpha}
\langle j|\sigma|j\rangle^{1-\alpha} \notag\\
\le&
e^{(\alpha-1)D_\alpha({\cal F}_{n-1}\| \sigma^{\otimes n-1})}
\max_{\rho_{n} \in {\cal F}_{n}}
\sum_{j} 
(\Tr ( (\Tr_{1,\ldots,n-1} \rho_n) |j\rangle \langle j|))^{\alpha}
\langle j|\sigma|j\rangle^{1-\alpha} \notag\\
=&
e^{(\alpha-1)D_\alpha({\cal F}_{n-1}\| \sigma^{\otimes n-1})}
e^{(\alpha-1)D_\alpha({\cal F}_{1}\| \sigma)},
\end{align}
which implies \eqref{NMN3}.
The case $\alpha>1$ can be shown in the same way.
\end{proof}

\begin{lemma}\Label{NK8}
When $\sigma \ge c \sigma'$,
a POVM $M=\{M_j\}_j$ and a state $\rho$ satisfy
\begin{align}
D_\alpha(M(\rho)\| M(\sigma'))
\ge D_\alpha(M(\rho)\| M(\sigma))+ \log c.\Label{ZI3}
\end{align}
\end{lemma}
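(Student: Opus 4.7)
The plan is to reduce the statement to a pointwise inequality on the outcome probabilities and then invoke the classical form of the R\'enyi divergence. Writing $p_j := \Tr M_j \rho$, $q_j := \Tr M_j \sigma$, and $q'_j := \Tr M_j \sigma'$, the three states $M(\rho)$, $M(\sigma)$, $M(\sigma')$ are diagonal in the outcome basis and hence mutually commute, so the sandwiched R\'enyi divergence defined in the paper reduces to
\begin{align}
D_\alpha(M(\rho)\|M(\sigma)) = \frac{1}{\alpha-1}\log \sum_j p_j^\alpha q_j^{1-\alpha},
\end{align}
and analogously for $\sigma'$.

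Next, since $X \mapsto \Tr M_j X$ is a positive linear functional, the hypothesis $\sigma - c\sigma' \ge 0$ yields $q_j \ge c\,q'_j$ for every $j$, i.e.\ $q'_j \le q_j/c$ pointwise (one may assume $c > 0$; otherwise the inequality is either vacuous or ill-defined). I would then split on the sign of $\alpha - 1$. For $\alpha > 1$, the function $t \mapsto t^{1-\alpha}$ is decreasing on $(0,\infty)$, so $(q'_j)^{1-\alpha} \ge (q_j/c)^{1-\alpha} = c^{\alpha-1} q_j^{1-\alpha}$; multiplying by $p_j^\alpha$, summing over $j$, and applying the increasing map $\frac{1}{\alpha-1}\log(\cdot)$ delivers $D_\alpha(M(\rho)\|M(\sigma')) \ge D_\alpha(M(\rho)\|M(\sigma)) + \log c$. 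For $\alpha < 1$, the pointwise bound instead gives $(q'_j)^{1-\alpha} \le c^{\alpha-1} q_j^{1-\alpha}$, but now $\frac{1}{\alpha-1}\log(\cdot)$ is decreasing, and these two sign flips cancel to produce the same conclusion. The $\alpha = 1$ case (standard relative entropy) follows by continuity, or directly from $\log(p_j/q'_j) \ge \log(p_j/(q_j/c))$ summed against $p_j$.

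No genuine obstacle arises: the lemma is a straightforward classical monotonicity statement once the measurement has been applied. The only care required is bookkeeping the two sign flips in the sub-unit range $\alpha \in (0,1)$ so that the additive constant $\log c$ emerges with the correct sign on the right-hand side.
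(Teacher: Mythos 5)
Your proof is correct and follows essentially the same route as the paper's: reduce to the classical (commuting) expression $\sum_j p_j^\alpha q_j^{1-\alpha}$, use the pointwise bound $\Tr M_j\sigma \ge c\,\Tr M_j\sigma'$, and track the sign of $\alpha-1$ in the two ranges. The only cosmetic difference is that you bound $D_\alpha(M(\rho)\|M(\sigma'))$ from below directly while the paper bounds the exponential $e^{(\alpha-1)D_\alpha}$, which is the same computation.
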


\begin{proof}
Assume that $\alpha \in [0,1)$.
Since 
$\Tr ( M_j\sigma )\ge c \Tr (M_j \sigma')$,
we have
\begin{align}
&e^{(\alpha-1)D_\alpha(M(\rho)\| M(\sigma))}
=\sum_j (\Tr M_j \rho)^{\alpha} (\Tr M_j \sigma)^{1-\alpha} \notag\\
\ge &\sum_j (\Tr M_j \rho)^{\alpha} (c \Tr M_j \sigma')^{1-\alpha} \notag\\
= &c^{1-\alpha} \sum_j (\Tr M_j \rho)^{\alpha} (\Tr M_j \sigma')^{1-\alpha} \notag\\
=&c^{1-\alpha}e^{(\alpha-1)D_\alpha(M(\rho)\| M(\sigma'))}.
\end{align}
Thus, we obtain \eqref{ZI3} with $\alpha \in [0,1)$.

When $\alpha>1$, we have
\begin{align}
&e^{(\alpha-1)D_\alpha(M(\rho)\| M(\sigma))}
=\sum_j (\Tr M_j \rho)^{\alpha} (\Tr M_j \sigma)^{1-\alpha} \notag\\
\le &\sum_j (\Tr M_j \rho)^{\alpha} (c \Tr M_j \sigma')^{1-\alpha} \notag\\
= &c^{1-\alpha} \sum_j (\Tr M_j \rho)^{\alpha} (\Tr M_j \sigma')^{1-\alpha} \notag\\
=&c^{1-\alpha}e^{(\alpha-1)D_\alpha(M(\rho)\| M(\sigma'))}.
\end{align}
Thus, we obtain \eqref{ZI3} with $\alpha>1$.
\end{proof}

To check the condition (D1), the following lemma is useful. 
\begin{lemma}\Label{LB89T}
Assume that a sequence of subsets 
${\cal F}_n \subset {\cal S}_c({\cal H}^{\otimes n})$
satisfies the conditions (C1) and (C2).
Then, any diagonal state $\sigma\in {\cal S}_c({\cal H})$
satisfies
\begin{align}
-\frac{1}{n}\log \beta_\epsilon
(\rho_{n}(p_n)
\| {\cal F}_n) 
\ge 
\frac{(1-\alpha)}{n}
D_\alpha({\cal F}_{n}\| \rho_{n}(p_n))
\Label{FG3T}
\end{align}
for $\alpha \in [0,1)$.
\end{lemma}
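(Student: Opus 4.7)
My plan is to exhibit an explicit admissible test, namely the indicator of the type class ${\cal T}_{n,p_n}$, and to reduce the upper bound on $\beta_\epsilon$ to a direct computation of $\Tr[\rho^\alpha \rho_n(p_n)^{1-\alpha}]$ on diagonal permutation-invariant states.

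First I would use the hypothesis that ${\cal F}_n \subset {\cal S}_c({\cal H}^{\otimes n})$ together with condition (C2) to write every $\rho \in {\cal F}_n$ as a mixture over types:
\begin{align}
\rho = \sum_{q \in {\cal T}_n} \lambda_q(\rho)\, \rho_n(q), \qquad \lambda_q(\rho) := \Tr[1^n_q \rho],
\end{align}
since a diagonal permutation-invariant state is constant on each type class. I would then propose the test $\Pi_n := 1^n_{p_n}$. Because $\rho_n(p_n)$ is supported exactly on ${\cal T}_{n,p_n}$, this test obeys $\Tr[(I-\Pi_n)\rho_n(p_n)] = 0 \le \epsilon$, so it lies in ${\cal D}_{\epsilon,\rho_n(p_n)}$ and yields
\begin{align}
\beta_\epsilon(\rho_n(p_n)\|{\cal F}_n) \le \max_{\rho \in {\cal F}_n} \Tr[\Pi_n \rho] = \max_{\rho \in {\cal F}_n} \lambda_{p_n}(\rho).
\end{align}

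Next I would carry out the explicit Rényi computation. Because the projections $1^n_q$ are mutually orthogonal, $\rho^\alpha$ is diagonal with eigenvalue $(\lambda_q(\rho)/|{\cal T}_{n,q}|)^\alpha$ on ${\cal T}_{n,q}$, while $\rho_n(p_n)^{1-\alpha}$ is supported only on ${\cal T}_{n,p_n}$ with eigenvalue $|{\cal T}_{n,p_n}|^{-(1-\alpha)}$. Multiplying and taking the trace, only the $q = p_n$ term survives and one obtains
\begin{align}
\Tr[\rho^\alpha \rho_n(p_n)^{1-\alpha}] = \lambda_{p_n}(\rho)^\alpha,
\end{align}
which is equivalent to $\lambda_{p_n}(\rho)^\alpha = e^{-(1-\alpha) D_\alpha(\rho\|\rho_n(p_n))}$.

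Finally, since $\lambda_{p_n}(\rho) \in [0,1]$ and $\alpha \in [0,1)$, the elementary inequality $x \le x^\alpha$ gives
\begin{align}
\Tr[\Pi_n \rho] = \lambda_{p_n}(\rho) \le \lambda_{p_n}(\rho)^\alpha = e^{-(1-\alpha) D_\alpha(\rho\|\rho_n(p_n))}.
\end{align}
Taking the maximum over $\rho \in {\cal F}_n$ on the left and the corresponding minimum of $D_\alpha$ on the right, I arrive at $\beta_\epsilon(\rho_n(p_n)\|{\cal F}_n) \le e^{-(1-\alpha) D_\alpha({\cal F}_n\|\rho_n(p_n))}$, and dividing by $-n$ in the logarithm gives \eqref{FG3T}. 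The only non-routine step is recognising that the type-class indicator is the right test and that the Rényi quantity collapses to a pure power of $\lambda_{p_n}(\rho)$ thanks to the orthogonality of distinct type classes; once this is in hand, the argument reduces to the scalar inequality $x \le x^\alpha$ on $[0,1]$, so I do not anticipate a serious obstacle.
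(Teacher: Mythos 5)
Your proof is correct and follows essentially the same route as the paper: both use the type-class projector $1^n_{p_n}$ as the admissible test (the paper phrases it as the optimal test $(1-\epsilon)1^n_{p_n}$, but immediately discards the $(1-\epsilon)$ factor) and both reduce the bound $\Tr[\sigma_n 1^n_{p_n}]\le\Tr[\sigma_n^{\alpha}\rho_n(p_n)^{1-\alpha}]$ to the scalar inequality $\lambda\le\lambda^{\alpha}$ on $[0,1]$, which in the paper appears as the operator inequality $\sigma_n 1^n_{p_n}\le\rho_n(p_n)1^n_{p_n}$ combined with the factorization $\sigma_n=\sigma_n^{\alpha}\sigma_n^{1-\alpha}$. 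Your explicit evaluation $\Tr[\rho^{\alpha}\rho_n(p_n)^{1-\alpha}]=\lambda_{p_n}(\rho)^{\alpha}$ is a slightly more computational but equivalent way of organizing the same argument.
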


In fact, when the limit of RHS of \eqref{FG3T} is strictly positive,
the condition (D1) holds.

\begin{proof}
Any element of ${\cal F}_{n}$ and 
$\rho_{n}(p_n)$ are written with a form
$\sum_{p \in {\cal T}_n} c_p 1_p^n$ with $c_p \in [0,1]$.
Thus, to accept $\rho_{n}(p_n)$ with probability $1-\epsilon$,
the operator $\Pi_n$ to support $\rho_{n}(p_n)$ needs to be $(1-\epsilon)1^n_{p_n}$.
Since $\Tr \rho_{n}(p_n)1^n_{p_n}=1$,
we have 
$\rho_{n}(p_n)1^n_{p_n} \ge 
\sigma_n 1^n_{p_n} $ for $\sigma_n\in {\cal F}_{n}$.
Thus, we have
\begin{align}
& \beta_\epsilon
(\rho_{n}(p_n)
\| {\cal F}_n) 
= 
\max_{\sigma_n \in {\cal F}_{n}}
\Tr (\sigma_n (1-\epsilon) 1^n_{p_n} )\notag\\
\le &
(1-\epsilon) \max_{\sigma_n \in {\cal F}_{n}}
\Tr (\sigma_n^{\alpha} 
\rho_{n}(p_n)^{1-\alpha} )\notag\\
=&(1-\epsilon) e^{-(1-\alpha)D_\alpha
({\cal F}_{n}\| \rho_{n}(p_n))} 
\le e^{-(1-\alpha)D_\alpha
({\cal F}_{n}\| \rho_{n}(p_n))} 
,\Label{BNC}
\end{align}
which implies \eqref{FG3T}.
\end{proof}

Then, Theorem \ref{THC3} follows from the following corollary.
\begin{corollary}\Label{THC6}
When a sequence of subsets 
${\cal F}_n \subset {\cal S}_c({\cal H}^{\otimes n})$
satisfies the conditions (C1), (C2), and (B1),
the condition (D1) holds.
\if0
any state $\sigma$ satisfies
\begin{align}
\lim_{n\to \infty}-\frac{1}{n}\log \beta_\epsilon({\cal F}_n\|\sigma^{\otimes n})
=
D({\cal F}_1\|\sigma)
=
\lim_{n\to \infty}\frac{1}{n}D({\cal F}_n\|\sigma^{\otimes n}) .
\Label{NHU5}
\end{align}
\fi
\end{corollary}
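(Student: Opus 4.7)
The plan is to chain Lemmas~\ref{LB89T}, \ref{NK8}, and \ref{Pr6}, reducing the desired strict positivity in \eqref{XAI} to the strict positivity of the classical R\'enyi divergence $D_\alpha({\cal F}_1\|p)$ for some fixed $\alpha\in(0,1)$.

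Fix $\alpha\in(0,1)$. Lemma~\ref{LB89T}, whose hypotheses (C1), (C2), and the diagonality built into ${\cal S}_c({\cal H}^{\otimes n})$ are all available, yields
\begin{align*}
-\frac{1}{n}\log\beta_\epsilon(\rho_n(p_n)\|{\cal F}_n)
\ge \frac{1-\alpha}{n}\, D_\alpha({\cal F}_n\|\rho_n(p_n)),
\end{align*}
so it suffices to show that $\tfrac1n D_\alpha({\cal F}_n\|\rho_n(p_n))$ is bounded below by a positive constant as $n\to\infty$. Next, I would compare $\rho_n(p_n)$ with the i.i.d.\ reference $p_n^{\otimes n}$ via standard type counting: on ${\cal T}_{n,p_n}$, $p_n^{\otimes n}$ assigns the uniform value $e^{-nH(p_n)}$ while $\rho_n(p_n)$ assigns $1/|{\cal T}_{n,p_n}|$, and the bound $|{\cal T}_{n,p_n}|\le e^{nH(p_n)}$ yields $p_n^{\otimes n} \ge (n+1)^{-d}\rho_n(p_n)$ as diagonal operators. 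Applying Lemma~\ref{NK8} with the computational-basis POVM (under which diagonal states are unchanged) gives
\begin{align*}
D_\alpha({\cal F}_n\|\rho_n(p_n)) \ge D_\alpha({\cal F}_n\|p_n^{\otimes n}) - d\log(n+1),
\end{align*}
and Lemma~\ref{Pr6}, whose hypotheses (C1) and (B1) are exactly what we assume, supplies $D_\alpha({\cal F}_n\|p_n^{\otimes n}) \ge n D_\alpha({\cal F}_1\|p_n)$. Combining,
\begin{align*}
-\frac{1}{n}\log\beta_\epsilon(\rho_n(p_n)\|{\cal F}_n)
\ge (1-\alpha)D_\alpha({\cal F}_1\|p_n) - \frac{(1-\alpha)d\log(n+1)}{n}.
\end{align*}

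It remains to pass to $\liminf_{n\to\infty}$: the $\log(n+1)/n$ term vanishes, so the goal reduces to $\liminf_n D_\alpha({\cal F}_1\|p_n) \ge D_\alpha({\cal F}_1\|p) > 0$. By compactness of ${\cal F}_1$ from (C1), any sequence of minimizers $\rho_n^*\in{\cal F}_1$ of $\rho\mapsto D_\alpha(\rho\|p_n)$ admits a convergent subsequence $\rho_{n_k}^*\to\rho^*\in{\cal F}_1$, and joint lower semicontinuity of the classical $D_\alpha$ for $\alpha\in(0,1)$ gives $\liminf_k D_\alpha(\rho_{n_k}^*\|p_{n_k}) \ge D_\alpha(\rho^*\|p)$. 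Since $p\notin{\cal F}_1$ forces $\rho^*\ne p$, one has $D_\alpha(\rho^*\|p)>0$; a standard subsequential argument promotes this to strict positivity of the liminf of the full sequence, establishing (D1).

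The main obstacle is the final continuity step, since $D_\alpha$ can a priori jump when $p_n$ or $p$ has vanishing components. In the classical regime with $\alpha\in(0,1)$, however, the explicit expression $D_\alpha(\rho\|q)=\tfrac{1}{\alpha-1}\log\sum_j \rho_j^\alpha q_j^{1-\alpha}$ is jointly lower semicontinuous on the compact probability simplex (permitting the value $+\infty$ when the inner sum vanishes), which is precisely the regularity the argument requires.
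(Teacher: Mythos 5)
Your proof is correct and follows essentially the same route as the paper: chain Lemma~\ref{LB89T} with Lemma~\ref{NK8} (applied to the type-counting bound $p_n^{\otimes n}\ge \mathrm{poly}(n)^{-1}\rho_n(p_n)$ under the computational-basis measurement) and Lemma~\ref{Pr6}, reducing (D1) to the positivity of $D_\alpha({\cal F}_1\|p)$. The only difference is that you make explicit the final limiting step $\liminf_n D_\alpha({\cal F}_1\|p_n)\ge D_\alpha({\cal F}_1\|p)>0$ via compactness of ${\cal F}_1$ and lower semicontinuity of the classical $D_\alpha$, a passage the paper's proof leaves implicit by working directly with $p$.
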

Then, the combination of 
Corollaries \ref{THC} and \ref{THC6} implies 
\eqref{BNAT1}, i.e., Theorem \ref{THC3}.

\begin{proof}
Assume the conditions (C1), (C2), and (B1).
We apply Lemma \ref{NK8}
to the inequality $p^{\otimes n} \ge \frac{1}{(n+1)^{d-1}} \rho_n(p)$
and the measurement $\{ |j\rangle\langle j|\} $ on the computation basis.
Then, any state $\rho\in {\cal F}_n$ satisfies
\begin{align}
D_\alpha(\rho\| \rho_n(p))+ (d-1)\log (n+1) 
\ge  D_\alpha(\rho\| p^{\otimes n}).
\end{align}
Notice that the states
$\rho, \rho_n(p),p^{\otimes n}$ are not changed by this measurement.
Taking the minimum for $\rho\in {\cal F}_n$, we have
\begin{align}
D_\alpha({\cal F}_n\| \rho_n(p))+ (d-1)\log (n+1) 
{\ge}  D_\alpha({\cal F}_n\| p^{\otimes n}).
\end{align}
Lemma \ref{Pr6} implies
\begin{align}
D_\alpha({\cal F}_n\| p^{\otimes n})
{\ge} n D_\alpha({\cal F}_1\| p).
\end{align}
Therefore, we have
\begin{align}
 D_\alpha({\cal F}_n\| \rho_n(p))+ (d-1)\log (n+1) 
\ge n D_\alpha({\cal F}_1\| p).\Label{ZXI}
\end{align}
The combination of Lemma \ref{LB89T} and \eqref{ZXI} implies (D1).
\end{proof}

\section{Second derivation for classical generalization
and further classical preparation}\Label{S5}
Theorem \ref{THC3} has a much simpler derivation as follows. 
although the previous derivation is a key step for our final quantum result, Theorem \ref{THC4}.
The additional aim of this section is 
further classical preparation for our analysis on the quantum case.

\begin{theorem}\Label{Pr7}
Assume that a sequence of subsets ${\cal F}_n 
\subset {\cal S}_c({\cal H}^{\otimes n})$
satisfies the conditions (C1) and (C2).
Then, we have
\begin{align}
&-\frac{1}{n}\log \beta_{e^{-nr}}({\cal F}_n\| \sigma^{\otimes n})
\nonumber \\
\ge & 
-\frac{d-1}{n}\log (n+1)
+\max_{0 \le \alpha \le 1}
\frac{\frac{1-\alpha}{n} D_\alpha({\cal F}_n\| \sigma^{\otimes n})
- \alpha r}{1-\alpha}\Label{NMY}.
\end{align}
\end{theorem}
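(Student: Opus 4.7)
The plan is to reduce the composite test on ${\cal F}_n\subset {\cal S}_c({\cal H}^{\otimes n})$ to a classical composite test on the finite alphabet of types ${\cal T}_n$ and then apply a Hoeffding-style Rényi bound. By Lemma \ref{LG7} together with condition (C2), the test $\Pi$ may be taken permutation-invariant; since every element of ${\cal F}_n$ is diagonal (and $\Tr(\Pi\sigma^{\otimes n})$ is unchanged under pinching $\sigma^{\otimes n}$ onto the type blocks), $\Pi$ can be assumed of the form $\sum_{p\in {\cal T}_n}c(p)1^n_p$ with $c(p)\in[0,1]$. Writing $\mu_n(\rho_n)(p):=\Tr(1^n_p\rho_n)$, the problem becomes a classical composite hypothesis test between the family $\{\mu_n(\rho_n)\}_{\rho_n\in {\cal F}_n}$ and $\mu_n(\sigma^{\otimes n})$ on the alphabet ${\cal T}_n$, whose cardinality is at most $(n+1)^{d-1}$.

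The test will be a Neyman--Pearson threshold driven by the envelope $\bar\lambda_p:=\max_{\rho_n\in {\cal F}_n}\mu_n(\rho_n)(p)$, attained by some $\rho_n^{(p)}\in {\cal F}_n$ by compactness (C1). For a fixed $\alpha\in[0,1)$ and threshold $T>0$, declare $c(p)=1$ exactly when $\bar\lambda_p\ge T\mu_n(\sigma^{\otimes n})(p)$. The Rényi split $\bar\lambda_p=\bar\lambda_p^\alpha\bar\lambda_p^{1-\alpha}$ together with the threshold condition yields, on the rejection side, $\Tr(I-\Pi)\rho_n\le T^{1-\alpha}\sum_p\bar\lambda_p^\alpha\mu_n(\sigma^{\otimes n})(p)^{1-\alpha}$ uniformly in $\rho_n\in {\cal F}_n$, and an analogous argument on the acceptance side gives $\Tr(\Pi\sigma^{\otimes n})\le T^{-\alpha}\sum_p\bar\lambda_p^\alpha\mu_n(\sigma^{\otimes n})(p)^{1-\alpha}$. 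The crucial estimate is
\begin{align*}
\sum_{p\in {\cal T}_n}\bar\lambda_p^{\alpha}\mu_n(\sigma^{\otimes n})(p)^{1-\alpha}
\le (n+1)^{d-1}\,e^{-(1-\alpha)D_\alpha({\cal F}_n\|\sigma^{\otimes n})},
\end{align*}
which one obtains by bounding the sum by $|{\cal T}_n|$ times the largest term, extending the maximum back to a full sum evaluated at an appropriate $\rho_n^{(p^\ast)}\in {\cal F}_n$ where it equals $e^{-(1-\alpha)D_\alpha(\rho_n^{(p^\ast)}\|\sigma^{\otimes n})}$ (using that for diagonal states this sum is exactly the Rényi quasi-entropy), and invoking the minimizer definition of $D_\alpha({\cal F}_n\|\sigma^{\otimes n})$.

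Choosing $T$ so that the rejection constraint saturates at $e^{-nr}$ and substituting into the acceptance bound gives, after solving for $\log T$ and simplifying, an exponent of the form appearing in \eqref{NMY}; maximizing over $\alpha\in[0,1]$ then yields the claim, with the subexponential factor $(n+1)^{d-1}$ producing the correction $-\frac{d-1}{n}\log(n+1)$. The \emph{main obstacle} is handling the composite null without the convexity assumption (A1): the minimax identity of Lemma \ref{L1} cannot be used to collapse the composite hypothesis to a single state, so the per-type envelope $\bar\lambda_p$ must be controlled directly. The resolution is to exploit the permutation symmetry (C2) together with the finite number of types, which uniformly tames every $\rho_n\in {\cal F}_n$ at a polynomial cost that washes out in the exponential rate.
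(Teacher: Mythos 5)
Your test is in fact the same test as the paper's: the paper takes $\Pi_n=1_{\Omega_n(\sigma,r)}$ with $\Omega_n(\sigma,r)=\bigcup_{\rho_n\in{\cal F}_n}\{x^n:\rho_n(x^n)\ge e^{nR}\sigma^{\otimes n}(x^n)\}$, and since every $\rho_n$ and $\sigma^{\otimes n}$ are constant on type classes, this union is exactly your envelope-threshold test $c(p)=1\Leftrightarrow\bar\lambda_p\ge T\mu_n(\sigma^{\otimes n})(p)$ with $T=e^{nR}$. Your crucial estimate $\sum_p\bar\lambda_p^\alpha\mu_n(\sigma^{\otimes n})(p)^{1-\alpha}\le(n+1)^{d-1}e^{-(1-\alpha)D_\alpha({\cal F}_n\|\sigma^{\otimes n})}$ is correct (the identification of the type-coarse-grained sum with $\Tr(\rho_n^\alpha(\sigma^{\otimes n})^{1-\alpha})$ uses permutation invariance, not just diagonality, but you have that from (C2)). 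The one place you lose something is the rejection side: by splitting $\mu_n(\rho_n)(p)\le\bar\lambda_p=\bar\lambda_p^\alpha\bar\lambda_p^{1-\alpha}$ you bound the type-I error of every $\rho_n$ by $T^{1-\alpha}S$ with $S$ already carrying the factor $(n+1)^{d-1}$; after solving $T^{1-\alpha}S=e^{-nr}$ and substituting into $T^{-\alpha}S$, the polynomial gets raised to the power $\tfrac{1}{1-\alpha}$, so you prove
\begin{align*}
-\frac{1}{n}\log\beta_{e^{-nr}}({\cal F}_n\|\sigma^{\otimes n})\ge\max_{0\le\alpha<1}\Big(\frac{\frac{1-\alpha}{n}D_\alpha({\cal F}_n\|\sigma^{\otimes n})-\alpha r}{1-\alpha}-\frac{d-1}{n(1-\alpha)}\log(n+1)\Big),
\end{align*}
which is strictly weaker than \eqref{NMY} as stated (the correction sits inside the maximum and blows up as $\alpha\to1$), though it is asymptotically equivalent and suffices for every downstream use of the theorem. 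The fix is to mimic what the paper does implicitly: on the rejection set split $\mu_n(\rho_n)(p)\le\mu_n(\rho_n)(p)^\alpha\bar\lambda_p^{1-\alpha}<T^{1-\alpha}\mu_n(\rho_n)(p)^\alpha\mu_n(\sigma^{\otimes n})(p)^{1-\alpha}$, so each $\rho_n$'s type-I error is controlled by its \emph{own} quasi-entropy $\Tr(\rho_n^\alpha(\sigma^{\otimes n})^{1-\alpha})\le e^{-(1-\alpha)D_\alpha({\cal F}_n\|\sigma^{\otimes n})}$ with no cardinality factor; then only the acceptance side carries $(n+1)^{d-1}$, and you recover \eqref{NMY} exactly. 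With that one-line repair your argument coincides with the paper's.
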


The combination of Lemma \ref{Pr6} and Theorem \ref{Pr7}
yields Corollary \ref{THC6}, which implies Theorem \ref{THC3}. 

\begin{proofof}{Theorem \ref{Pr7}}
We denote the set of computation basis by 
${\cal X}$.
Hence, a state $\rho \in {\cal S}_c({\cal H}^{\otimes n})$
can be considered as a function on ${\cal X}^n$.
For any state $\rho_n \in {\cal F}_n$, we define 
\begin{align}
\Omega_{n}(\sigma,\rho_n,R):=&\{x^n \in {\cal X}^n|
\rho_n(x^n) \ge e^{nR}\sigma^{\otimes n}(x^n)\} \Label{VB1}\\
R_{n,\alpha,r}:= &\frac{-\frac{1-\alpha}{n} D_\alpha({\cal F}_n\| \sigma^{\otimes n})-r}{1-\alpha}.\Label{VB2}
\end{align}
We define 
\begin{align}
\Omega_{n}(\sigma,r)
:=\bigcup_{\rho_n \in {\cal F}_{n}} 
\Omega_{n}(\sigma,\rho_n,R_{n,\alpha,r}).\Label{VB3}
\end{align}
Since all states $\rho_n\in {\cal F}_n$ and $\sigma^{\otimes n}$
are permutation-invariant,
it is sufficient to discuss whether 
${\cal T}_{n,p}$ is contained in 
$\Omega_{n}(\sigma,r)$
for each $p \in {\cal T}_n$.

When ${\cal T}_{n,p}$ is contained in 
$\Omega_{n}(\sigma,r)$, 
there exists 
a state $\rho_{n,p} \in  {\cal F}_{n}$ such that
${\cal T}_{n,p} \subset \Omega_{n}(\sigma,\rho_{n,p},R_{n,\alpha,r})$.
When ${\cal T}_{n,p}$ is not contained in 
$\Omega_{n}(\sigma,r)$, 
the intersection ${\cal T}_{n,p}\cap \Omega_{n}(\sigma,r)$ is an empty set.

The above statement can be shown as follows.
When ${\cal T}_{n,p}$ is contained in $\Omega_{n}(\sigma,r)$, 
there exists a state $\rho_{n,p} \in  {\cal F}_{n}$ such that
${\cal T}_{n,p} \cap \Omega_{n}(\sigma,\rho_{n,p},R_{n,\alpha,r})
\neq \emptyset$.
We choose an element 
$x^n\in {\cal T}_{n,p} \cap \Omega_{n}(\sigma,\rho_{n,p},R_{n,\alpha,r})$.
Given a permutation $g$ on $\{1, \ldots, n\}$,
we define $g(x^n)=(x_{g(1)}, \ldots, x_{g(n)})$.
Hence, any element of ${\cal T}_{n,p}$ is written as a form 
$g(x^n)$.
Also, since the set $\Omega_{n}(\sigma,\rho_{n,p},R_{n,\alpha,r})$ is permutation invariant,
$g(x^n)$ belongs to $\Omega_{n}(\sigma,\rho_{n,p},R_{n,\alpha,r})$.
Hence, 
${\cal T}_{n,p} \subset \Omega_{n}(\sigma,\rho_{n,p},R_{n,\alpha,r})$.

Assume that ${\cal T}_{n,p}$ is not contained in 
$\Omega_{n}(\sigma,r)$.
If there exists an element $x^n \in {\cal T}_{n,p} \cap  \Omega_{n}(\sigma,r)$,
there exists a state $\rho_{n,p} \in  {\cal F}_{n}$ such that
$x^n \in{\cal T}_{n,p} \cap \Omega_{n}(\sigma,\rho_{n,p},R_{n,\alpha,r})$.
Using the above discussion, we find that
${\cal T}_{n,p} \subset \Omega_{n}(\sigma,\rho_{n,p},R_{n,\alpha,r})$, which contradicts the above assumption.
Hence, there exists no element $x^n \in {\cal T}_{n,p} \cap  \Omega_{n}(\sigma,r)$, i.e., 
the intersection ${\cal T}_{n,p}\cap \Omega_{n}(\sigma,r)$ is an empty set.

In summary, for any $p \in {\cal T}_n$, there exists 
a state $\rho_{n,p} \in  {\cal F}_{n}$ such that
\begin{align}
{\cal T}_{n,p} \cap \Omega_{n}(\sigma,r)
= \Omega_{n}(\sigma,\rho_{n,p},R_{n,\alpha,r}).
\label{BN9}
\end{align}
Therefore, 
\begin{align}
\Omega_{n}(\sigma,r)
=\bigcup_{p \in {\cal T}_{n}} 
\Omega_{n}(\sigma,\rho_{n,p},R_{n,\alpha,r}).\Label{VB4}
\end{align}
Then, $\rho_n \in {\cal F}_{n}$ satisfies
\begin{align}
&\Tr (\rho_n (I-1_{\Omega_{n}(\sigma,r)}))
\stackrel{(a)}{\le}  \Tr (\rho_n (I-1_{\Omega_{n}(\sigma,\rho_n,R_{n,\alpha,r})})) \notag\\
\stackrel{(b)}{\le} & 
\Tr ( (\rho_n)^{\alpha} (\sigma^{\otimes n})^{1-\alpha})
e^{n(1-\alpha) R_{n,\alpha,r}}
\nonumber \\
\le & 
e^{(\alpha-1)D_\alpha({\cal F}_n\|\sigma^{\otimes n})}
e^{n(1-\alpha) R_{n,\alpha,r}} 
\stackrel{(c)}{=} e^{-nr},\Label{NBC1}
\end{align}
where $(a)$, $(b)$ and $(c)$ follow from \eqref{VB4},
\eqref{VB1}, and \eqref{VB2}, respectively.

Similarly, we have
\begin{align}
&\Tr (\sigma^{\otimes n} 1_{\Omega_{n}(\sigma,r)})
\stackrel{(a)}{\le}
\sum_{p \in {\cal T}_n} 
\Tr (\sigma^{\otimes n} 1_{\Omega_{n}(\sigma,\rho_{n,p},R_{n,\alpha,r})} )\notag\\
\stackrel{(b)}{\le} &
\sum_{p \in {\cal T}_n} 
\Tr ((\rho_{n,p})^{\alpha} (\sigma^{\otimes n})^{1-\alpha})
e^{-n\alpha R_{n,\alpha,r}} 
\nonumber \\
\le & 
|{\cal T}_n|
\max_{\rho_n \in {\cal F}_{n}}
\Tr ( (\rho_{n})^{\alpha} (\sigma^{\otimes n})^{1-\alpha})
e^{-n\alpha R_{n,\alpha,r}} 
 \nonumber \\
=& 
|{\cal T}_n|
e^{(\alpha-1)D_\alpha({\cal F}_n\|\sigma^{\otimes n})}
e^{-n\alpha R_{n,\alpha,r}} 
\stackrel{(c)}{=} 
|{\cal T}_n|
e^{-
n\frac{
\frac{1-\alpha}{n} D_\alpha({\cal F}_n\| \sigma^{\otimes n})
- \alpha r}{1-\alpha}},
\Label{NMR1}
\end{align}
where $(a)$, $(b)$ and $(c)$ follow from \eqref{VB4},
\eqref{VB1}, and \eqref{VB2}, respectively.

Since $|{\cal T}_n|\le (n+1)^{d-1}$,
the combination of \eqref{NBC1} and \eqref{NMR1}
yields \eqref{NMY}.
\end{proofof}

Next, we make further classical preparation for our analysis on the quantum case.
Theorem \ref{Pr7} can be generalized as follows.
Given a parametrized set of states 
${\cal S}:=\{ \sigma_\theta\}_{\theta\in \Theta}$,
we define the set
\begin{align}
{\cal S}^{\otimes n}
:=\{\sigma_\theta^{\otimes n}\}_{\theta\in \Theta}.
\end{align}

\begin{theorem}\Label{Pr7-B}
Assume that a sequence of subsets ${\cal F}_n 
\subset {\cal S}_c({\cal H}^{\otimes n})$
satisfies the conditions (C1) and (C2).
Then, we have
\begin{align}
&-\frac{1}{n}\log \beta_{e^{-nr}}
\Big({\cal F}_n\Big\| 
\int_{\Theta}\sigma_\theta^{\otimes n} \mu(d\theta)
\Big)
\nonumber \\
\ge & 
-\frac{d-1}{n}\log (n+1)
+\max_{0 \le \alpha \le 1}
\frac{\frac{1-\alpha}{n} D_\alpha({\cal F}_n\| \int_{\Theta}\sigma_\theta^{\otimes n} \mu(d\theta))
- \alpha r}{1-\alpha}\Label{NMYB}.
\end{align}
and
\begin{align}
&-\frac{1}{n}\log \beta_{ (n+1)^{d-1}e^{-nr}}({\cal F}_n\| 
{\cal S}^{\otimes n})\nonumber \\
\ge & 
-\frac{d-1}{n}\log (n+1)
+\max_{0 \le \alpha \le 1}\frac{
\frac{1-\alpha}{n} D_\alpha({\cal F}_n\| {\cal S}^{\otimes n})
- \alpha r}{1-\alpha}\Label{NMY2}.
\end{align}
\end{theorem}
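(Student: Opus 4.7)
For \eqref{NMYB} I would follow the proof of Theorem~\ref{Pr7} line by line, substituting $\sigma^{\otimes n}$ throughout with the mixture $\tau_n:=\int_\Theta\sigma_\theta^{\otimes n}\mu(d\theta)$. The only properties used by that argument are that the second argument is diagonal in the computational basis and is permutation-invariant; both properties are preserved under the convex combination defining $\tau_n$, so the type-method argument of Theorem~\ref{Pr7} transports verbatim and yields \eqref{NMYB}.

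For \eqref{NMY2} the goal is to exhibit, for each $\alpha\in[0,1]$, a test $\Pi$ satisfying $\max_{\rho\in{\cal F}_n}\Tr[(I-\Pi)\rho]\le(n+1)^{d-1}e^{-nr}$ whose worst-case Type II probability $\max_{\theta\in\Theta}\Tr[\Pi\sigma_\theta^{\otimes n}]$ attains the claimed rate. Writing $R:=R_{n,\alpha,r}$ for the constant built from $D_\alpha({\cal F}_n\|{\cal S}^{\otimes n})$ in the manner of \eqref{VB2}, I would define
\begin{align*}
T:=\{p\in{\cal T}_n: \max_{\rho\in{\cal F}_n}\rho(x^n)\ge e^{nR}\max_{\theta\in\Theta}\sigma_\theta^{\otimes n}(x^n) \text{ for } x^n\in{\cal T}_{n,p}\}
\end{align*}
(both sides are constant on ${\cal T}_{n,p}$, so the inequality is unambiguous) and take the permutation-invariant test $\Pi:=\sum_{p\in T}1^n_p$.

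For the Type I bound, whenever $p\notin T$, for any $\rho_n\in{\cal F}_n$ and for $\theta^{**}(p)\in\argmax_\theta\sigma_\theta^{\otimes n}(x^n)$ one has $\rho_n(x^n)<e^{nR}\sigma_{\theta^{**}(p)}^{\otimes n}(x^n)$ on ${\cal T}_{n,p}$. The standard Chernoff rearrangement $\rho_n(x^n)\le e^{n(1-\alpha)R}\rho_n(x^n)^\alpha\sigma_{\theta^{**}(p)}^{\otimes n}(x^n)^{1-\alpha}$, combined with the inequality $D_\alpha(\rho_n\|\sigma_{\theta^{**}(p)}^{\otimes n})\ge D_\alpha({\cal F}_n\|{\cal S}^{\otimes n})$ (a double minimum bounds a single evaluation), then controls each per-type contribution by $e^{(\alpha-1)D_\alpha({\cal F}_n\|{\cal S}^{\otimes n})}$. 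Summing the at most $|{\cal T}_n|\le(n+1)^{d-1}$ such terms and invoking the definition of $R_{n,\alpha,r}$ yields $\Tr[(I-\Pi)\rho_n]\le(n+1)^{d-1}e^{-nr}$, confirming that $\Pi\in{\cal D}_{(n+1)^{d-1}e^{-nr},{\cal F}_n}$.

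For the Type II bound, the payoff of placing $\max_\theta$ inside the definition of $T$ is that for every $p\in T$ the witness $\rho_{n,p}:=\argmax_{\rho\in{\cal F}_n}\rho(x^n)$ dominates $e^{nR}\sigma_\theta^{\otimes n}(x^n)$ \emph{uniformly} in $\theta$. A single Chernoff step gives $\Tr[\sigma_\theta^{\otimes n}1^n_p]\le e^{-n\alpha R}\Tr[\rho_{n,p}^\alpha\sigma_\theta^{\otimes n,1-\alpha}]\le e^{-n\alpha R}e^{(\alpha-1)D_\alpha({\cal F}_n\|{\cal S}^{\otimes n})}$, uniformly in $\theta$, and summing over the at most $(n+1)^{d-1}$ types in $T$ together with the algebraic identity $-n\alpha R_{n,\alpha,r}-(1-\alpha)D_\alpha=-n(\tfrac{1-\alpha}{n}D_\alpha-\alpha r)/(1-\alpha)$ used in Theorem~\ref{Pr7} produces the claimed exponential bound; optimizing over $\alpha\in[0,1]$ completes the proof. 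The main obstacle is the asymmetric role of $\max_\theta$: the naive union $T=\bigcup_{\rho,\theta}\Omega_n(\sigma_\theta,\rho,R)$ yields a clean $e^{-nr}$ Type I bound but leaves the Type II side uncontrolled because the witness $\theta_p$ depends on $p$, while inserting $\max_\theta$ trades a $|{\cal T}_n|$ factor on the Type I side (absorbed into $\epsilon=(n+1)^{d-1}e^{-nr}$) for a uniform-in-$\theta$ Type II analysis.
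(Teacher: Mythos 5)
Your proposal is correct and follows essentially the same route as the paper: \eqref{NMYB} is obtained by verbatim substitution of the mixture $\int_\Theta\sigma_\theta^{\otimes n}\mu(d\theta)$ into the proof of Theorem \ref{Pr7}, and \eqref{NMY2} by a permutation-invariant type-class test whose Type I error pays the $|{\cal T}_n|\le(n+1)^{d-1}$ union-bound factor (absorbed into the relaxed significance level) while the Type II error is bounded uniformly in $\theta$ by a per-type Chernoff step against a single dominating witness in ${\cal F}_n$. The only immaterial difference is that you define one acceptance region via a single threshold $R$ built from $D_\alpha({\cal F}_n\|{\cal S}^{\otimes n})$ tested against $\max_\theta \sigma_\theta^{\otimes n}$ (an ``$\exists\rho\,\forall\theta$'' region), whereas the paper takes the intersection $\bigcap_{\theta}\Omega_n(\sigma_\theta,r)$ of its per-$\theta$ tests (a ``$\forall\theta\,\exists\rho$'' region with per-$\theta$ thresholds); your region is contained in the paper's and both yield identical bounds by the same type-counting argument.
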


The quantity
$D_\alpha({\cal F}_n\| \sigma^{\otimes n})$
can be calculated by Lemma \ref{Pr6}
when (B1) holds.

Theorem \ref{Pr7-B} is related to \cite[Theorem 1.1]{Composite} and \cite[Theorem 5]{BDS} as follows.
Theorem \ref{Pr7-B} considers only the commutative case
while \cite[Theorem 1.1]{Composite} and \cite[Theorem 5]{BDS}consider the 
non-commutative case.
While Theorem \ref{Pr7-B} covers a general set of permutation invariant states for the first argument,
\cite[Theorem 1.1]{Composite} restricts 
the set in the first argument to be a set of $n$-tensor product states,
and \cite[Theorem 5]{BDS} restricts 
the set in the first argument to be the convex hull of a set of 
$n$-tensor product states.

\begin{proofof}{Theorem \ref{Pr7-B}}
The relation \eqref{NMYB}
can be shown in the same way as \eqref{NMY}.
To show \eqref{NMY2}, we define
\begin{align}
\Omega_{n}(r)
:=\bigcap_{\theta \in \Theta} 
\Omega_{n}(\sigma_\theta,r),
\end{align}
which implies
\begin{align}
\Omega_{n}(r)^c
=\bigcup_{\theta \in \Theta} 
\Omega_{n}(\sigma_\theta,r)^c.
\end{align}
In the same way as \eqref{BN9},
for any $p \in {\cal T}_n$, there exists 
a state $\sigma_{p} \in  {\cal S}$ such that
\begin{align}
{\cal T}_{n,p} \cap \Omega_{n}(r)^c
= \Omega_{n}(\sigma_p,r)^c.
\Label{BN8}
\end{align}
Thus,
\begin{align}
\Omega_{n}(r)^c
=\bigcup_{p \in {\cal T}_{n}} 
\Omega_{n}(\sigma_p,r)^c.
\end{align}
Then, $\rho_n \in {\cal F}_{n}$ satisfies
\begin{align}
\Tr (\rho_n (I-1_{\Omega_{n}(r)}))
\le \sum_{p \in {\cal T}_{n}} 
\Tr (\rho_n (I-1_{\Omega_{n}(\sigma_p,r)})) 
\stackrel{(a)}{\le} |{\cal T}_{n}| e^{-nr},\Label{NBC2}
\end{align}
where $(a)$ follows from \eqref{NBC1}.

Also, $\sigma_\theta \in {\cal S}$ satisfies
\begin{align}
\Tr (\sigma_\theta^{\otimes n} 1_{\Omega_{n}(r)})
\le \Tr (\sigma_\theta^{\otimes n} 1_{\Omega_{n}(\sigma,r)})
\le
|{\cal T}_n|
e^{-n\frac{
\frac{1-\alpha}{n} D_\alpha({\cal F}_n\| \sigma^{\otimes n})
- \alpha r}{1-\alpha}}
.\Label{NMR2}
\end{align}
Hence, the combination of \eqref{NBC2} and \eqref{NMR2}
yields \eqref{NMY2}.
\end{proofof}

The following lemma evaluates
$D_\alpha({\cal F}_n\| \int_{\Theta}\sigma_\theta^{\otimes n} \mu(d\theta))$.

\begin{lemma}\Label{Pr6-5}
Assume that a sequence of subsets ${\cal F}_n 
\subset {\cal S}_c({\cal H}^{\otimes n})$
satisfies the conditions (C1), (C2), and (B1).
We consider a continuously parametrized set of states 
$\{ \sigma_\theta\}_{\theta\in \Theta}$
and a probability distribution $\mu$ on $\Theta$.  
Then, we have
\begin{align}
& D_\alpha \Big({\cal F}_n\Big\| 
\int_{\Theta}\sigma_\theta^{\otimes n} \mu(d\theta)\Big) \notag\\
\ge &
n \min_{ \theta\in \Theta} D_\alpha({\cal F}_1\| \sigma_\theta)
-\frac{(d-1)}{1-\alpha}\log (n+1)\Label{ASK1}
\end{align}
for $\alpha \in [0,1)$.
In addition, when
the set $\Theta$ is a subset of a real vector space and
the set $\Theta$ is
the support of the probability density function of 
$\mu$ with respect to the Lebesgue measure,
we have
\begin{align}
\lim_{n\to \infty}\frac{1}{n}D
\Big(
{\cal F}_n
\Big\| 
\int_{\Theta}\sigma_\theta^{\otimes n} \mu(d\theta)
\Big)
= \min_{ \theta\in \Theta} D({\cal F}_1\| \sigma_\theta).
\Label{CBQ3}
\end{align}
\end{lemma}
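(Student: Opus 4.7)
The plan is to prove \eqref{ASK1} by a type-counting argument powered by Lemma~\ref{Pr6}, and then to derive the asymptotic equality \eqref{CBQ3} by pairing the $\liminf$ that \eqref{ASK1} provides with an explicit matching $\limsup$ that is obtained from continuity of $\theta\mapsto\sigma_\theta$ and from the hypothesis that $\Theta$ is the support of the density of $\mu$.

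For \eqref{ASK1}, conditions (C2) and (B1) force every $\rho\in{\cal F}_n$ to be diagonal and permutation invariant, so $\rho=\sum_{p\in{\cal T}_n}g_\rho(p)\rho_n(p)$ for a probability vector $g_\rho$. Likewise $\sigma_\theta^{\otimes n}=\sum_p q_\theta(p)\rho_n(p)$ with $q_\theta(p)=|{\cal T}_{n,p}|\prod_j\sigma_\theta(j)^{np_j}$, and hence $\tau_n:=\int\sigma_\theta^{\otimes n}\mu(d\theta)=\sum_p\bar q(p)\rho_n(p)$ with $\bar q(p)=\int q_\theta(p)\mu(d\theta)$. Using orthogonality of the type classes, and then the two elementary bounds $\bar q(p)\le\max_\theta q_\theta(p)$ (since $\mu$ is a probability measure) and $\sum_p\max_\theta[\,\cdot\,]\le|{\cal T}_n|\max_\theta\sum_p[\,\cdot\,]$,
\begin{align*}
\Tr(\rho^\alpha\tau_n^{1-\alpha})
=\sum_p g_\rho(p)^\alpha\bar q(p)^{1-\alpha}
\le |{\cal T}_n|\max_\theta\sum_p g_\rho(p)^\alpha q_\theta(p)^{1-\alpha}
= |{\cal T}_n|\,e^{(\alpha-1)\min_\theta D_\alpha(\rho\|\sigma_\theta^{\otimes n})}.
\end{align*}
Applying $\tfrac{1}{\alpha-1}\log$, which flips the inequality since $\alpha<1$, bounding $|{\cal T}_n|\le(n+1)^{d-1}$, and invoking Lemma~\ref{Pr6} in the form $D_\alpha(\rho\|\sigma_\theta^{\otimes n})\ge nD_\alpha({\cal F}_1\|\sigma_\theta)$ for $\rho\in{\cal F}_n$, then minimizing over $\rho$ and $\theta$, yields \eqref{ASK1}.

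For \eqref{CBQ3}, the $\liminf$ follows by dividing \eqref{ASK1} by $n$ and using $D\ge D_\alpha$ for $\alpha<1$, which gives $\liminf_n\tfrac{1}{n}D({\cal F}_n\|\tau_n)\ge\min_\theta D_\alpha({\cal F}_1\|\sigma_\theta)$, and then letting $\alpha\uparrow 1$. For the matching $\limsup$, pick $\theta^*\in\Theta$ nearly attaining $\min_\theta D({\cal F}_1\|\sigma_\theta)$. Because $\theta^*$ lies in the support of the density of $\mu$ and $\theta\mapsto\sigma_\theta$ is continuous, for every $\epsilon>0$ there is a neighbourhood $U\ni\theta^*$ with $\mu(U)>0$ on which $\sigma_\theta(j)\ge e^{-C\epsilon}\sigma_{\theta^*}(j)$ for all $j$. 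Restricting the integral defining $\tau_n$ to $U$ gives the diagonal (hence operator) inequality $\tau_n\ge\mu(U)e^{-nC\epsilon}\sigma_{\theta^*}^{\otimes n}$, from which the elementary identity $D(\rho\|c\sigma)=D(\rho\|\sigma)-\log c$ supplies
\begin{align*}
D(\rho\|\tau_n)\le D(\rho\|\sigma_{\theta^*}^{\otimes n})+nC\epsilon-\log\mu(U).
\end{align*}
Minimizing over $\rho\in{\cal F}_n$, dividing by $n$, invoking the already-established Theorem~\ref{THC3} to replace $\tfrac{1}{n}D({\cal F}_n\|\sigma_{\theta^*}^{\otimes n})$ by its limit $D({\cal F}_1\|\sigma_{\theta^*})$, and finally sending $\epsilon\downarrow 0$ and optimizing $\theta^*$, closes \eqref{CBQ3}.

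The main technical obstacle will be the continuity estimate in the $\limsup$ step: the constant $C$ scales as $1/\min_j\sigma_{\theta^*}(j)$ and so degenerates when $\sigma_{\theta^*}$ is rank deficient, which must be handled by first restricting to the common support of the minimizers (the complementary case forces $D({\cal F}_1\|\sigma_{\theta^*})=\infty$). The sup/min swap $\sup_{\alpha<1}\min_\theta D_\alpha({\cal F}_1\|\sigma_\theta)=\min_\theta D({\cal F}_1\|\sigma_\theta)$ in the $\liminf$ step is immediate when $\Theta$ is compact (Dini) but in general requires a short separate argument evaluating the left side along a sequence of near-minimizers. The remaining ingredients---the type decomposition and union bound for \eqref{ASK1}, and the closure via Theorem~\ref{THC3}---are bookkeeping.
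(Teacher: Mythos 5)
Your proof of \eqref{ASK1} is essentially the paper's own: both arguments observe that on each type class the $\mu$-average of $\sigma_\theta^{\otimes n}$ is dominated by a single best $\sigma_{\theta(n,p)}^{\otimes n}$, pay a factor $|{\cal T}_n|\le (n+1)^{d-1}$ to decouple the choice of $\theta$ from the type, and close with Lemma~\ref{Pr6}; you merely write the operators as weight vectors over types rather than as projections $1^n_p$. The lower bound in \eqref{CBQ3} is also identical (divide by $n$, let $\alpha\uparrow 1$), and you share with the paper the small unaddressed point that $\sup_{\alpha<1}\min_\theta D_\alpha=\min_\theta D$ deserves a word when $\Theta$ is not compact. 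Where you genuinely diverge is the upper bound in \eqref{CBQ3}. The paper restricts the mixture to a neighbourhood $U_{\theta^*,\epsilon}$, normalizes at the cost of $-\log\mu(U_{\theta^*,\epsilon})$, applies convexity of $D(\rho\|\cdot)$ in the second argument to pull the integral outside, and then uses the per-$n$ bound $D({\cal F}_n\|\sigma_\theta^{\otimes n})\le nD({\cal F}_1\|\sigma_\theta)$ before shrinking the neighbourhood (see \eqref{CBQ7}). You instead derive the operator domination $\int_\Theta\sigma_\theta^{\otimes n}\mu(d\theta)\ge\mu(U)e^{-nC\epsilon}\sigma_{\theta^*}^{\otimes n}$ from continuity together with $\mu(U)>0$, and then invoke Theorem~\ref{THC3} for the single-state limit $\frac{1}{n}D({\cal F}_n\|\sigma_{\theta^*}^{\otimes n})\to D({\cal F}_1\|\sigma_{\theta^*})$. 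Your route buys a cleaner logical footing: the paper's final step in \eqref{CBQ7} compares ${\cal F}_n$ with a tensor power of an element of ${\cal F}_1$ and is (A4)-flavoured rather than an immediate consequence of (C1), (C2), (B1), whereas Theorem~\ref{THC3} holds under exactly the hypotheses of this lemma and its proof does not pass through Lemma~\ref{Pr6-5}, so there is no circularity. The caveat you flag about the constant $C$ is not a real obstruction in finite dimension: for fixed $\theta^*$ it depends only on the smallest nonzero eigenvalue of $\sigma_{\theta^*}$, and coordinates where $\sigma_{\theta^*}$ vanishes satisfy the domination trivially.
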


\begin{proof}
\if0
Since 
$1^n_{p}
\int_{\Theta}\sigma_\theta^{\otimes n} \mu(d\theta)$
is a constant, 
$\int_{\Theta}\sigma_\theta^{\otimes n} \mu(d\theta)
=\sum_{p \in  {\cal T}_n}1^n_{p}
\int_{\Theta}\sigma_\theta^{\otimes n} \mu(d\theta)$
can be considered as a function of 
$p \in  {\cal T}_n$.
That is, 
$\int_{\Theta}\sigma_\theta^{\otimes n} \mu(d\theta)$
can be considered as an element of 
$|{\cal T}_n|$-dimensional space.
Caratheodory theorem guarantees that
there are $|{\cal T}_n|+1$ elements 
$\theta_1, \ldots, \theta_{|{\cal T}_n|+1} \in \Theta$ and a distribution 
$q$ on $\{1, \ldots, |{\cal T}_n|+1\}$
such that
\begin{align}
\int_{\Theta}\sigma_\theta^{\otimes n} \mu(d\theta)
=
\sum_{j=1}^{|{\cal T}_n|+1} q(j) \sigma_{\theta_j}^{\otimes n}.
\end{align}
We consider 
the states
$\sum_{j=1}^{|{\cal T}_n|+1} q(j) |j\rangle \langle j| \otimes \rho_n$
and
$\sum_{j=1}^{|{\cal T}_n|+1} q(j) |j\rangle \langle j| 
\otimes \sigma_{\theta_j}^{\otimes n}$
on 
the joint system of 
$\{1, \ldots, |{\cal T}_n|+1\}$ and
${\cal H}^{\otimes n}$.
The information processing inequality implies 
\begin{align}
&D(\rho_n\|\sum_{j=1}^{|{\cal T}_n|+1} q(j) \sigma_{\theta_j}^{\otimes n})
\\
\le &
D
(\sum_{j=1}^{|{\cal T}_n|+1} q(j) |j\rangle \langle j| 
\otimes \rho_n\otimes \|
\sum_{j=1}^{|{\cal T}_n|+1} q(j) |j\rangle \langle j| 
\otimes \sigma_{\theta_j}^{\otimes n}) \\
=&
\sum_{j=1}^{|{\cal T}_n|+1} q(j) 
D(\rho_n\|\sigma_{\theta_j}^{\otimes n})
\end{align}
\fi
Since 
$1^n_{p}
\int_{\Theta}\sigma_\theta^{\otimes n} \mu(d\theta)$
is a constant, 
there exists an element $\theta(n,p) \in \Theta$
such that
$1^n_{p}
\int_{\Theta}\sigma_\theta^{\otimes n} \mu(d\theta)
\le 
1^n_{p} \sigma_{\theta(n,p)}^{\otimes n}$.
Thus, for $\alpha \in [0,1)$, we have
\begin{align}
\Big(1^n_{p}
\int_{\Theta}\sigma_\theta^{\otimes n} \mu(d\theta)\Big)^{1-\alpha}
\le  &
\Big(1^n_{p} \sigma_{\theta(n,p)}^{\otimes n}\Big)^{1-\alpha} \notag\\
\le &
\sum_{p'\in {\cal T}_n}
\Big(1^n_{p} \sigma_{\theta(n,p')}^{\otimes n}\Big)^{1-\alpha}.
\end{align}
Hence,
\begin{align}
&\Big(\int_{\Theta}\sigma_\theta^{\otimes n} \mu(d\theta)\Big)^{1-\alpha}
=
\sum_{p\in {\cal T}_n}
\Big(1^n_{p}\int_{\Theta}\sigma_\theta^{\otimes n} \mu(d\theta)\Big)^{1-\alpha}
\notag\\
\le & 
\sum_{p\in {\cal T}_n}
\sum_{p'\in {\cal T}_n}
\Big(1^n_{p} \sigma_{\theta(n,p')}^{\otimes n}\Big)^{1-\alpha} 
=
\sum_{p'\in {\cal T}_n}
\Big( \sigma_{\theta(n,p')}^{\otimes n}\Big)^{1-\alpha} .
\end{align}
Therefore,
\begin{align}
&e^{(\alpha-1)D_\alpha({\cal F}_{n}\| \int_{\Theta}\sigma_\theta^{\otimes n} \mu(d\theta))}
=
\max_{\rho_n \in {\cal F}_{n}}
\Tr (\rho_n^{\alpha}
\Big(\int_{\Theta}\sigma_\theta^{\otimes n} \mu(d\theta)\Big)^{1-\alpha})\notag\\
\le &
\max_{\rho_n \in {\cal F}_{n}}
\Tr (\rho_n^{\alpha}
\sum_{p'\in {\cal T}_n}
\Big( 
\sigma_{\theta(n,p')}^{\otimes n}\Big)^{1-\alpha} )\notag\\
= &
\sum_{p'\in {\cal T}_n}
\max_{\rho_n \in {\cal F}_{n}}
\Tr( \rho_n^{\alpha}
\Big( \sigma_{\theta(n,p')}^{\otimes n}\Big)^{1-\alpha}) \notag\\
\le &
\sum_{p'\in {\cal T}_n}
\max_{\theta\in \Theta}
\max_{\rho_n \in {\cal F}_{n}}
\Tr \rho_n^{\alpha}
\Big( \sigma_{\theta}^{\otimes n}\Big)^{1-\alpha} \notag\\
=&
|{\cal T}_n|
e^{(\alpha-1)\min_{\theta\in \Theta}
D_\alpha({\cal F}_{n}\| \sigma_\theta^{\otimes n} )}\notag\\
\le &
(n+1)^{(d-1)}
e^{(\alpha-1)\min_{\theta\in \Theta}
D_\alpha({\cal F}_{n}\| \sigma_\theta^{\otimes n} )},
\end{align}
which implies 
\begin{align*}
e^{D_\alpha({\cal F}_{n}\| \int_{\Theta}\sigma_\theta^{\otimes n} \mu(d\theta))}
\ge 
(n+1)^{-\frac{(d-1)}{1-\alpha}}
e^{-\min_{\theta\in \Theta}
D_\alpha({\cal F}_{n}\| \sigma_\theta^{\otimes n} )}
\end{align*}
that coincides with \eqref{ASK1} for $\alpha \in [0,1)$.

\if0
Also, we have
\begin{align}
&D({\cal F}_{n}\| \int_{\Theta}\sigma_\theta^{\otimes n} \mu(d\theta))
=\min_{\rho_n \in {\cal F}_{n}}
D(\rho_{n}\| \int_{\Theta}\sigma_\theta^{\otimes n} \mu(d\theta))
\notag \\
=&
\min_{\rho_n \in {\cal F}_{n}}
\Tr \rho_{n}\Big(\log \rho_{n}
-\log( \int_{\Theta}\sigma_\theta^{\otimes n} \mu(d\theta))\Big)\notag \\
\ge &
\min_{\rho_n \in {\cal F}_{n}}
\Tr \rho_{n}\Big(\log \rho_{n}
-\log(
\sum_{p'\in {\cal T}_n}
 \sigma_{\theta(n,p')}^{\otimes n} )\Big)\notag \\
\end{align}
\fi

In addition, we have
\begin{align}
&\frac{1}{n}D \Big(
{\cal F}_n
\Big\| 
\int_{\Theta}\sigma_\theta^{\otimes n} \mu(d\theta)
\Big) 
\ge 
\frac{1}{n} D_\alpha \Big({\cal F}_n\Big\| 
\int_{\Theta}\sigma_\theta^{\otimes n} \mu(d\theta)\Big) \notag\\
\ge &
\min_{ \theta\in \Theta} D_\alpha({\cal F}_1\| \sigma_\theta)
-\frac{(d-1)}{n(1-\alpha)}\log (n+1)\Label{ASK1T}
\end{align}
for $\alpha \in [0,1)$.
Taking the limit $n \to \infty$ in \eqref{ASK1T}, 
we have 
\begin{align}
&\lim_{n\to \infty}
\frac{1}{n}D \Big(
{\cal F}_n
\Big\| 
\int_{\Theta}\sigma_\theta^{\otimes n} \mu(d\theta)
\Big) 
\ge 
\min_{ \theta\in \Theta} D_\alpha({\cal F}_1\| \sigma_\theta)
\Label{ASK1Y}
\end{align}
for $\alpha \in [0,1)$.
Taking the limit $\alpha \to 1$ in \eqref{ASK1Y},
we obtain the part $\ge$ in \eqref{CBQ3}.

We choose $\theta^* := \argmin_{\theta\in \Theta}
D({\cal F}_{1}\| \sigma_\theta )$.
We set a neighborhood  $U_{\theta^*,\epsilon}
:=\{\theta \in \Theta | \|\theta-\theta^*\| \le \epsilon\}$.
We have
\begin{align}
& D\Big(
{\cal F}_n
\Big\| 
\int_{\Theta}\sigma_\theta^{\otimes n} \mu(d\theta)
\Big) 
\le 
D\Big(
{\cal F}_n
\Big\| 
\int_{U_{\theta^*,\epsilon}}\sigma_\theta^{\otimes n} \mu(d\theta)
\Big) \notag \\
= &
D\Big(
{\cal F}_n
\Big\| \frac{1}{\mu(U_{\theta^*,\epsilon})}
\int_{U_{\theta^*,\epsilon}}\sigma_\theta^{\otimes n} \mu(d\theta)
\Big) 
-\log \mu(U_{\theta^*,\epsilon})\notag \\
\le &
\frac{1}{\mu(U_{\theta^*,\epsilon})}
\int_{U_{\theta^*,\epsilon}}
D({\cal F}_n\| \sigma_\theta^{\otimes n} ) 
\mu(d\theta)
-\log \mu(U_{\theta^*,\epsilon})\notag \\
\le &
\frac{n}{\mu(U_{\theta^*,\epsilon})}
\int_{U_{\theta^*,\epsilon}}
D({\cal F}_1\| \sigma_\theta) 
\mu(d\theta)
-\log \mu(U_{\theta^*,\epsilon}).
\Label{CBQ7}
\end{align}
Taking the limit $n\to \infty$, we have
\begin{align}
&\limsup_{n\to \infty}\frac{1}{n}D
\Big(
{\cal F}_n
\Big\| 
\int_{\Theta}\sigma_\theta^{\otimes n} \mu(d\theta)
\Big) \notag \\
\le &
\frac{1}{\mu(U_{\theta,\epsilon})}
\int_{U_{\theta,\epsilon}}
D({\cal F}_1\| \sigma_\theta) 
\mu(d\theta).
\Label{CBQ8}
\end{align}
Taking the limit $\epsilon \to 0$, we have
we obtain the part $\le$ in 
\eqref{CBQ3}.
\end{proof}

Combining Theorem \ref{Pr7-B} and Lemmas \ref{Pr6} and \ref{Pr6-5}, we have the following corollary.
\begin{corollary}\Label{CBD1}
Assume the same condition as Lemma \ref{Pr6-5}.
\if0
that a sequence of subsets ${\cal F}_n 
\subset {\cal S}_c({\cal H}^{\otimes n})$
satisfies the conditions (C1), (C2), and (B1).
\fi
Then, we have
\begin{align}
\lim_{n\to \infty}\frac{-1}{n}\log \beta_{\epsilon}({\cal F}_n\| 
{\cal S}^{\otimes n})
=\min_{ \theta\in \Theta} D({\cal F}_1\| \sigma_\theta),
\Label{CV6}
\end{align}
and
\begin{align}
&\lim_{n\to \infty}\frac{-1}{n}\log \beta_{\epsilon}
\Big(
{\cal F}_n
\Big\| 
\int_{\Theta}\sigma_\theta^{\otimes n} \mu(d\theta)
\Big) 
=
\min_{ \theta\in \Theta} D({\cal F}_1\| \sigma_\theta).
\Label{CV4}
\end{align}
\end{corollary}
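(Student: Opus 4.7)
The plan is to prove matching lower and upper bounds on $\liminf$ and $\limsup$ of $-\frac{1}{n}\log\beta_\epsilon$ in each of \eqref{CV6} and \eqref{CV4}. The lower bound (achievability) comes from feeding Theorem \ref{Pr7-B} with the sharp estimates of Lemmas \ref{Pr6} and \ref{Pr6-5} on the Rényi divergences. The matching upper bound (converse) is obtained, for \eqref{CV6}, by restricting the worst-case $\sigma$ to a single element $\sigma_\theta^{\otimes n}$ and invoking Theorem \ref{THC3}, and, for \eqref{CV4}, by applying Theorem \ref{BFG8} once the $\liminf$--$\limsup$ inequality between error exponent and regularised relative entropy has been established.

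For \eqref{CV6} I would start from \eqref{NMY2} and rewrite $D_\alpha({\cal F}_n\|{\cal S}^{\otimes n})=\min_\theta D_\alpha({\cal F}_n\|\sigma_\theta^{\otimes n})$; Lemma \ref{Pr6} then bounds each term from below by $nD_\alpha({\cal F}_1\|\sigma_\theta)$, giving $\frac{1}{n}D_\alpha({\cal F}_n\|{\cal S}^{\otimes n})\ge \min_\theta D_\alpha({\cal F}_1\|\sigma_\theta)$. For any fixed $\epsilon>0$, choose $r>0$ so small that $(n+1)^{d-1}e^{-nr}\le\epsilon$ for all sufficiently large $n$; by the monotonicity of $\beta_\cdot$ in its significance level this transfers the bound from $\beta_{(n+1)^{d-1}e^{-nr}}$ to $\beta_\epsilon$. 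Sending $r\to 0^+$ and then $\alpha\to 1^-$ (using continuity of $D_\alpha$ at $\alpha=1$ and the compactness/continuity needed to exchange the limit with $\min_\theta$) yields $\liminf_n -\frac{1}{n}\log\beta_\epsilon({\cal F}_n\|{\cal S}^{\otimes n})\ge \min_\theta D({\cal F}_1\|\sigma_\theta)$. The converse is immediate: since $\sigma_\theta^{\otimes n}\in{\cal S}^{\otimes n}$, we have $\beta_\epsilon({\cal F}_n\|{\cal S}^{\otimes n})\ge \beta_\epsilon({\cal F}_n\|\sigma_\theta^{\otimes n})$, and Theorem \ref{THC3} gives $\limsup_n -\frac{1}{n}\log\beta_\epsilon({\cal F}_n\|\sigma_\theta^{\otimes n})=D({\cal F}_1\|\sigma_\theta)$; minimising over $\theta$ matches the lower bound and proves \eqref{CV6}.

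For \eqref{CV4} the achievability argument is the same one word for word with \eqref{NMY2} replaced by \eqref{NMYB} and Lemma \ref{Pr6} replaced by estimate \eqref{ASK1} of Lemma \ref{Pr6-5}; this delivers $\liminf_n -\frac{1}{n}\log\beta_\epsilon({\cal F}_n\|\int_\Theta\sigma_\theta^{\otimes n}\mu(d\theta))\ge \min_\theta D({\cal F}_1\|\sigma_\theta)$. Identity \eqref{CBQ3} of Lemma \ref{Pr6-5} evaluates the regularised relative entropy: $\lim_n\frac{1}{n}D({\cal F}_n\|\int_\Theta\sigma_\theta^{\otimes n}\mu(d\theta))=\min_\theta D({\cal F}_1\|\sigma_\theta)$. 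All the states in sight are diagonal in the computational basis, hence commutative, so Theorem \ref{BFG8} applies with $\sigma_n:=\int_\Theta\sigma_\theta^{\otimes n}\mu(d\theta)$; its hypothesis $\liminf -\frac{1}{n}\log\beta\ge \limsup \frac{1}{n}D$ is precisely the inequality just established, and its conclusion is \eqref{CV4}. The only non-cosmetic step anywhere is the interchange of the joint limit $(\alpha,r)\to(1^-,0^+)$ with $\min_\theta$; this reduces to monotonicity of $D_\alpha$ in $\alpha$ together with the continuity of $\theta\mapsto D_\alpha({\cal F}_1\|\sigma_\theta)$ on the (compact) minimising set, and poses no genuine difficulty.
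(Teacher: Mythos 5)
Your proposal is correct and follows essentially the same route as the paper: the direct part combines \eqref{NMY2} (resp. \eqref{NMYB}) with Lemma \ref{Pr6} (resp. \eqref{ASK1}) and sends $\alpha\to 1$, while the converse restricts to $\sigma_{\theta^*}^{\otimes n}$ for \eqref{CV6} and invokes Theorem \ref{BFG8} together with \eqref{CBQ3} for \eqref{CV4}. Your explicit handling of the transfer from $\beta_{(n+1)^{d-1}e^{-nr}}$ to $\beta_\epsilon$ is a detail the paper leaves implicit, but it does not change the argument.
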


\begin{proof}
The combination of \eqref{NMY2} in Theorem \ref{Pr7-B}
and Lemma \ref{Pr6}
yields
\begin{align}
\liminf_{n\to \infty}\frac{-1}{n}\log \beta_{\epsilon}({\cal F}_n\| 
{\cal S}^{\otimes n})
&\ge \min_{ \theta\in \Theta} D_\alpha({\cal F}_1\| \sigma_\theta)
\Label{XAB1}
\end{align}
for $\alpha \in [0,1)$. 
Using $\theta^* := \argmin_{\theta\in \Theta}
D({\cal F}_{1}\| \sigma_\theta )$,
we have
\begin{align}
& D({\cal F}_1\| \sigma_{\theta^*})
\stackrel{(a)}{=}
\limsup_{n\to \infty}\frac{-1}{n}\log \beta_{\epsilon}({\cal F}_n\| 
\sigma_{\theta^*}^{\otimes n}) \notag\\
\ge &
\limsup_{n\to \infty}\frac{-1}{n}\log \beta_{\epsilon}({\cal F}_n\| 
{\cal S}^{\otimes n})
\ge
\liminf_{n\to \infty}\frac{-1}{n}\log \beta_{\epsilon}({\cal F}_n\| 
{\cal S}^{\otimes n}) \notag\\
\stackrel{(b)}{\ge} & \min_{ \theta\in \Theta} D({\cal F}_1\| \sigma_\theta),
\end{align}
where $(a)$ follows from Corollary \ref{THC6}, 
which follows from
the combination of Lemma \ref{Pr6} and Theorem \ref{Pr7},
and $(b)$ follows from \eqref{XAB1} with the limit $\alpha\to 1$.
We obtain \eqref{CV6}.

The combination of \eqref{NMYB} in Theorem \ref{Pr7-B}
and \eqref{ASK1} in Lemma \ref{Pr6-5} yields
\begin{align}
\liminf_{n\to \infty}\frac{-1}{n}\log \beta_{\epsilon}
\Big(
{\cal F}_n
\Big\| 
\int_{\Theta}\sigma_\theta^{\otimes n} \mu(d\theta)
\Big)
&\ge \min_{ \theta\in \Theta} D_\alpha({\cal F}_1\| \sigma_\theta)
\end{align}
for $\alpha \in [0,1)$. Taking the limit $\alpha\to 1$, we have
\begin{align}
\liminf_{n\to \infty}\frac{-1}{n}\log \beta_{\epsilon}
\Big(
{\cal F}_n
\Big\| 
\int_{\Theta}\sigma_\theta^{\otimes n} \mu(d\theta)
\Big)
&\ge \min_{ \theta\in \Theta} D({\cal F}_1\| \sigma_\theta).
\Label{BV1}
\end{align}
\if0
The information processing inequality implies
\begin{align}
&D\Big(
{\cal F}_n
\Big\| 
\int_{\Theta}\sigma_\theta^{\otimes n} \mu(d\theta)
\Big) \\
\ge &
\epsilon \Big(\log \epsilon 
-\log \beta_{\epsilon}
\Big(
{\cal F}_n
\Big\| 
\int_{\Theta}\sigma_\theta^{\otimes n} \mu(d\theta)
\Big) \Big) \\
&+
(1-\epsilon) \Big(\log (1-\epsilon )
-\log \Big(1-\beta_{\epsilon}
\Big(
{\cal F}_n
\Big\| 
\int_{\Theta}\sigma_\theta^{\otimes n} \mu(d\theta)
\Big) \Big) \Big) \\
\ge &
\epsilon \Big(\log \epsilon 
-\log \beta_{\epsilon}
\Big(
{\cal F}_n
\Big\| 
\int_{\Theta}\sigma_\theta^{\otimes n} \mu(d\theta)
\Big) \Big) \\
\end{align}
\fi
Applying Theorem \ref{BFG8} with $\sigma_n=\int_{\Theta}\sigma_\theta^{\otimes n} \mu(d\theta)$
to the pair of \eqref{CBQ3} in Lemma \ref{Pr6-5} and \eqref{BV1}, 
we obtain \eqref{CV4}.
\end{proof}

\begin{remark}
As pointed in \cite[Appendix IV]{HY}, \cite{HMH},
the statement similar to 
Corollary \ref{CBD1} does not hold
without the convexity condition for ${\cal S}$ in the quantum case.
In the quantum case, the choice of measurement 
is crucial and it depends on the second argument of the relative entropy.
When ${\cal S}$ is not convex,
the closest element to ${\cal F}_n$ among ${\cal S}^{\otimes n}$
is not unique in general.
Hence, the optimal element determined by the closest element 
does not work for detecting another element of ${\cal S}^{\otimes n}$.
This problem structure causes the difficulty of the generalized Stein's lemma \cite{BP,HY}. 
\end{remark}

\section{Quantum empirical distribution}\Label{S6}
\subsection{Notations for quantum empirical distribution}
To set the stage for our results in the quantum domain,  
we review a quantum version of Sanov's theorem proposed by \cite{another},  
which is based on the concept of quantum empirical distribution and differs from the hypothesis-testing approach  
used in the quantum Sanov theorem by the paper \cite{Notzel,Bjelakovic}.  
The extension by the paper \cite{another} has a structure similar to that of the classical Sanov theorem \cite{Bucklew,DZ},  
relying on empirical distributions rather than hypothesis testing.  
We now introduce the necessary notation to present the quantum Sanov theorem for quantum empirical distribution.  

Consider a basis $\mathcal{B} = \{|v_j\rangle\}_{j=1}^d$ for the Hilbert space $\mathcal{H}$.  
The space $\mathcal{H}^{\otimes n}$ has a standard basis given by  
$|v[x^n]\rangle := |v_{x_1}, \ldots, v_{x_n}\rangle$ for $x^n = (x_1, \ldots, x_n)$.  
When performing a measurement on $\mathcal{H}^{\otimes n}$ in basis $\mathcal{B}$,  
the outcome is represented by the sequence $x^n$.  
The empirical state associated with $x^n$ under basis $\mathcal{B}$ is  
$\sum_{j=1}^n \frac{1}{n} |v_{x_j}\rangle \langle v_{x_j}|$,  
denoted as $e_{\mathcal{B}}(x^n)$.  
The pinching map ${\cal E}_{{\cal B}}$ is defined as  
\begin{align}  
{\cal E}_{{\cal B}}(\sigma) := \sum_x |v_{x}\rangle \langle v_{x}| \sigma |v_{x}\rangle \langle v_{x}|.  
\Label{DX2}  
\end{align}  

For the $n$-fold tensor product space,  
the set of diagonal states relative to $\mathcal{B}$ is denoted by $\mathcal{S}[\mathcal{B}]$:  
\begin{align}  
\mathcal{S}[\mathcal{B}] &= \left\{ \sum_{j=1}^d p_j |v_{j}\rangle \langle v_{j}| \;\middle|\; \sum_{j=1}^d p_j = 1, \quad p_j \ge 0 \right\},  
\Label{DX3B}  
\\  
\mathcal{S}_n[\mathcal{B}] &= \left\{ \sum_{j=1}^n \frac{1}{n} |v_{x_j}\rangle \langle v_{x_j}| \;\middle|\; (x_1, \ldots, x_n) \in \mathcal{X}^n \right\}.  
\Label{DX3}  
\end{align}  

For any empirical state $\rho \in \mathcal{S}_n[\mathcal{B}]$,  
the projection operator is defined as
\begin{align}  
1_{\rho,\mathcal{B}}^n := \sum_{x^n: e_{\mathcal{B}}(x^n) = \rho} |v[x^n]\rangle \langle v[x^n]|.  
\Label{DX1}  
\end{align}  
If $\rho \notin \mathcal{S}_n[\mathcal{B}]$, we set $1_{\rho,\mathcal{B}}^n = 0$.  

Then, the original classical Sanov theorem is stated as follows.
\begin{proposition}\Label{Sanov-ori}\cite[Theorem 2.1.10 and Eq. (2.1.12)]{DZ}
A pair of a state $\rho \in \mathcal{S}[\mathcal{B}]$ 
and a subset ${\cal S} \subset \mathcal{S}[\mathcal{B}]$
satisfy the following inequality
\begin{align}  
\Tr \rho^{\otimes n}  
\left(\sum_{\rho' \in {\cal S} \cap \mathcal{S}_n[\mathcal{B}]}  
1_{\rho',\mathcal{B}}^n \right)  
\le (n+1)^{d-1}  
2^{-n \inf_{\rho' \in {\cal S}} D(\rho'\|\rho)}. \Label{IOF3}  
\end{align}  
In addition, when the closure of the interior of ${\cal S}$
equals to the closure of ${\cal S}$,
\begin{align}  
\lim_{n\to \infty}\frac{-1}{n}\log \Tr \rho^{\otimes n}  
\left(\sum_{\rho' \in {\cal S} \cap \mathcal{S}_n[\mathcal{B}]}  
1_{\rho',\mathcal{B}}^n \right)  
= \inf_{\rho' \in {\cal S}} D(\rho'\|\rho). 
\Label{IOF4}  
\end{align}  
\end{proposition}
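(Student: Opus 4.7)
The plan is to reduce both claims to a purely classical argument, since every operator inside the trace is diagonal in $\mathcal{B}$. Identifying a diagonal state $\rho = \sum_j \rho_j |v_j\rangle\langle v_j|$ with its distribution on $\{1,\ldots,d\}$, the projection $1_{\rho',\mathcal{B}}^n$ becomes the indicator of the type class $\mathcal{T}_{n,p'}$ associated with the type $p'$ of $\rho'$, and $\mathcal{S} \cap \mathcal{S}_n[\mathcal{B}]$ becomes the set of types falling in $\mathcal{S}$. The trace is then a product-measure probability, and the rest is the method of types.

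For the upper bound \eqref{IOF3}, I would use the two standard type estimates $|\mathcal{T}_n| \le (n+1)^{d-1}$ and $\rho^{\otimes n}(\mathcal{T}_{n,p}) = |\mathcal{T}_{n,p}|\prod_j \rho_j^{n p_j} \le 2^{-n D(p\|\rho)}$, the latter following from $|\mathcal{T}_{n,p}|\le 2^{nH(p)}$ together with the identity $-\sum_j p_j\log\rho_j = H(p)+D(p\|\rho)$. Summing the per-type bound over $\rho'\in\mathcal{S}\cap\mathcal{S}_n[\mathcal{B}]$ and then replacing the infimum over types in $\mathcal{S}$ by the infimum over all of $\mathcal{S}$ produces the polynomial prefactor $(n+1)^{d-1}$ multiplying $2^{-n\inf_{\rho'\in\mathcal{S}}D(\rho'\|\rho)}$.

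For the matching lower bound in \eqref{IOF4}, I would fix $\delta>0$, choose an approximate minimizer $\rho^\ast \in \mathcal{S}$ with $D(\rho^\ast\|\rho)\le \inf_{\rho'\in\mathcal{S}}D(\rho'\|\rho)+\delta$, and use the hypothesis that $\overline{\mathrm{int}(\mathcal{S})}=\overline{\mathcal{S}}$ to perturb $\rho^\ast$ into $\mathrm{int}(\mathcal{S})$ without changing $D(\cdot\|\rho)$ by more than $\delta$. Since the types of denominator $n$ are $(d/n)$-dense in the simplex, for all sufficiently large $n$ there is a type $p_n\in\mathcal{S}$ with $p_n\to\rho^\ast$. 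The companion type bound $\rho^{\otimes n}(\mathcal{T}_{n,p_n})\ge (n+1)^{-(d-1)}2^{-nD(p_n\|\rho)}$ gives
\begin{align}
-\frac{1}{n}\log \mathrm{Tr}\,\rho^{\otimes n}\!\!\sum_{\rho'\in\mathcal{S}\cap\mathcal{S}_n[\mathcal{B}]}\!\!1_{\rho',\mathcal{B}}^n \;\le\; D(p_n\|\rho)+\frac{d-1}{n}\log(n+1),
\end{align}
so that taking $n\to\infty$ and then $\delta\to 0$ yields $\limsup \le \inf_{\rho'\in\mathcal{S}} D(\rho'\|\rho)$, matching \eqref{IOF3}.

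The main obstacle is the lower-bound approximation when $\rho$ does not have full support: $D(\cdot\|\rho)$ is lower semicontinuous but not continuous on the full simplex, so the approximating types $p_n$ must be taken inside $\mathrm{supp}(\rho)$ to guarantee $D(p_n\|\rho)\to D(\rho^\ast\|\rho)$, and simultaneously they must land in $\mathcal{S}$. This is precisely where the hypothesis that $\mathcal{S}$ equals the closure of its interior does real work; once that point is settled, the rest of the argument is a textbook Sanov/method-of-types calculation and can be read off from \cite{DZ}.
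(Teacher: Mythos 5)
The paper offers no proof of this proposition at all — it is imported directly from Dembo--Zeitouni \cite{DZ} — and your reconstruction is precisely the method-of-types argument underlying the cited result: the per-type bounds $|\mathcal{T}_{n,p}|\le 2^{nS(p)}$ and $\rho^{\otimes n}(\mathcal{T}_{n,p})\le 2^{-nD(p\|\rho)}$ together with $|\mathcal{T}_n|\le (n+1)^{d-1}$ for \eqref{IOF3}, and the companion lower bound $\rho^{\otimes n}(\mathcal{T}_{n,p})\ge (n+1)^{-(d-1)}2^{-nD(p\|\rho)}$ applied to types approximating a near-minimizer in the interior for \eqref{IOF4}. Your proposal is correct and takes essentially the same route as the cited source, including the correctly flagged caveat that when $\rho$ lacks full support the approximating types must be chosen on the face $\mathrm{supp}(p)\subseteq\mathrm{supp}(\rho)$ so that $D(p_n\|\rho)$ actually converges.
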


We now apply Schur duality for $\cH^{\otimes n}$,  
as used in \cite{H-01,H-02,KW01,CM,CHM,OW,HM02a,HM02b,Notzel,H-24,H-q-text,AISW}.  
A sequence of non-negative integers $\blambda = (\lambda_1, \ldots, \lambda_d)$,  
ordered in non-decreasing form, is known as a Young index.  
Although other references \cite{H-q-text,Group1,GW} order Young indices in decreasing form,  
we adopt the opposite convention for convenience.  
The set of Young indices $\blambda$ with $\sum_{j=1}^d \lambda_j = n$ is denoted by $Y_d^n$.  
We denote by $\mathcal{P}_d$ the set of probability distributions $p = (p_j)_{j=1}^d$  
satisfying $p_1 \le p_2 \le \ldots \le p_d$.  
For any density matrix $\rho$ on $\mathcal{H}$,  
$p(\rho)$ represents the eigenvalues of $\rho$, ordered in $\mathcal{P}_d$.  
The map $\overline{\cal E}_{\cal B}$ is defined as  
\begin{align}  
\overline{\cal E}_{\cal B}(\sigma) := (p(\sigma), {\cal E}_{\cal B}(\sigma)).  
\Label{NB3}  
\end{align}  
The set $\mathcal{P}_d^n$ consists of elements of $\mathcal{P}_d$  
where each $p_j$ is an integer multiple of $1/n$.  
The majorization relation $\succ$ between elements of $\mathcal{P}_d$ and $Y_d^n$  
is defined as follows: for $p, p' \in \mathcal{P}_d$,  
$p \succ p'$ if  
\begin{align}  
\sum_{j=1}^k p_j \ge \sum_{j=1}^k p_j' \quad \text{for} \quad k = 1, \ldots, d-1.  
\Label{DX5}  
\end{align}  
This condition also defines the majorization relation on $Y_d^n$.  

According to \cite[Section 6.2]{H-q-text}, the decomposition of $\cH^{\otimes n}$ is  
\begin{align}  
\cH^{\otimes n} = \bigoplus_{\blambda \in Y_d^n} \mathcal{U}_{\blambda} \otimes \mathcal{V}_{\blambda},  
\end{align}  
where $\mathcal{U}_{\blambda}$ and $\mathcal{V}_{\blambda}$ denote  
the irreducible representations of $\SU(d)$ and the permutation group $\frS_n$, respectively.  
The dimensions are defined as  
\begin{align}  
d_{\blambda} := \dim \mathcal{U}_{\blambda}, \quad d_{\blambda}' := \dim \mathcal{V}_{\blambda},  
\quad \overline{d}_n := \sum_{\blambda \in Y_d^n} d_{\blambda}.  
\Label{DX7}  
\end{align}  
From \cite[(6.16)]{H-q-text},  
\begin{align}  
d_{\blambda} \le (n+1)^{\frac{d(d-1)}{2}}.  
\Label{NMI}  
\end{align}  

For any subspace ${\cal K} \subset \mathcal{H}^{\otimes n}$,  
the projection operator onto ${\cal K}$ is denoted by $1_{\mathcal{K}}$.  
Given $np \in Y_d^n$ and $\rho \in \mathcal{S}_n[\mathcal{B}]$,  
the projection $1_{p,\rho,\mathcal{B}}^n$ is defined as  
\begin{align}  
1_{p,\rho,\mathcal{B}}^n := 
1_{\mathcal{U}_{np} \otimes \mathcal{V}_{np}} 1_{\rho,\mathcal{B}}^n.  
\Label{DU1}  
\end{align}  
The condition $1_{p,\rho,\mathcal{B}}^n \neq 0$ holds if and only if  
\begin{align}  
p \succ p(\rho).  \Label{ZXN}  
\end{align}  
The subspace ${\cal U}_{np}$ is decomposed as  
\begin{align}  
{\cal U}_{np} = \bigoplus_{\rho \in \mathcal{S}_n[\mathcal{B}]: p \succ p(\rho)} {\cal U}_{np,\rho},  
\end{align}  
where ${\cal U}_{np,\rho}$ is associated with the weight vector $np(\rho)$.  
Hence, we have
\begin{align}  
1_{p,\rho,\mathcal{B}}^n := 
1_{\mathcal{U}_{np,\rho} \otimes \mathcal{V}_{np}} 
=1_{\mathcal{U}_{np,\rho}} \otimes 1_{\mathcal{V}_{np}} 
\Label{DU1V}  .
\end{align}  

We define
\begin{align}
d_{np,\rho}:=\dim {\cal U}_{np,\rho}.\Label{DV1}
\end{align}
Then, we have
\begin{align}  
\Tr 1_{p,\rho,\mathcal{B}}^n = d_{np,\rho} d_{np}'.  
\end{align}  
The completely mixed state $\rho_{n,{\cal B}}(p,\rho)$ on the support of $1_{p,\rho,\mathcal{B}}^n$ is defined as 
\begin{align}  
\rho_{n,{\cal B}}(p,\rho) := \frac{1}{d_{np,\rho} d_{np}'} 1_{p,\rho,\mathcal{B}}^n.  
\Label{DV2}  
\end{align}  
In addition, we define the sets  
\begin{align}  
\mathcal{R}[\mathcal{B}] &:= \{(p,\rho) \in \mathcal{P}_d \times \mathcal{S}[\mathcal{B}] \mid p \succ p(\rho)\},  
\Label{DV3}  
\\  
\mathcal{R}_n[\mathcal{B}] &:= \mathcal{R}[\mathcal{B}] \cap 
(\mathcal{P}_d^n \times \mathcal{S}_n[\mathcal{B}]).  
\Label{DV4}  
\end{align}  

Hence, $\overline{\cal E}_{\cal B}(\sigma) \in \mathcal{R}[\mathcal{B}]$.  
The decomposition $\{1_{p,\rho,\mathcal{B}}^n\}_{(p,\rho) \in \mathcal{R}_n[\mathcal{B}]}$  
represents the measurement for obtaining the quantum empirical distribution under basis $\mathcal{B}$  
and the Schur sampling method \cite{KW01,CM,CHM,OW,HM02a,HM02b,AISW}.  
Tables \ref{notation-spaces}, \ref{notation-dimensions}, and \ref{notation-projections}  
summarize the notations introduced in this subsection.

\begin{table}[t]
\caption{Notations of spaces}
\label{notation-spaces}
\begin{center}
\begin{tabular}{|c|l|}
\hline
Symbol &Description     \\
\hline
\multirow{2}{*}{$\mathcal{U}_{\blambda}$} & Irreducible  representation space  \\ 
&of $\SU(d)$ 
\\ \hline
\multirow{2}{*}{$\mathcal{V}_{\blambda}$} & 
Irreducible representation space 
\\ 
& of permutation group $\frS_n$ 
\\ \hline
\multirow{2}{*}{${\cal U}_{np,\rho}$} &
Subspace of the space ${\cal U}_{np}$\\
& with weight vector $n p(\rho)$ 
\\ \hline
\end{tabular}
\end{center}
\end{table}

\begin{table}[t]
\caption{Notations of dimensions and related numbers}
\label{notation-dimensions}
\begin{center}
\begin{tabular}{|c|l|c|}
\hline
Symbol &Description &Eq. number    \\
\hline
$d_{\blambda}$ & Dimension of  $\mathcal{U}_{\blambda}$
& Eq. \eqref{DX7}
\\ \hline
$d_{\blambda}'$ & Dimension of  $\mathcal{V}_{\blambda}$
& Eq. \eqref{DX7}
\\ \hline
$\overline{d}_n$ &
$\sum_{\blambda \in Y_d^n}d_{\blambda}$
& Eq. \eqref{DX7}
\\ \hline
$d_{np,\rho}$ & Dimension of ${\cal U}_{np,\rho}$ &
Eq. \eqref{DV1} \\ \hline
\end{tabular}
\end{center}
\end{table}

\begin{table}[t]
\caption{Notations of projections}
\label{notation-projections}
\begin{center}
\begin{tabular}{|c|l|c|}
\hline
Symbol &Description &Eq. number    \\
\hline
$1_{\rho,\mathcal{B}}^n$ & 
$\sum_{x^n: e_{\mathcal{B}}(x^n)
=\rho }|v[x^n]\rangle \langle v[x^n]|$ &Eq. \eqref{DX1}
 \\
\hline
$1_{{\cal K}}$ & Projection to subspace ${\cal K}$
&\\ \hline
$1_{p,\rho,\mathcal{B}}^n$& &Eq. \eqref{DU1}
\\ \hline
\end{tabular}
\end{center}
\end{table}

\subsection{Quantum Sanov theorem based on quantum empirical distribution}
We define the TP-CP map ${\cal E}_{n,{\cal B}}$ as  
\begin{align}  
{\cal E}_{n,{\cal B}}(\sigma) &:=  
\sum_{(p,\rho) \in \mathcal{R}_n[\mathcal{B}]}  
(\Tr 1_{p,\rho,\mathcal{B}}^n \sigma)  
\rho_{n,{\cal B}}(p,\rho).  
\Label{DV5}  
\end{align}  

For a state $\rho \in \mathcal{S}[\mathcal{B}]$ and a pair $(p',\rho') \in \mathcal{R}[\mathcal{B}]$, we define  
\begin{align}  
D_{\mathcal{B}}((p',\rho') \|\rho) :=  
D(\rho'\| \rho) + S(\rho') - S(p'), \Label{DV6}  
\end{align}  
where $S(\rho) := - \Tr(\rho \log \rho)$.  
The following subset of $\mathcal{R}[\mathcal{B}]$ plays a significant role in the quantum Sanov theorem for quantum empirical distributions.  
For a state $\rho \in \mathcal{S}[\mathcal{B}]$ and a positive value $r > 0$, we define  
\begin{align}  
S_{\rho,r} :=  
\left\{ (p',\rho') \in \mathcal{R}[\mathcal{B}] \; \middle| \;  
D_{\mathcal{B}}((p',\rho') \|\rho) \le r \right\},  
\Label{DV7}  
\end{align}  
with $S_{\rho,r}^c$ representing its complement.  
The properties of the set $S_{\rho,r}$ are described in the following result. Readers may refer to \cite{another} for the proof.  

\begin{proposition}[\protect{\cite[Lemma 1]{another}}]\Label{LL1}  
For $\rho \in \mathcal{S}[\mathcal{B}]$ and $r > 0$, the following inequality holds:  
\begin{align}  
\Tr \rho^{\otimes n}  
\left(\sum_{(p',\rho') \in S_{\rho,r}^c \cap \mathcal{R}_n[\mathcal{B}]}  
1_{p',\rho',\mathcal{B}}^n \right)  
\le (n+1)^{\frac{(d+4)(d-1)}{2}}  
2^{-nr}. \Label{IOY}  
\end{align}  
Additionally, we have  
\begin{align}  
\lim_{n \to \infty} -\frac{1}{n} \log \Tr \rho^{\otimes n}  
\left(\sum_{(p',\rho') \in S_{\rho,r}^c \cap \mathcal{R}_n[\mathcal{B}]}  
1_{p',\rho',\mathcal{B}}^n \right)  
= r. \Label{IBT3}  
\end{align}  
\end{proposition}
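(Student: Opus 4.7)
The plan is to reduce the upper bound to an exact evaluation of $\Tr \rho^{\otimes n} 1_{p',\rho',\mathcal{B}}^n$ for each individual pair $(p',\rho') \in \mathcal{R}_n[\mathcal{B}]$, and then combine the resulting per-type bounds with the polynomial size of $\mathcal{R}_n[\mathcal{B}]$. The key structural fact is that because $\rho \in \mathcal{S}[\mathcal{B}]$ is diagonal in $\mathcal{B}$, $\rho^{\otimes n}$ commutes with both the Schur--Weyl projections and the pinching, so it acts as a scalar on every weight block of the decomposition.

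The first step is the exact trace formula. Starting from $1_{p',\rho',\mathcal{B}}^n = 1_{\mathcal{U}_{np',\rho'}} \otimes 1_{\mathcal{V}_{np'}}$ as in \eqref{DU1V}, I observe that vectors in the range of this projection carry global weight $n\,p(\rho')$, so $\rho^{\otimes n}$ acts on them as the scalar $\prod_j \rho_j^{n\,p(\rho')_j}=2^{-n(S(\rho')+D(\rho'\|\rho))}$. Hence
\begin{align}
\Tr \rho^{\otimes n} 1_{p',\rho',\mathcal{B}}^n
= d_{np',\rho'}\, d'_{np'}\, 2^{-n(S(\rho')+D(\rho'\|\rho))}.
\end{align}

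The second step is the combinatorial dimension bound $d_{np',\rho'}\,d'_{np'} \le (n+1)^{d(d-1)/2}\, 2^{n S(p')}$. I would split this as $d_{np',\rho'}\le d_{np'}\le (n+1)^{d(d-1)/2}$, which is exactly \eqref{NMI}, together with $d'_{np'}\le 2^{n S(p')}$, which is the standard type-class / hook-length bound on $\mathfrak{S}_n$-irrep dimensions (obtained from $d'_{np'}\le|\mathcal{T}_{n,p'}|\le 2^{n S(p')}$). Substituting into the trace formula and recognising $D(\rho'\|\rho)+S(\rho')-S(p')=D_{\mathcal{B}}((p',\rho')\|\rho)$ gives the per-type bound
\begin{align}
\Tr \rho^{\otimes n} 1_{p',\rho',\mathcal{B}}^n
\le (n+1)^{d(d-1)/2}\, 2^{-n D_{\mathcal{B}}((p',\rho')\|\rho)}.
\end{align}
Since $|\mathcal{R}_n[\mathcal{B}]|\le|\mathcal{P}_d^n\times\mathcal{S}_n[\mathcal{B}]|\le (n+1)^{2(d-1)}$ and every $(p',\rho')\in S_{\rho,r}^c$ satisfies $D_{\mathcal{B}}((p',\rho')\|\rho)>r$, summing yields
\begin{align}
\Tr \rho^{\otimes n}\!\!\!\sum_{(p',\rho')\in S_{\rho,r}^c\cap\mathcal{R}_n[\mathcal{B}]}\!\!\! 1_{p',\rho',\mathcal{B}}^n
\le (n+1)^{\frac{d(d-1)}{2}+2(d-1)}\, 2^{-nr},
\end{align}
and the algebraic identity $d(d-1)/2+2(d-1)=(d+4)(d-1)/2$ produces exactly the exponent in \eqref{IOY}.

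The upper bound just proved gives $\liminf_{n\to\infty}(-\tfrac1n\log(\cdot))\ge r$. For the matching $\limsup \le r$ in \eqref{IBT3}, I would restrict the sum to a single well-chosen term $(p(\rho'_n),\rho'_n)$ with $\rho'_n\in\mathcal{S}_n[\mathcal{B}]$ chosen so that $D(\rho'_n\|\rho)\downarrow r^+$. In this case $D_{\mathcal{B}}((p(\rho'_n),\rho'_n)\|\rho)=D(\rho'_n\|\rho)$, and because the weight $np(\rho'_n)$ occurs in $\mathcal{U}_{np(\rho'_n)}$ with multiplicity one (as a highest-weight orbit representative), one has $d_{np(\rho'_n),\rho'_n}\,d'_{np(\rho'_n)}=d'_{np(\rho'_n)}\ge (n+1)^{-(d-1)}\,2^{nS(\rho'_n)}$, yielding a matching lower bound $\ge\mathrm{poly}(n)^{-1}\,2^{-n D(\rho'_n\|\rho)}$ via the exact trace formula of step one. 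The main obstacle in the whole argument is this last step: one has to verify that $(p(\rho'_n),\rho'_n)$ can be chosen in the discretised set $\mathcal{R}_n[\mathcal{B}]$ while simultaneously driving $D(\rho'_n\|\rho)$ to $r$ from above, and one has to pin down the weight-multiplicity equality in the paper's non-decreasing convention (that is, confirm that $p(\rho')$ really is the orbit representative of multiplicity one in $\mathcal{U}_{np(\rho')}$); both are standard but require careful book-keeping with the reversed ordering convention.
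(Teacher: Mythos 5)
You should first note that the paper contains no proof of Proposition \ref{LL1} to compare against: it is imported verbatim from \cite{another} (``Readers may refer to \cite{another} for the proof''), so your argument can only be judged on its own merits. On those merits it is essentially correct and is the natural method-of-types/Schur--Weyl argument one would expect. The exact evaluation $\Tr \rho^{\otimes n} 1_{p',\rho',\mathcal{B}}^n = d_{np',\rho'}\, d'_{np'}\, 2^{-n(S(\rho')+D(\rho'\|\rho))}$ is right, because $\rho^{\otimes n}$ acts as the scalar $\prod_k \langle v_k|\rho|v_k\rangle^{\,n\langle v_k|\rho'|v_k\rangle}$ on the block cut out by $1_{\rho',\mathcal{B}}^n$; just write the exponents as the basis-ordered diagonal entries of $\rho'$ rather than the sorted vector $p(\rho')$ paired with the unsorted $\rho_j$, since the pairing matters even though your final expression is correct. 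Combining $d_{np',\rho'} \le d_{np'} \le (n+1)^{d(d-1)/2}$ from \eqref{NMI}, the standard bound $d'_{np'} \le \binom{n}{np'} \le 2^{nS(p')}$ (the permutation module of dimension $\binom{n}{np'}$ contains $\mathcal{V}_{np'}$ with multiplicity one), and $|\mathcal{R}_n[\mathcal{B}]| \le (n+1)^{2(d-1)}$ reproduces the exponent $(d+4)(d-1)/2$ exactly, so \eqref{IOY} is fully established. For \eqref{IBT3} your single-term strategy is sound: at $(p(\rho'),\rho')$ the weight $np(\rho')$ is extremal in $\mathcal{U}_{np(\rho')}$, so $d_{np(\rho'),\rho'}=1$ and $D_{\mathcal{B}}$ collapses to $D(\rho'\|\rho)$. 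The two loose ends are the ones you already flag, and they are real but minor: (i) the lower bound $d'_{np'} \ge (n+1)^{-(d-1)} 2^{nS(p')}$ as written is the multinomial bound, not the Specht-module bound; the correct statement carries an extra polynomial factor of order $(n+d)^{-d(d-1)/2}$, which is immaterial to the rate; (ii) exhibiting types $\rho'_n \in \mathcal{S}_n[\mathcal{B}]$ with $D(\rho'_n\|\rho) > r$ and $D(\rho'_n\|\rho)\to r$ requires $\rho$ to have full support in $\mathcal{B}$ and $r < \max_{\rho'} D(\rho'\|\rho)$ (otherwise the left-hand side of \eqref{IBT3} can be $+\infty$ or the set empty); these hypotheses are implicit in the statement as inherited from \cite{another} and should be made explicit if you want your proof to be self-contained.
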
  

\subsection{Useful formulas related to the map ${\cal E}_{n,{\cal B}}$}
\Label{S6C}
This subsection presents 
several useful formulas related to the map ${\cal E}_{n,{\cal B}}$
for the preparation for the latter discussion.

\begin{lemma}
A permutation-invariant
state $\rho_n$ on ${\cal H}^{\otimes n}$ satisfies the relation
\begin{align}
\rho_n \le \overline{d}_n {\cal E}_{n,{\cal B}}(\rho_n).\Label{BR2T}
\end{align}
\end{lemma}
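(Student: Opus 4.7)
The plan is to exploit the Schur--Weyl decomposition $\mathcal{H}^{\otimes n} = \bigoplus_{\blambda \in Y_d^n} \mathcal{U}_\blambda \otimes \mathcal{V}_\blambda$ in order to reduce the operator inequality to a block-wise statement on each $\mathcal{U}_\blambda$. Because $\rho_n$ commutes with every $U_\pi$ and $\frS_n$ acts irreducibly on each $\mathcal{V}_\blambda$, Schur's lemma forces
\begin{align*}
\rho_n = \bigoplus_{\blambda \in Y_d^n} \hat\rho_\blambda \otimes 1_{\mathcal{V}_\blambda}
\end{align*}
for some positive operators $\hat\rho_\blambda$ on $\mathcal{U}_\blambda$. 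Using $1^n_{p,\rho,\mathcal{B}} = 1_{\mathcal{U}_{np,\rho}} \otimes 1_{\mathcal{V}_{np}}$ from \eqref{DU1V}, one computes $\Tr[1^n_{p,\rho,\mathcal{B}} \rho_n] = d'_{np} \Tr[1_{\mathcal{U}_{np,\rho}} \hat\rho_{np}]$, and substitution into \eqref{DV5} yields, after cancelling $d'_{np}$,
\begin{align*}
\mathcal{E}_{n,\mathcal{B}}(\rho_n) = \bigoplus_{\blambda} \Phi_\blambda(\hat\rho_\blambda) \otimes 1_{\mathcal{V}_\blambda}, \qquad \Phi_\blambda(A) := \sum_{\rho:\,\blambda/n \succ p(\rho)} \frac{\Tr[1_{\mathcal{U}_{\blambda,\rho}} A]}{d_{\blambda,\rho}} 1_{\mathcal{U}_{\blambda,\rho}}.
\end{align*}

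Matching the two decompositions block-by-block, and using $d_\blambda \le \overline{d}_n$ valid for every $\blambda$, the lemma reduces to establishing the operator inequality $\hat\rho_\blambda \le d_\blambda\, \Phi_\blambda(\hat\rho_\blambda)$ on $\mathcal{U}_\blambda$ for each $\blambda$. By the spectral decomposition of $\hat\rho_\blambda$ and the linearity of $\Phi_\blambda$, it suffices to verify this inequality for a rank-one $A = |\psi\rangle\langle\psi|$. Expanding $|\psi\rangle = \sum_\rho c_\rho |u_\rho\rangle$ (with $|u_\rho\rangle \in \mathcal{U}_{\blambda,\rho}$ unit) and a test vector $|\phi\rangle = \sum_\rho |\phi_\rho\rangle$ along the weight decomposition $\mathcal{U}_\blambda = \bigoplus_\rho \mathcal{U}_{\blambda,\rho}$, and applying Cauchy--Schwarz with the splitting $c_\rho \langle \phi_\rho | u_\rho \rangle = \bigl(c_\rho \|\phi_\rho\|/\sqrt{d_{\blambda,\rho}}\bigr)\bigl(\sqrt{d_{\blambda,\rho}}\langle \phi_\rho| u_\rho \rangle/\|\phi_\rho\|\bigr)$, together with $|\langle \phi_\rho | u_\rho \rangle| \le \|\phi_\rho\|$ and the identity $\sum_\rho d_{\blambda,\rho} = d_\blambda$, one obtains
\begin{align*}
|\langle \phi | \psi \rangle|^2 \le \Bigl(\sum_\rho d_{\blambda,\rho}\Bigr) \sum_\rho \frac{|c_\rho|^2 \|\phi_\rho\|^2}{d_{\blambda,\rho}} = d_\blambda \, \langle \phi | \Phi_\blambda(|\psi\rangle\langle\psi|) | \phi \rangle,
\end{align*}
which is the desired block-wise inequality. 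Reassembling across $\blambda$ and absorbing each $d_\blambda$ into $\overline{d}_n$ then delivers \eqref{BR2T}.

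The delicate step is the final Cauchy--Schwarz argument, since $\hat\rho_\blambda$ is generically \emph{not} block-diagonal with respect to the weight decomposition of $\mathcal{U}_\blambda$; a naive pinching by weights followed by twirling each weight block to the maximally mixed state would yield the worse constant $N_\blambda \cdot \max_\rho d_{\blambda,\rho}$ instead of $d_\blambda$. The crucial choice of weights $\sqrt{d_{\blambda,\rho}}$ in the Cauchy--Schwarz splitting, guided by the identity $\sum_\rho d_{\blambda,\rho} = \dim \mathcal{U}_\blambda = d_\blambda$, is what converts this potential loss into the tight constant needed to match $\overline{d}_n$ after summation.
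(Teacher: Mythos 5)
Your proof is correct, but it takes a genuinely different route from the paper's. The paper diagonalizes each compressed block $1^n_{p,\rho,\mathcal{B}}\rho_n 1^n_{p,\rho,\mathcal{B}}$, chooses a basis of $\mathcal{U}_{np,\rho}$ mutually unbiased to that eigenbasis, observes that pinching $\rho_n$ onto this unbiased basis (tensored with $1_{\mathcal{V}_{np}}$) reproduces exactly ${\cal E}_{n,{\cal B}}(\rho_n)$ --- because the unbiased-basis expectation values of the compressed block all equal its normalized trace --- and then invokes the standard pinching inequality; the constant $\overline{d}_n$ arises there as the total number of rank-one-times-$1_{\mathcal{V}_{np}}$ projections, $\sum_{p}\sum_{\rho}d_{np,\rho}=\sum_p d_{np}=\overline{d}_n$. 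You bypass both the mutually-unbiased-basis construction and the pinching inequality: Schur's lemma gives the block form of $\rho_n$, and your weighted Cauchy--Schwarz estimate proves the operator inequality $\hat\rho_{\blambda}\le d_{\blambda}\,\Phi_{\blambda}(\hat\rho_{\blambda})$ directly on each $\mathcal{U}_{\blambda}$. Your route is more self-contained and yields the marginally sharper per-block constant $d_{\blambda}$ (the paper's single global application of the pinching inequality puts $\overline{d}_n$ on every block at once), although both arrive at the same final bound after absorbing $d_{\blambda}\le\overline{d}_n$. The two arguments are of course cousins --- the pinching inequality is itself a Cauchy--Schwarz statement --- but your choice of weights $\sqrt{d_{\blambda,\rho}}$, calibrated by $\sum_{\rho}d_{\blambda,\rho}=d_{\blambda}$, replaces the paper's trick of realizing the weight-block twirl as a pinching in an unbiased basis.
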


\begin{proof}
We diagonalize the matrix $1_{p,\rho,\mathcal{B}}^n
\rho_n
1_{p,\rho,\mathcal{B}}^n$
as $ \sum_{j} c_{p,\rho,j} |x(p,\rho,j)\rangle\langle x(p,\rho,j)|
\otimes 1_{\mathcal{V}_{\blambda}}$, where
$1_{\mathcal{V}_{\blambda}}$ is the identity operator on 
$\mathcal{V}_{\blambda}$.
Hence, 
the subspace ${\cal U}_{np,\rho}$
is spanned by $\{|x(p,\rho,j)\rangle\}_j$.
We choose a basis 
$\{|y(p,\rho,j)\rangle\}_j$
mutually unbiased to $\{|x(p,\rho,j)\rangle\}_j$
on the subspace ${\cal U}_{np,\rho}$.
Then, we have
\begin{align}
&\langle y(p,\rho,j)| \Big(\sum_{k} c_{p,\rho,k} |x(p,\rho,k)\rangle\langle x(p,\rho,k)|\Big)|y(p,\rho,j)\rangle \notag\\
=&\langle y(p,\rho,j')| \Big(\sum_{k} c_{p,\rho,k} |x(p,\rho,k)\rangle\langle x(p,\rho,k)|\Big)|y(p,\rho,j')\rangle.
\end{align}
Hence, we have
\begin{align}
&\sum_j \langle y(p,\rho,j)| 
\Big(\sum_{k} c_{p,\rho,k} |x(p,\rho,k)\rangle\langle x(p,\rho,k)|
\Big)
|y(p,\rho,j)\rangle \notag\\
&\cdot|y(p,\rho,j)\rangle \langle y(p,\rho,j)|\notag\\
=&
\Tr \Big(\sum_{k} c_{p,\rho,k} |x(p,\rho,k)\rangle\langle x(p,\rho,k))|
\Big)\notag\\
&\cdot\frac{1}{\dim {\cal U}_{np,\rho}}
\sum_j
|y(p,\rho,j)\rangle \langle y(p,\rho,j)|\notag\\
=&
\Tr \Big(\sum_{k} c_{p,\rho,k} |x(p,\rho,k)\rangle\langle x(p,\rho,k)|
\Big)
\frac{1}{\dim {\cal U}_{np,\rho}}
1_{{\cal U}_{np,\rho}}.
\end{align}
Thus, using $P_{p,\rho,j}:=|y(p,\rho,j)\rangle \langle y(p,\rho,j)| \otimes 1_{\mathcal{V}_{\blambda}}$,
we have
\begin{align}
&\sum_j 
 P_{p,\rho,j}  \rho_n P_{p,\rho,j}\notag\\
=&
\Big(\sum_j \langle y(p,\rho,j)| \Big(\sum_{k} c_{p,\rho,k} |x(p,\rho,k)\rangle\langle x(p,\rho,k)|\Big)
|y(p,\rho,j)\rangle \notag\\
&\cdot|y(p,\rho,j)\rangle \langle y(p,\rho,j)|\Big)
\otimes 1_{\mathcal{V}_{\blambda}} \notag\\
=&
\Tr \Big(\sum_{k} c_{p,\rho,k} |x(p,\rho,k)\rangle\langle x(p,\rho,k)|\Big)\notag\\
&\cdot\frac{1}{\dim {\cal U}_{np,\rho}}
(1_{{\cal U}_{np,\rho}}
\otimes 1_{\mathcal{V}_{\blambda}})  \notag\\
=&
\Tr  (\rho_n 1_{p,\rho,\mathcal{B}}^n )
\rho_{n,{\cal B}}(p,\rho).
\Label{BN6}
\end{align}
Now, we define the pinching map 
${\cal E}_{n,{\cal B},\rho_n}$ as
\begin{align}
{\cal E}_{n,{\cal B},\rho_n}(\sigma):=
 \sum_{p,\rho,j}P_{p,\rho,j} \sigma P_{p,\rho,j} .\Label{BD9}
\end{align}
Then, we have
\begin{align}
&\overline{d}_n {\cal E}_{n,{\cal B}}(\rho_n)
=
\overline{d}_n \sum_{(p,\rho) \in \mathcal{R}_n[\mathcal{B}]}
(\Tr 1_{p,\rho,\mathcal{B}}^n \rho_n)
\rho_{n,{\cal B}}(p,\rho) \notag\\
\stackrel{(a)}{=} &
\overline{d}_n {\cal E}_{n,{\cal B},\rho_n}( \rho_n) 
\stackrel{(b)}{\ge} 
 \rho_n,\Label{NM6A}
\end{align}
where
$(a)$ follows from \eqref{BN6}
and
$(b)$ follows from the pinching inequality \cite[Lemma 9]{H-02}.
\end{proof}

\begin{lemma}\Label{LB1}
A permutation-invariant
state $\rho_n$ on ${\cal H}^{\otimes n}$ 
and an element $\eta \in {\cal R}_n[{\cal B}]$
satisfy the relation
\begin{align}
D_\alpha({\cal E}_{n,{\cal B}}(\rho_n)\| \rho_{n,{\cal B}}(\eta))
+\log \overline{d}_n
\ge D_\alpha(\rho_n\| \rho_{n,{\cal B}}(\eta))
\Label{B52T}
\end{align}
for $\alpha\in [0,1)$.
Also, any state $\sigma\in \mathcal{S}[\mathcal{B}]$ satisfies
\begin{align}
D_\alpha({\cal E}_{n,{\cal B}}(\rho_n)\| \sigma^{\otimes n})
+\log \overline{d}_n
\ge D_\alpha(\rho_n\| \sigma^{\otimes n}).
\Label{B52T3}
\end{align}
for $\alpha\in [0,1)$.
In particular, when $\sigma$ is invertible, the relation
\eqref{B52T3} holds with $\alpha=1$.
\end{lemma}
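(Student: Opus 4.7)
The plan is to realize ${\cal E}_{n,{\cal B}}$, restricted to $\rho_n$, as a genuine orthogonal pinching $\Phi$ with exactly $\overline{d}_n$ projectors that fixes the second argument $\sigma_n$, and then to extract the R\'enyi inequality from the group-averaging representation of $\Phi$ combined with Rotfel'd's trace subadditivity.

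First I would take $\Phi := {\cal E}_{n,{\cal B},\rho_n}$ from \eqref{BD9}, whose projectors $P_{p,\rho,j}$ refine each block $1_{p,\rho,{\cal B}}^n$ into $d_{np,\rho}$ pieces and thus total $\sum_{(p,\rho)}d_{np,\rho} = \sum_p d_{np} = \overline{d}_n$. The calculations \eqref{BN6}--\eqref{NM6A} give both $\Phi(\rho_n) = {\cal E}_{n,{\cal B}}(\rho_n)$ and the operator inequality $\rho_n \le \overline{d}_n\Phi(\rho_n)$. That $\Phi$ fixes the second argument is immediate in both cases: $\rho_{n,{\cal B}}(\eta)$ is a scalar multiple of $1_{p_0,\rho_0,{\cal B}}^n = \sum_j P_{p_0,\rho_0,j}$, while the type decomposition $\sigma^{\otimes n} = \sum_{p,\rho}c_\rho 1_{p,\rho,{\cal B}}^n$ (with scalars $c_\rho$) shows $\sigma^{\otimes n}$ is block-constant across $\{P_{p,\rho,j}\}$.

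Since $\sigma_n^{(1-\alpha)/(2\alpha)}$ then commutes with each $P_{p,\rho,j}$, setting $B := \sigma_n^{(1-\alpha)/(2\alpha)}\rho_n\sigma_n^{(1-\alpha)/(2\alpha)}$ yields $\sigma_n^{(1-\alpha)/(2\alpha)}\Phi(\rho_n)\sigma_n^{(1-\alpha)/(2\alpha)} = \Phi(B)$, and the inequalities \eqref{B52T}--\eqref{B52T3} for $\alpha\in(0,1)$ reduce to the trace bound $\Tr \Phi(B)^\alpha \le \overline{d}_n^{\,1-\alpha}\Tr B^\alpha$. To prove this I would use the group-averaging representation of a pinching: with $\omega := e^{2\pi\I/\overline{d}_n}$ and $U_j := \sum_l \omega^{jl}P_l$ (for any enumeration $\{P_l\}_{l=1}^{\overline{d}_n}$ of the projectors), each $U_j$ is unitary and $\Phi(B) = \overline{d}_n^{\,-1}\sum_j U_j B U_j^*$. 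Rotfel'd's inequality $\Tr(\sum_j A_j)^\alpha \le \sum_j \Tr A_j^\alpha$ (valid for $A_j \ge 0$ and $\alpha\in[0,1]$) applied to $A_j := U_j B U_j^*$, combined with unitary invariance $\Tr(U_j B U_j^*)^\alpha = \Tr B^\alpha$, delivers exactly the factor $\overline{d}_n^{\,1-\alpha}$.

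For the $\alpha = 1$ case with invertible $\sigma$, I would use the chain-rule identity $D(\rho_n\|\sigma_n) - D(\Phi(\rho_n)\|\sigma_n) = D(\rho_n\|\Phi(\rho_n))$, which holds because $\Phi(\sigma_n)=\sigma_n$ makes $\log\sigma_n$ commute with $\Phi$ so that $\Tr \rho_n \log\sigma_n = \Tr \Phi(\rho_n)\log\sigma_n$; the right-hand side is then bounded by $D(\rho_n\|\Phi(\rho_n)) \le D_{\max}(\rho_n\|\Phi(\rho_n)) \le \log\overline{d}_n$ via the operator bound $\rho_n \le \overline{d}_n\Phi(\rho_n)$. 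The main obstacle I anticipate is not the R\'enyi estimate itself but the structural bookkeeping in setting up $\Phi$: one must choose the refined projectors $P_{p,\rho,j}$ so that simultaneously they agree with ${\cal E}_{n,{\cal B}}$ on $\rho_n$, keep $\sigma_n$ invariant, and number exactly $\overline{d}_n$---the last being essential for the pinching inequality to deliver the precise constant. Once this setup is in place, the group-averaging plus Rotfel'd calculation is short and yields the sharp factor $\overline{d}_n^{\,1-\alpha}$ directly.
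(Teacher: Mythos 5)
Your proof is correct, and its skeleton coincides with the paper's: both construct the refined pinching $\Phi={\cal E}_{n,{\cal B},\rho_n}$ of \eqref{BD9}, whose projectors number exactly $\sum_{(p,\rho)}d_{np,\rho}=\sum_{p}d_{np}=\overline{d}_n$, satisfy $\Phi(\rho_n)={\cal E}_{n,{\cal B}}(\rho_n)$ by \eqref{BN6}--\eqref{NM6A}, and commute with the second argument (a scalar on each block $1^n_{p,\rho,{\cal B}}$), so that the claim reduces to $\Tr \Phi(B)^\alpha\le\overline{d}_n^{\,1-\alpha}\Tr B^\alpha$ with $B=\sigma_n^{(1-\alpha)/(2\alpha)}\rho_n\sigma_n^{(1-\alpha)/(2\alpha)}$. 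Where you genuinely diverge is in proving this trace bound: the paper writes $\Tr\Phi(B)^\alpha=\Tr B\,\Phi(B)^{\alpha-1}$ and invokes the operator inequality $B\le\overline{d}_n\Phi(B)$ (the pinching inequality underlying \eqref{BR2T}) together with the anti-monotonicity of $t\mapsto t^{\alpha-1}$, whereas you avoid the pinching inequality for $\alpha\in(0,1)$ altogether by writing $\Phi$ as a uniform average of $\overline{d}_n$ unitary conjugations and applying the subadditivity of $A\mapsto\Tr A^\alpha$ on positive operators; both yield the same sharp constant, and your route is slightly more self-contained. You also treat $\alpha=1$ differently: the paper passes to the limit $\alpha\to 1$ (using invertibility of $\sigma$ for continuity of $D_\alpha$), while you prove it directly from the identity $D(\rho_n\|\sigma_n)-D(\Phi(\rho_n)\|\sigma_n)=D(\rho_n\|\Phi(\rho_n))\le\log\overline{d}_n$, which is cleaner and uses the operator bound $\rho_n\le\overline{d}_n\Phi(\rho_n)$ only where it is indispensable. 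The bookkeeping point you flag as the main risk --- choosing the refined projectors so that they reproduce ${\cal E}_{n,{\cal B}}$ on $\rho_n$, fix $\sigma_n$, and number exactly $\overline{d}_n$ --- is precisely what the paper's construction via the mutually unbiased bases $\{|y(p,\rho,j)\rangle\}_j$ delivers, so there is no gap.
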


\begin{proof}
In this proof, we employ the pinching map
${\cal E}_{n,{\cal B},\rho_n}$ defined in \eqref{BD9}.
We have
\begin{align}
& e^{(\alpha-1)D_\alpha({\cal E}_{n,{\cal B}}(\rho_n)\| \rho_{n,{\cal B}}(\eta))}
\stackrel{(a)}{=}  e^{(\alpha-1)D_\alpha({\cal E}_{n,{\cal B},\rho_n}(\rho_n)\| \rho_{n,{\cal B}}(\eta))}\notag\\
=& \Tr 
\Big(\rho_{n,{\cal B}}(\eta)^{\frac{1-\alpha}{2\alpha}}
({\cal E}_{n,{\cal B},\rho_n}(\rho_n))
\rho_{n,{\cal B}}(\eta)^{\frac{1-\alpha}{2\alpha}}
\Big)^\alpha \notag\\
\stackrel{(b)}{=} & \Tr 
\Big(
{\cal E}_{n,{\cal B},\rho_n}
\big(\rho_{n,{\cal B}}(\eta)^{\frac{1-\alpha}{2\alpha}}
\rho_n
\rho_{n,{\cal B}}(\eta)^{\frac{1-\alpha}{2\alpha}}\big)
\Big)^\alpha \notag\\
=& \Tr 
\Big(
{\cal E}_{n,{\cal B},\rho_n}
\big(\rho_{n,{\cal B}}(\eta)^{\frac{1-\alpha}{2\alpha}}
\rho_n
\rho_{n,{\cal B}}(\eta)^{\frac{1-\alpha}{2\alpha}}\big)
\Big) \notag\\
&\cdot \Big(
{\cal E}_{n,{\cal B},\rho_n}
\big(\rho_{n,{\cal B}}(\eta)^{\frac{1-\alpha}{2\alpha}}
\rho_n
\rho_{n,{\cal B}}(\eta)^{\frac{1-\alpha}{2\alpha}}\big)
\Big)^{\alpha-1} \notag\\
=& \Tr 
\Big(
\rho_{n,{\cal B}}(\eta)^{\frac{1-\alpha}{2\alpha}}
\rho_n
\rho_{n,{\cal B}}(\eta)^{\frac{1-\alpha}{2\alpha}}
\Big) \notag\\
&\cdot \Big(
{\cal E}_{n,{\cal B},\rho_n}
\big(\rho_{n,{\cal B}}(\eta)^{\frac{1-\alpha}{2\alpha}}
\rho_n
\rho_{n,{\cal B}}(\eta)^{\frac{1-\alpha}{2\alpha}}\big)
\Big)^{\alpha-1} \notag\\
\stackrel{(c)}{\le} 
& \overline{d}_n^{1-\alpha}\Tr 
\Big(\rho_{n,{\cal B}}(\eta)^{\frac{1-\alpha}{2\alpha}}
\rho_n
\rho_{n,{\cal B}}(\eta)^{\frac{1-\alpha}{2\alpha}}
\Big) \notag\\
&\cdot 
\Big(\rho_{n,{\cal B}}(\eta)^{\frac{1-\alpha}{2\alpha}}
\rho_n
\rho_{n,{\cal B}}(\eta)^{\frac{1-\alpha}{2\alpha}}
\Big)^{\alpha-1} \notag\\
=& 
\overline{d}_n^{1-\alpha} e^{(\alpha-1)D_\alpha(
\rho_n\| \rho_{n,{\cal B}}(\eta))},
\end{align}
where
$(a)$ follows from $(a)$ of \eqref{NM6A},
$(b)$ follows from the relation 
${\cal E}_{n,{\cal B},\rho_n}(\rho_{n,{\cal B}}(\eta))
=\rho_{n,{\cal B}}(\eta)$,
and $(c)$ follows from \eqref{BR2T}.
Hence, we obtain \eqref{B52T}.

Since $\sigma^{\otimes n}$ is the sum of 
$\rho_{n,{\cal B}}(\eta)$, 
we can show \eqref{B52T3} in the same way.
In particular, when $\sigma$ is invertible, 
taking the limit $\alpha\to 1$,
we have the relation \eqref{B52T3} with $\alpha=1$.
\end{proof}

For $(p,\rho)\in \mathcal{R}[\mathcal{B}]$, we define the sets
\begin{align}
{\cal S}_{{\cal B}}(p)&:=\{\sigma\in {\cal S}({\cal H})|
p(\sigma)=p\} \Label{AU1} \\
{\cal S}_{{\cal B}}(p,\rho,\epsilon)&:=\{\sigma\in {\cal S}({\cal H})|
p(\sigma)=p,
D(\rho \| {\cal E}_{\cal B}(\sigma))
\le \epsilon \} \Label{AU2}\\
{\cal S}_{{\cal B}}^c(p,\rho,\epsilon)&:=\{\sigma\in {\cal S}({\cal H})|
p(\sigma)=p,
D(\rho \| {\cal E}_{\cal B}(\sigma))
> \epsilon \} .\Label{AU3}
\end{align}
Since 
${\cal S}_{{\cal B}}(p)$ is a homogeneous space for the special unitary group $\SU({\cal H})$, i.e.,
\begin{align*}
{\cal S}_{{\cal B}}(p)=
\{ U \sigma U^\dagger \}_{U \in \SU({\cal H})}
\end{align*}
for any element $\sigma \in {\cal S}_{{\cal B}}(p)$,
we consider the invariant probability measure $\mu$
on ${\cal S}_{{\cal B}}(p)$.

\begin{lemma}\Label{LB2}
For $(p,\rho)\in \mathcal{R}_n[\mathcal{B}]$,
we have
\begin{align}
& \frac{1}{\mu({\cal S}_{{\cal B}}(p,\rho,\epsilon))}
\int_{{\cal S}_{{\cal B}}(p,\rho,\epsilon)}
\sigma^{\otimes n} \mu (d \sigma)
\notag \\
\ge &c^n(p,\rho,\epsilon) 
\rho_{n,{\cal B}}(p,\rho),
\Label{BR8T}
\end{align}
where
\begin{align}
&c^n(p,\rho,\epsilon)\notag\\
:=&
\frac{d_{np,\rho}}{d_{np}\mu({\cal S}_{{\cal B}}(p,\rho,\epsilon))}
\Big((n+d)^{-\frac{(d+2)(d-1)}{2}} 
-d_{np} e^{-n \epsilon}
\Big) .\Label{XC1}
\end{align}
\end{lemma}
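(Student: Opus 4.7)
The plan is to exploit the $\SU(d)$-invariance of the measure $\mu$ together with Schur-Weyl duality to evaluate the full Haar integral over ${\cal S}_{{\cal B}}(p)$ explicitly, and then to peel off the complement ${\cal S}_{{\cal B}}^c(p,\rho,\epsilon)$ via a Sanov-type bound. First, identifying $p$ with its diagonal realization $\sum_i p_i|v_i\rangle\langle v_i|$, I would use $\SU(d)$-invariance of $\mu$ to rewrite $\int_{{\cal S}_{{\cal B}}(p)}\sigma^{\otimes n}\mu(d\sigma)=\int_{\SU(d)}U^{\otimes n}p^{\otimes n}U^{\otimes n\dagger}\,dU$. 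This operator commutes with every $U^{\otimes n}$ and with the $\frS_n$-action on ${\cal H}^{\otimes n}$, so Schur-Weyl duality forces it to be scalar on each Young block $P_\blambda:=1_{{\cal U}_\blambda\otimes{\cal V}_\blambda}$, giving
\begin{align}
\int_{{\cal S}_{{\cal B}}(p)}\sigma^{\otimes n}\mu(d\sigma)
=\sum_{\blambda\in Y_d^n}\frac{\Tr(P_\blambda p^{\otimes n})}{d_\blambda d_\blambda'}P_\blambda.
\end{align}
Dropping every block except $\blambda=np$ and restricting to the sub-projection $1_{p,\rho,{\cal B}}^n\le P_{np}$ then yields the operator lower bound $\frac{\Tr(P_{np}p^{\otimes n})}{d_{np}d_{np}'}1_{p,\rho,{\cal B}}^n$, into which I substitute the standard estimate $\Tr(P_{np}p^{\otimes n})\ge(n+d)^{-(d+2)(d-1)/2}$.

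Next I handle the complement. For $\sigma\in{\cal S}_{{\cal B}}^c(p,\rho,\epsilon)$, since $1_{\rho,{\cal B}}^n$ is diagonal in ${\cal B}$ one has $\Tr(\sigma^{\otimes n}1_{\rho,{\cal B}}^n)=\Tr({\cal E}_{\cal B}(\sigma)^{\otimes n}1_{\rho,{\cal B}}^n)$, and the classical type-class bound combined with the defining inequality $D(\rho\|{\cal E}_{\cal B}(\sigma))>\epsilon$ gives $\Tr(\sigma^{\otimes n}1_{\rho,{\cal B}}^n)\le e^{-nD(\rho\|{\cal E}_{\cal B}(\sigma))}\le e^{-n\epsilon}$. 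Since $1_{p,\rho,{\cal B}}^n\le 1_{\rho,{\cal B}}^n$ by the weight-space decomposition, the same bound holds with $P:=1_{p,\rho,{\cal B}}^n$, and integrating over $\mu$ yields $\Tr(PA^cP)\le\mu({\cal S}_{{\cal B}}^c(p,\rho,\epsilon))e^{-n\epsilon}$, where $A^c:=\int_{{\cal S}_{{\cal B}}^c(p,\rho,\epsilon)}\sigma^{\otimes n}\mu(d\sigma)$.

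The core structural step is to convert these bounds into an operator inequality on $A:=\int_{{\cal S}_{{\cal B}}(p,\rho,\epsilon)}\sigma^{\otimes n}\mu(d\sigma)$. I observe that $A$ is both $\frS_n$-invariant (each integrand is) and $T$-invariant for the diagonal torus $T\subset\SU(d)$, because ${\cal E}_{\cal B}(t\sigma t^\dagger)={\cal E}_{\cal B}(\sigma)$ makes ${\cal S}_{{\cal B}}(p,\rho,\epsilon)$ invariant under conjugation by $T$. Schur-Weyl duality together with the $T$-weight decomposition ${\cal U}_\blambda=\bigoplus_w{\cal U}_{\blambda,w}$ then forces $A$ to split as an orthogonal direct sum of positive operators $X_{\blambda,w}\otimes I_{{\cal V}_\blambda}$, indexed by pairs of Young index and weight, which in particular yields $A\ge PAP$. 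The same symmetry argument gives $PA^cP=Y\otimes I_{{\cal V}_{np}}$ with $Y\ge 0$ on ${\cal U}_{np,\rho}$, so $\|Y\|\le\Tr(Y)=\Tr(PA^cP)/d_{np}'\le\mu({\cal S}_{{\cal B}}^c(p,\rho,\epsilon))e^{-n\epsilon}/d_{np}'$ and hence $PA^cP\le(\mu({\cal S}_{{\cal B}}^c(p,\rho,\epsilon))e^{-n\epsilon}/d_{np}')P$. Substituting $P=d_{np,\rho}d_{np}'\rho_{n,{\cal B}}(p,\rho)$ and using $\mu({\cal S}_{{\cal B}}^c)\le 1$ together with the Schur estimate, then dividing by $\mu({\cal S}_{{\cal B}}(p,\rho,\epsilon))$, reproduces exactly the claimed $c^n(p,\rho,\epsilon)\rho_{n,{\cal B}}(p,\rho)$.

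The hard part will be the structural step $A\ge PAP$: for a generic positive $A$ and projection $P$ this inequality fails, and here it rests crucially on the combined $\frS_n$- and $T$-invariance producing a genuine orthogonal direct sum decomposition of $A$ rather than merely a coarser Young block structure. A secondary technical point is justifying the Schur-polynomial lower bound $\Tr(P_{np}p^{\otimes n})\ge(n+d)^{-(d+2)(d-1)/2}$, which I would import from the standard representation-theoretic toolkit already invoked in the paper via the bound $d_\blambda\le(n+1)^{d(d-1)/2}$.
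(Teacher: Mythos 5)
Your proposal is correct and follows essentially the same route as the paper's proof: the Schur--Weyl evaluation of the full orbit integral (the paper's Eq.~\eqref{BR4TO}), the classical type-class bound $\Tr(\sigma^{\otimes n}1_{\rho,\mathcal{B}}^n)\le e^{-nD(\rho\|{\cal E}_{\cal B}(\sigma))}$ on the complement, the $X\otimes 1_{\mathcal{V}_{np}}$ structure of $1_{p,\rho,\mathcal{B}}^n A^c 1_{p,\rho,\mathcal{B}}^n$ with the norm-by-trace bound, and the final reduction $A\ge 1_{p,\rho,\mathcal{B}}^n A 1_{p,\rho,\mathcal{B}}^n$. The only difference is presentational: you spell out the joint $\frS_n$- and torus-invariance argument behind that last step, which the paper compresses into the assertion that the integral commutes with $1_{p,\rho,\mathcal{B}}^n$.
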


\begin{proof}
As shown in \cite[(6.23)]{Group1},
for $\sigma \in {\cal S}_{{\cal B}}(p)$, we have
\begin{align}
&(n+d)^{-\frac{(d+2)(d-1)}{2}} 
\le \Tr (\sigma^{\otimes n} 
1_{\mathcal{U}_{np} \otimes \mathcal{V}_{np}})
 \Label{DF1}.
\end{align}
Hence,
\begin{align}
&(n+d)^{-\frac{(d+2)(d-1)}{2}} 
1_{\mathcal{U}_{np} \otimes \mathcal{V}_{np}}\notag\\
\le &
1_{\mathcal{U}_{np} \otimes \mathcal{V}_{np}}
\Tr \Big(
1_{\mathcal{U}_{np} \otimes \mathcal{V}_{np}}
\int_{{\cal S}_{{\cal B}}(p)}
\sigma^{\otimes n}
\mu (d \sigma) \Big)\notag\\
= &
d_{np} d_{np}' 
1_{\mathcal{U}_{np} \otimes \mathcal{V}_{np}}
 \Big(\int_{{\cal S}_{{\cal B}}(p)}
\sigma^{\otimes n}
\mu (d \sigma) \Big)
1_{\mathcal{U}_{np} \otimes \mathcal{V}_{np}}.
\Label{BR4TO}
\end{align}
Since
$1_{p,\rho,\mathcal{B}}^n = 
1_{\mathcal{U}_{np} \otimes \mathcal{V}_{np}} 1_{\rho,\mathcal{B}}^n$  
due to the definition \eqref{DU1}, we have  
\begin{align}
&(n+d)^{-\frac{(d+2)(d-1)}{2}} 
1_{p,\rho,\mathcal{B}}^n\notag\\
\le &
1_{p,\rho,\mathcal{B}}^n
\Tr \Big(
1_{\mathcal{U}_{np} \otimes \mathcal{V}_{np}}
\int_{{\cal S}_{{\cal B}}(p)}
\sigma^{\otimes n}
\mu (d \sigma) \Big)\notag\\
= &
d_{np} d_{np}' 
1_{p,\rho,\mathcal{B}}^n
 \Big(\int_{{\cal S}_{{\cal B}}(p)}
\sigma^{\otimes n}
\mu (d \sigma) \Big)
1_{p,\rho,\mathcal{B}}^n
\Label{BR4T}
\end{align}

Also, 
as shown in \cite[(6.5)]{Group1},
for $\sigma \in {\cal S}_{{\cal B}}^c(p,\rho,\epsilon)$,
we have
\begin{align}
\Tr (\sigma^{\otimes n} 
1_{p,\rho,{\cal B}}^n)
\le 
\Tr (\sigma^{\otimes n} 1_{\rho,{\cal B}}^n )
\le 
e^{-n D(\rho \| {\cal E}_{\cal B}(\sigma)  )}
\le e^{-n\epsilon}.
\Label{ZBHP}
\end{align}

Due to the permutation-invariance,
the operator $1_{p,\rho,{\cal B}}^n
\Big(\int_{{\cal S}^c(p,\rho,\epsilon)}
\sigma^{\otimes n}
\mu (d \sigma) \Big)1_{p,\rho,{\cal B}}^n$
has the form
$X\otimes 1_{{\cal V}_{np}}$,
with an positive semidefinite matrix $X$ on $\mathcal{U}_{np,\rho}$
satisfying
$\Tr X=
\frac{1}{d_{np}'}
\Tr (1_{p,\rho,{\cal B}}^n
\int_{{\cal S}^c(p,\rho,\epsilon)}
\sigma^{\otimes n}
\mu (d \sigma) )$
because 
$d_{np}'=\dim {\cal V}_{np}$.
Hence, we have
\begin{align}
d_{np}' X \le 
1_{\mathcal{U}_{np,\rho}}
\Tr (1_{p,\rho,{\cal B}}^n
\int_{{\cal S}^c(p,\rho,\epsilon)}
\sigma^{\otimes n}
\mu (d \sigma) ).\Label{ZLP7}
\end{align}
Thus, we have
\begin{align}
&d_{np}' 
1_{p,\rho,{\cal B}}^n
\Big(\int_{{\cal S}^c(p,\rho,\epsilon)}
\sigma^{\otimes n}
\mu (d \sigma) \Big)1_{p,\rho,{\cal B}}^n
\notag\\
=&d_{np}' X \otimes 1_{{\cal V}_{np}} \notag\\
\stackrel{(a)}{\le} &
1_{\mathcal{U}_{np,\rho}} \otimes 1_{{\cal V}_{np}} 
\Tr (1_{p,\rho,{\cal B}}^n
\int_{{\cal S}^c(p,\rho,\epsilon)}
\sigma^{\otimes n}
\mu (d \sigma) )
\notag\\
\stackrel{(b)}{=} &
1_{p,\rho,{\cal B}}^n \Tr (1_{p,\rho,{\cal B}}^n
\int_{{\cal S}^c(p,\rho,\epsilon)}
\sigma^{\otimes n}
\mu (d \sigma) )
\notag\\
= &
1_{p,\rho,{\cal B}}^n 
\int_{{\cal S}^c(p,\rho,\epsilon)}
\Tr (1_{p,\rho,{\cal B}}^n
\sigma^{\otimes n})
\mu (d \sigma) 
\notag\\
\stackrel{(c)}{\le} & 1_{p,\rho,{\cal B}}^n
\Big( \int_{{\cal S}^c(p,\rho,\epsilon)}
e^{-n \epsilon} 
\mu (d \sigma) \Big)
= 
1_{p,\rho,{\cal B}}^n
e^{-n \epsilon},
\Label{BR5T}
\end{align}
where 
$(a)$ follows from \eqref{ZLP7},
$(b)$ follows from \eqref{DU1V},
and $(c)$ follows from \eqref{ZBHP}.

Hence, we have
\begin{align}
&d_{np} d_{np}' 
1_{p,\rho,{\cal B}}^n
\Big( \int_{{\cal S}_{{\cal B}}(p,\rho,\epsilon)}
\sigma^{\otimes n}
\mu (d \sigma) \Big)1_{p,\rho,{\cal B}}^n
\notag\\
=& d_{np} d_{np}' 1_{p,\rho,{\cal B}}^n
\Big(\int_{{\cal S}_{{\cal B}}(p)}
\sigma^{\otimes n} \mu (d \sigma)\Big)1_{p,\rho,{\cal B}}^n
\notag\\
&- d_{np} d_{np}' 1_{p,\rho,{\cal B}}^n
\Big(\int_{{\cal S}_{{\cal B}}(p)\cap {\cal S}_{{\cal B}}(p,\rho,\epsilon)^c}
\sigma^{\otimes n} \mu (d \sigma)\Big)1_{p,\rho,{\cal B}}^n
\notag\\
\ge &
d_{np} d_{np}' 1_{p,\rho,{\cal B}}^n
\Big( \int_{{\cal S}_{{\cal B}}(p,\rho,\epsilon)}
\sigma^{\otimes n}
\mu (d \sigma) \Big)1_{p,\rho,{\cal B}}^n
\notag\\
&- d_{np} d_{np}' 1_{p,\rho,{\cal B}}^n
\Big(\int_{{\cal S}_{{\cal B}}(p,\rho,\epsilon)^c}
\sigma^{\otimes n} \mu (d \sigma)\Big)1_{p,\rho,{\cal B}}^n
\notag\\
\ge &
\Big((n+d)^{-\frac{(d+2)(d-1)}{2}} 
-d_{np} e^{-n \epsilon }
\Big)
1_{p,\rho,{\cal B}}^n.
\Label{BR6T}
\end{align}
where the final inequality follows from \eqref{BR4T} and \eqref{BR5T}.
Thus, 
since $\int_{{\cal S}_{{\cal B}}(p,\rho,\epsilon)}
\sigma^{\otimes n} \mu (d \sigma)$ is commutative with 
$1_{p,\rho,{\cal B}}^n$,
we have
\begin{align}
& \int_{{\cal S}_{{\cal B}}(p,\rho,\epsilon)}
\sigma^{\otimes n} \mu (d \sigma)
\notag\\
\ge &
1_{p,\rho,{\cal B}}^n
\Big( \int_{{\cal S}_{{\cal B}}(p,\rho,\epsilon)}
\sigma^{\otimes n}
\mu (d \sigma) \Big)1_{p,\rho,{\cal B}}^n
\notag\\
\ge &
\frac{1}{d_{np} d_{np}' }\Big((n+d)^{-\frac{(d+2)(d-1)}{2}} 
-d_{np} e^{-n \epsilon }
\Big)
1_{p,\rho,{\cal B}}^n
\notag\\
= &
\frac{d_{np,\rho}}{d_{np}}
\Big((n+d)^{-\frac{(d+2)(d-1)}{2}} 
-d_{np} e^{-n \epsilon}
\Big)
\rho_{n,{\cal B}}(p,\rho)
\Label{BR6T},
\end{align}
the final equation follows from \eqref{DV2}.
Then, dividing both sides of \eqref{BR6T} 
by $\mu({\cal S}_{{\cal B}}(p,\rho,\epsilon))$, we obtain \eqref{BR8T}.
\end{proof}

The notations introduced in this section are summarized 
as Tables \ref{notation-maps}, \ref{notation-sets}, and \ref{notation-q}.

\begin{table}[t]
\caption{Notations of maps}
\label{notation-maps}
\begin{center}
\begin{tabular}{|c|l|c|}
\hline
Symbol &Description &Eq. number    \\
\hline
${\cal E}_{{\cal B}}$ &
Pinching map based on basis $\mathcal{B}$ & Eq. \eqref{DX2} 
\\ \hline
$\overline{\cal E}_{\cal B}$ & Map & Eq. \eqref{NB3} \\
\hline
${\cal E}_{n,{\cal B}}$ & TP-CP map & Eq. \eqref{DV5}
\\ \hline
\multirow{2}{*}{${\cal E}_{n,{\cal B},\rho_n}$} & 
Pinching map based on 
& \multirow{2}{*}{Eq. \eqref{BD9}} \\
&permutation-invariant state $\rho_n$ &
\\ \hline
\end{tabular}
\end{center}
\end{table}

\begin{table}[t]
\caption{Notations of sets}
\label{notation-sets}
\begin{center}
\begin{tabular}{|c|l|c|}
\hline
\multirow{2}{*}{$\mathcal{S}[\mathcal{B}]$}
&Diagonal states for
&\multirow{2}{*}{Eq. \eqref{DX3B}} \\
&the basis $\mathcal{B}$ &
\\ \hline
\multirow{2}{*}{$\mathcal{S}_n[\mathcal{B}]$} &
Set of empirical states 
& \multirow{2}{*}{Eq. \eqref{DX3}} \\
& under the basis $\mathcal{B}$ & \\
\hline
\multirow{2}{*}{$Y_d^n$} & Set of Young indices & \\
&with length $n$ and depth $d$
& \\ \hline
\multirow{3}{*}{$\mathcal{P}_d$} &
Set of probability distributions 
&  \\
&$(p_j)_{j=1}^d$ with condition & \\
& $p_1\le p_2 \le \ldots \le p_d$ & 
 \\
\hline
\multirow{3}{*}{$\mathcal{P}_d^n$} &
Set of probability distributions 
&  \\
&$(p_j)_{j=1}^d\in \mathcal{P}_d^n$ with condition & \\
&
$p_j$ is an integer multiple of $1/n$& 
 \\
 \hline
$\mathcal{R}[\mathcal{B}]$ &
Subset of $\mathcal{P}_d\times \mathcal{S}[\mathcal{B}]$
 & Eq. \eqref{DV3}
\\ \hline
$\mathcal{R}_n[\mathcal{B}]$ & 
Subset of $\mathcal{P}_d^n \times \mathcal{S}_n[\mathcal{B}]$
& Eq. \eqref{DV4}
\\ \hline
$S_{\rho,r}$ & 
Subset of $\mathcal{R}[\mathcal{B}]$ & Eq. \eqref{DV7}
\\ \hline
${\cal S}_{{\cal B}}(p)$ & Subset of ${\cal S}({\cal H})$& Eq. \eqref{AU1} 
\\ \hline
${\cal S}_{{\cal B}}(p,\rho,\epsilon)$ &Subset of ${\cal S}_{{\cal B}}(p)$
& Eq. \eqref{AU2}
\\ \hline
${\cal S}_{{\cal B}}^c(p,\rho,\epsilon)$ &
Subset of ${\cal S}_{{\cal B}}(p)$ & Eq. \eqref{AU3}
\\ \hline
\end{tabular}
\end{center}
\end{table}

\begin{table}[t]
\caption{Other notations of quantum system}
\label{notation-q}
\begin{center}
\begin{tabular}{|c|l|c|}
\hline
Symbol &Description &Eq. number    \\
\hline
$e_{\mathcal{B}}(x^n)$ &
Empirical state 
$\sum_{j=1}^n \frac{1}{n}|v_{x_j}\rangle \langle v_{x_j}|$& 
 \\ \hline
\multirow{2}{*}{$p(\rho)$} &
Element of $\mathcal{P}_d$ composed &
\\ 
& of eigenvalues of $\rho$ &
\\ \hline
$p \succ p'$ &
Majorization relation
& Eq. \eqref{DX5}
\\ \hline
\multirow{2}{*}{$\rho_{n,{\cal B}}(p,\rho)$} &
Completely mixed state 
& \multirow{2}{*}{Eq. \eqref{DV2}} \\
&on ${\cal U}_{np,\rho}\otimes {\cal V}_{np}$&
\\ \hline
$D_{\mathcal{B}}((p',\rho') \|\rho)$ & & Eq. \eqref{DV6}
\\ \hline
\multirow{2}{*}{$\mu$} & Invariant probability measure 
& 
\\ 
& on ${\cal S}_{{\cal B}}(p)$ &
\\ \hline
\end{tabular}
\end{center}
\end{table}

\section{Quantum generalization:
Proof of Theorem \ref{THC4}}\Label{S7}
The aim of this section is to show Theorem \ref{THC4}.
In this section, we use the following condition for 
a sequence of subsets ${\cal F}_n \subset {\cal S}({\cal H}^{\otimes n})$ instead of the condition (B2) in the beginning.
The following condition can be considered as a quantum version of the condition (D1).

\begin{description}
\item[(D2)]
For any element $\sigma \notin {\cal F}_1$ and $\epsilon>0$, 
there exists a sequence $\eta_n \in \mathcal{R}_n[\mathcal{B}]$ such that
$ \eta_n \to \overline{\cal E}_{\cal B}(\sigma)=( p(\sigma),{\cal E}_{\cal B}(\sigma)  )
\in \mathcal{R}[\mathcal{B}]$ and
\begin{align}
\liminf_{n\to \infty}
-\frac{1}{n}\log \beta_\epsilon
(\rho_{n,{\cal B}}(\eta_n)
\| {\cal E}_{n,{\cal B}}({\cal F}_n))> 0.\Label{NM9}
\end{align}
\end{description}

The structure of this section is illustrated by Fig. \ref{D1}.
Although the aim of this section is the derivation of 
\eqref{BNAT} from the conditions (C1), (C2), and (B2)
to show Theorem \ref{THC4},
we derive \eqref{BNAT} from the conditions (C1), (C2), and (D2) at the fisrt step.
Later, we derive the condition (D2) from 
the conditions (C1), (C2), and (B2).

\begin{figure}[tbhp]
\begin{center}
\includegraphics[scale=0.45]{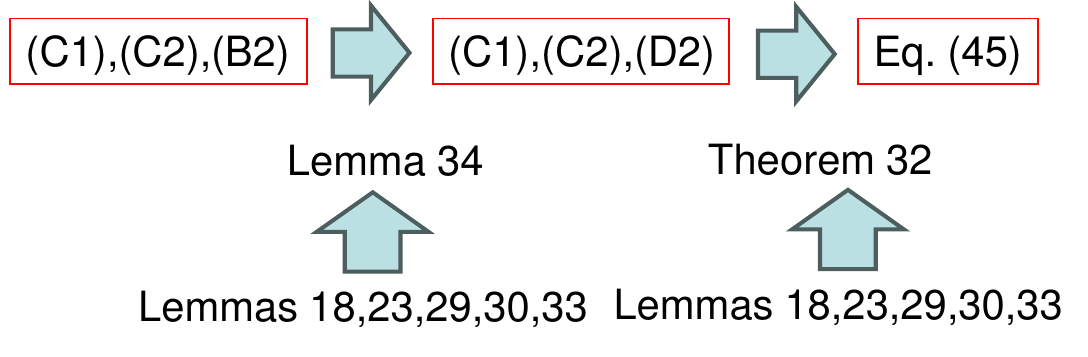}
\end{center}
\caption{Organization of Section \ref{S7}}
\Label{D2}
\end{figure}

As a quantum extension of Theorem \ref{THY},
we have the following theorem, which can be considered as a meta theorem proving Theorem \ref{THC4}. 
\begin{theorem}\Label{THYQ}
Assume that a sequence of subsets ${\cal F}_n \subset {\cal S}({\cal H}^{\otimes n})$
satisfies the conditions (C1) and (C2).
Also, 
given a subset ${\cal L} \subset 
\mathcal{R}[\mathcal{B}]$,
we assume that
any sequence $\eta_n \in \mathcal{R}_n[\mathcal{B}]$ with
$ \eta_n \to \eta \notin {\cal L}$ satisfies
\begin{align}
\liminf_{n\to \infty}
-\frac{1}{n}\log \beta_\epsilon
(\rho_{n,{\cal B}}(\eta_n)\| {\cal E}_{n,{\cal B}}({\cal F}_n))> 0.\Label{BNA}
\end{align}
Then, any state $\sigma\in \mathcal{S}[\mathcal{B}]$ satisfies
\begin{align}
&\liminf_{n\to \infty}-\frac{1}{n}\log \beta_\epsilon({\cal F}_n\|\sigma^{\otimes n})\notag \\
\ge &\liminf_{n\to \infty}-\frac{1}{n}\log \beta_\epsilon(
{\cal E}_{n,{\cal B}}({\cal F}_n)\|
\sigma^{\otimes n})
\ge 
D_{\mathcal{B}}({\cal L}\|\sigma).
\Label{NHAT}
\end{align}
\end{theorem}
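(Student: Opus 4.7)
I would model the proof on that of Theorem~\ref{THY}, replacing classical types by the quantum empirical labels $(p,\rho)\in\mathcal{R}_n[\mathcal{B}]$ and their projectors $1^n_{p,\rho,\mathcal{B}}$, which form a PVM summing to $I$. The strategy splits into the two inequalities of \eqref{NHAT}. The left inequality is a data-processing statement built on a key identity: for $\sigma\in\mathcal{S}[\mathcal{B}]$ one checks directly from \eqref{DV5} that ${\cal E}_{n,{\cal B}}(\sigma^{\otimes n})=\sigma^{\otimes n}$, because $\sigma^{\otimes n}$ is already a linear combination of the $1^n_{p,\rho,\mathcal{B}}$ with weights $\prod_j p_j^{n\rho_j}$ depending only on~$\rho$. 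Given this, if $\Pi\in{\cal D}_{\epsilon,{\cal E}_{n,{\cal B}}({\cal F}_n)}$ then its Heisenberg dual ${\cal E}_{n,{\cal B}}^{*}(\Pi)$ lies in ${\cal D}_{\epsilon,{\cal F}_n}$ (by unitality of ${\cal E}_{n,{\cal B}}^{*}$) and satisfies $\Tr[{\cal E}_{n,{\cal B}}^{*}(\Pi)\sigma^{\otimes n}]=\Tr[\Pi\sigma^{\otimes n}]$, producing $\beta_\epsilon({\cal F}_n\|\sigma^{\otimes n})\le\beta_\epsilon({\cal E}_{n,{\cal B}}({\cal F}_n)\|\sigma^{\otimes n})$.

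For the second inequality I would follow the two-step structure behind Theorem~\ref{THY}. By Lemma~\ref{LG7} and the fact that ${\cal E}_{n,{\cal B}}({\cal F}_n)$ consists of permutation-invariant states diagonal in $\{1^n_{p,\rho,\mathcal{B}}\}_{(p,\rho)}$, one may restrict to tests of the form $\Pi=\sum_{(p,\rho)}c(p,\rho)\,1^n_{p,\rho,\mathcal{B}}$ with $c(p,\rho)\in[0,1]$. For an arbitrary compact ${\cal G}\subset{\cal L}^c$ I set $\Pi_n:=\sum_{(p,\rho)\in{\cal G}\cap\mathcal{R}_n[\mathcal{B}]}1^n_{p,\rho,\mathcal{B}}$ and claim $\max_{\tilde\rho\in{\cal E}_{n,{\cal B}}({\cal F}_n)}\Tr[\Pi_n\tilde\rho]\to 0$. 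Supposing the contrary, some permutation-invariant $\sigma_n'\in{\cal F}_n$ satisfies $\Tr[\Pi_n{\cal E}_{n,{\cal B}}(\sigma_n')]\ge c>0$ along a subsequence; since $|\mathcal{R}_n[\mathcal{B}]|$ grows only polynomially, some $\eta_n=(p_n,\rho_n)\in{\cal G}\cap\mathcal{R}_n[\mathcal{B}]$ carries a $1/\mathrm{poly}(n)$ share, and compactness of ${\cal G}$ furnishes $\eta_n\to\eta\in{\cal G}\subset{\cal L}^c$. A test of the restricted form accepting $\rho_{n,{\cal B}}(\eta_n)$ with probability $\ge 1-\epsilon$ must have $c(\eta_n)\ge 1-\epsilon$, so $\beta_\epsilon(\rho_{n,{\cal B}}(\eta_n)\|{\cal E}_{n,{\cal B}}({\cal F}_n))\ge(1-\epsilon)\Tr[1^n_{p_n,\rho_n,\mathcal{B}}{\cal E}_{n,{\cal B}}(\sigma_n')]$ decays only polynomially, contradicting \eqref{BNA}.

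To conclude the second inequality I apply Proposition~\ref{LL1}: writing $I-\Pi_n=\sum_{(p,\rho)\in\mathcal{R}_n[\mathcal{B}]\cap{\cal G}^c}1^n_{p,\rho,\mathcal{B}}$, the quantum Sanov bound \eqref{IOY} yields $\liminf_n-\frac{1}{n}\log\Tr[\sigma^{\otimes n}(I-\Pi_n)]\ge\inf_{(p,\rho)\in{\cal G}^c}D_{\mathcal{B}}((p,\rho)\|\sigma)$. Letting ${\cal G}$ exhaust ${\cal L}^c$ and using lower semicontinuity of $D_{\mathcal{B}}(\cdot\|\sigma)$ drives the right-hand side to $D_{\mathcal{B}}({\cal L}\|\sigma)$, completing \eqref{NHAT}.

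\textbf{Main obstacle.} The delicate step is the reduction to tests diagonal in $\{1^n_{p,\rho,\mathcal{B}}\}$: permutation invariance alone only makes a test block-diagonal with respect to the Schur-Weyl decomposition $\bigoplus_\lambda\mathcal{U}_\lambda\otimes\mathcal{V}_\lambda$, and passing to the finer weight-vector refinement $\mathcal{U}_{np}=\bigoplus_\rho\mathcal{U}_{np,\rho}$ relies on the pinching argument underlying Lemma~\ref{LB1} (via the map ${\cal E}_{n,{\cal B},\rho_n}$). Once this reduction is in place, the remaining compactness-plus-Sanov argument is a faithful quantum translation of the classical proof of Theorem~\ref{THY}.
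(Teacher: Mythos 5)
Your proposal is correct and follows essentially the same route as the paper's proof: the same test $\Pi_n=\sum_{\eta\in{\cal G}\cap\mathcal{R}_n[\mathcal{B}]}1^n_{\eta,\mathcal{B}}$ for an arbitrary compact ${\cal G}\subset{\cal L}^c$, the same contradiction argument combining the polynomial bound on $|\mathcal{R}_n[\mathcal{B}]|$ with compactness of ${\cal G}$ and the hypothesis \eqref{BNA}, and the same appeal to Proposition~\ref{LL1} for the type-II exponent. The only remarks worth adding are that the step you flag as the ``main obstacle'' is in fact automatic here---every element of ${\cal E}_{n,{\cal B}}({\cal F}_n)$ is, by the definition \eqref{DV5}, a combination $\sum_\eta c_\eta 1^n_{\eta,\mathcal{B}}$ that is constant on each block, so pinching the test by the PVM $\{1^n_{\eta,\mathcal{B}}\}_\eta$ preserves all error probabilities without any appeal to permutation invariance or the Schur--Weyl refinement---and that your explicit data-processing derivation of the first inequality in \eqref{NHAT} (via ${\cal E}_{n,{\cal B}}(\sigma^{\otimes n})=\sigma^{\otimes n}$ and the unital adjoint) supplies a step the paper leaves implicit.
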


\begin{proof}
{\bf Step 1:}
We choose an arbitrary compact set ${\cal G}$ included in
${\cal L}^c$.
Any state $\rho_n \in {\cal E}_{n,{\cal B}}({\cal F}_{n})$
has the form 
$\sum_{\rho \in \mathcal{R}_n[\mathcal{B}]}c_\rho 
1^n_{\rho,\mathcal{B}}$ with 
$c_\rho \in [0,1]$.
Hence, we choose our test with the form
$\Pi_n:= \sum_{ \eta \in {\cal G}\cap \mathcal{R}_n[\mathcal{B}] }
1_{\eta,\mathcal{B}}^n$.
Then, we show the following relation by contradiction;
\begin{align}
\max_{\rho_n \in {\cal E}_{n,{\cal B}}({\cal F}_n)} 
\Tr \Pi_n \rho_n \to 0.
\Label{NI8Q}
\end{align}

Assume that \eqref{NI8Q} does not hold.
There exists a sequence of permutation-invariant states
$\rho_n \in {\cal E}_{n,{\cal B}}({\cal F}_{n})$ such that
\begin{align}
c:=\limsup_{n\to \infty} \Tr (\Pi_n \rho_n) >0.
\Label{NI2Q}
\end{align}
We choose a subsequence $\{n_k\}$ such that
\begin{align}
c=\lim_{k\to \infty} c_{n_k} ,\quad
c_n:= \Tr (\Pi_{n} \rho_{n}) .
\Label{NI3Q}
\end{align}
There exists an element $\eta_n \in {\cal G}\cap \mathcal{R}_n[\mathcal{B}]$
such that
\begin{align}
\Tr (1_{\eta_n,{\cal B}}^n \rho_{n} )
\ge \frac{c_n}{|{\cal G}\cap \mathcal{R}_n[\mathcal{B}]|}
\ge c_n (n+1)^{-(d-1)},
\Label{NI4Q}
\end{align}
where the final inequality follows
from the relation
$|{\cal G}\cap \mathcal{R}_n[\mathcal{B}]|
\le |\mathcal{R}_n[\mathcal{B}]| \le (n+1)^{(d-1)}$.
Since ${\cal G}$ is compact, 
there exists a subsequence $\{m_k\}$ of $\{n_k\}$ such that
$\eta_{m_k}$ converges to an element $\eta \in \mathcal{R}[\mathcal{B}]$.

Now, we consider the discrimination between 
$\rho_{n,{\cal B}}(\eta_n)$
and ${\cal E}_{n,{\cal B}}({\cal F}_n)$,
where $\rho_{n,{\cal B}}(\eta_n)$
and any element of ${\cal E}_{n,{\cal B}}({\cal F}_n)$
have a form $\sum_{\eta \in \mathcal{R}_n[\mathcal{B}]} 
c_{\eta} 1^n_{\eta,{\cal B}}$.
To accept $\rho_{n,{\cal B}}(\eta_n)$ with probability $1-\epsilon$,
the test $\Pi_n$ needs to be $(1-\epsilon)1^n_{\eta_n,{\cal B}}$.

Then, we have
\begin{align}
&\lim_{k\to \infty}-\frac{1}{m_k}\log \Tr (\Pi_{m_k} \rho_{m_k})\notag\\
=&
\lim_{k\to \infty}-\frac{1}{m_k}(\log (\Tr (1^{m_k}_{q_{m_k},{\cal B}}\rho_{m_k}))
+\log (1-\epsilon)) \notag\\
\le &
\lim_{k\to \infty}-\frac{1}{m_k}(\log c_n (m_k+1)^{-2(d-1)}
+\log (1-\epsilon))
=0,
\end{align}
which contradicts with \eqref{BNA}.
Hence, we obtain \eqref{NI8Q}.

{\bf Step 2:}
We have
\begin{align}
\Tr (\sigma^{\otimes n} (I-\Pi_n))
=\sum_{\eta \in {\cal G}^c\cap \mathcal{R}_n[\mathcal{B}]} 
\Tr (1^n_{\eta,{\cal B}} \sigma^{\otimes n}).
\end{align}
Thus, the quantum Sanov theorem for quantum empirical distributions
(Proposition \ref{LL1})
implies
\begin{align}
&\liminf_{n\to \infty}\frac{-1}{n}\log 
\Tr (\sigma^{\otimes n} (I-\Pi_n)) \notag\\
=& \inf_{\eta \in {\cal G}^c}D_{\mathcal{B}}(\eta \|\sigma )
=D_{\mathcal{B}}({\cal G}^c \|\sigma ).\Label{MNT5Q}
\end{align}
The combination of \eqref{NI8Q} and \eqref{MNT5Q}
implies 
\begin{align}
\liminf_{n\to \infty}-\frac{1}{n}\log \beta_\epsilon
({\cal E}_{n,{\cal B}}({\cal F}_n)\|\sigma^{\otimes n})
\ge D_{\mathcal{B}}({\cal G}^c \|\sigma ).
\end{align}
Since ${\cal G}$ is an arbitrary compact set contained in ${\cal L}^c$, we have
\begin{align}
\liminf_{n\to \infty}-\frac{1}{n}\log \beta_\epsilon
({\cal E}_{n,{\cal B}}({\cal F}_n)\|\sigma^{\otimes n})
\ge D_{\mathcal{B}}({\cal L}\|\sigma).
\end{align}
\end{proof}

Using Theorem \ref{THYQ},
we have the following theorem.

\begin{theorem}\Label{THCQ}
When a sequence of subsets ${\cal F}_n \subset {\cal S}({\cal H}^{\otimes n})$
satisfies the conditions (C1), (C2), and (D2),
any state $\sigma\in \mathcal{S}[\mathcal{B}]$ satisfies
\eqref{BNAT}.
\end{theorem}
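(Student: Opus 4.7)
The plan is to emulate the classical Corollary \ref{THC} in the quantum setting: instantiate the meta-theorem \ref{THYQ} at a choice of $\mathcal{L}$ coming from $\mathcal{F}_1$, evaluate the resulting reverse-relative-entropy lower bound, and then close the loop via Theorem \ref{BFG8}. Concretely, I would take $\mathcal{L} = \overline{\mathcal{E}}_{\mathcal{B}}(\mathcal{F}_1) = \{(p(\tau), \mathcal{E}_\mathcal{B}(\tau)) : \tau \in \mathcal{F}_1\}$ inside $\mathcal{R}[\mathcal{B}]$. Condition (D2) is engineered to match hypothesis (BNA) of Theorem \ref{THYQ} at this $\mathcal{L}$: any $\eta \notin \overline{\mathcal{E}}_\mathcal{B}(\mathcal{F}_1)$ in $\mathcal{R}[\mathcal{B}]$ is realised as $\overline{\mathcal{E}}_\mathcal{B}(\sigma)$ for some $\sigma \notin \mathcal{F}_1$ (any $(p',\rho')$ with $p' \succ p(\rho')$ admits a preimage), and (D2) supplies the required approximating sequence $\eta_n$ with strictly-positive exponent. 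Theorem \ref{THYQ} therefore delivers
\begin{align*}
\liminf_{n\to\infty} -\tfrac{1}{n}\log \beta_\epsilon(\mathcal{F}_n \| \sigma^{\otimes n}) \;\geq\; D_\mathcal{B}(\overline{\mathcal{E}}_\mathcal{B}(\mathcal{F}_1) \,\|\, \sigma).
\end{align*}

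A direct calculation using definition \eqref{DV6}, together with the diagonality of $\sigma \in \mathcal{S}[\mathcal{B}]$ (whence $\Tr[\tau\log\sigma] = \Tr[\mathcal{E}_\mathcal{B}(\tau)\log\sigma]$), collapses $D_\mathcal{B}((p(\tau),\mathcal{E}_\mathcal{B}(\tau))\|\sigma) = D(\mathcal{E}_\mathcal{B}(\tau)\|\sigma) + S(\mathcal{E}_\mathcal{B}(\tau)) - S(\tau)$ down to $D(\tau\|\sigma)$, so minimising over $\tau \in \mathcal{F}_1$ identifies the right-hand side with $D(\mathcal{F}_1\|\sigma)$. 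To promote this $\liminf$ bound into the full equalities in \eqref{BNAT}, I would apply Theorem \ref{BFG8} after first passing to the pinched family $\mathcal{E}_{n,\mathcal{B}}(\mathcal{F}_n)$: under (C2) every pinched state is a positive combination of block projectors $1^n_{p,\rho,\mathcal{B}} = 1_{\mathcal{U}_{np,\rho}} \otimes 1_{\mathcal{V}_{np}}$, and $\sigma^{\otimes n}$ commutes with each because it is invariant under the maximal torus stabilising $\sigma$ and so preserves every weight space $\mathcal{U}_{np,\rho}$. Data processing (with $\mathcal{E}_{n,\mathcal{B}}(\sigma^{\otimes n}) = \sigma^{\otimes n}$) yields $D(\mathcal{F}_n\|\sigma^{\otimes n}) \geq D(\mathcal{E}_{n,\mathcal{B}}(\mathcal{F}_n)\|\sigma^{\otimes n})$ and $\beta_\epsilon(\mathcal{F}_n\|\sigma^{\otimes n}) \leq \beta_\epsilon(\mathcal{E}_{n,\mathcal{B}}(\mathcal{F}_n)\|\sigma^{\otimes n})$, and the $\liminf$ bound transfers to the pinched family via the intermediate inequality of \eqref{NHAT}, so one may sandwich the two rate quantities together.

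The hardest step, I anticipate, will be verifying the full hypothesis $\liminf \tfrac{-1}{n}\log\beta \geq \limsup \tfrac{1}{n} D$ required by Theorem \ref{BFG8}, since (C1), (C2), (D2) contain no tensor-product closure such as (A4) and the companion inequality $\limsup \tfrac{1}{n} D(\mathcal{E}_{n,\mathcal{B}}(\mathcal{F}_n)\|\sigma^{\otimes n}) \leq D(\mathcal{F}_1\|\sigma)$ is not immediate. My intended remedy is to combine the $\liminf$ estimate from the second paragraph with the standard direct-Stein converse $\limsup \tfrac{-1}{n}\log\beta \leq \liminf \tfrac{1}{n}D$, and with a matching achievability built from Lemma \ref{LB2} and Proposition \ref{LL1} applied at an optimiser $\tau^{\ast} \in \mathcal{F}_1$ of $D(\mathcal{F}_1\|\sigma)$: these tools approximate $\rho_{n,\mathcal{B}}(p(\tau^{\ast}),\mathcal{E}_\mathcal{B}(\tau^{\ast}))$ by a mixture over $\mathcal{S}_\mathcal{B}(p(\tau^{\ast}),\mathcal{E}_\mathcal{B}(\tau^{\ast}),\epsilon)$ and thereby furnish an asymptotically optimal test state whose relative entropy against $\sigma^{\otimes n}$ matches $n\,D(\mathcal{F}_1\|\sigma) + o(n)$. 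Once Theorem \ref{BFG8} is triggered in the pinched setting, the identification of both limits with $D(\mathcal{F}_1\|\sigma)$ propagates back to $\mathcal{F}_n$ via the sandwich above.
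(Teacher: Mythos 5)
Your overall route coincides with the paper's: instantiate Theorem \ref{THYQ} at ${\cal L}=\overline{\cal E}_{\cal B}({\cal F}_1)$, use \eqref{DV6} together with the diagonality of $\sigma$ to identify $D_{\mathcal{B}}(\overline{\cal E}_{\cal B}({\cal F}_1)\|\sigma)=D({\cal F}_1\|\sigma)$, pass to the pinched family ${\cal E}_{n,{\cal B}}({\cal F}_n)$, invoke Theorem \ref{BFG8} there, and transfer back to ${\cal F}_n$ via Lemma \ref{LB1} and the information-processing converse. All of that matches \eqref{NH1AQ}--\eqref{NHUQ4}.

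The genuine gap is in your remedy for the step you correctly single out as hardest, namely $\limsup_{n}\frac{1}{n}D({\cal E}_{n,{\cal B}}({\cal F}_n)\|\sigma^{\otimes n})\le D({\cal F}_1\|\sigma)$. This is an upper bound on a minimum taken over the \emph{first} argument restricted to ${\cal E}_{n,{\cal B}}({\cal F}_n)$, so it can only be established by exhibiting, for each $n$, an actual element of ${\cal F}_n$ whose pinched relative entropy to $\sigma^{\otimes n}$ is $nD({\cal F}_1\|\sigma)+o(n)$. The objects your remedy produces --- $\rho_{n,{\cal B}}(p(\tau^{*}),{\cal E}_{\cal B}(\tau^{*}))$ and mixtures $\int(\sigma')^{\otimes n}\mu(d\sigma')$ over ${\cal S}_{{\cal B}}(p(\tau^{*}),{\cal E}_{\cal B}(\tau^{*}),\epsilon)$ obtained from Lemma \ref{LB2} and Proposition \ref{LL1} --- are never shown to lie in ${\cal E}_{n,{\cal B}}({\cal F}_n)$ under (C1), (C2), (D2); in the paper those two results serve the opposite purpose, lower-bounding the \emph{second} argument when deriving (D2) from (B2) in Lemma \ref{BF28}. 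Likewise, combining the information-processing converse $\limsup_n(-\frac{1}{n}\log\beta_\epsilon)\le\limsup_n\frac{1}{n}D$ with your $\liminf$ estimate only yields $D({\cal F}_1\|\sigma)\le\limsup_n\frac{1}{n}D({\cal F}_n\|\sigma^{\otimes n})$, which is the wrong direction for the sandwich. The paper closes this step in \eqref{NH1AQ} by the chain $D_{\mathcal{B}}(\overline{\cal E}_{\cal B}({\cal F}_1)\|\sigma)=\lim_{n}\frac{1}{n}D({\cal E}_{n,{\cal B}}({\cal F}_1^{\otimes n})\|\sigma^{\otimes n})\ge\limsup_{n}\frac{1}{n}D({\cal E}_{n,{\cal B}}({\cal F}_n)\|\sigma^{\otimes n})$, i.e., by comparing ${\cal F}_n$ against the i.i.d.\ family built from elements of ${\cal F}_1$ rather than against externally constructed test states; your argument needs that comparison (or some other supply of near-optimal elements of ${\cal F}_n$) to go through.
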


\begin{proof}
Applying Theorem \ref{THYQ} to the case when ${\cal L}$ is 
$\overline{\cal E}_{\cal B}({\cal F}_1)$,
we have
\begin{align}
&\liminf_{n\to \infty}-\frac{1}{n}\log \beta_\epsilon
({\cal E}_{n,{\cal B}}({\cal F}_n)\|
\sigma^{\otimes n})
\ge 
D_{\mathcal{B}}(\overline{\cal E}_{\cal B}({\cal F}_1)\|\sigma)
 \notag\\
=&
\lim_{n\to \infty}\frac{1}{n}D
({\cal E}_{n,{\cal B}}({\cal F}_1^{\otimes n})\|
\sigma^{\otimes n})\notag \\
\ge &
\limsup_{n\to \infty}\frac{1}{n}D
({\cal E}_{n,{\cal B}}({\cal F}_n)\|
\sigma^{\otimes n}).
\Label{NH1AQ}
\end{align}
Then, Theorem \ref{BFG8} with $\sigma=\sigma^{\otimes n}$ implies 
\begin{align}
&\lim_{n\to \infty}-\frac{1}{n}\log \beta_\epsilon
({\cal E}_{n,{\cal B}}({\cal F}_n)\|
\sigma^{\otimes n})
=D_{\mathcal{B}}(\overline{\cal E}_{\cal B}({\cal F}_1)\|\sigma)\notag\\
=&D({\cal F}_1\|\sigma)
=
\lim_{n\to \infty}\frac{1}{n}D
({\cal E}_{n,{\cal B}}({\cal F}_n)\|
\sigma^{\otimes n}).
\Label{NHUQ}
\end{align}
Also, we have
\begin{align}
&\liminf_{n\to \infty}-\frac{1}{n}\log 
\beta_\epsilon({\cal F}_n\|\sigma^{\otimes n}) \notag\\
\ge &\lim_{n\to \infty}-\frac{1}{n}\log \beta_\epsilon
({\cal E}_{n,{\cal B}}({\cal F}_n)\|
\sigma^{\otimes n}).\Label{NHUQ2}
\end{align}

Since
$\frac{1}{n}D({\cal F}_n\|\sigma^{\otimes n}) 
\ge
\frac{1}{n}D
({\cal E}_{n,{\cal B}}({\cal F}_n)\|
\sigma^{\otimes n})$, 
Lemma \ref{LB1} implies 
\begin{align}
\lim_{n\to \infty}\frac{1}{n}D({\cal F}_n\|\sigma^{\otimes n}) 
=
\lim_{n\to \infty}\frac{1}{n}D
({\cal E}_{n,{\cal B}}({\cal F}_n)\|
\sigma^{\otimes n}).\Label{NHUQ3}
\end{align}

Information processing inequality implies
\begin{align}
&D({\cal F}_n\|\sigma^{\otimes n}) \notag\\
\ge &
\min_{0\le x\le \epsilon}
\Big(x 
\big(\log x -\log \big(1-\beta_\epsilon({\cal F}_n\|\sigma^{\otimes n})\big)\big) \notag\\
&+(1-x)
\big(\log (1-x)-\log \beta_\epsilon({\cal F}_n\|\sigma^{\otimes n})
\big)\Big)\notag \\
= &
\epsilon \big(\log \epsilon -\log \big(1-\beta_\epsilon({\cal F}_n\|\sigma^{\otimes n})\big)\big) \notag\\
&+(1-\epsilon)
\big(\log (1-\epsilon)-\log \beta_\epsilon({\cal F}_n\|\sigma^{\otimes n})\big) \notag \\
= &
-h(\epsilon) 
-\epsilon \log (1-\beta_\epsilon({\cal F}_n\|\sigma^{\otimes n})) \notag\\
&-(1-\epsilon)\log \beta_\epsilon({\cal F}_n\|\sigma^{\otimes n}) \notag \\
\ge &
-h(\epsilon) 
-(1-\epsilon)\log \beta_\epsilon({\cal F}_n\|\sigma^{\otimes n}), \end{align}
where $h(x):=-x\log x -(1-x)\log(1-x)$.
Hence, we have
\begin{align}
\frac{-1}{n}
\log \beta_\epsilon({\cal F}_n\|\sigma^{\otimes n}))
\le \frac{D({\cal F}_n\|\sigma^{\otimes n}) +h(\epsilon) 
}{(1-\epsilon)n} .
\end{align}
Taking the limit, we have
\begin{align}
\limsup_{n\to \infty}\frac{-1}{n}
\log \beta_\epsilon({\cal F}_n\|\sigma^{\otimes n}))
\le 
\limsup_{n\to \infty}
\frac{1}{n} D({\cal F}_n\|\sigma^{\otimes n})\Label{NHUQ4}.
\end{align}
The combination of \eqref{NHUQ}, \eqref{NHUQ2}, \eqref{NHUQ3},
and \eqref{NHUQ4}
implies \eqref{BNAT}.
\end{proof}

To check the condition (D2), the following lemma is useful
as a quantum extension of Lemma \ref{LB89T}.

\begin{lemma}\Label{LB89}
Assume that a sequence of subsets ${\cal F}_n \subset {\cal S}({\cal H}^{\otimes n})$
satisfies the conditions (C1) and (C2).
Then, any state 
$\eta_n \in \mathcal{R}_n[\mathcal{B}]$
satisfies
\begin{align}
&-\frac{1}{n}\log \beta_\epsilon
(\rho_{n,{\cal B}}(\eta_n)
\| {\cal E}_{n,{\cal B}}({\cal F}_n)) \notag\\
\ge &
\frac{(1-\alpha)}{n}
D_\alpha({\cal E}_{n,{\cal B}}({\cal F}_{n})\| \rho_{n,{\cal B}}(\eta_n))
\Label{FG3}
\end{align}
for $\alpha \in [0,1)$.
\end{lemma}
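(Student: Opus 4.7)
The plan is to mimic the proof of Lemma \ref{LB89T} with the type projections $1^n_{p_n}$ replaced by the Schur--Weyl projections $1^n_{\eta_n,\mathcal{B}}$, and the uniform types $\rho_n(p_n)$ replaced by the maximally mixed states $\rho_{n,\mathcal{B}}(\eta_n) = \frac{1}{d_{n\eta_n}} 1^n_{\eta_n,\mathcal{B}}$, where I abbreviate $d_{n\eta_n} := d_{np,\rho} d_{np}'$ for $\eta_n = (p,\rho)$. The crucial observation is that by the definition \eqref{DV5} of ${\cal E}_{n,\mathcal{B}}$, every $\sigma = {\cal E}_{n,\mathcal{B}}(\sigma') \in {\cal E}_{n,\mathcal{B}}({\cal F}_n)$ has the block-diagonal form $\sigma = \sum_{\eta \in \mathcal{R}_n[\mathcal{B}]} c_\eta(\sigma') \, \rho_{n,\mathcal{B}}(\eta)$, where $c_\eta(\sigma') := \Tr(1^n_{\eta,\mathcal{B}} \sigma') \in [0,1]$. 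Hence $\sigma$ and $\rho_{n,\mathcal{B}}(\eta_n)$ all lie in the commutative subalgebra generated by the mutually orthogonal projections $\{1^n_{\eta,\mathcal{B}}\}_\eta$, so the hypothesis testing problem becomes effectively classical.

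Next I would verify that an optimal test is $\Pi_n = (1-\epsilon) 1^n_{\eta_n,\mathcal{B}}$. For any candidate $\Pi \in {\cal D}_{\epsilon,\rho_{n,\mathcal{B}}(\eta_n)}$, setting $t_\eta := \Tr(1^n_{\eta,\mathcal{B}} \Pi) \in [0, d_{n\eta}]$, the identities
\[
\Tr(\Pi \, \rho_{n,\mathcal{B}}(\eta_n)) = \frac{t_{\eta_n}}{d_{n\eta_n}}, \qquad \Tr(\Pi \sigma) = \sum_\eta \frac{c_\eta(\sigma') t_\eta}{d_{n\eta}}
\]
show that the feasibility condition $t_{\eta_n} \ge (1-\epsilon) d_{n\eta_n}$ together with the choice $t_\eta = 0$ for $\eta \neq \eta_n$ minimizes $\max_{\sigma' \in {\cal F}_n} \Tr(\Pi \sigma)$. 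This yields the quantum analogue of \eqref{BNC}:
\[
\beta_\epsilon(\rho_{n,\mathcal{B}}(\eta_n) \| {\cal E}_{n,\mathcal{B}}({\cal F}_n)) \le (1-\epsilon) \max_{\sigma' \in {\cal F}_n} c_{\eta_n}(\sigma').
\]

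Finally, using $\rho_{n,\mathcal{B}}(\eta_n)^{\beta} = d_{n\eta_n}^{-\beta} 1^n_{\eta_n,\mathcal{B}}$ in the sandwiched R\'enyi formula together with the commutativity above, a direct computation on the single block $\mathrm{range}(1^n_{\eta_n,\mathcal{B}})$ gives
\[
e^{(\alpha-1) D_\alpha({\cal E}_{n,\mathcal{B}}(\sigma') \| \rho_{n,\mathcal{B}}(\eta_n))} = c_{\eta_n}(\sigma')^\alpha,
\]
and hence $e^{-(1-\alpha) D_\alpha({\cal E}_{n,\mathcal{B}}({\cal F}_n) \| \rho_{n,\mathcal{B}}(\eta_n))} = \bigl(\max_{\sigma' \in {\cal F}_n} c_{\eta_n}(\sigma')\bigr)^\alpha$. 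Since $c_{\eta_n}(\sigma') \in [0,1]$ and $\alpha \in [0,1)$, we have $(1-\epsilon) \max c_{\eta_n}(\sigma') \le \max c_{\eta_n}(\sigma') \le (\max c_{\eta_n}(\sigma'))^\alpha$, and combining with the $\beta_\epsilon$ bound yields \eqref{FG3} after taking $-\frac{1}{n} \log$. The only nonroutine step is verifying that $t_\eta = 0$ for $\eta \neq \eta_n$ is without loss of generality; this holds because those blocks do not contribute to the acceptance probability on $\rho_{n,\mathcal{B}}(\eta_n)$ but may only increase $\Tr(\Pi \sigma)$ whenever $c_\eta(\sigma') > 0$, so no genuine technical obstacle arises.
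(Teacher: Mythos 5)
Your argument is correct and follows essentially the same route as the paper's proof: reduce to the block-diagonal commutative algebra spanned by the $1^n_{\eta,\mathcal{B}}$, take the test $(1-\epsilon)1^n_{\eta_n,\mathcal{B}}$, and bound $(1-\epsilon)\max_{\sigma'}c_{\eta_n}(\sigma')$ by $(\max_{\sigma'}c_{\eta_n}(\sigma'))^\alpha=e^{-(1-\alpha)D_\alpha({\cal E}_{n,{\cal B}}({\cal F}_n)\|\rho_{n,{\cal B}}(\eta_n))}$. The only cosmetic difference is that you compute the R\'enyi quantity explicitly as $c_{\eta_n}(\sigma')^\alpha$, whereas the paper reaches the same bound via the operator inequality ${\cal E}_{n,{\cal B}}(\rho_n)1^n_{\eta_n,{\cal B}}\le\rho_{n,{\cal B}}(\eta_n)1^n_{\eta_n,{\cal B}}$.
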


In fact, when the limit of RHS of \eqref{FG3} is strictly positive,
the condition (D2) holds.

\begin{proof}
To accept $\rho_{n,{\cal B}}(\eta_n)$ with probability $1-\epsilon$,
the test to support $\rho_{n,{\cal B}}(\eta_n)$ needs to be $(1-\epsilon)1_{\eta_n,{\cal B}}^n$.
Since $\Tr \rho_{n,{\cal B}}(\eta_n)1_{\eta_n,{\cal B}}^n=1$,
we have 
$\rho_{n,{\cal B}}(\eta_n)1_{\eta_n,{\cal B}}^n \ge 
{\cal E}_{n,{\cal B}}(\rho_n) 1_{\eta_n,{\cal B}}^n $.
Thus, we have
\begin{align}
& \beta_\epsilon
(\rho_{n,{\cal B}}(\eta_n)
\| {\cal E}_{n,{\cal B}}({\cal F}_n)) \notag\\
= &
\max_{\rho_n \in {\cal F}_{n}}
\Tr ({\cal E}_{n,{\cal B}}(\rho_n) (1-\epsilon)1_{\eta_n,{\cal B}}^n) \notag\\
\le &
(1-\epsilon) \max_{\rho_n \in {\cal F}_{n}}
\Tr ({\cal E}_{n,{\cal B}}(\rho_n)^{\alpha} 
\rho_{n,{\cal B}}(\eta_n)^{1-\alpha}) \notag\\
=&(1-\epsilon) e^{-(1-\alpha)D_\alpha({\cal E}_{n,{\cal B}}({\cal F}_{n})\| \rho_{n,{\cal B}}(\eta_n))}\notag \\
\le &
e^{-(1-\alpha)D_\alpha({\cal E}_{n,{\cal B}}({\cal F}_{n})\| \rho_{n,{\cal B}}(\eta_n))},
\end{align}
which implies \eqref{FG3}.
\end{proof}

Combining Lemma \ref{LB89} with Lemmas 
\ref{NK8}, \ref{Pr6-5}, \ref{LB1}, and \ref{LB2}, 
we obtain the following lemma.

\begin{lemma}\Label{BF28}
Assume that a sequence of subsets ${\cal F}_n \subset {\cal S}({\cal H}^{\otimes n})$
satisfies the conditions (C1), (C2), and (B2).
Then, the condition (D2) holds.
\end{lemma}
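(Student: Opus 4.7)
Fix $\sigma \notin \mathcal{F}_1$ and $\epsilon > 0$. The strategy is to choose $\mathcal{B}$ as an eigenbasis of $\sigma$, so that $\sigma \in \mathcal{S}[\mathcal{B}]$ and $\overline{\mathcal{E}}_{\mathcal{B}}(\sigma) = (p(\sigma),\sigma)$, and to select any sequence $\eta_n = (p_n,\rho_n) \in \mathcal{R}_n[\mathcal{B}]$ with $\eta_n \to (p(\sigma),\sigma)$ (obtained, e.g., by $1/n$-rounding the diagonal entries of $\sigma$). The goal is to show $\liminf_n -\frac{1}{n}\log \beta_\epsilon(\rho_{n,\mathcal{B}}(\eta_n) \| \mathcal{E}_{n,\mathcal{B}}(\mathcal{F}_n)) > 0$ by chaining the listed lemmas so as to reduce the quantum quantity to a classical one handled by Lemma \ref{Pr6-5}.

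The chain proceeds as follows. Lemma \ref{LB89} lower bounds the target by $\frac{1-\alpha}{n}D_\alpha(\mathcal{E}_{n,\mathcal{B}}(\mathcal{F}_n)\|\rho_{n,\mathcal{B}}(\eta_n))$ for any $\alpha \in [1/2,1)$. Lemma \ref{LB1} then strips the pinching, at a cost of $\log \overline{d}_n = O(\log n)$ which disappears after division by $n$. Fixing a small $\delta>0$, Lemma \ref{LB2} furnishes the operator inequality $\rho_{n,\mathcal{B}}(\eta_n) \le \tilde c_n^{-1}\int_{\mathcal{S}_\mathcal{B}(p_n,\rho_n,\delta)} \tau^{\otimes n}\,\mu(d\tau)$ with $\tilde c_n := c^n(p_n,\rho_n,\delta)\,\mu(\mathcal{S}_\mathcal{B}(p_n,\rho_n,\delta))$ inverse-polynomial in $n$. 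Plugging this into the sandwiched-R\'enyi version of Lemma \ref{NK8}---which holds for $\alpha \in [1/2,1)$ by operator antitonicity of $\sigma'\mapsto D_\alpha(\rho\|\sigma')$ combined with the scaling identity $D_\alpha(\rho\|k\sigma') = D_\alpha(\rho\|\sigma') - \log k$---yields $D_\alpha(\mathcal{F}_n\|\rho_{n,\mathcal{B}}(\eta_n)) \ge D_\alpha(\mathcal{F}_n\|\int \tau^{\otimes n}\mu(d\tau)) + \log\tilde c_n$, again a logarithmic loss. Data processing under $M^{\otimes n}$ (with $M$ the tomographically complete POVM from (B2)) and Lemma \ref{Pr6-5} applied to the classical sequence $(M^{\otimes n}(\mathcal{F}_n))_n$---which inherits (C1), (C2), and (B1) from the hypotheses---produce $D_\alpha(\mathcal{F}_n\|\int \tau^{\otimes n}\mu(d\tau)) \ge n\min_{\tau \in \mathcal{S}_\mathcal{B}(p_n,\rho_n,\delta)} D_\alpha(M(\mathcal{F}_1)\|M(\tau)) - O(\log n)$. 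Collecting everything and dividing by $n$ absorbs all polylogarithmic errors, leaving $\liminf_n -\frac{1}{n}\log\beta_\epsilon \ge (1-\alpha)\,\liminf_n \min_{\tau \in \mathcal{S}_\mathcal{B}(p_n,\rho_n,\delta)} D_\alpha(M(\mathcal{F}_1)\|M(\tau))$.

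The argument is closed by first letting $n\to\infty$ with $\delta$ fixed, then $\delta\downarrow 0$. Continuity of the defining conditions makes $\mathcal{S}_\mathcal{B}(p_n,\rho_n,\delta)$ Hausdorff-converge to $\mathcal{S}_\mathcal{B}(p(\sigma),\sigma,\delta)$, and the equality case of the Schur--Horn theorem forces any $\tau$ with $p(\tau) = p(\sigma)$ and $\mathcal{E}_\mathcal{B}(\tau) = \sigma$ to be diagonal in $\mathcal{B}$ with the same diagonal as $\sigma$, hence $\tau = \sigma$; therefore $\mathcal{S}_\mathcal{B}(p(\sigma),\sigma,\delta)\downarrow\{\sigma\}$ as $\delta\downarrow 0$. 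The inner minimum accordingly tends to $D_\alpha(M(\mathcal{F}_1)\|M(\sigma))$, which is strictly positive because $M$ is tomographically complete (so $M(\sigma) \notin M(\mathcal{F}_1)$) and $M(\mathcal{F}_1)$ is compact. This yields a strictly positive liminf and thus (D2). The principal obstacle is this final Schur--Horn collapse: for a generic basis $\mathcal{B}$ the limit set $\mathcal{S}_\mathcal{B}(p(\sigma),\mathcal{E}_\mathcal{B}(\sigma),0)$ would contain states unitarily inequivalent to $\sigma$ that could lie in $\mathcal{F}_1$ and make the positivity fail; the choice $\mathcal{B}=$ eigenbasis of $\sigma$ is exactly what rules this out.
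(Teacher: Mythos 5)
Your proposal is correct and follows essentially the same route as the paper's proof: the identical chain Lemma~\ref{LB89} $\to$ Lemma~\ref{LB1} $\to$ Lemma~\ref{LB2} combined with Lemma~\ref{NK8} $\to$ data processing under $M^{\otimes n}$ $\to$ Lemma~\ref{Pr6-5}, followed by the limits $n\to\infty$ and then the radius of the neighbourhood ${\cal S}_{{\cal B}}(\cdot,\cdot,\delta)$ to zero. The only substantive addition is your explicit Schur--Horn argument showing that ${\cal S}_{{\cal B}}(p(\sigma),\sigma,\delta)$ collapses to $\{\sigma\}$ when ${\cal B}$ is the eigenbasis of $\sigma$, which is a welcome justification of the positivity claim \eqref{GF1} that the paper asserts without detail.
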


Therefore, the combination of Theorem \ref{THCQ} and Lemma \ref{BF28}
implies Theorem \ref{THC4}.

\begin{proofof}{Lemma \ref{BF28}}
For any element $\mh{\sigma_0} \notin {\cal F}_1$ and $\epsilon>0$, 
there exists a sequence $\eta_n \in \mathcal{R}_n[\mathcal{B}]$ such that
$ \eta_n \to \overline{\cal E}_{\cal B}(\mh{\sigma_0})$.
We choose a POVM $M$ to satisfy the condition (B2).
We denote the number of measurement outcomes of 
$M$ by $d_M$.
\mh{Since $M$ is tomographically complete,} 
we can choose a sufficiently small number $\epsilon>0$ such that
\mh{\begin{align}
&\lim_{n\to \infty}\min_{ \sigma\in {\cal S}_{{\cal B}}(\eta_n,\epsilon)} D_\alpha(
M({\cal F}_1)\|  M(\sigma)) \notag \\
=&
\min_{ \sigma\in {\cal S}_{{\cal B}}(\sigma_0,\epsilon)} D_\alpha(
M({\cal F}_1)\|  M(\sigma))
>0.
\Label{GF1}
\end{align}}

Any element $\rho_n \in {\cal F}_{n}$
satisfies
\begin{align}
& D_\alpha({\cal E}_{n,{\cal B}}(\rho_n)\| \rho_{n,{\cal B}}(\eta_n))
+\log \overline{d}_n \notag\\
\stackrel{(a)}{\ge}& 
D_\alpha(\rho_n\| \rho_{n,{\cal B}}(\eta_n)) 
\stackrel{(b)}{\ge}
D_\alpha(M^{\otimes n}(\rho_n)\| M^{\otimes n}(\rho_{n,{\cal B}}(\eta_n))) \notag\\
\stackrel{(c)}{\ge}& 
D_\alpha\Big(M^{\otimes n}(\rho_n) \Big\| \frac{1}{\mu({\cal S}_{{\cal B}}(p,\rho,\epsilon))}
\int_{{\cal S}_{{\cal B}}(\eta_n,\epsilon)}
M(\sigma)^{\otimes n} \mu (d \sigma)\Big) \notag\\
&+\log c^n(\eta_n,\epsilon)\Label{GF2}
\end{align}
for $\alpha\in (0,1)$,
where $(a)$ follows from Lemma \ref{LB1},
$(b)$ follows from the information processing inequality,
and
$(c)$ follows from Lemmas \ref{LB2} and \ref{NK8}.

Thus, we have
\begin{align}
& D_\alpha({\cal E}_{n,{\cal B}}({\cal F}_{n})\| \rho_{n,{\cal B}}(\eta_n))
+\log \overline{d}_n \notag\\
\stackrel{(a)}{\ge}& 
D_\alpha\Big(M^{\otimes n}({\cal F}_{n}) 
\Big\| \frac{1}{\mu({\cal S}_{{\cal B}}(\eta_n,\epsilon))}
\int_{{\cal S}_{{\cal B}}(\mh{\eta_n},\epsilon)}
M(\sigma)^{\otimes n} \mu (d \sigma)\Big) \notag\\
&+\log c^n(\eta_n,\epsilon)\notag\\
\stackrel{(b)}{\ge} &
n \min_{ \sigma\in {\cal S}_{{\cal B}}( \mh{\eta_n},\epsilon)} D_\alpha(
M({\cal F}_1)\|  M(\sigma))
-\frac{(d_M-1)}{1-\alpha}\log (n+1) \notag\\
&+\log c^n(\eta_n,\epsilon),\Label{GF3}
\end{align}
where 
$(a)$ follows from \eqref{GF2}.
Since the condition (B2) guarantees that 
$M^{\otimes n}({\cal F}_{n})$ satisfies the condition (B1), i.e.,
Lemma \ref{Pr6-5} can be applied to $M^{\otimes n}({\cal F}_{n})$,
Lemma \ref{Pr6-5} guarantees the step $(b)$.
Using \eqref{XC1}, \eqref{GF1}, and \eqref{GF3}, we have
\mh{
\begin{align}
& \liminf_{n\to \infty}\frac{1}{n}D_\alpha({\cal E}_{n,{\cal B}}({\cal F}_{n})\| \rho_{n,{\cal B}}(\eta_n)) \notag\\
\ge & 
\lim_{n\to \infty}\min_{ \sigma\in {\cal S}_{{\cal B}}(\eta_n,\epsilon)} D_\alpha(
M({\cal F}_1)\|  M(\sigma)) \notag \\
=&
\min_{ \sigma\in {\cal S}_{{\cal B}}(\sigma_0,\epsilon)} D_\alpha(
M({\cal F}_1)\|  M(\sigma))
>0 .
\Label{XJI}
\end{align}}
The combination of Lemma \ref{LB89} and \eqref{XJI} implies \eqref{NM9}. Hence, we obtain the condition (D2).
\end{proofof}

\section{Conclusion}\Label{S12}
We have presented a generalization of the quantum Sanov theorem  
for hypothesis testing under simpler assumptions  
than those used in \cite{LBR,FFF},
and have shown that 
our condition is weaker than that used in \cite{LBR}.
The measurement used here is the same as
the measurement used in the old papers 
\cite{H-01,H-02}.
This result shows that the minimum value $D({\cal F}_1\|\sigma)$ works as a general detectability measure.
Section \ref{S2B} has introduced both the classical and quantum formulations,  
stated in Theorems \ref{THC3} and \ref{THC4}.  
While the method in \cite{LBR}  
addressed the case where ${\cal F}_n$ is the set of separable states,  
it do not demonstrate applicability to the cases  
where ${\cal F}_n$ is the set of positive partial transpose (PPT) states  
or the convex hull of stabilizer states.  
In addition, the paper \cite{FFF} clarifies that the approach by \cite{FFF} 
does not cover the PPT state case.  
In contrast, our method covers all three cases: separable states, PPT states, and the convex hull of stabilizer states, demonstrating  
the broader applicability and generality of our approach.  
The method in \cite{LBR} utilizes the blurring map,  
while \cite{FFF} employs polar sets.  
Our central technique, however, is the quantum empirical distribution,  
which depends on the choice of basis.  
This method was originally introduced  
to establish a quantum analog of the Sanov theorem  
using empirical distributions,  
as the original classical Sanov theorem describes the asymptotic behavior  
of empirical distributions.  

In this paper,  
we first have developed two classical extensions of the quantum Sanov theorem  
for hypothesis testing, detailed in Sections \ref{S4} and \ref{S5}.  
Section \ref{S4} has presented a classical generalization  
(Theorem \ref{THY}) using empirical distributions,  
while Section \ref{S5} has provided another extension (Theorem \ref{Pr7}).  
Both results, which include non-single-letterized forms,  
can be regarded as meta-theorems.  
Lemma \ref{Pr6} is essential for deriving Theorem \ref{THC3},  
and both approaches are necessary to obtain our main result, Theorem \ref{THC4}.  
Next, in Subsection \ref{S6C}, we have derived several useful formulas  
for quantum empirical distributions.  
In Section \ref{S7},
we have established a quantum generalization (Theorem \ref{THYQ}),  
which also qualifies as a meta-theorem due to its non-single-letterized structure.  
Combining Theorem \ref{THYQ}, the formulas from Subsection \ref{S6C},  
and the quantum Sanov theorem for quantum empirical distributions by the paper \cite{another},  
we have derived our final result, Theorem \ref{THC4}.  

This paper did not discuss the infinite-dimensional case.
In the multi-mode Bosonic system, we can consider various examples for the set of the resource-free states, e.g.,
the set of Gaussian mixture of coherent states,
the convex hull of the set of coherent states,
the set of Gaussian states,
and the set of non-negative Wigner function states, etc
\cite{PRXQuantum-Non-Gaussian}.
These examples satisfy the conditions (A2)--(A5) and the convexity and closedness conditions. 
Also, to satisfy the condition (B3) for these examples,
we can consider using the heterodyne measurement, i.e., the POVM given by the resolution of the identity by the coherent states.
However, it is needed to extend the obtained result to the 
multi-mode Bosonic system to apply it to these examples.
This is an interesting future study.

\section*{Acknowledgement}
The author is thankful to Professor Hayata Yamasaki for helpful discussions.
Also,
the author is thankful to Mr. Zhiwen Lin for helpful comments.
In particular, he pointed out a technical problem 
in Lemma \ref{Pr6-5}, which was resolved in the current version.

\section*{Ethics approval and consent to participate}
Not applicable.

\section*{Consent for publication}
Not applicable.

\section*{Availability of supporting data}
Data sharing is not applicable to this article as no datasets were generated
or analyzed during the current study.

\section*{Competing interests}
There are no competing interests.

\section*{Funding}
The author was supported in part by the National Natural Science Foundation of China No. 62171212.
and the General R \& D Projects of 1+1+1 
CUHK-CUHK(SZ)-GDST Joint Collaboration Fund (Grant No.
GRDP2025-022).

\appendix 
\section{Relation between Conditions (A6) and (D1)}\Label{S8}
\subsection{Equivalence relation between (A6), (D1), and additional condition}
This appendix studies the relation between Conditions (A6) and (D1) under the classical setting.
Then, we prepare the following lemmas.
\begin{lemma}\Label{Pr4}
Assume that a sequence of subsets ${\cal F}_n 
\subset {\cal S}_c({\cal H}^{\otimes n})$
satisfies the condition (A5).
Any sequence $p_n \in {\cal T}_n$ 
satisfies
\begin{align}
\lim_{n\to \infty}\frac{1}{n}
\Big|
D(\rho_n(p_n)\|{\cal F}_n)
+\log \beta_\epsilon(
\rho_n(p_n)\|{\cal F}_n)\Big|
=0. \Label{BHY2}
\end{align}
\end{lemma}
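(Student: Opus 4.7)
The plan is to derive closed-form expressions for both $\beta_\epsilon(\rho_n(p_n)\|{\cal F}_n)$ and $D(\rho_n(p_n)\|{\cal F}_n)$ in terms of the single quantity $M_n := \max_{\sigma \in {\cal F}_n} \Tr(\sigma 1_{p_n}^n)$, and to observe that they agree up to an additive constant independent of $n$, so that their $1/n$-normalized difference vanishes.

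First, by the permutation symmetry (A5) and Lemma \ref{LG7}, the minimization defining $\beta_\epsilon(\rho_n(p_n)\|{\cal F}_n)$ can be restricted to permutation-invariant tests; since every element of ${\cal F}_n \cup \{\rho_n(p_n)\}$ is diagonal in the computational basis, such tests necessarily have the form $\Pi = \sum_{p \in {\cal T}_n} c(p)\, 1_p^n$ with $c(p) \in [0,1]$. Because $\rho_n(p_n)$ is uniform on ${\cal T}_{n,p_n}$, the Type-I constraint $\Tr[(I-\Pi)\rho_n(p_n)] \le \epsilon$ reduces to $c(p_n) \ge 1-\epsilon$. Since $\Tr(\sigma 1_p^n) \ge 0$ for every $p$ and every $\sigma \in {\cal F}_n$, the choice $c(p_n) = 1-\epsilon$ and $c(p) = 0$ for $p \neq p_n$ minimizes $\max_{\sigma} \sum_p c(p)\,\Tr(\sigma 1_p^n)$, yielding the exact identity
\begin{align}
\beta_\epsilon(\rho_n(p_n)\|{\cal F}_n) = (1-\epsilon)\, M_n.
\end{align}

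Second, for any permutation-invariant diagonal $\sigma$, the density takes the constant value $\Tr(\sigma 1_{p_n}^n)/|{\cal T}_{n,p_n}|$ on ${\cal T}_{n,p_n}$, and a direct computation of the classical Kullback--Leibler divergence gives $D(\rho_n(p_n)\|\sigma) = -\log \Tr(\sigma 1_{p_n}^n)$. For a general $\sigma \in {\cal F}_n$, the monotonicity of the relative entropy under the symmetrization channel $\sigma \mapsto \bar{\sigma} := \frac{1}{n!}\sum_{\pi \in \frS_n} U_\pi \sigma U_\pi^\dagger$, together with the invariance $\bar{\rho}_n(p_n) = \rho_n(p_n)$ and the identity $\Tr(\bar{\sigma}\, 1_{p_n}^n) = \Tr(\sigma 1_{p_n}^n)$, gives $D(\rho_n(p_n)\|\sigma) \ge D(\rho_n(p_n)\|\bar{\sigma}) = -\log \Tr(\sigma 1_{p_n}^n)$; minimizing over $\sigma \in {\cal F}_n$ and using (A5) together with the convexity ambient to this appendix so that the symmetrized witness can be realized inside ${\cal F}_n$, one obtains
\begin{align}
D(\rho_n(p_n)\|{\cal F}_n) = -\log M_n.
\end{align}

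Combining the two displays yields the exact equation $D(\rho_n(p_n)\|{\cal F}_n) + \log \beta_\epsilon(\rho_n(p_n)\|{\cal F}_n) = \log(1-\epsilon)$, an $n$-independent constant, so dividing by $n$ drives the absolute value to zero and establishes \eqref{BHY2}. The main technical delicacy is the second identity: promoting the easy inequality $D \ge -\log M_n$ (which follows from Jensen alone) to an equality requires that the symmetrized state $\bar{\sigma}$ be realized inside the minimization set ${\cal F}_n$, which in turn relies on the convexity condition ambient to this appendix rather than on (A5) by itself; without such a convexity hypothesis one only retains the one-sided bound $D + \log \beta_\epsilon \ge \log(1-\epsilon)$, which suffices for the ``$\liminf \ge 0$'' half of \eqref{BHY2} but not the ``$\limsup \le 0$'' half.
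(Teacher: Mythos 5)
Your proof is correct and follows essentially the same route as the paper's: reduce to permutation-invariant diagonal tests, identify $\beta_\epsilon(\rho_n(p_n)\|{\cal F}_n)=(1-\epsilon)\max_{\sigma\in{\cal F}_n}\Tr(\sigma 1^n_{p_n})$ and $D(\rho_n(p_n)\|{\cal F}_n)=-\log\max_{\sigma\in{\cal F}_n}\Tr(\sigma 1^n_{p_n})$, and observe that their sum is an $n$-independent constant. Your caveat that promoting $D(\rho_n(p_n)\|{\cal F}_n)\ge -\log M_n$ to an equality requires convexity beyond (A5) is well taken: the paper's own proof incurs the same dependence by invoking Lemma~\ref{LG7} (which is stated for convex sets), and convexity is in fact available wherever the lemma is applied, since Theorem~\ref{Pr5} assumes (A1).
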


\begin{proofof}{Lemma \ref{Pr4}}
To calculate $\beta_\epsilon(
\rho_n(p_n)\|{\cal F}_n)$
and $D(\rho_n(p_n)\|{\cal F}_n)$,
due to Lemma \ref{LG7},
we discuss $\beta_\epsilon(
\rho_n(p_n)\|{\cal F}_{n,inv})$
and $D(\rho_n(p_n)\|{\cal F}_{n,inv})$.
Since any element of ${\cal F}_{n,inv}$ and 
$\rho_n(p_n)$ are permutation-invariant,
without loss of generality,
the test $\Pi_n$ to support 
$\rho_n(p_n)$ has the form 
$c1^n_{p_n}$ because 
a permutation-invariant element 
in ${\cal S}_c({\cal H}^{\otimes n})$
takes a constant value on ${\cal T}_{n,p}$.
Hence, we have
\begin{align}
D(\rho_n(p_n)\|\sigma)
=- \log \Tr (1^n_{p_n}\sigma).\Label{NK1}
\end{align}
Thus, we have
\begin{align}
-\log\beta_\epsilon(\rho_n(p_n)\|\sigma)
=- \log \Tr (1^n_{p_n}\sigma) -\log\epsilon. 
\Label{NK2}
\end{align}
The combination of \eqref{NK1} and \eqref{NK2}
implies \eqref{BHY2}.
\end{proofof}

\begin{lemma}\Label{LJ1}
Assume that a sequence of subsets ${\cal F}_n 
\subset {\cal S}_c({\cal H}^{\otimes n})$
satisfies the conditions 
(A1), (A2), and (A4).
Any sequence $p_n \in {\cal T}_n$ with $ p_n \to p$ 
satisfies
\begin{align}
\frac{1}{m} D( p^{\otimes m}\| {\cal F}_{m})
\ge 
\limsup_{n\to \infty}\frac{1}{n}D(
\rho_n(p_n)\|{\cal F}_n).
\Label{BHYY}
\end{align}
\end{lemma}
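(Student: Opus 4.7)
The plan is to exhibit, for each fixed $m$ with $D(p^{\otimes m}\|{\cal F}_m)<\infty$ (otherwise the inequality is trivial), a sequence $\sigma_n\in{\cal F}_n$ whose normalized relative entropy to $\rho_n(p_n)$ approaches $\frac{1}{m}D(p^{\otimes m}\|{\cal F}_m)$, and to read off the $\limsup$ bound.

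First, using compactness from (A1), pick $\sigma^\ast\in{\cal F}_m$ achieving $D(p^{\otimes m}\|{\cal F}_m)$, and regularize by setting $\sigma^\ast_\eta:=(1-\eta)\sigma^\ast+\eta\,\sigma_0^{\otimes m}$. By (A1), (A2), and (A4) this lies in ${\cal F}_m$, is of full rank, and satisfies $D(p^{\otimes m}\|\sigma^\ast_\eta)\le D(p^{\otimes m}\|{\cal F}_m)-\log(1-\eta)$. Writing $n=km+r$ with $0\le r<m$, I set $\sigma_n:=(\sigma^\ast_\eta)^{\otimes k}\otimes \sigma_0^{\otimes r}$, which lies in ${\cal F}_n$ by iterating (A4) together with $\sigma_0\in{\cal F}_1$ from (A2). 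Hence $D(\rho_n(p_n)\|{\cal F}_n)\le D(\rho_n(p_n)\|\sigma_n)$.

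I then expand this relative entropy as $-H(\rho_n(p_n))-E_{\rho_n(p_n)}[\log\sigma_n(X^n)]$. Standard type counting gives $H(\rho_n(p_n))=\log|{\cal T}_{n,p_n}|=nH(p_n)+O(\log n)$, so $\frac{1}{n}H(\rho_n(p_n))\to H(p)$ by continuity of entropy. For the cross-entropy, $\log\sigma_n(x^n)$ decomposes over $k$ consecutive $m$-blocks plus a residual $r$-block, and by exchangeability of $\rho_n(p_n)$ each $m$-block contributes the same expectation $E_{p_n^{(m)}}[-\log\sigma^\ast_\eta(X^m)]$, where $p_n^{(m)}$ is the $m$-marginal of $\rho_n(p_n)$. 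A direct hypergeometric computation shows $p_n^{(m)}(x^m)\to p^{\otimes m}(x^m)$ pointwise as $n\to\infty$, and since $\sigma^\ast_\eta$ is of full support the bounded function $-\log\sigma^\ast_\eta$ yields $E_{p_n^{(m)}}[-\log\sigma^\ast_\eta(X^m)]\to E_{p^{\otimes m}}[-\log\sigma^\ast_\eta(X^m)]$ by bounded convergence. The residual $r$-block contributes at most $r\,r_0$ because $\sigma_0\ge e^{-r_0}$, which is $O(1/n)$ after normalization. Collecting these limits, $\frac{1}{n}D(\rho_n(p_n)\|\sigma_n)\to -H(p)+\frac{1}{m}E_{p^{\otimes m}}[-\log\sigma^\ast_\eta(X^m)]=\frac{1}{m}D(p^{\otimes m}\|\sigma^\ast_\eta)$.

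Taking $\limsup_n$ on both sides and then sending $\eta\to 0$ delivers \eqref{BHYY}. The main obstacle is the possible support mismatch between $\rho_n(p_n)$ and the raw minimizer $\sigma^\ast$: if $\sigma^\ast$ is rank-deficient, then $D(\rho_n(p_n)\|(\sigma^\ast)^{\otimes k}\otimes\sigma_0^{\otimes r})$ can be infinite, because the $m$-marginal $p_n^{(m)}$ may place mass outside $\mathrm{supp}(p^{\otimes m})$ when $p_n$ has slightly larger support than $p$. The convex smoothing $\sigma^\ast\mapsto\sigma^\ast_\eta$, made possible by the convexity condition (A1) together with the full-rank anchor furnished by (A2) and (A4), resolves this at a cost of $-\log(1-\eta)$ that vanishes in the limit; this is the one place where all three assumptions really conspire.
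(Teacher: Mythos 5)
Your proof is correct, and while it builds the same approximating free state as the paper — the convex smoothing $\sigma^\ast_\eta=(1-\eta)\sigma^\ast+\eta\,\sigma_0^{\otimes m}$ justified by (A1), (A2), (A4), tensored blockwise and padded with $\sigma_0^{\otimes r}$ on the residual block — the way you evaluate $D(\rho_n(p_n)\|\sigma_n)$ is genuinely different from the paper's. The paper decomposes the type class as a mixture $\rho_{mk+l}(p_{mk+l})=\sum_{p'}Q(p')\,\rho_{mk}(p')\otimes\rho_l(q_{k,l}(p'))$ and then runs a typical-subspace concentration argument on $({\cal X}^m)^{k}$ (its set $U_{k,p^{\otimes m},\epsilon_1}$, with three auxiliary parameters $\epsilon_1,\epsilon_2,\epsilon_3$ and separate treatment of the atypical mass via the full-rank lower bound $\sigma_0\ge e^{-r_0}$). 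You instead observe that only the first moment of $-\log\sigma_n$ is needed: by exchangeability of the uniform distribution on a type class, the cross-entropy splits by linearity into $k$ identical block contributions governed by the hypergeometric $m$-marginal $p_n^{(m)}$, which converges to $p^{\otimes m}$, and boundedness of $-\log\sigma^\ast_\eta$ (guaranteed by the smoothing) closes the limit. This is more elementary — no typical sets, no concentration, fewer epsilons — and it isolates cleanly why the smoothing is indispensable (your closing remark on the support mismatch is exactly the right diagnosis, and it mirrors the role of $\epsilon_0$ in the paper). What the paper's heavier machinery buys is essentially nothing extra for this particular lemma; its typical-set apparatus is reused elsewhere, but for Lemma \ref{LJ1} alone your expectation/exchangeability route is a valid and tighter-written alternative.
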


\begin{lemma}\Label{LJ2}
Assume that a sequence of subsets ${\cal F}_n 
\subset {\cal S}_c({\cal H}^{\otimes n})$
satisfies the conditions 
(A1), (A2), and (A4).
Any sequence $p_n \in {\cal T}_n$ with $ p_n \to p$ satisfies
\begin{align}
\lim_{n\to \infty}
\frac{1}{n} D(p^{\otimes n}\|{\cal F}_n) 
\le 
\liminf_{n\to \infty}\frac{1}{n}D(
\rho_n(p_n)\|{\cal F}_n).
\Label{BHYC}
\end{align}
\end{lemma}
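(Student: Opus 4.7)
The plan is to exploit the tensor stability (A4) to lift a near-optimal state for $\rho_n(p_n)$ to a state in ${\cal F}_{nk}$ by tensoring, let $k\to\infty$, and then compare the two relative entropies at a common state.

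First, (A4) yields subadditivity of $n\mapsto D(p^{\otimes n}\|{\cal F}_n)$: tensoring near-optimal $\sigma\in{\cal F}_n$ and $\tau\in{\cal F}_m$ gives $\sigma\otimes\tau\in{\cal F}_{n+m}$ with $D(p^{\otimes(n+m)}\|\sigma\otimes\tau)=D(p^{\otimes n}\|\sigma)+D(p^{\otimes m}\|\tau)$, so Fekete's lemma gives that $r:=\lim_n\tfrac{1}{n}D(p^{\otimes n}\|{\cal F}_n)=\inf_n\tfrac{1}{n}D(p^{\otimes n}\|{\cal F}_n)$ exists. Pass to a subsequence $\{n_j\}$ realizing the $\liminf$ on the right-hand side, pick $\sigma_{n_j}\in{\cal F}_{n_j}$ with $D(\rho_{n_j}(p_{n_j})\|\sigma_{n_j})\le D(\rho_{n_j}(p_{n_j})\|{\cal F}_{n_j})+\tfrac{1}{n_j}$, and regularize via $\tilde\sigma_{n_j}:=(1-\alpha_{n_j})\sigma_{n_j}+\alpha_{n_j}\sigma_0^{\otimes n_j}$, with $\sigma_0\in{\cal F}_1$ the full-rank state of (A2) and $\alpha_{n_j}\to 0$ slowly. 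By (A1)+(A4), $\tilde\sigma_{n_j}\in{\cal F}_{n_j}$; moreover $\tilde\sigma_{n_j}(x)\ge\alpha_{n_j}e^{-n_j r_0}$ pointwise on the classical alphabet and $D(\rho_{n_j}(p_{n_j})\|\tilde\sigma_{n_j})$ is raised by at most $-\log(1-\alpha_{n_j})$. Applying (A4) again, $\tilde\sigma_{n_j}^{\otimes k}\in{\cal F}_{n_j k}$ for every $k$, whence
\begin{align*}
r\le\frac{1}{n_j k}D(p^{\otimes n_j k}\|\tilde\sigma_{n_j}^{\otimes k})=\frac{1}{n_j}D(p^{\otimes n_j}\|\tilde\sigma_{n_j}).
\end{align*}

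It remains to show $\tfrac{1}{n_j}\bigl[D(p^{\otimes n_j}\|\tilde\sigma_{n_j})-D(\rho_{n_j}(p_{n_j})\|\tilde\sigma_{n_j})\bigr]\to 0$, which I expect to be the principal obstacle. This difference equals $\tfrac{1}{n}\log|{\cal T}_{n,p_n}|-H(p)+\tfrac{1}{n}\sum_{x^n}(\rho_n(p_n)(x^n)-p^{\otimes n}(x^n))\log\tilde\sigma_n(x^n)$; the first two terms vanish by continuity of the Shannon entropy at $p$ and the estimate $\log|{\cal T}_{n,p_n}|=nH(p_n)+O(\log n)$. For the expectation gap, my plan is to expand $p^{\otimes n}=\sum_q p^{\otimes n}({\cal T}_{n,q})\rho_n(q)$ and use joint convexity of $D$ to obtain $D(p^{\otimes n}\|\tilde\sigma_n)\le\sum_q p^{\otimes n}({\cal T}_{n,q})D(\rho_n(q)\|\tilde\sigma_n)$. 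Types $q$ outside a fixed $\ell^1$-neighborhood of $p$ carry exponentially small $p^{\otimes n}$-mass while $D(\rho_n(q)\|\tilde\sigma_n)$ is at most $O(n)$ by the uniform bound $-\log\tilde\sigma_n\le n r_0+\log(1/\alpha_n)$, so their joint contribution is $o(n)$ for suitable $\alpha_n$. The delicate case is types $q$ near $p_n$: since (A5) is not assumed, $\tilde\sigma_n$ need not be permutation-invariant, and $D(\rho_n(q)\|\tilde\sigma_n)$ may differ significantly from $D(\rho_n(p_n)\|\tilde\sigma_n)$ even for $q$ close to $p_n$. Resolving this requires a fluctuation bound for $\log\tilde\sigma_n$ across the union of such nearby type classes, which I would approach by combining the uniform lower bound provided by the $\sigma_0^{\otimes n_j}$-regularization with a convex-duality argument stemming from the optimality of $\sigma_{n_j}$ for $\rho_{n_j}(p_{n_j})$ and $\ell^1$-continuity of the relevant entropic quantities. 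Once this estimate is in place, sending $n_j\to\infty$ yields $r\le\liminf_n\tfrac{1}{n}D(\rho_n(p_n)\|{\cal F}_n)$, which is the claimed inequality.
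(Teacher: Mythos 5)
Your skeleton (regularize a near-optimal state with $\sigma_0^{\otimes n}$ from (A2), tensorize via (A4), compare relative entropies) matches the paper's opening moves, but the step you yourself flag as "the principal obstacle" is a genuine gap, and the fix you sketch will not close it. The difficulty is quantitative: the type class ${\cal T}_{n,p_n}$ carries only a polynomially small fraction (of order $n^{-(d-1)/2}$) of the mass of $p^{\otimes n}$, so in your decomposition $D(p^{\otimes n}\|\tilde\sigma_n)\le\sum_q p^{\otimes n}({\cal T}_{n,q})D(\rho_n(q)\|\tilde\sigma_n)$ essentially \emph{all} of the weight sits on types $q\neq p_n$ in a shrinking neighbourhood of $p$. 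For those $q$ the near-optimality of $\sigma_n$ for $\rho_n(p_n)$ gives no information, and the only lower bound on $\tilde\sigma_n$ off the type class ${\cal T}_{n,p_n}$ is the regularization floor $\alpha_n e^{-nr_0}$, which yields $D(\rho_n(q)\|\tilde\sigma_n)=O(n)$ with a constant that need not approach $\liminf\frac{1}{n}D(\rho_n(p_n)\|{\cal F}_n)$. A "fluctuation bound for $\log\tilde\sigma_n$ across nearby type classes" derived from convex duality and $\ell^1$-continuity is not available here, because nothing in (A1), (A2), (A4) prevents the optimizer for $\rho_n(p_n)$ from being exponentially small on the neighbouring type classes; your single-block comparison at scale $n_j$ simply cannot control them.

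The paper avoids this by never comparing the two relative entropies at the same block. It fixes a block length $m$, sets $\sigma_{m,\epsilon_0}:=(1-\epsilon_0)\sigma_m+\epsilon_0\sigma_0^{\otimes m}$ with $\sigma_m$ optimal for $\rho_m(p_m)$, and tests $p^{\otimes km}$ against $\sigma_{m,\epsilon_0}^{\otimes k}\in{\cal F}_{km}$ as $k\to\infty$. The key combinatorial input (its Eqs.~(BC3)--(BC5)) is that for every type $p'$ of length $km$ with $\|p'-p\|_1\le\epsilon_2$ one can embed $(1^m_{p_m})^{\otimes k_1}\otimes 1^{k_2m}_{p''}$ inside $1^{km}_{p'}$, where a fraction $1-\epsilon_1-\epsilon_2$ of the $k$ blocks carry type exactly $p_m$ and the remaining $k_2$ blocks absorb the discrepancy using the full-rank component $\epsilon_0\sigma_0^{\otimes m}$. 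This gives the uniform lower bound $\Tr(1^{km}_{p'}\sigma_{m,\epsilon_0}^{\otimes k})\ge(1-\epsilon_0)^{k_1}e^{-k_1D(\rho_m(p_m)\|\sigma_m)}\epsilon_0^{k_2}e^{-k_2mr_0}$ simultaneously for \emph{all} types near $p$, which is exactly the control your approach is missing; summing over types and sending $k\to\infty$, $\epsilon_2\to0$, $m\to\infty$, $\epsilon_1\to0$ then yields the claim. To complete your argument you would have to reinstate this multi-block type surgery, at which point it becomes the paper's proof.
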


Combining Lemmas \ref{Pr4}, \ref{LJ1}, and \ref{LJ2}, 
we obtain the following theorem.

\begin{theorem}\Label{Pr5}
Assume that a sequence of subsets ${\cal F}_n 
\subset {\cal S}_c({\cal H}^{\otimes n})$
satisfies the conditions (A1), (A2), (A4), and (A5).
Then, the conditions (A6), (D1), and (B6) are equivalent.
\end{theorem}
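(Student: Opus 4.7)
The plan is to chain the three preceding lemmas into a single string of equalities which shows that four quantities have the same asymptotic value, so that positivity of any one is equivalent to positivity of the others. Concretely, for any $p \notin {\cal F}_1$ and any sequence $p_n \in {\cal T}_n$ with $p_n \to p$, I would show
\begin{align}
\lim_{n\to\infty}\frac{1}{n}D(p^{\otimes n}\|{\cal F}_n)
&=\lim_{n\to\infty}\frac{1}{n}D(\rho_n(p_n)\|{\cal F}_n) \notag\\
&=\lim_{n\to\infty}\frac{-1}{n}\log\beta_\epsilon(\rho_n(p_n)\|{\cal F}_n),
\end{align}
with all three limits existing. Once this is in hand, (A6) (positivity of the first quantity) and (D1) (positivity of the third along some/all sequences) become the same statement, and (B6) slots in at whichever point of the chain its definition isolates.

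First I would establish that the limit $\lim_n \frac{1}{n}D(p^{\otimes n}\|{\cal F}_n)$ exists and equals $\inf_n \frac{1}{n}D(p^{\otimes n}\|{\cal F}_n)$. This is Fekete's lemma applied to the sequence $a_n := D(p^{\otimes n}\|{\cal F}_n)$, whose subadditivity $a_{n+m}\le a_n+a_m$ follows from (A4): given near-optimal $\sigma \in {\cal F}_n$ and $\sigma' \in {\cal F}_m$, the tensor $\sigma\otimes\sigma'$ lies in ${\cal F}_{n+m}$ and gives $D(p^{\otimes(n+m)}\|\sigma\otimes\sigma')=D(p^{\otimes n}\|\sigma)+D(p^{\otimes m}\|\sigma')$.

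Next, I would combine Lemmas \ref{LJ1} and \ref{LJ2}. Lemma \ref{LJ1} gives $\frac{1}{m}D(p^{\otimes m}\|{\cal F}_m) \ge \limsup_n \frac{1}{n}D(\rho_n(p_n)\|{\cal F}_n)$ for every $m$; taking the infimum over $m$ and using Step 1 yields $\lim_n \frac{1}{n}D(p^{\otimes n}\|{\cal F}_n) \ge \limsup_n \frac{1}{n}D(\rho_n(p_n)\|{\cal F}_n)$. The matching lower bound $\lim_n \frac{1}{n}D(p^{\otimes n}\|{\cal F}_n) \le \liminf_n \frac{1}{n}D(\rho_n(p_n)\|{\cal F}_n)$ is exactly Lemma \ref{LJ2}. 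Sandwiching gives the first equality and the existence of the middle limit. Finally, Lemma \ref{Pr4} directly supplies the second equality, since $\frac{1}{n}|D(\rho_n(p_n)\|{\cal F}_n)+\log\beta_\epsilon(\rho_n(p_n)\|{\cal F}_n)|\to 0$ forces $\frac{1}{n}D(\rho_n(p_n)\|{\cal F}_n)$ and $\frac{-1}{n}\log\beta_\epsilon(\rho_n(p_n)\|{\cal F}_n)$ to share all asymptotic accumulation points. Putting the three pieces together gives the equivalence: (A6) $\Leftrightarrow$ the common limit is positive $\Leftrightarrow$ (D1), with the same argument handling (B6) once its definition is substituted into one of the middle expressions.

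The main obstacle is not in this concatenation step itself, which is essentially bookkeeping once the three lemmas are available; the real content is already absorbed into Lemmas \ref{LJ1} and \ref{LJ2}, which translate between the i.i.d.\ state $p^{\otimes n}$ and the type-class uniform state $\rho_n(p_n)$. What I would be careful about here is the order of quantifiers in (D1): (D1) asks positivity for \emph{every} converging sequence $p_n \to p$, so after deriving the chain of equalities for a fixed sequence I must verify that the common limit is independent of the choice of $p_n \to p$, which is automatic from its coincidence with the sequence-free quantity $\lim_n \frac{1}{n}D(p^{\otimes n}\|{\cal F}_n)$.
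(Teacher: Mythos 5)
Your proposal is correct and follows essentially the same route as the paper, which likewise combines Lemma \ref{Pr4} (to trade $-\frac{1}{n}\log\beta_\epsilon(\rho_n(p_n)\|{\cal F}_n)$ for $\frac{1}{n}D(\rho_n(p_n)\|{\cal F}_n)$) with Lemma \ref{LJ1} for (D1)$\Rightarrow$(A6) and Lemma \ref{LJ2} for (A6)$\Rightarrow$(D1); your reformulation as a single chain of equal limits, with existence of $\lim_n\frac{1}{n}D(p^{\otimes n}\|{\cal F}_n)$ from subadditivity, is equivalent bookkeeping, and your observation that the common limit is sequence-independent correctly disposes of the quantifier over sequences in (D1). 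The only point you cannot resolve --- and neither does the paper's own proof --- is condition (B6), which is never defined anywhere in the text, so its appearance in the theorem statement is a dangling reference rather than a gap in your argument.
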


\begin{proof}
Since (A5) holds,
Lemma \ref{Pr4} enables us to replace 
$-\log \beta_\epsilon(\rho_n(p_n)\|{\cal F}_n)$
by $D(\rho_n(p_n)\|{\cal F}_n)$
in the condition (D1).
In the following, we handle the condition (D1)
by using $D(\rho_n(p_n)\|{\cal F}_n)$ instead of 
$-\log \beta_\epsilon(\rho_n(p_n)\|{\cal F}_n)$.

Lemma \ref{LJ1} guarantees that the condition (D1) implies the condition (A6).
Lemma \ref{LJ2} guarantees that the condition (A6) implies the condition (D1).
\end{proof}

\subsection{Proof of Lemma \ref{LJ1}}
\noindent{\bf Step 1:}
Choice of $m,\epsilon_0$.

We fix an arbitrarily small real number
$\epsilon_0>0$
and an arbitrary positive integer $m$.
We choose $\sigma_m:=\argmin_{\sigma\in {\cal F}_m}
D(p^{\otimes m}\|\sigma)$ 
and $\sigma_{m,\epsilon_0}:=(1-\epsilon_0)\sigma_m
+\epsilon_0 \sigma_0^{\otimes m}$.
Then, we have
\begin{align}
\frac{-1}{m}\Tr (p^{\otimes m}\log \sigma_{m,\epsilon_0})
&\le 
\frac{-1}{m} \Tr (p^{\otimes m}\log \sigma_{m})
+\frac{-1}{m}\log (1-\epsilon_0)\Label{BM5} \\
\frac{-1}{m}\Tr (p^{\otimes m}\log \sigma_{m,\epsilon_0})
&\le 
- \log \sigma_{0}
+\frac{-1}{m}\log \epsilon_0\Label{BM5+} .
\end{align}

\noindent{\bf Step 2:}
Choice of $k,l,\epsilon_1,\epsilon_2,\epsilon_3$.

For an arbitrary positive integer $n$, we choose 
two positive integers $k$ and $l$ such that
\begin{align}
n=mk+l \hbox{ with }
l \le m.
\end{align}
We define the subset 
${\cal A}_{k,l}\subset {\cal T}_{mk}$
composed of 
the empirical distributions 
the first $mk$ components of 
$x\in {\cal T}_{mk+l,p_{mk+l}}$, i.e.,
\begin{align}
{\cal A}_{k,l}:=
\{ (x_1, \ldots, x_{mk})\}_{x\in {\cal T}_{mk+l,p_{mk+l}}}\subset{\cal T}_{mk}.
\end{align}

For any element $p' \in {\cal A}_{k,l}$,
we choose an element $q_{k,l}(p')\in {\cal T}_{l} $
such that
\begin{align}
mk p'+ l q_{k,l}(p')=(mk+l)p_{mk+l},\Label{NM1}
\end{align}
i.e., 
\begin{align}
1^{mk}_{p'}\otimes 1^l_{q_{k,l}(p')}
\le 1^{mk+l}_{p_{mk+l}}.
\end{align}
That is, we have
\begin{align}
\sum_{p'\in {\cal A}_{k,l}}1^{mk}_{p'}\otimes 1^l_{q_{k,l}(p')}
= 1^{mk+l}_{p_{mk+l}}.
\end{align}
There exists a distribution $Q(p')$ on $ {\cal A}_{k,l}$ such that
\begin{align}
\sum_{p'\in {\cal A}_{k,l}}
Q(p')
\rho_{mk}(p')\otimes \rho_l(q_{k,l}(p'))
= \rho_{mk+l}(p_{mk+l}).\Label{NMT}
\end{align}
Also, \eqref{NM1} implies
\begin{align}
mk\| p' - p_{mk+l} \|_1 \le 2l,
\end{align}
where $\|~\|_1$ expresses the $l_1$-norm.
Hence,
\begin{align}
\| p' - p \|_1 \le \frac{2l}{mk}+\frac{1}{mk+l}
\le \frac{2l+1}{mk}
\le \frac{2m+1}{mk}.
\Label{NMT5}
\end{align}

We choose arbitrary small real numbers 
$\epsilon_1,\epsilon_2,\epsilon_3>0$.
We consider the system
${\cal H}^{\otimes m k}$.
We define the typical subspace $U_{k, p^{\otimes m},\epsilon_1}$
of $p^{\otimes m}$ as
\begin{align}
U_{k, p^{\otimes m},\epsilon_1}:=
\{ x \in ({\cal X}^m)^{k}|
\| e_k(x)- p^{\otimes m}\|_1\le \epsilon_1\}.
\end{align}
Here, $e_k(x)$ is the empirical distribution of $k$ data $x$ 
on ${\cal X}^m$.
Due to \eqref{NMT5}, we can choose sufficiently large $k$ such that
\begin{align}
\Tr (\rho_{mk}(p_{p'}) 1_{U_{k, p^{\otimes m},\epsilon_1}})
&\ge 1- \epsilon_2 \Label{BM1}\\
\Big\|
P(U_{k, p^{\otimes m},\epsilon_1})
\Big(\frac{1}{k} \log \sigma_{m,\epsilon_0}^{\otimes k}
- \Tr (p^{\otimes m}\log \sigma_{m,\epsilon_0}) \Big)\Big\| &\le \epsilon_3\Label{BM2}
\end{align}
for $p' \in {\cal A}_{k,l}$, where
$\|~\|$ expresses the matrix norm,
and $P(U_{k, p^{\otimes m},\epsilon_1})$
is the projection to the projection $U_{k, p^{\otimes m},\epsilon_1}$.

\noindent{\bf Step 3:}
Evaluation with large $k$.

Using \eqref{BM2}, we have
\begin{align}
& 
\frac{-1}{k} \Tr (\rho_{mk}(p') 
P(U_{k, p^{\otimes m},\epsilon_1})
\log \sigma_{m,\epsilon_0}^{\otimes k})
\notag\\
\le &
\Tr (\rho_{mk}(p') P(U_{k, p^{\otimes m},\epsilon_1}))
\cdot -\Tr (p^{\otimes m}\log \sigma_{m,\epsilon_0})
+ \epsilon_3 
\notag\\
\le &
-\Tr ( p^{\otimes m}\log \sigma_{m,\epsilon_0})
+ \epsilon_3 .\Label{BM3}
\end{align}
Using \eqref{BM5+} and \eqref{BM1}, we have
\begin{align}
& \frac{-1}{mk} \Tr (\rho_{mk}(p') 
(I-P(U_{k, p^{\otimes m},\epsilon_1}))
\log \sigma_m^{\otimes k}) \notag\\
\le & \epsilon_2 (\frac{-1}{mk}\log \epsilon_0 
+ \log r_0).
\Label{BM4}
\end{align}
Combining \eqref{BM3} and \eqref{BM4}, we have
\begin{align}
& \frac{-1}{mk} \Tr (\rho_{mk}(p')  \log \sigma_{m,\epsilon_0}^{\otimes k} )\notag\\
\le & 
\frac{-1}{m} \Tr (p^{\otimes m}\log \sigma_{m})
+\frac{-1}{m}\log (1-\epsilon_0) \notag\\
&+\epsilon_3+
\epsilon_2 (\frac{-1}{mk}\log \epsilon_0 
+ \log r_0).
\Label{BM6}
\end{align}
Since 
$\log \sigma_0^{\otimes l} \le l\log r_0$,
we have
\begin{align}
& \frac{-1}{mk+l} \Tr 
\Big(\rho_{mk}(p') \otimes \rho_l(q_{k,l}(p')) 
\log \big(\sigma_{m,\epsilon_0}^{\otimes k} \otimes
\sigma_0^{\otimes l}\big)\Big)\notag\\
\le & 
\frac{mk}{mk+l}\Big(
\frac{-1}{m} (\Tr p^{\otimes m}\log \sigma_{m})
+\frac{-1}{m}\log (1-\epsilon_0) \notag\\
&+\epsilon_3+
\epsilon_2 (\frac{-1}{mk}\log \epsilon_0 
+ \log r_0) \Big) 
+\frac{l}{mk+l}\log r_0.
\Label{BM6H}
\end{align}
Combining \eqref{NMT} and \eqref{BM6H}, we have
\begin{align}
& \frac{-1}{mk+l} \Tr 
\Big(\rho_{mk+l}(p_{mk+l}) 
\log \big(\sigma_{m,\epsilon_0}^{\otimes k} \otimes
\sigma_0^{\otimes l}\big)\Big)\notag\\
\le & 
\frac{mk}{mk+l}\Big(
\frac{-1}{m} (\Tr p^{\otimes m}\log \sigma_{m})
+\frac{-1}{m}\log (1-\epsilon_0) \notag\\
&+\epsilon_3+
\epsilon_2 (\frac{-1}{mk}\log \epsilon_0 
+ \log r_0) \Big) 
+\frac{l}{mk+l}\log r_0.
\Label{BM6Y}
\end{align}

\noindent{\bf Step 4:}
Limit $k\to \infty$.

Taking the limit $k\to \infty$, we have
\begin{align}
&\limsup_{n\to \infty} \min_{\sigma \in {\cal F}_n}
\frac{-1}{n} \Tr (\rho_n(p_n)\log \sigma) \notag\\
\le & \limsup_{k\to \infty}
\frac{-1}{mk+l} \Tr 
\Big(\rho_{mk+l}(p_{mk+l}) 
\log \big(\sigma_{m,\epsilon_0}^{\otimes k} \otimes
\sigma_0^{\otimes l}\big)\Big)\notag\\
\le & 
\frac{-1}{m} \Tr (p^{\otimes m}\log \sigma_{m})
+\frac{-1}{m}\log (1-\epsilon_0) \notag\\
&+\epsilon_3 + \epsilon_2 \log r_0.\Label{BM7}
\end{align}
Taking the limit $\epsilon_0,\epsilon_2,\epsilon_3\to 0$, we have
\begin{align}
&\limsup_{n\to \infty} \min_{\sigma \in {\cal F}_n}
\frac{-1}{n} \Tr (\rho_n(p_n)\log \sigma) \notag\\
\le & 
\frac{-1}{m} \Tr (p^{\otimes m}\log \sigma_{m}).
\Label{BM8}
\end{align}
Since $\frac{1}{n}S(\rho_{n}(p_{n}))\to S(p)$, we have
\begin{align}
&\limsup_{n\to \infty}\frac{1}{n}D(
\rho_n(p_n)\|{\cal F}_n) 
\le 
\frac{1}{m} D( p^{\otimes m}\| \sigma_{m}),
\Label{BM9}
\end{align}
which implies \eqref{BHYY}.

\subsection{Proof of Lemma \ref{LJ2}}
\noindent{\bf Step 1:}
Choice of $m,\epsilon_0$.

We fix an arbitrary small real number $\epsilon_0>0$.
We choose $\sigma_m:=\argmin_{\sigma\in {\cal F}_m}
D(\rho_m(p_m)\|\sigma)$
and $\sigma_{m,\epsilon_0}:=(1-\epsilon_0)\sigma_m
+\epsilon_0 \sigma_0^{\otimes m}$.
We have
\begin{align}
\Tr (1^{m}_{p_m} \rho_m(p_m))&=1 \notag\\
\Tr (1^{m}_{p_m} \sigma_m)&=
e^{-D(\rho_m(p_m)\|\sigma_m)} \notag\\
\Tr (1^{m}_{p_m} \sigma_{m,\epsilon_0})&\ge
(1-\epsilon_0) e^{-D(\rho_m(p_m)\|\sigma_m)} 
\Label{BC1}.
\end{align}
Also,
\begin{align}
\Tr (1^{k_2 m}_{p''} \sigma_{m,\epsilon_0}^{\otimes k_2})
\ge \epsilon_0^{k_2} e^{-k_2 m r_0}
\Label{BC2}.
\end{align}

We choose an arbitrarily small rational number $\epsilon_1,\epsilon_2>0 $.
There exists a sufficiently large number $m_0$ such that
$ \|p_m - p\|_1\le \epsilon_0$ 
for $m \ge m_0$.
We assume that $k$ is a positive integer such that
$k_2:= k (\epsilon_1+\epsilon_2)$ is an integer.
We choose positive integers $k_1$ as $k_1:=k-k_2$.

For $p'\in {\cal T}_{km}$ satisfying $\|p'-p\|_1\le \epsilon_2$,
we have
$\|p_m-p'\|_1\le \epsilon_1+\epsilon_2$.
Then, we choose $p''$ such that
$k_1 m p_m+ k_2 m p''= k m p'$.
Then, we have
\begin{align}
(1^{m}_{p_m})^{\otimes k_1}\otimes 1^{k_2 m}_{p''} \le
1^{k m}_{p'}.
\Label{BC3}
\end{align}
Using \eqref{BC1} and \eqref{BC2}, we have
\begin{align}
& \Tr ( (1^{m}_{p_m})^{\otimes k_1}\otimes 1^{k_2 m}_{p''} \sigma_{m,\epsilon_0}^{\otimes k}) \notag\\
\ge & (1-\epsilon_0)^{k_1} e^{-k_1 D(\rho_m(p_m)\|\sigma_m)} 
\epsilon_0^{k_2} e^{-k_2 m r_0}.
\Label{BC4}
\end{align}

\noindent{\bf Step 2:}
Averaged state with respect to random permutation.

We denote the average state
of random permutation on ${\cal H}^{\otimes mk}$
to $\sigma_{m,\epsilon_0}^{\otimes k}$ by $\sigma_{m,k,\epsilon_0}$.
We have
\begin{align}
& \Tr (1^{km}_{p'} \sigma_{m,k,\epsilon_0} )
\stackrel{(a)}{=}
\Tr  (1^{km}_{p'} \sigma_{m,\epsilon_0}^{\otimes k}) \notag\\
\stackrel{(b)}{\ge} &\Tr ( (1^{m}_{p_m})^{\otimes k_1}\otimes 
1^{k_2 m}_{p''} \sigma_{m,\epsilon_0}^{\otimes k}) \notag\\
\stackrel{(c)}{\ge} & (1-\epsilon_0)^{k_1} e^{-k_1 D(\rho_m(p_m)\|\sigma_m)} 
\epsilon_0^{k_2} e^{-k_2 m r_0},
\Label{BC5}
\end{align}
where
$(a)$ follows from the permutation-invariance of $1^{km}_{p'}$,
$(b)$ follows from \eqref{BC3}, and 
$(c)$ follows from \eqref{BC4}.

For $p'\in {\cal T}_{km}$ satisfying $\|p'-p\|_1> \epsilon_2$,
we have
\begin{align}
&(\log \Tr (1^{km}_{p'} p^{\otimes km})
-\log \Tr (1^{km}_{p'} \sigma_{m,k,\epsilon_0})) \notag\\
\stackrel{(a)}{\le} &
-\log \Tr (1^{km}_{p'} \sigma_{m,k,\epsilon_0} )
\stackrel{(b)}{\le}  km r_0,
\Label{BC6}
\end{align}
where 
$(a)$ follows from
$\Tr (1^{km}_{p'} p^{\otimes km}) \le 1$
and
$(b)$ follows from the definition of $\sigma_0$.

\noindent{\bf Step 3:}
Evaluation with a sufficiently large integer $k$.

We assume that
$k$ is sufficiently large such that
\begin{align}
\sum_{p'\in {\cal T}_{km}: \|p'-p\|_1> \epsilon_2}
\Tr (1^{km}_{p'} p^{\otimes km})
\le \epsilon_2.\Label{BC7}
\end{align}
Thus,
\begin{align}
&\frac{1}{km} D(p^{\otimes km}\|{\cal F}_{km}) 
\le 
\frac{1}{km} D(p^{\otimes km}\|\sigma_{m,k,\epsilon_0}) \notag\\
\stackrel{(a)}{=} &
\frac{1}{km} \sum_{q\in {\cal T}_{km}}
\Tr (1^{km}_{q} p^{\otimes km})\notag\\
&\cdot
(\log \Tr (1^{km}_{q} p^{\otimes km})
-\log \Tr (1^{km}_{q} \sigma_{m,k,\epsilon_0})) \notag\\
\stackrel{(b)}{\le} &
\frac{1}{km} \sum_{p'\in {\cal T}_{km}: \|p'-p\|_1\le \epsilon_2}
\Tr (1^{km}_{p'} p^{\otimes km}) \notag\\
&\cdot
(\log \Tr (1^{km}_{p'} p^{\otimes km})
-\log \Tr (1^{km}_{p'} \sigma_{m,k,\epsilon_0})) \notag\\
&+r_0 \sum_{p'\in {\cal T}_{km}: \|p'-p\|_1> \epsilon_2}
\Tr (1^{km}_{p'} p^{\otimes km}) \notag\\
\stackrel{(c)}{\le} &
\frac{1}{km} \sum_{p'\in {\cal T}_{km}: \|p'-p\|_1\le \epsilon_2}
\Tr (1^{km}_{p'} p^{\otimes km}) \notag\\
&\cdot
\Big(-\log \Big((1-\epsilon_0)^{k_1} e^{-k_1 D(\rho_m(p_m)\|\sigma_m)} 
\epsilon_0^{k_2} e^{-k_2 m r_0}\Big)\Big)
\notag\\
&+r_0 \epsilon_2\notag\\
\stackrel{(d)}{\le} &
\frac{1}{km} 
\Big( -k_1 \log (1-\epsilon_0) +k_1 D(\rho_m(p_m)\|\sigma_m)
\notag\\
&-k_2 \log \epsilon_0+ k_2 m r_0 \Big)
+r_0 \epsilon_2\notag\\
= &
-\frac{k_1}{km} \log (1-\epsilon_0) +\frac{k_1}{km} D(\rho_m(p_m)\|\sigma_m)
\notag\\
&-\frac{k_2}{km} \log \epsilon_0+ \frac{k_2}{k} r_0 
+r_0 \epsilon_2 \notag\\
= &
  -\frac{1-\epsilon_1-\epsilon_2}{m} \log (1-\epsilon_0) 
  +\frac{1-\epsilon_1-\epsilon_2}{m} D(\rho_m(p_m)\|\sigma_m)
\notag\\
&-\frac{\epsilon_1+\epsilon_2}{m} \log \epsilon_0+ (\epsilon_1+\epsilon_2) r_0 
+r_0 \epsilon_2,
\Label{BC8}
\end{align}
where
$(a)$ follows from the permutation-invariance of 
$p^{\otimes km}$ and $\sigma_{m,k,\epsilon_0}$,
$(b)$ follows from \eqref{BC6}, 
$(c)$ follows from \eqref{BC5} and \eqref{BC7}, and 
$(d)$ follows from the inequality $
\sum_{p'\in {\cal T}_{km}: \|p'-p\|_1\le \epsilon_2}
\Tr (1^{km}_{p'} p^{\otimes km}) \le 1$.

\noindent{\bf Step 4:}
Evaluation with the limit.

Since Proposition \ref{Pr3} guarantees the existence of the limit 
$\lim_{n\to \infty}\frac{1}{n} D(p^{\otimes n}\|{\cal F}_n)$,
we have
\begin{align}
&\lim_{n\to \infty}
\frac{1}{n} D(p^{\otimes n}\|{\cal F}_n) 
\notag\\
\le &
  -\frac{1-\epsilon_1-\epsilon_2}{m} \log (1-\epsilon_0) +\frac{1-\epsilon_1-\epsilon_2}{m} D(\rho_m(p_m)\|\sigma_m)
\notag\\
&-\frac{\epsilon_1+\epsilon_2}{m} \log \epsilon_0+ (\epsilon_1+\epsilon_2) r_0 
+r_0 \epsilon_2.
\Label{BC9}
\end{align}
Taking the limit $\epsilon_2\to 0$, we have
\begin{align}
&\lim_{n\to \infty}
\frac{1}{n} D(p^{\otimes n}\|{\cal F}_n)\notag \\
\le &
  -\frac{1-\epsilon_1}{m} \log (1-\epsilon_0) 
  +\frac{1-\epsilon_1}{m} D(\rho_m(p_m)\|\sigma_m)\notag\\
&-\frac{\epsilon_1}{m} \log \epsilon_0
+ \epsilon_1 r_0 .
\Label{BC10}
\end{align}
Taking the limit 
$\liminf_{m\to \infty}$, we have
\begin{align}
&\lim_{n\to \infty}
\frac{1}{n} D(p^{\otimes n}\|{\cal F}_n) \notag \\
\le& (1-\epsilon_1)
\liminf_{m\to \infty} \frac{1}{m} D(\rho_m(p_m)\|\sigma_m)+ \epsilon_1 r_0 .
\end{align}
Since $\epsilon_1$ is an arbitrary small number, we have 
\begin{align}
\lim_{n\to \infty}
\frac{1}{n} D(p^{\otimes n}\|{\cal F}_n) 
\le 
\liminf_{m\to \infty} \frac{1}{m} D(\rho_m(p_m)\|\sigma_m),
\Label{BC11}
\end{align}
which implies \eqref{BHYC}.

\section{Proof of Theorem \ref{BFG8}}\Label{S9}
According to the paper \cite{4069150},
for $0 < \epsilon \le 1$,
we consider the hypothesis-testing exponent
for two general sequences of states $\vec{\rho}:=(\rho_n)$ and $\vec{\sigma}:=
(\sigma_n)$ as
\begin{align}
&B^\dagger(\epsilon| \vec{\rho}\|\vec{\sigma}) \notag\\
:=&
\sup_{\Pi_n}
\Big\{
\liminf_{n \to \infty} -\frac{1}{n}\log
\Tr (\sigma_n \Pi_n ) \Big|
\liminf_{n \to \infty}
\Tr (\rho_n (I-\Pi_n))< \epsilon\Big\} \\
&B (\epsilon| \vec{\rho}\|\vec{\sigma}) \notag\\
:=&
\sup_{\Pi_n}
\Big\{
\liminf_{n \to \infty} -\frac{1}{n}\log
\Tr (\sigma_n \Pi_n) \Big|
\limsup_{n \to \infty}
\Tr ( \rho_n (I-\Pi_n))\le \epsilon\Big\}.
\end{align}
In this setting, $\sigma_n$ do not need to be normalized
while the states $\rho_n$ need to be normalized.
The paper \cite{4069150} defines
\begin{align}
D^\dagger(\epsilon| \vec{\rho}\|\vec{\sigma})
&:=
\sup\big\{a \big|
\liminf_{n\to \infty}
\Tr (\rho_n \{ \rho_n \le 2^{na} \sigma_n \} )< \epsilon
\big\} \notag\\
&=
\inf\big\{a \big|
\liminf_{n\to \infty}
\Tr (\rho_n \{ \rho_n \le 2^{na} \sigma_n \}) \ge \epsilon
\big\} \Label{eq:d}\\
D(\epsilon| \vec{\rho}\|\vec{\sigma})
&:=
\sup\big\{a \big|
\limsup_{n\to \infty}
\Tr (\rho_n \{ \rho_n \le 2^{na} \sigma_n \} ) \le \epsilon
\big\} \Label{eq:d2}.
\end{align}

We have the following lemma.
\begin{lemma}[\protect{\cite[Theorem 1]{4069150}}]\Label{LL8}
For $0 \le \epsilon < 1$, we have
\begin{align}
D^\dagger(\epsilon| \vec{\rho}\|\vec{\sigma})=&
B^\dagger(\epsilon| \vec{\rho}\|\vec{\sigma}) \\
D(\epsilon| \vec{\rho}\|\vec{\sigma})=&
B(\epsilon| \vec{\rho}\|\vec{\sigma}).
\end{align}
\end{lemma}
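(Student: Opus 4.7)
My plan is to establish each of the two equalities by proving achievability (upper bound on $B$ or $B^\dagger$) and converse (lower bound), using the Neyman--Pearson-type test built from the spectral projection of $\rho_n - 2^{na}\sigma_n$. I will describe the argument for $D(\epsilon|\vec{\rho}\|\vec{\sigma})=B(\epsilon|\vec{\rho}\|\vec{\sigma})$ in detail; the dagger identity is obtained by the same argument after replacing the $\limsup \le \epsilon$ constraint on the type-I error with the $\liminf < \epsilon$ constraint throughout, in accordance with the definitions.

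For achievability, $B(\epsilon|\vec{\rho}\|\vec{\sigma}) \ge D(\epsilon|\vec{\rho}\|\vec{\sigma})$, I would fix any $a < D(\epsilon|\vec{\rho}\|\vec{\sigma})$ and adopt the test $\Pi_n := \{\rho_n \ge 2^{na}\sigma_n\}$. By the very definition of this spectral projection, $\Tr((\rho_n - 2^{na}\sigma_n)\Pi_n) \ge 0$, whence $\Tr(\sigma_n\Pi_n) \le 2^{-na}\Tr(\rho_n\Pi_n) \le 2^{-na}$, giving the rate bound $-\tfrac{1}{n}\log\Tr(\sigma_n\Pi_n) \ge a$. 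The type-I error $\Tr(\rho_n(I-\Pi_n))$ is the trace of $\rho_n$ against the projection onto the strictly negative eigenspace of $\rho_n - 2^{na}\sigma_n$, hence is dominated by $\Tr(\rho_n\{\rho_n \le 2^{na}\sigma_n\})$; its $\limsup$ is at most $\epsilon$ by \eqref{eq:d2}. Letting $a \uparrow D(\epsilon|\vec{\rho}\|\vec{\sigma})$ completes this direction.

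For the converse, $B(\epsilon|\vec{\rho}\|\vec{\sigma}) \le D(\epsilon|\vec{\rho}\|\vec{\sigma})$, the key tool is the Nagaoka--Hayashi operator inequality: for any $0 \le \Pi \le I$ and any real $a$,
\begin{align*}
\Tr\bigl(\rho_n\{\rho_n < 2^{na}\sigma_n\}\bigr) \le \Tr(\rho_n(I-\Pi)) + 2^{na}\Tr(\sigma_n\Pi).
\end{align*}
This is proved by writing $X := \rho_n - 2^{na}\sigma_n = X_+ - X_-$ with $X_\pm \ge 0$ and using the elementary inequality $\Tr(X\Pi) \le \Tr(X_+)$ valid for Hermitian $X$ and $0 \le \Pi \le I$ (since $\Tr(X_+\Pi) \le \Tr(X_+)$ and $\Tr(X_-\Pi) \ge 0$), together with the identity $\Tr(X_+) = \Tr(XP)$ for $P := \{X \ge 0\}$. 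Given any test sequence $\{\Pi_n\}$ satisfying $\limsup_n \Tr(\rho_n(I-\Pi_n)) \le \epsilon$ and achieving rate strictly greater than $a$, the term $2^{na}\Tr(\sigma_n\Pi_n)$ decays to zero, so the inequality forces $\limsup_n \Tr(\rho_n\{\rho_n < 2^{na}\sigma_n\}) \le \epsilon$. An arbitrarily small upward shift $a \mapsto a + \delta$ then converts the strict spectral projection into the non-strict one appearing in \eqref{eq:d2}, giving $a \le D(\epsilon|\vec{\rho}\|\vec{\sigma})$ after taking the supremum over admissible test sequences.

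The main obstacle I anticipate is the non-commutative bookkeeping behind the Nagaoka--Hayashi inequality, together with the need to pass cleanly between the strict projection $\{\rho_n < 2^{na}\sigma_n\}$ produced by the converse argument and the non-strict projection $\{\rho_n \le 2^{na}\sigma_n\}$ used in the definitions \eqref{eq:d}--\eqref{eq:d2}. These two projections differ only on the kernel of $\rho_n - 2^{na}\sigma_n$, a zero-measure phenomenon in the parameter $a$, so a generic infinitesimal perturbation removes the discrepancy; making this perturbation argument uniform in $n$ is the one place where the proof requires genuine care, but it is standard in the quantum information-spectrum literature.
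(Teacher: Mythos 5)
The paper does not prove this lemma at all: it is imported verbatim as Theorem~1 of the cited information-spectrum reference \cite{4069150}, so there is no in-paper argument to compare yours against. Your proof is the standard Neyman--Pearson/information-spectrum argument from that reference and is essentially correct: the test $\Pi_n=\{\rho_n\ge 2^{na}\sigma_n\}$ for achievability, and the operator inequality $\Tr\rho_n\{\rho_n<2^{na}\sigma_n\}\le\Tr\rho_n(I-\Pi)+2^{na}\Tr\sigma_n\Pi$ for the converse, with your derivation of the latter ($\Tr X\Pi\le\Tr X_+=\Tr XP$) being exactly right. Two small points would tighten it. First, the strict-versus-non-strict discrepancy you flag as the delicate step is actually a non-issue: the same one-line proof gives the inequality with $\{\rho_n\le 2^{na}\sigma_n\}$ on the left, since $\Tr\bigl(X\{X>0\}\bigr)=\Tr X_+$ as well; alternatively, your shift $a\mapsto a+\delta$ is made uniform in $n$ by applying the operator inequality once more with $\Pi=\{\rho_n\ge 2^{n(a+\delta)}\sigma_n\}$, which yields the explicit error term $2^{-n\delta}$. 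Second, in the achievability step you invoke \eqref{eq:d2} for an arbitrary $a<D(\epsilon|\vec{\rho}\|\vec{\sigma})$, which implicitly assumes the admissible set of $a$'s is down-closed; since $a\mapsto\Tr\rho_n\{\rho_n\le 2^{na}\sigma_n\}$ need not be monotone in the non-commutative setting, one should either run the test at a value of $a$ that actually lies in the admissible set (which suffices, taking the supremum) or note that down-closedness follows from the same $2^{-n\delta}$ near-monotonicity bound. Neither point is a genuine gap; both are one-line repairs within your own framework.
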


We define $\rho_n:=\argmin_{\rho \in {\cal F}_n}
D(\rho\|\sigma_n)$.
Then, we have
\begin{align}
\frac{-1}{n}\log 
\beta_\epsilon(\rho_n\|\sigma_n)
&\ge 
\frac{-1}{n}\log 
\beta_\epsilon({\cal F}_n\|\sigma_n).
\Label{NI4N}
\end{align}
The combination of \eqref{NH4} and \eqref{NI4N}
implies
\begin{align}
&\liminf_{n\to \infty}\frac{-1}{n}\log 
\beta_\epsilon(\rho_n\|\sigma_n)\notag\\
\ge & 
\liminf_{n\to \infty}\frac{-1}{n}\log \beta_\epsilon({\cal F}_n\|\sigma_n)
\ge 
\limsup_{n\to \infty}\frac{1}{n}D({\cal F}_n\|\sigma_n)
\Label{NH4C}.
\end{align}

Also, using \eqref{NH4}, we have
\begin{align}
\liminf_{n\to \infty}\frac{-1}{n}\log 
\beta_\epsilon(\rho_n\|\sigma_n)
\ge 
\limsup_{n\to \infty}\frac{1}{n}D(\rho_n\|\sigma_n).
\Label{NH4N}
\end{align}
For any $\epsilon>0$, we choose 
$N(\epsilon)$, we have
\begin{align}
\frac{-1}{n}\log \beta_\epsilon(\rho_n\|\sigma_n)
+\epsilon
\ge 
\limsup_{n\to \infty}\frac{1}{n}D(\rho_n\|\sigma_n)
\Label{NH4V}
\end{align}
for $n \ge N(\epsilon)$.
Since 
\begin{align}
\liminf_{n\to \infty}\frac{-1}{n}\log \beta_\epsilon(\rho_n\|\sigma_n)
<
\limsup_{n\to \infty}\frac{1}{n}D(\rho_n\|\sigma_n),
\Label{NY4N}
\end{align}
we have 
$\lim_{\epsilon\to 0} N(\epsilon)=\infty$.
We set $\epsilon (n)$ as
\begin{align}
\epsilon (n):= \sup
\{ 2 \epsilon|  n \ge N(\epsilon) \}.
\end{align}
Hence, $\lim_{n\to \infty}\epsilon (n)=0$.
Also, we have \eqref{NH4V} with $\epsilon= \epsilon (n)$ 
because $n \ge N(\epsilon (n))$.

Then, we have
\begin{align}
D(0| \vec{\rho}\|\vec{\sigma})
=B(0| \vec{\rho}\|\vec{\sigma})
\ge \limsup_{n\to \infty}\frac{1}{n}
D(\rho_n\|\sigma_n).\Label{LNA}
\end{align}

\begin{lemma}\Label{BCV}
When $\rho_n$ is 
commutative with $\sigma_n$ and
\begin{align}
D(0| \vec{\rho}\|\vec{\sigma})
\ge \limsup_{n\to \infty}\frac{1}{n}
D(\rho_n\|\sigma_n),
\end{align}
we have
\begin{align}
D(\epsilon| \vec{\rho}\|\vec{\sigma})
=D^\dagger(\epsilon| \vec{\rho}\|\vec{\sigma})
= \lim_{n\to \infty}\frac{1}{n}
D(\rho_n\|\sigma_n)
\end{align}
for $\epsilon \in [0,1)$.
\end{lemma}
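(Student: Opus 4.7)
The plan is to reduce to a classical information-spectrum statement. Since $\rho_n$ and $\sigma_n$ commute, I will simultaneously diagonalize them and identify them with a probability distribution $P_n$ (the spectrum of $\rho_n$) and a non-negative function $Q_n$ (the spectrum of $\sigma_n$) on a common alphabet $\mathcal{X}_n$. Let $Y_n := \frac{1}{n}\log(P_n/Q_n)$ be viewed as a random variable under $P_n$. Then $\mathbb{E}_{P_n}[Y_n] = \frac{1}{n}D(\rho_n\|\sigma_n)$ and $\Tr(\rho_n\{\rho_n \le 2^{na}\sigma_n\}) = P_n[Y_n \le a]$. Setting $R := \limsup_n \frac{1}{n}D(\rho_n\|\sigma_n)\ge 0$, the hypothesis $D(0|\vec{\rho}\|\vec{\sigma})\ge R$ translates into: $\lim_n P_n[Y_n\le a] = 0$ for every $a<R$.

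The core analytic input is a Jensen-type lower bound on how negative $Y_n$ can contribute to the expectation. For any event $A_n\subset\mathcal{X}_n$, conditioning on $A_n$ and applying Jensen's inequality to the concave $\log$ gives
\begin{align}
\mathbb{E}_{P_n}[-Y_n\mathbf{1}_{A_n}] \le \frac{P_n(A_n)}{n}\log\frac{Q_n(A_n)}{P_n(A_n)}\le \frac{1}{n e},
\end{align}
using $Q_n(A_n)\le 1$ and $t\log(1/t)\le 1/e$. Applied with $A_n = \{Y_n\le a\}$ for $0\le a<R$, this yields $\mathbb{E}_{P_n}[Y_n]\ge -\frac{1}{n e}+a\,P_n[Y_n>a]$, so $\liminf_n \mathbb{E}_{P_n}[Y_n]\ge a$; letting $a\uparrow R$ shows $\liminf_n \frac{1}{n}D(\rho_n\|\sigma_n)\ge R$, so the limit $\lim_n\frac{1}{n}D(\rho_n\|\sigma_n)=R$ exists.

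I expect the main technical hurdle to be upgrading this one-sided left-tail hypothesis into genuine two-sided concentration of $Y_n$ at $R$. My plan is to bound the positive part via the identity
\begin{align}
\mathbb{E}[(Y_n-R)^+] = \mathbb{E}[Y_n - R] + \mathbb{E}[(Y_n-R)^-].
\end{align}
The first term on the right tends to $0$ by the preceding paragraph. For the negative part, split on $\{Y_n\ge R-\delta\}$, where $(Y_n-R)^-\le \delta$, versus its complement, on which the Jensen bound above, applied with $A_n=\{Y_n<R-\delta\}$, gives $\mathbb{E}[(R-Y_n)\mathbf{1}_{Y_n<R-\delta}] = R\,P_n[Y_n<R-\delta]+\mathbb{E}[-Y_n\mathbf{1}_{Y_n<R-\delta}] = o(1)$, since $P_n[Y_n<R-\delta]\to 0$ by hypothesis. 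Hence $\mathbb{E}[(Y_n-R)^-]\le \delta + o(1)$, forcing $\mathbb{E}[(Y_n-R)^+]\le \delta + o(1)$. Markov's inequality then gives $P_n[Y_n\ge R+\epsilon]\le (\delta + o(1))/\epsilon$ for any $\epsilon>0$, and since $\delta>0$ was arbitrary, $P_n[Y_n\ge R+\epsilon]\to 0$.

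With $Y_n\to R$ in $P_n$-probability in hand, the conclusion is routine. For any $a<R$, $P_n[Y_n\le a]\to 0$ witnesses $a\le D(\epsilon|\vec{\rho}\|\vec{\sigma})\le D^\dagger(\epsilon|\vec{\rho}\|\vec{\sigma})$; for any $a>R$, $P_n[Y_n\le a]\to 1$ violates both $\limsup\le\epsilon$ and $\liminf<\epsilon$ whenever $\epsilon\in[0,1)$, yielding $D^\dagger(\epsilon|\vec{\rho}\|\vec{\sigma})\le a$. Letting $a$ approach $R$ from both sides establishes $D(\epsilon|\vec{\rho}\|\vec{\sigma}) = D^\dagger(\epsilon|\vec{\rho}\|\vec{\sigma}) = R$ for all $\epsilon\in[0,1)$.
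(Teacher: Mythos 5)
Your argument is correct, and it reaches the conclusion by a genuinely different route from the paper's, although the two proofs share their one nontrivial analytic ingredient: the log-sum/Jensen bound showing that the far-left tail $\{Y_n\le a\}$ contributes only $O(1/n)$ (negatively) to $\frac{1}{n}D(\rho_n\|\sigma_n)$ --- this is precisely the paper's inequality \eqref{BM99}. The paper proceeds by contradiction: assuming $D^\dagger(\epsilon|\vec{\rho}\|\vec{\sigma})>\limsup_n\frac{1}{n}D(\rho_n\|\sigma_n)$, it picks $b<\limsup_n\frac{1}{n}D(\rho_n\|\sigma_n)<b_\epsilon<D^\dagger(\epsilon|\vec{\rho}\|\vec{\sigma})$, extracts a subsequence along which $\Tr(\rho_{n_k}Q_{n_k}(b_\epsilon))$ converges to the critical mass $\epsilon$, splits the relative entropy over the three spectral regions $I-Q_{n_k}(b_\epsilon)$, $Q_{n_k}(b_\epsilon)-Q_{n_k}(b)$ and $Q_{n_k}(b)$, and obtains $\limsup_n\frac{1}{n}D\ge b_\epsilon(1-\epsilon)+b\epsilon$, which is incompatible with the assumption. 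You instead prove the stronger intermediate statement that $Y_n\to R$ in $P_n$-probability (left tail from the hypothesis, right tail from $\mathbb{E}[(Y_n-R)^+]=\mathbb{E}[Y_n-R]+\mathbb{E}[(Y_n-R)^-]$ together with Markov), after which both $D(\epsilon|\vec{\rho}\|\vec{\sigma})$ and $D^\dagger(\epsilon|\vec{\rho}\|\vec{\sigma})$ are read off immediately. Your version is arguably cleaner: the existence of $\lim_n\frac{1}{n}D(\rho_n\|\sigma_n)$ is established up front rather than emerging from the contradiction, and you get two-sided concentration rather than only the tail estimate at the single level $\epsilon$. Two shared caveats, neither of which I count against you: your step $Q_n(A_n)\le 1$ uses that $\sigma_n$ is normalized, while the appendix nominally allows unnormalized $\sigma_n$ (the paper's own limit computation needs the same control, and every application has $\Tr\sigma_n=1$); and both arguments are degenerate at $\epsilon=0$, where $D^\dagger(0|\vec{\rho}\|\vec{\sigma})$ as literally defined in \eqref{eq:d} is a supremum over an empty set --- an artifact of the definition rather than of either proof.
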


Combining \eqref{LNA} and Lemma \ref{BCV},
we have
\begin{align}
\lim_{n\to \infty}\frac{-1}{n}\log 
\beta_\epsilon(\rho_n\|\sigma_n)
=
\lim_{n\to \infty}\frac{1}{n}D(\rho_n\|\sigma_n)
\Label{YH4N}.
\end{align}
The combination of \eqref{NH4C} and \eqref{YH4N}
yields \eqref{NH5}, which is the main statement of
Theorem \ref{BFG8}.

\begin{proofof}{Lemma \ref{BCV}}
We show this statement by contradiction.
Assume that 
$a_\epsilon:=
D^\dagger(\epsilon| \vec{\rho}\|\vec{\sigma})>
\limsup_{n\to \infty}\frac{1}{n}
D(\rho_n\|\sigma_n)$.
We choose 
$b_\epsilon<a_\epsilon$
and $b< \limsup_{n\to \infty}\frac{1}{n}D(\rho_n\|\sigma_n)$.
We define the projection 
\begin{align}
Q_n(a)
&:=\{ \rho_{n} \le 2^{{n} a} \sigma_{n} \} \notag\\
&=\Big\{ 
\frac{1}{n}(\log \rho_{n}-\log \sigma_{n}) \le a
\Big\} .
\end{align}
That is,
\begin{align}
I-Q_n(a)
=\Big\{ 
\frac{1}{n}(\log \rho_{n}-\log \sigma_{n}) > a\Big\} .
\end{align}

We choose a subsequence $n_k$ such that
$\lim_{k\to \infty}\Tr (\rho_{n_k} Q_{n_k}(b_{n_k}))
= \epsilon$.
We have
$\lim_{n\to \infty}\Tr (\rho_{n} Q_n(b))
= 0$.

We have
\begin{align}
&\Tr ( \rho_{n_k} Q_{n_k}(b)
(\log \rho_{n_k}-\log \sigma_{n_k}) )\notag\\
=&
\Tr (\rho_{n_k} Q_{n_k}(b))
\Tr (\frac{\rho_{n_k} Q_{n_k}(b)}{\Tr (\rho_{n_k} Q_{n_k}(b))})\notag\\
&\cdot\Big(\log \frac{\rho_{n_k} Q_{n_k}(b)}{\Tr (\rho_{n_k} Q_{n_k}(b))} 
-\log \frac{\sigma_{n_k}Q_{n_k}(b)}{\Tr (\sigma_{n_k}Q_{n_k}(b))}\Big) \notag\\
&+
\Tr (\rho_{n_k} Q_{n_k}(b)
\log \frac{\Tr (\sigma_{n_k}Q_{n_k}(b))}{\Tr (\rho_{n_k} Q_{n_k}(b))} ) \notag\\
\ge &
\Tr (\rho_{n_k} Q_{n_k}(b)
\log \Tr \sigma_{n_k}Q_{n_k}(b)).
\Label{BM99}
\end{align}
Then, we have
\begin{align}
&\frac{1}{n_k}D(\rho_{n_k}\| \sigma_{n_k})
=\frac{1}{n_k}
\Tr (\rho_{n_k}
(\log \rho_{n_k}-\log \sigma_{n_k})) \notag\\
=&
\frac{1}{n_k}
\Tr (\rho_{n_k} (I-Q_{n_k}(b_{\epsilon}))
(\log \rho_{n_k}-\log \sigma_{n_k})) \notag\\
&+\frac{1}{n_k}
\Tr (\rho_{n_k} (Q_{n_k}(b_{\epsilon})-Q_{n_k}(b))
(\log \rho_{n_k}-\log \sigma_{n_k}) )\notag\\
&+\frac{1}{n_k}
\Tr ( \rho_{n_k} Q_{n_k}(b)
(\log \rho_{n_k}-\log \sigma_{n_k}) )
\notag \\
\ge &
b_{\epsilon} \Tr (\rho_{n_k} (I-Q_{n_k}(b_{\epsilon})))
+b \Tr (\rho_{n_k} (Q_{n_k}(b_{\epsilon})-Q_{n_k}(b)))\notag\\
&+\frac{1}{n_k}
\Tr (\rho_{n_k} Q_{n_k}(b)
\log \Tr \sigma_{n_k}Q_{n_k}(b)).
\end{align}
Taking the limit $k\to \infty$, we have
\begin{align}
\liminf_{k\to \infty}\frac{1}{n_k}D(\rho_{n_k}\| \sigma_{n_k})
= b_{\epsilon}(1-\epsilon)+b \epsilon.
\end{align}
Taking the limit
$b_\epsilon\to D^\dagger(\epsilon| \vec{\rho}\|\vec{\sigma})$
and $b\to \limsup_{n\to \infty}\frac{1}{n}D(\rho_n\|\sigma_n)$,
we have
\begin{align}
&\limsup_{n\to \infty}\frac{1}{n}D(\rho_n\|\sigma_n)
\ge \liminf_{k\to \infty}\frac{1}{n_k}D(\rho_{n_k}\| \sigma_{n_k}) \notag\\
=& D^\dagger(\epsilon| \vec{\rho}\|\vec{\sigma})(1-\epsilon)+
\limsup_{n\to \infty}\frac{1}{n}D(\rho_n\|\sigma_n)
 \epsilon \notag\\
\ge & D^\dagger(\epsilon| \vec{\rho}\|\vec{\sigma})(1-\epsilon)+
\liminf_{k\to \infty}\frac{1}{n_k}D(\rho_{n_k}\| \sigma_{n_k})
 \epsilon,
\end{align}
which implies
\begin{align}
D(0| \vec{\rho}\|\vec{\sigma})
&\ge \limsup_{n\to \infty}\frac{1}{n}D(\rho_n\|\sigma_n)
\ge D^\dagger(\epsilon| \vec{\rho}\|\vec{\sigma}) \\
D(0| \vec{\rho}\|\vec{\sigma})
&\ge 
\liminf_{k\to \infty}\frac{1}{n_k}D(\rho_{n_k}\| \sigma_{n_k})
\ge D^\dagger(\epsilon| \vec{\rho}\|\vec{\sigma}) .
\end{align}
Thus, we have
\begin{align}
&D(0| \vec{\rho}\|\vec{\sigma})
= D^\dagger(\epsilon| \vec{\rho}\|\vec{\sigma}) \notag\\
=& \limsup_{n\to \infty}\frac{1}{n}D(\rho_n\|\sigma_n)
=\liminf_{k\to \infty}\frac{1}{n_k}D(\rho_{n_k}\| \sigma_{n_k}).
\end{align}
\end{proofof}

\section{Counter example}\Label{S3}
The paper \cite{LBR} presented an example to clarify 
that the conditions (A1)-(A5) are not sufficient to realize the relation \eqref{BNAT}.
However, in their example, 
$D({\cal F}_1\|\sigma)$ is infinity.
We present a slightly different example such that 
$D({\cal F}_1\|\sigma)$ is a finite value, the conditions (A1)--(A5) hold, and the relation \eqref{BNAT} does not fold.

We consider the qubit case when ${\cal H}$ is spanned by 
$|0\rangle $ and $|1\rangle$.
We consider the uniform distribution on 
${\cal H}^{\otimes n}$ by $\rho_{n,\mix}$.
We denote the set of types with length $n$ by ${\cal T}_n$.
Given $p \in {\cal T}_n$, 
we denote the set of elements 
whose empirical distribution is $p$ by ${\cal T}_{n,p}$.
we denote the uniform distribution over ${\cal T}_{n,p}$
by $\rho_n(p)$.
We also define the identity $1^n_{p}$ over ${\cal T}_{n,p}$.

In the binary case, for $p \in [0,1]$, we simplify
$\rho_n((p,1-p))$ to $\rho_n(p)$.
We define
$\rho(p):= p|0\rangle \langle 0|+(1-p)|1\rangle \langle 1|$.

Given two rational number $p_1<p_2 \in [0,1/2]$, 
we choose positive integers $m_1,m_2,m_3$ as
$p_1=\frac{m_1}{m_3}$ and $p_2=\frac{m_2}{m_3}$.
We also choose $p\in (0,1)$ and a positive integer $m$. 
For $n=km$, 
we define the state
\begin{align}
\rho_n:= p \rho_n(p_1)+(1-p) \rho_n(p_2).
\end{align}
We define the set ${\cal F}_{1,n}$ as
\begin{align}
{\cal F}_{1,n}:=\{ \Tr_{km-n} \rho_{km} \}_{k=1}^{\infty},
\end{align}
where $\Tr_{km-n}$ is the partial trace on the initial $km-n$ subsystems.
We define the set ${\cal F}_{2,n}$ as
\begin{align}
{\cal F}_{2,n}:=
\Big\{  \rho_1 \otimes \cdots \otimes \rho_k \Big|
\rho_j \in {\cal F}_{1,n_j}, \sum_{j=1}^k n_j=n\Big\}.
\end{align}
We denote the convex hull of ${\cal F}_{2,n}$
by ${\cal F}_{3,n}$.
${\cal F}_{3,1}$ is composed of only one element
$\rho(p p_1+(1-p)p_2)$

\begin{lemma}
The sets $\{{\cal F}_{3,n}\}_n$ satisfy the conditions (A1)--(A5).
\end{lemma}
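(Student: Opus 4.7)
The plan is to verify the five conditions one at a time for ${\cal F}_{3,n}$, reducing each to a property of ${\cal F}_{1,n}$ or ${\cal F}_{2,n}$ and then lifting to the convex hull. Convexity in (A1) is immediate from the definition, and compactness reduces to compactness of ${\cal F}_{1,n}$: the sequence $\Tr_{km-n}\rho_{km}$ (indexed by $k$) lives in a finite-dimensional simplex and converges, as $k\to\infty$, to $p\,\rho(p_1)^{\otimes n}+(1-p)\rho(p_2)^{\otimes n}$ by the law of large numbers for permutation-invariant states; including this limit makes ${\cal F}_{1,n}$ compact, ${\cal F}_{2,n}$ is a finite union of products of compacts, and the convex hull of a compact set in finite dimension is compact.

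For (A2), the key observation is that ${\cal F}_{3,1}$ collapses to the single state $\rho(pp_1+(1-p)p_2)$: by permutation symmetry of $\rho_{km}(p_i)$, every single-site marginal equals $\rho(p_i)$, so $\Tr_{km-1}\rho_{km}=\rho(pp_1+(1-p)p_2)$ independent of $k$. Since $p\in(0,1)$ and $p_2>p_1\ge 0$ with $p_2>0$, both eigenvalues lie in $(0,1)$, so this state is full rank and serves as $\sigma_0$. For (A3) I would use again the permutation invariance of $\rho_{km}$: $\Tr_n\Tr_{km-n}\rho_{km}=\Tr_{km-n+1}\rho_{km}\in{\cal F}_{1,n-1}$, and $\Tr_{1,\dots,n-1}$ gives the single-site marginal, which is the unique element of ${\cal F}_{1,1}$. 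Extending to products $\rho_1\otimes\cdots\otimes\rho_k\in{\cal F}_{2,n}$ requires tracing out only the last block (shrinking $n_k$ by one, possibly dropping it) and then passing to convex combinations by linearity. Condition (A4) is the cleanest: concatenating the factor sequences of two products gives another product, so ${\cal F}_{2,n}\otimes{\cal F}_{2,m}\subseteq{\cal F}_{2,n+m}$, and tensor products distribute over convex combinations, giving ${\cal F}_{3,n}\otimes{\cal F}_{3,m}\subseteq{\cal F}_{3,n+m}$.

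The main obstacle is (A5). The elements of ${\cal F}_{1,n_j}$ are permutation-invariant on their blocks, so a permutation $\pi\in S_n$ that maps blocks bodily onto blocks of equal length sends a product $\rho_1\otimes\cdots\otimes\rho_k$ to $\rho_{\sigma(1)}\otimes\cdots\otimes\rho_{\sigma(k)}$, which is still an element of ${\cal F}_{2,n}$ because the definition ranges over all tuples $(n_1,\dots,n_k)$. The difficult case is a permutation that splits a block across several blocks, where the image is generically not a product. To handle this I would expand each factor as $\rho_j=p\,\Tr_{k_jm-n_j}\rho_{k_jm}(p_1)+(1-p)\,\Tr_{k_jm-n_j}\rho_{k_jm}(p_2)$, and after distributing the tensor product over this mixture argue that the permuted product is a convex combination of uniform distributions on symmetric subspaces labeled by induced types, hence of products in ${\cal F}_{2,n}$; an equivalent cleaner route is to enlarge ${\cal F}_{2,n}$ from the start by closing under the $S_n$-action, which manifestly leaves the verifications of (A1)--(A4) intact and makes (A5) automatic. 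I would choose whichever of these paths admits the cleanest write-up; I expect that, as in the LBR counterexample, the symmetrization route is the more transparent option.
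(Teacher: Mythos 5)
The paper states this lemma without any proof, so your proposal has to stand on its own. Your treatment of (A1)--(A4) is sound. In particular, you are right that the literal definition needs a patch for closedness: the marginals $\Tr_{km-n}\rho_{km}$ converge as $k\to\infty$ to $p\,\rho(p_1)^{\otimes n}+(1-p)\,\rho(p_2)^{\otimes n}$, which is not attained at any finite $k$ when $n\ge 2$, so the plain convex hull of ${\cal F}_{2,n}$ is not closed and one must take the closed convex hull (equivalently, adjoin the limit points to ${\cal F}_{1,n}$ first); this is compatible with (A3) and (A4) exactly as you say, and your identification of ${\cal F}_{3,1}=\{\rho(pp_1+(1-p)p_2)\}$ with both eigenvalues in $(0,1)$ settles (A2).

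The gap is (A5), where you hedge between two routes, and the first one fails. After expanding into type components, $U_\pi(\rho_{n_1}(q_1)\otimes\rho_{n_2}(q_2))U_\pi^\dagger$ is the uniform distribution on a proper, non-permutation-invariant subset of a type class; it is neither a mixture of the states $\rho_n(q)$ (those are permutation invariant, the permuted product is not) nor, in general, a convex combination of contiguous-block products. Concretely, take $n=3$ and the linear functional $f(\sigma):=E_\sigma[X_1X_3]-E_\sigma[X_1X_2]$, where $X_i$ is the indicator of $|0\rangle$ at site $i$. Every single-site marginal of every state in ${\cal F}_{2,3}$ equals $\rho(\bar p)$ with $\bar p=pp_1+(1-p)p_2$, and the two-site covariance of $\Tr_{N-2}\rho_{N}$ is $c_N=p(1-p)(p_2-p_1)^2-\frac{pp_1(1-p_1)+(1-p)p_2(1-p_2)}{N-1}$; choosing $m$ large enough that $c_N>0$ for all $N=km$, one checks $f\le 0$ on all four block patterns of ${\cal F}_{2,3}$, hence on its closed convex hull, whereas the transposition of sites $2$ and $3$ applied to $(\Tr_{km-2}\rho_{km})\otimes\rho(\bar p)\in{\cal F}_{2,3}$ gives $f=c_{km}>0$. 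So (A5) is genuinely false for the contiguous-block reading of ${\cal F}_{2,n}$, and the only workable path is your second route: define ${\cal F}_{2,n}$ as products over arbitrary partitions of the $n$ sites, i.e., close the set under the $S_n$-action from the outset, under which (A1)--(A4) survive verbatim and (A5) is immediate. You need to commit to that definition and delete the ``induced types'' alternative; as written, the proof of the one nontrivial condition is not there.
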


Since we have
\begin{align}
D(\rho(p_2)^{\otimes km}\| \rho_{km})=O(\log n),
\end{align}
the following lemma holds.

\begin{lemma}
The sequence of sets $\{{\cal F}_{3,n}\}_n$ does not satisfy the condition (A6).
\end{lemma}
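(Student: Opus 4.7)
The plan is to exhibit a state $\rho$ that lies outside ${\cal F}_{3,1}$ but for which the regularised relative entropy vanishes, thereby contradicting (A6). Since ${\cal F}_{3,1}$ consists of the single element $\rho(p p_1+(1-p)p_2)$, the natural candidate is $\rho := \rho(p_2)$, which differs from the unique element of ${\cal F}_{3,1}$ whenever $p\in(0,1)$ and $p_1\neq p_2$. The task reduces to showing $\frac{1}{n}D(\rho(p_2)^{\otimes n}\|{\cal F}_{3,n}) \to 0$.

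To construct a test element of ${\cal F}_{3,n}$, I would, for $n=km$ and large $N$, use the partial trace $\tau_{N,n} := \Tr_{(N-k)m}\rho_{Nm} \in {\cal F}_{1,n} \subset {\cal F}_{3,n}$. A hypergeometric computation of its marginals (the marginal of uniform distribution on a type class being asymptotically a product) gives
\[
\lim_{N\to\infty}\tau_{N,n} = p\,\rho(p_1)^{\otimes n} + (1-p)\,\rho(p_2)^{\otimes n},
\]
and already for $N$ large enough that $Nmp_i \geq n$ for $i=1,2$, the state $\tau_{N,n}$ has full support on the computational basis of ${\cal H}^{\otimes n}$, so that all relative entropies below are finite (sidestepping the support mismatch that would make the relative entropy against $\rho_n$ itself infinite).

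The core estimate is
\[
D\bigl(\rho(p_2)^{\otimes n} \,\big\|\, p\,\rho(p_1)^{\otimes n} + (1-p)\,\rho(p_2)^{\otimes n}\bigr) = -\log(1-p) + o(1),
\]
which I would verify by computing the ratio of eigenvalues on computational basis vectors $|x\rangle$ with $k_0$ zeros: this ratio equals $\bigl[p(p_1/p_2)^{k_0}((1-p_1)/(1-p_2))^{n-k_0} + (1-p)\bigr]^{-1}$. Since $\rho(p_2)^{\otimes n}$ concentrates on strings with $k_0/n\approx p_2$ and $p_1\neq p_2$, the $p$-term inside the brackets decays like $e^{-nD(p_2\|p_1)}$ by a Chernoff argument, so only the $(1-p)$-contribution survives in the limit, yielding a relative entropy bounded uniformly in $n$ (in particular $O(\log n)$).

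The main obstacle is the interchange between this limit mixture and genuine elements of ${\cal F}_{1,n}$: the mixture itself lies only in the closure. I would handle this either by invoking closedness of ${\cal F}_{3,n}$ asserted in the preceding lemma, or, to be fully explicit, by keeping $N$ finite and controlling the deviation $\|\tau_{N,n} - (p\rho(p_1)^{\otimes n} + (1-p)\rho(p_2)^{\otimes n})\|_1 = O(n/N)$ via a quantitative de Finetti estimate; choosing $N$ to grow polynomially in $n$ perturbs the relative entropy by at most $O(\log n)$, matching the bound stated in the excerpt. For $n$ not a multiple of $m$, I would write $n = km+r$ with $0\leq r<m$ and pad by an arbitrary fixed element of ${\cal F}_{1,1}$ on the $r$ extra sites, which adjusts the bound by only an additive constant. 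Combining these steps yields $\frac{1}{n}D(\rho(p_2)^{\otimes n}\|{\cal F}_{3,n}) \to 0$ and hence $D^\infty(\rho(p_2)\|{\cal F}_3) = 0$, contradicting (A6) since $\rho(p_2)\notin{\cal F}_{3,1}$.
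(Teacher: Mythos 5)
Your proposal is correct, and it follows the same high-level strategy as the paper (exhibit $\rho(p_2)\notin{\cal F}_{3,1}$ and show $\tfrac1n D(\rho(p_2)^{\otimes n}\|{\cal F}_{3,n})\to0$), but your execution is genuinely different in the one place that matters. The paper's entire justification is the single display $D(\rho(p_2)^{\otimes km}\|\rho_{km})=O(\log n)$; taken literally this is false, since $\rho_{km}=p\,\rho_{km}(p_1)+(1-p)\,\rho_{km}(p_2)$ is supported on only two type classes while $\rho(p_2)^{\otimes km}$ has full support, so that relative entropy is $+\infty$. You identify exactly this support mismatch and repair it by comparing instead against the deep partial trace $\tau_{N,n}=\Tr_{(N-k)m}\rho_{Nm}\in{\cal F}_{1,n}$, which does have full support once $Nmp_i\ge n$, and whose limit is the i.i.d.\ mixture $p\,\rho(p_1)^{\otimes n}+(1-p)\,\rho(p_2)^{\otimes n}$. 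For that limit your key estimate is right, and in fact only the trivial half is needed: since $p\,\rho(p_1)^{\otimes n}+(1-p)\,\rho(p_2)^{\otimes n}\ge(1-p)\rho(p_2)^{\otimes n}$, one gets the uniform bound $D\bigl(\rho(p_2)^{\otimes n}\,\big\|\,p\,\rho(p_1)^{\otimes n}+(1-p)\,\rho(p_2)^{\otimes n}\bigr)\le-\log(1-p)$ without any Chernoff argument, so your asymptotic equality is stronger than necessary. Your two ways of passing from the limit mixture back to an actual element of ${\cal F}_{3,n}$ (closedness from the preceding lemma, or a quantitative hypergeometric-versus-binomial bound with $N$ polynomial in $n$; since everything is diagonal, the latter is cleanest done on eigenvalue ratios rather than via trace-norm continuity of $D$) are both viable. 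In short: same idea as the paper, but your version supplies the argument the paper omits and fixes a step that, as written there, does not hold.
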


In addition, we obtain the following lemma.
\begin{lemma}
The sequence of sets $\{{\cal F}_{3,n}\}_n$ does not satisfy the condition (A8).
\end{lemma}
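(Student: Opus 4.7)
The plan is to construct, for sufficiently large multiples $n,m$ of $m$, operators $X_n\in({\cal F}_{3,n})_+^\circ$ and $X_m\in({\cal F}_{3,m})_+^\circ$ whose tensor product violates the defining inequality of $({\cal F}_{3,n+m})_+^\circ$ when tested against $\rho_{n+m}\in{\cal F}_{1,n+m}\subset{\cal F}_{3,n+m}$. The building block is the single-qubit PSD operator $Z:=p_2^{-1}|0\rangle\langle 0|$. Since ${\cal F}_{3,1}=\{\rho(q)\}$ with $q:=pp_1+(1-p)p_2<p_2$, one has $\Tr Z\rho(q)=q/p_2<1$ and thus $Z\in({\cal F}_{3,1})_+^\circ$. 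Setting $C_n:=\sup_{\xi\in{\cal F}_{3,n}}\Tr Z^{\otimes n}\xi$, the operator $X_n:=Z^{\otimes n}/C_n$ automatically lies in $({\cal F}_{3,n})_+^\circ$.

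The central computation is direct. Since $\rho_{n+m}(p_j)$ is supported on bitstrings with exactly $(n+m)p_j$ zeros, the diagonal operator $Z^{\otimes(n+m)}$ takes the constant value $p_2^{-(n+m)p_j}$ on this support, so
\begin{align*}
\Tr Z^{\otimes(n+m)}\rho_{n+m}=p\,p_2^{-(n+m)p_1}+(1-p)\,p_2^{-(n+m)p_2},
\end{align*}
whose second term dominates exponentially for large $n+m$. Likewise, for $n$ a multiple of $m$, the element $\rho_n\in{\cal F}_{1,n}$ yields $\Tr Z^{\otimes n}\rho_n=p\,p_2^{-np_1}+(1-p)\,p_2^{-np_2}$, so $C_n\ge(1-p)\,p_2^{-np_2}(1+o(1))$. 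Granting the matching upper bound $C_n\le(1-p)\,p_2^{-np_2}(1+o(1))$ (the main obstacle, addressed below), we conclude
\begin{align*}
\Tr(X_n\otimes X_m)\rho_{n+m}=\frac{\Tr Z^{\otimes(n+m)}\rho_{n+m}}{C_n\,C_m}\longrightarrow\frac{1}{1-p}>1,
\end{align*}
so that $X_n\otimes X_m\notin({\cal F}_{3,n+m})_+^\circ$ for all sufficiently large $n,m$, refuting (A8).

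The main obstacle is the upper bound on $C_n$. Since $\Tr Z^{\otimes n}$ is a linear functional on $\xi$, its supremum over the convex hull ${\cal F}_{3,n}$ of ${\cal F}_{2,n}$ is attained on an extreme point, i.e.\ on some tensor product $\xi_{n_1}\otimes\cdots\otimes\xi_{n_\ell}$ with $\xi_{n_j}=\Tr_{k_j m-n_j}\rho_{k_j m}$ for some $k_j\ge 1$ and $\sum_j n_j=n$. A direct hypergeometric evaluation shows the factor value $\Tr Z^{\otimes n_j}\xi_{n_j}$ converges as $k_j\to\infty$ to $L(n_j):=p\bigl(1+p_1(1-p_2)/p_2\bigr)^{n_j}+(1-p)$, while the $k_j=1$ choice (with $n_j$ a multiple of $m$) gives $S(n_j):=p\,p_2^{-n_jp_1}+(1-p)\,p_2^{-n_jp_2}$. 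The plan is to choose $p_1=m_1/m_3$ small enough that $-p_2\log p_2>\log\bigl(1+p_1(1-p_2)/p_2\bigr)$, which makes the $S$-type exponential rate strictly dominant; moreover each additional tensor factor contributes a multiplicative $(1-p)<1$ in the dominant coefficient. A routine comparison of exponential rates and constants then rules out any $\ell\ge 2$ decomposition from matching the single-factor value $(1-p)\,p_2^{-np_2}(1+o(1))$, yielding the required bound on $C_n$ and completing the proof.
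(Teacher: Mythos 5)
Your construction has a genuine gap, and it occurs already in the ``central computation.'' With $Z:=p_2^{-1}\ket{0}\bra{0}$ as written, $Z^{\otimes N}=p_2^{-N}\ket{0^N}\bra{0^N}$ is rank one and annihilates every bitstring containing a $1$; since $p_1<p_2\le 1/2<1$, every string in the support of $\rho_{n+m}$ contains a $1$, so $\Tr Z^{\otimes(n+m)}\rho_{n+m}=0$ and your ratio is $0$, not $1/(1-p)$. The formula $\Tr Z^{\otimes N}\rho_N(p_j)=p_2^{-Np_j}$ tacitly assumes $Z=p_2^{-1}\ket{0}\bra{0}+\ket{1}\bra{1}$. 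But with that corrected $Z$ the other half of the argument fails: the deep marginals $\Tr_{km-n}\rho_{km}\in{\cal F}_{1,n}\subset{\cal F}_{3,n}$ converge, as $k\to\infty$ with $n$ fixed, to $p\,\rho(p_1)^{\otimes n}+(1-p)\rho(p_2)^{\otimes n}$, against which $Z^{\otimes n}$ evaluates to $p\bigl(1+p_1(1-p_2)/p_2\bigr)^{n}+(1-p)(2-p_2)^{n}$. Your $L(n_j)$ keeps the first branch but silently replaces $(2-p_2)^{n_j}$ by $1$ in the second; since $2-p_2>p_2^{-p_2}$ for every $p_2\in(0,1/2]$, that omitted term forces $C_n\ge(1-p)(2-p_2)^n(1-o(1))$, which exponentially exceeds your claimed bound $(1-p)p_2^{-np_2}(1+o(1))$. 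The term $(2-p_2)^{n_j}$ does not involve $p_1$, so no choice of small $p_1$ repairs the rate comparison: with either reading of $Z$, $\Tr(X_n\otimes X_m)\rho_{n+m}\to 0$ rather than exceeding $1$.

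The paper's route avoids these asymptotics entirely by exploiting that the polar set contains \emph{unbounded} elements: a projection $1^{k_2m}_{q}$ onto a type class with $q\notin[p_1,p_2]$ has zero overlap with every state in ${\cal F}_{3,k_2m}$, so $a\,1^{k_2m}_{q}$ lies in $({\cal F}_{3,k_2m})_+^\circ$ for every $a>0$. Tensoring with an allowed projection $1^{k_1m}_{p'}$ and testing against the correlated state $\rho_{(k_1+k_2)m}$ gives a strictly positive overlap whenever the weighted average of the two block types hits the support of $\rho_{(k_1+k_2)m}$; letting $a\to\infty$ then violates (A8) for fixed small $k_1,k_2$ with no estimate of any normalization constant. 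If you want to salvage a product-operator approach like yours, you would need your single-site building block to be calibrated so that the correlated state $\rho_{N}$, and not the i.i.d.-like deep marginals, is the maximizer of $\Tr Z^{\otimes n}(\cdot)$ over ${\cal F}_{3,n}$ --- and the computation above shows this fails for the $Z$ you chose.
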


\begin{proof}
We choose two positive integers $k_1,k_2$.
Also, for any $\epsilon >0$ and $a>0$, we have
$a 1^{km}_{p_2+\epsilon} \in {\cal F}_{3,km}$.
For any $p' \in [p_1,p_2]$, we have
$1^{km}_{p'} \in {\cal F}_{3,km}$.
When $ \frac{k_1}{k_1+k_2}p' + \frac{k_2}{k_1+k_2}(p_2+\epsilon)
\in [p_1,p_2]$
We have
\begin{align}
\Tr ((1^{k_1m}_{p'}\otimes 1^{k_2m}_{p_1+\epsilon}) \rho_{(k_1+k_2)m})
>0.
\end{align}
That is,
\begin{align}
\lim_{a \to +\infty}\Tr 
((1^{k_1m}_{p'}\otimes a 1^{k_2m}_{p_1+\epsilon}) 
\rho_{(k_1+k_2)m})
=+\infty.
\end{align}
For a sufficiently large $a>0$, we have
$1^{k_1m}_{p'}\otimes a 1^{k_2m}_{p_1+\epsilon} 
\notin {\cal F}_{3,(k_1+k_2)m}$.
\end{proof}

In addition, we obtain the following lemma.
\begin{lemma}
The sequence of sets $\{{\cal F}_{3,n}\}_n$ does not satisfy the condition \eqref{BNA}
\end{lemma}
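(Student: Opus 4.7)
The plan is to exhibit an explicit sequence $\eta_n \in \mathcal{R}_n[\mathcal{B}]$, where $\mathcal{B}$ is the computational basis, whose limit lies outside $\overline{\cal E}_{\cal B}({\cal F}_{3,1})$ but for which $\beta_\epsilon(\rho_{n,{\cal B}}(\eta_n)\|{\cal E}_{n,{\cal B}}({\cal F}_{3,n}))$ fails to decay exponentially. Since every element of ${\cal F}_{3,n}$ is diagonal in $\mathcal{B}$, the pinching $\mathcal{E}_{n,\mathcal{B}}$ acts as the identity on ${\cal F}_{3,n}$, and quantum empirical distributions collapse onto classical types. The relevant target is therefore simply $p_n := p_2$ (for those $n$ admitting this as a valid type), with $\rho_{n,{\cal B}}(\eta_n) = \rho_n(p_2)$.

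First I would verify that the limit $\rho(p_2)$ lies outside ${\cal F}_{3,1}$. Because $\rho_{km}$ is permutation-invariant, $\Tr_{km-1}\rho_{km} = \rho(p p_1 + (1-p)p_2)$ independently of $k$, so ${\cal F}_{1,1}$ is a singleton; taking products and convex hulls does nothing at the single-system level, so ${\cal F}_{3,1} = \{\rho(p p_1 + (1-p)p_2)\}$, which differs from $\rho(p_2)$ whenever $p \in (0,1)$ and $p_1 \ne p_2$.

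The key step is a one-line orthogonality estimate. The states $\rho_n(p_1)$ and $\rho_n(p_2)$ have disjoint support because they are uniform distributions on distinct type classes, and by construction $\rho_n = p\,\rho_n(p_1) + (1-p)\,\rho_n(p_2) \in {\cal F}_{1,n} \subset {\cal F}_{3,n}$. Hence any test $\Pi$ accepting $\rho_n(p_2)$ with probability at least $1-\epsilon$ satisfies
\begin{align*}
\Tr(\Pi\,\rho_n) \ge (1-p)\,\Tr(\Pi\,\rho_n(p_2)) \ge (1-p)(1-\epsilon),
\end{align*}
giving $\beta_\epsilon(\rho_n(p_2)\|{\cal F}_{3,n}) \ge (1-p)(1-\epsilon)$. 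The right-hand side is a positive constant independent of $n$, so $-\tfrac{1}{n}\log\beta_\epsilon \to 0$ along the subsequence $n = k m$ with $m_3 \mid km$, contradicting the positivity demanded by \eqref{BNA}.

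The only nontrivial bookkeeping concerns $n$ for which $p_2 \notin {\cal T}_n$: I would select $p_n \in {\cal T}_n$ at distance $O(1/n)$ from $p_2$ and use that ${\cal T}_{n,p_1}$ and ${\cal T}_{n,p_n}$ remain disjoint once $n$ is large enough that $|p_n - p_1| > 1/n$, so the same orthogonality argument applies verbatim. The main obstacle is therefore purely notational rather than conceptual; the real substance is the single orthogonality estimate above, which moreover reuses the same mechanism that already made the condition (A6) fail for this example.
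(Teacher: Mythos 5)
Your proof is correct in substance but takes a genuinely different route from the paper's. The paper argues indirectly: it never touches \eqref{BNA} itself, but instead shows that the \emph{conclusion} \eqref{BNAT} fails for $\sigma=\rho_{\mix}$, by bounding the testing exponent above by $\log 2-h(p_2)$ (via the single state $\rho_{km}\in{\cal F}_{km}$, for $\epsilon<1-p$) and computing the regularized divergence $\log 2-h(pp_1+(1-p)p_2)$, which is strictly larger; the failure of \eqref{BNA} then follows from the contrapositive of Theorem \ref{THYQ}/\ref{THCQ}. You instead verify the failure of \eqref{BNA} directly, exhibiting the sequence of empirical data converging to $(p_2,\rho(p_2))\notin\overline{\cal E}_{\cal B}({\cal F}_{3,1})$ and using the orthogonal decomposition $\rho_n=p\,\rho_n(p_1)+(1-p)\,\rho_n(p_2)$ to force $\beta_\epsilon\ge (1-p)(1-\epsilon)$ up to polynomial factors. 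Both proofs exploit the same mechanism (the hidden $\rho_n(p_2)$ component of $\rho_n$), but yours is cleaner for the stated lemma --- it works for every $\epsilon\in(0,1)$ rather than only $\epsilon<1-p$, and it needs neither the entropy computation nor the appeal to the main theorem --- while the paper's version buys the stronger and more informative fact that \eqref{BNAT} itself fails for this example. Two small points to tighten: (i) $\rho_{n,{\cal B}}(\eta_n)$ with $\eta_n=(p_2,\rho(p_2))$ is not equal to $\rho_n(p_2)$; it is the normalized projection onto $\mathcal{U}_{np_2,\rho(p_2)}\otimes\mathcal{V}_{np_2}$, which is only the $\blambda=np_2$ component of the span of the type class. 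Since $d_{np_2,\rho}d_{np_2}'\ge |{\cal T}_{n,p_2}|/\mathrm{poly}(n)$, one still has $\rho_n\ge \frac{1-p}{\mathrm{poly}(n)}\rho_{n,{\cal B}}(\eta_n)$ and the exponent is unchanged, but the inequality should be stated with this polynomial loss. (ii) The bookkeeping for $n$ outside the good subsequence is unnecessary: since \eqref{BNA} demands a positive $\liminf$ over all $n$, it suffices that $-\frac{1}{n}\log\beta_\epsilon\to 0$ along the subsequence $n=k\,\mathrm{lcm}(m,m_3)$, with $\eta_n$ defined by any nearby type elsewhere.
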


\begin{proof}
Assume that $\rho_{\mix}$ is the completely mixed state.
When $\epsilon <1-p$.
\begin{align}
& \lim_{k\to \infty}-\frac{1}{km}\log \beta_\epsilon({\cal F}_{km}\|\rho_{\mix}^{\otimes km})
\notag\\
\le &
\lim_{k\to \infty}-\frac{1}{km}\log \beta_\epsilon(\rho_{km}\|\rho_{\mix}^{\otimes km})
=
\log 2- h(p_2).
\end{align}

On the other hand, we have
\begin{align}
\lim_{n\to \infty}\frac{1}{n}
\min_{\rho \in {\cal F}_n} S(\rho)
=h(p p_1+(1-p)p_2).
\end{align}
Hence,
\begin{align}
\lim_{n\to \infty}\frac{1}{n}D({\cal F}_n\|\rho_{\mix}^{\otimes n})
=\log 2 - h(p p_1+(1-p)p_2).
\end{align}
Since $h(p p_1+(1-p)p_2) < h(p_2)$,
we have
\begin{align}
\lim_{n\to \infty}-\frac{1}{n}\log \beta_\epsilon({\cal F}_n\|\rho_{\mix}^{\otimes n})
<\lim_{n\to \infty}\frac{1}{n}D({\cal F}_n\|\rho_{\mix}^{\otimes n}).
\end{align}
for $\epsilon <1-p$.
\end{proof}

\bibliography{detection}

\end{document}